\title{\vspace{-2cm}High-Dimensional Vector Autoregression with Common Response and Predictor Factors}
\author{Di Wang$^\dag$, Xiaoyu Zhang$^{\ddag}$, Guodong Li$^\ddag$ and Ruey S. Tsay$^{\star}$\\\vspace{-0.3cm}\textit{$^\dag${\small School of Mathematical Sciences, Shanghai Jiao Tong University}}\\\textit{$^{\ddag}${\small Department of Statistics and Actuarial Science, University of Hong Kong}}\vspace{-0.3cm}\\\textit{$^\star${\small Booth School of Business, University of Chicago}}}
\let\counterwithin\relax
\numberwithin{equation}{section}
\newcolumntype{L}[1]{>{\raggedright\let\newline\\\arraybackslash\hspace{0pt}}m{#1}}
\newcolumntype{C}[1]{>{\centering\let\newline\\\arraybackslash\hspace{0pt}}m{#1}}
\newcolumntype{R}[1]{>{\raggedleft\let\newline\\\arraybackslash\hspace{0pt}}m{#1}}
\newcommand*{\addFileDependency}[1]{
  \typeout{(#1)}
  \@addtofilelist{#1}
  \IfFileExists{#1}{}{\typeout{No file #1.}}
}
\newtheorem{assumption}{Assumption}
\newtheorem{definition}{Definition}
\newtheorem{lemma}{Lemma}
\newtheorem{theorem}{Theorem}
\newtheorem{remark}{Remark}
\DeclareMathOperator*{\argmin}{arg\,min}
\newcommand{\bm}{\mathbf}
\newcommand{\bbm}{\boldsymbol}
\newcommand{\lb}{\textbf{[}}
\newcommand{\rb}{\textbf{]}}
\newcommand{\cm}[1]{\mbox{\boldmath$\mathscr{#1}$}}
\tikzstyle{startstop} = [rectangle, rounded corners, minimum width=3cm, minimum height=1cm,text centered, draw=black, fill=purple!30]
\tikzstyle{stop} = [rectangle, rounded corners, minimum width=3.6cm, minimum height=1cm, draw=black, fill=purple!30, text width = 3.6cm]
\tikzstyle{start} = [rectangle, rounded corners, minimum width=2.7cm, minimum height=1cm, draw=black, fill=purple!30, text width = 2.7cm]
\tikzstyle{io} = [trapezium, trapezium left angle=70, trapezium right angle=110, minimum width=3cm, minimum height=1cm, draw=black, fill=blue!30]
\tikzstyle{process} = [rectangle, rounded corners, minimum width=3cm, minimum height=1cm, draw=black, fill=blue!10, text width=10cm]
\tikzstyle{process_new} = [rectangle, rounded corners, minimum width=3cm, minimum height=1cm, draw=black, fill=blue!10, text width=11.5cm]
\tikzstyle{process2} = [rectangle, rounded corners, minimum width=3cm, minimum height=1cm, draw=black, fill=blue!10, text width=10cm]
\tikzstyle{decision} = [diamond, minimum width=3cm, minimum height=1cm, text centered, draw=black, fill=green!30]
\tikzstyle{arrow} = [thick,->,>=stealth]
\DeclareRobustCommand\sampleline[1]{%
    \tikz\draw[#1] (0,0) (0,\the\dimexpr\fontdimen22\textfont2\relax)
    -- (2em,\the\dimexpr\fontdimen22\textfont2\relax);%
}
\begin{document}

\setlength{\parindent}{16pt}

\maketitle

\begin{abstract}
The reduced-rank vector autoregressive (VAR) model can be interpreted as a supervised factor model, where two factor modelings are simultaneously applied to response and predictor spaces. This article introduces a new model, called vector autoregression with common response and predictor factors, to explore further the common structure between the response and predictors in the VAR framework.
The new model can provide better physical interpretations and improve estimation efficiency. In conjunction with the tensor operation, the model can easily be extended to any finite-order VAR model.
A regularization-based method is considered for the high-dimensional estimation with the gradient descent algorithm, and its computational and statistical convergence guarantees are established.
For data with pervasive cross-sectional dependence, a transformation for responses is developed to alleviate the diverging eigenvalue effect. Moreover, we consider additional sparsity structure in factor loading for the case of ultra-high dimension.
Simulation experiments confirm our theoretical findings and a macroeconomic application showcases the appealing properties of the proposed model in structural analysis and forecasting.
  
\end{abstract}

\textit{Keywords}: Factor model, High-dimensional time series, Gradient descent, Matrix factorization, Tensor decomposition

\newpage
\vspace{-1cm}

\linespread{1.53}
\selectfont{}

\section{Introduction}

Due to recent developments in information technologies, high-dimensional data, especially time-dependent data, have been routinely collected from a wide range of scientific areas, including economics, finance, neuroscience, and meteorology, among others \citep{Gorrostieta2012, HL13, DP16}. 
The well-developed statistical methodology for fixed-dimensional data may not be directly applicable to high-dimensional cases, and large-scale data sets often also require scalable and efficient computational algorithms. As a result, it becomes an emerging area of research to develop new statistical methodology and theoretically justified algorithms to analyze the high dimensional data \citep{wainwright2019high,chi2019nonconvex}. In addition, more efforts are needed for high-dimensional time series data due to its complex dynamic dependency; see \cite{pena2021statistical}. 

The vector autoregressive (VAR) model, arguably the most widely used model in 
multivariate time series applications, has been a primary workhorse for analyzing serially dependent data. 
Consider a VAR(1) model for a $p$-dimensional mean-zero time series $\{\bm{y}_t\}$,
\begin{equation}\label{eq:VAR1}
	\bm{y}_t=\bm{A}\bm{y}_{t-1}+\bbm{\varepsilon}_t,~~1\leq t\leq T,
\end{equation}
where $\bbm{\varepsilon}_t\in\mathbb{R}^p$ is a white noise process with mean zero and the covariance matrix $\bm{\Sigma}_{\bbm{\varepsilon}}$, and $\bm{A}\in\mathbb{R}^{p\times p}$ is the parameter matrix providing a straightforward characterization of the interactions between the response $\bm{y}_t$ and predictor $\bm{y}_{t-1}$; see \cite{Tsay14}.
Note that the number of parameters in $\bm{A}$ increases quadratically with the dimension $p$, making it difficult to apply VAR models to high-dimensional data.
To overcome it, a commonly used solution is to assume sparsity in parameter matrices, and many sparsity-imposing or inducing methods can then be employed for estimation and variable selection, including $L_1$ regularization \citep{basu2015regularized,han2015direct} and linear restrictions \citep{guo2016high,wang2021robust}.

Despite its popularity in the literature, there are two concerns about sparse VAR modeling. First, the general sparsity structure cannot guarantee the spectral radius condition for stationarity, so a sparse estimate may result in a non-stationary VAR model. 
Second, for financial and economic time series, one often observes strong dependence among the $p$ scalar series, which often is investigated via factor models
by assuming that the $p$ variables can be decomposed into two parts, factors and errors.
In the vast literature of econometrics and statistics, there are two classes of factor models under various assumptions on factors and errors. The first class assumes that factors are common cross-sectionally and allows for serially dependent idiosyncratic errors \citep{stock2002forecasting,bai2002determining,bai2008large}; while the second class assumes that the dynamic structures along the temporal direction are summarized in the factors, and the errors are temporally uncorrelated \citep{pena1987identifying,lam2012factor,gao2021modeling}.

Another solution for handling high dimensionality is to impose some low-rank structure on the parameter matrices of VAR models, and it leads to the reduced-rank model \citep{velu2013multivariate} or the multilinear low-rank model \citep{wang2019high}. 
This method can circumvent the two concerns mentioned above, and especially for financial and economic data, the fitted models can be interpreted from the perspective of the second class of factor models.
Specifically, assume that $\bm{A}$ is of rank $r$ with $r\ll p$ and, hence, it admits a singular value decomposition (SVD) $\bm{A}=\bm{U}\bm{S}\bm{V}^\top$, where $\bm{U}$ and $\bm{V}$ are $p$-by-$r$ orthonormal matrices.
Accordingly, model \eqref{eq:VAR1} can then be rewritten as
\begin{equation}\label{eq:RRVAR_intepretation}
	\bm{U}^\top\bm{y}_t = \bm{S}\bm{V}^\top\bm{y}_{t-1}+\bm{U}^\top\bbm{\varepsilon}_t,
\end{equation}
and this motivates us to interpret $\bm{U}^\top\bm{y}_t$ and $\bm{V}^\top\bm{y}_{t-1}$ as $r$-dimensional response and predictor factors, respectively, where $\bm{U}$ and $\bm{V}$ are the factor loading matrices. 
The factors defined here can summarize the temporal dynamics in responses and predictors, and should be understood in the sense of the second class of factor models, in which the factors capture all dynamic dependency in the data.
Along this line, we reformulate the VAR model with a low-rank parameter matrix into a form of supervised factor modeling in Section \ref{sec:2.1}.
Specifically, it is equivalent to simultaneously conducting two factor modelings for the $p$ financial or economic variables in a market, where the latent response factors $\bm{U}^\top\bm{y}_t$ can summarize the whole market as in the traditional factor modeling, while the latent predictor factors $\bm{V}^\top\bm{y}_{t-1}$ are the driving forces of the market; see Section \ref{sec:2} for more details.

Although the factor model cannot directly be used for forecasting, a common practice is to apply a low-dimensional model to the factor processes, and then use predictions of the factors and the loading matrices to obtain forecasts of the high-dimensional time series \citep{lam2012factor,gao2021modeling}.
For example, for a $p$-dimensional time series $\bm{y}_t$, consider the factor model in \citet{gao2021modeling}, which can be written as $\bm{y}_t=\bm{\Lambda}\bm{f}_t+\bm{\Gamma}\bm{e}_t$ with $\bm{f}_t\in\mathbb{R}^r$ being a low-dimensional factor process and $\bm{e}_t$ a $(p-r)$-dimensional white noise, and the loading matrix $\lb\bm{\Lambda}~\bm{\Gamma}\rb\in\mathbb{R}^{p\times p}$ 
being orthonormal. 
Assuming a VAR(1) model for $\bm{f}_t$, say $\bm{f}_t=\bm{B}\bm{f}_{t-1}+\bbm{\xi}_t$, it can be shown that $\bm{\Lambda}^\top\bm{y}_t$ follows a VAR(1) process
\begin{equation}\label{eq:FM}
	\bm{\Lambda}^\top\bm{y}_t =\bm{B}\bm{\Lambda}^\top\bm{y}_{t-1}+\bbm{\xi}_t.
\end{equation}
In comparison with model \eqref{eq:RRVAR_intepretation}, the spaces spanned by the response and predictor factors in model \eqref{eq:FM} are identical; see Section \ref{sec:2.1} for more discussions on its relationship to VAR models.
As shown by the empirical example in Section \ref{sec:7}, the setting of factor models may be too restrictive, while the spaces spanned by $\bm{U}$ and $\bm{V}$ from VAR models in \eqref{eq:RRVAR_intepretation} may be overlapped, i.e., there may exist common factors in responses and predictors.
The first main contribution of this article is to propose a VAR model with common response and predictor factors in Section \ref{sec:2.2}, where dynamic dependence in time series is summarized into three types of factors: response-specific, predictor-specific, and common factors.
This enables a better physical interpretation and facilitates the development of more efficient estimation.

We then consider the high-dimensional estimation method and algorithm. A form of matrix or tensor decomposition is considered for the proposed model. However, as the decomposition is not unique and the optimization problem is non-convex, it is challenging to derive computational and statistical guarantees.
To this end, the second contribution of this article is to develop a complete modeling procedure for estimation and parameter selection in Section \ref{sec:3} and further to provide theoretical justifications for both computational and statistical convergence in Section \ref{sec:4}. Specifically, a regularized estimation framework is proposed for high-dimensional VAR models with common response and predictor factors, and a scalable and efficient gradient descent algorithm with spectral initialization is developed accordingly. 
From the computational and statistical convergence analysis, the proposed procedure can effectively and efficiently achieve a statistically optimal rate for estimation errors. Moreover, a data-driven procedure is suggested to determine the numbers of common and specific factors, and its theoretical justifications are also established. 

To adequately address the strong cross-sectional dependence of time series data in the many real applications, in Section \ref{sec:diverging}, we further investigate the case where the largest eigenvalue of $\text{var}(\bm{y}_t)$ may diverge to infinity as $p$ increases. The third contribution of this article is to provide the first solution to deal with the diverging eigenvalue effect, or pervasive cross-sectional dependency, in high-dimensional VAR estimation.
Additionally, in Section \ref{sec:5}, for the case of $p\gg T$, we consider an additional sparsity structure on the factor loading matrices to improve estimation efficiency and to perform variable selection.
Finally, some simulation results and an empirical example are presented in Sections \ref{sec:6} and \ref{sec:7}, respectively. Section \ref{sec:8} gives a short conclusion with discussion. All technical proofs, codes, data, and additional simulation results are given in appendices. 

This work is also related to the vast literature on Bayesian VAR models. \citet{banbura2010large} studied the shrinkage prior for large Bayesian VAR models, and \citet{koop2013forecasting} applied it to the macroeconomic data of medium and large sizes. Bayesian variable selection method for VAR processes was first proposed by \citet{korobilis2013var}. \citet{ghosh2018high} and \citet{ghosh2021strong} studied posterior estimation consistency and strong variable selection consistency of large Bayesian VAR models, respectively.

Throughout this article, we denote vectors by boldface lower case letters, e.g., $\bm{v}$, matrices by boldface capital letters, e.g., $\bm{M}$, and third-order tensors by Euler script letters, e.g., $\cm{T}$. For any vector $\bm{v}$, denote by $\|\bm{v}\|_2$ its Euclidean norm. For any matrix $\bm{M}$, denote by $\bm{M}^\top$, $\|\bm{M}\|_\text{F}$, $\sigma_i(\bm{M})$, $\mathcal{M}(\bm{M})$, and $\mathcal{M}^\perp(\bm{M})$ its transpose, Frobenius norm, $i$-th largest singular value, column space, and orthogonal complement of column space, respectively. For a symmetric matrix $\bm{M}$, denote by $\lambda_{\max}(\bm{M})$ and $\lambda_{\min}(\bm{M})$ its largest and smallest eigenvalue, respectively. For two matrices $\bm{M}_1$ and $\bm{M}_2$, denote by $\lb\bm{M}_1~\bm{M}_2\rb$ and $\langle\bm{M}_1,\bm{M}_2\rangle$ their column-wise matrix concatenation and inner product, respectively. For positive integers $p\geq q$, denote the set of orthonormal matrices by $\mathbb{O}^{p\times q}:=\{\bm{M}\in\mathbb{R}^{p\times q}:\bm{M}^\top\bm{M}=\bm{I}_q\}$. For a third-order tensor $\cm{T}$, denote by $\|\cm{T}\|_\textup{F}$ the Frobenius norm and by $\cm{T}_{(i)}$ its mode-$i$ matricization, for $1\leq i\leq 3$. For a tensor $\cm{T}\in\mathbb{R}^{p_1\times\dots\times p_i\times\dots\times p_d}$ and matrix $\bm{M}\in\mathbb{R}^{q\times p_i}$, denote by $\cm{T}\times_i\bm{M}$ the mode-$i$ tensor-matrix multiplication, for $1\leq i\leq d$. 
Let $C$ denote a generic positive constant. For two real-valued sequences $x_k$ and $y_k$, $x_k\gtrsim y_k$ if there exists a $C>0$ such that $x_k\geq Cy_k$ for all $k$. In addition, we write $x_k\asymp y_k$ if $x_k\gtrsim y_k$ and $y_k\gtrsim x_k$.
Some preliminaries of tensor notation and tensor algebra are presented in Appendix \ref{append:VAR_L}.

\section{VAR with Common Response and Predictor Factors}\label{sec:2}
\subsection{Relationship between reduced-rank VAR and factor models}\label{sec:2.1}

Consider the VAR(1) model in \eqref{eq:VAR1}. Assume that the parameter matrix $\bm{A}$ has a low rank $r$, which is much smaller than $p$, and admits the SVD $\bm{A}=\bm{U}\bm{S}\bm{V}^\top$, where $\bm{U},\bm{V}\in\mathbb{O}^{p\times r}$ are orthonormal matrices, and $\bm{S}\in\mathbb{R}^{r\times r}$ is a diagonal matrix.
As a result, the reduced-rank VAR model can be formulated into
\begin{equation}\label{eq:rrVAR}
	\bm{y}_t=\bm{U}\bm{S}\bm{V}^\top\bm{y}_{t-1}+\bbm{\varepsilon}_t \hspace{5mm}\text{or}\hspace{5mm}
	\bm{U}^\top\bm{y}_t = \bm{S}\bm{V}^\top\bm{y}_{t-1}+\bm{U}^\top\bbm{\varepsilon}_t,
\end{equation}
where $\{\bbm{\varepsilon}_t\}$ are \textit{i.i.d.} with mean zero and finite variance matrix; see \citet{velu2013multivariate}.
Note that the singular vectors $\bm{U}$ and $\bm{V}$ are not unique, as sign switches and column exchanges can be applied. Also, when some of the singular values are identical, their corresponding singular vectors are also not unique. However, the column spaces $\mathcal{M}(\bm{U})$ and $\mathcal{M}(\bm{V})$, as well as the corresponding subspace projectors $\bm{U}\bm{U}^\top$ and $\bm{V}\bm{V}^\top$, can be uniquely defined.
In fact, $\mathcal{M}(\bm{U})$ and $\mathcal{M}(\bm{V})$ are the column and row spaces of $\bm{A}$, respectively.

From model \eqref{eq:rrVAR}, we can interpret $\bm{U}^\top\bm{y}_t$ and $\bm{V}^\top\bm{y}_{t-1}$, respectively, as the response and predictor factors, which correspond to two different factor modelings.
On one hand, for dimension reduction on the response factor space, $\bm{y}_t$ can be projected onto $\mathcal{M}(\bm{U})$ and $\mathcal{M}^\perp(\bm{U})$, where these two parts can be verified to have completely different dynamic structures,  
\begin{equation}\label{eq:dynamic_part}
	\bm{U}\bm{U}^\top\bm{y}_t=(\bm{U}\bm{S}\bm{V}^\top)(\bm{V}\bm{V}^\top\bm{y}_{t-1})+\bm{U}\bm{U}^\top\bbm{\varepsilon}_t 
	\hspace{5mm}\text{and}\hspace{5mm}
	(\bm{I}_p-\bm{U}\bm{U}^\top)\bm{y}_t=(\bm{I}_p-\bm{U}\bm{U}^\top)\bbm{\varepsilon}_t.
\end{equation}
All information of $\bm{y}_t$ related to temporal dynamic structures is collected into $\mathcal{M}(\bm{U})$, and the projection of $\bm{y}_t$ onto $\mathcal{M}^\perp(\bm{U})$ is serially uncorrelated.
In fact, model \eqref{eq:rrVAR} can also simply be rewritten as $
\bm{y}_t=\bm{U}\bm{f}_t+\bbm{\varepsilon}_t$ with $\bm{f}_t=\bm{S}\bm{V}^\top\bm{y}_{t-1}$. Consequently, the time series generated by model \eqref{eq:rrVAR} admits a form of static factor models, and the corresponding factor space is exactly the response factor space $\mathcal{M}(\bm{U})$.

On the other hand, if we project $\bm{y}_{t-1}$ onto $\mathcal{M}(\bm{V})$ and $\mathcal{M}^\perp(\bm{V})$, it holds that
\[
\text{cov}(\bm{y}_t,\bm{V}\bm{V}^\top\bm{y}_{t-1})=(\bm{U}\bm{S}\bm{V}^\top)\text{var}(\bm{V}\bm{V}^\top\bm{y}_{t-1})
\hspace{5mm}\text{and}\hspace{5mm}
\text{cov}(\bm{y}_t,(\bm{I}_p-\bm{V}\bm{V}^\top)\bm{y}_{t-1})=\bm{0}.
\]
All information of $\bm{y}_{t-1}$ that can contribute to predicting $\bm{y}_t$ is summarized into the space $\mathcal{M}(\bm{V})$, and the predictor factors $\bm{V}^\top\bm{y}_{t-1}$ contain all driving forces of the market.
Following the existing work considering dynamically dependent factors and white noise errors \citep{lam2012factor}, we may treat model \eqref{eq:rrVAR} as a supervised factor model, where two different factor modelings are conducted simultaneously and the dynamic dependence of time series is driven by these two types of factors.

Factor modeling, with dynamically dependent factors and white noise errors, is another method to forecast high-dimensional time series in the statistics literature; see \citet{lam2011estimation}, \cite{lam2012factor}, \cite{gao2021modeling}, among others. 
Specifically, consider the factor model in \cite{gao2021modeling}, and assume that  $\bm{y}_t\in\mathbb{R}^p$ has a latent structure of
\begin{equation}\label{eq:structuralFM}
	\bm{y}_t=\bm{\Lambda}\bm{f}_t+\bm{\Gamma}\bm{e}_t=\lb\bm{\Lambda}~\bm{\Gamma}\rb\lb\bm{f}_t^\top~\bm{e}_t^\top\rb^\top,~~1\leq t\leq T
\end{equation}
where $\bm{f}_t\in\mathbb{R}^r$ is a dynamic factor, $\bm{e}_t\in\mathbb{R}^{p-r}$ is a white noise, and $\bm{\Lambda}\in\mathbb{R}^{p\times r}$ and $\bm{\Gamma}\in\mathbb{R}^{p\times(p-r)}$ are full-rank loading matrices for factors and white noise components, respectively.
For the sake of identification, $\bm{\Lambda}$ and $\bm{\Gamma}$ are assumed to be orthonormal, i.e., $\bm{\Lambda}^\top\bm{\Lambda}=\bm{I}_r$ and $\bm{\Gamma}^\top\bm{\Gamma}=\bm{I}_{p-r}$, and $\lb\bm{\Lambda}~\bm{\Gamma}\rb\in\mathbb{R}^{p\times p}$ is of full rank such that $\text{var}(\bm{y}_t)$ is nonsingular.
Furthermore, suppose that the factors in \eqref{eq:structuralFM} follow a VAR model,
$\bm{f}_t=\bm{B}\bm{f}_{t-1}+\bbm{\xi}_t$,
where $\bm{B}\in\mathbb{R}^{r\times r}$ is a coefficient matrix, and $\bbm{\xi}_t\in\mathbb{R}^{r}$ is a white noise uncorrelated with $\{\bm{e}_t\}$ in all leads and lags. 

Denote $\bbm{\eta}_t=\bm{\Lambda}\bbm{\xi}_t+\bm{\Gamma}\bm{e}_t\in\mathbb{R}^p$, which is serially uncorrelated. 
Note that $(\bm{I}_p-\bm{\Lambda}\bm{\Lambda}^\top)\bbm{\eta}_t=(\bm{I}_p-\bm{\Lambda}\bm{\Lambda}^\top)\bm{\Gamma}\bm{e}_t$ and, for the projection of $\bm{y}_t$ onto $\mathcal{M}^\perp(\bm{\Lambda})$, it holds that
$(\bm{I}_p-\bm{\Lambda}\bm{\Lambda}^\top)\bm{y}_t=(\bm{I}_p-\bm{\Lambda}\bm{\Lambda}^\top)\bbm{\eta}_t$, which is also serially uncorrelated.
The projection of $\bm{y}_t$ onto $\mathcal{M}(\bm{\Lambda})$ follows
\begin{equation}\label{eq:SFM_dynamic}
\bm{\Lambda}\bm{\Lambda}^\top\bm{y}_t =(\bm{\Lambda}\bm{B}\bm{\Lambda}^\top)(\bm{\Lambda}\bm{\Lambda}^\top\bm{y}_{t-1}) + \bm{\Lambda}\bm{\Lambda}^\top\bbm{\eta}_t - \bm{\Lambda}\bm{\Lambda}^\top\lb \bm{\Lambda}\bm{B}\bm{\Lambda}^\top\bm{\Gamma}~\bm{0}_{p\times r} \rb
\lb\bm{\Gamma}~\bm{\Lambda}\rb^{-1}\bbm{\eta}_{t-1},
\end{equation}
i.e. a form of vector autoregressive and moving average (VARMA) models \citep{Tsay14}.
Moreover, when $\bm{\Lambda}^\top\bm{\Gamma}=\bm{0}_{r\times(p-r)}$, it reduces to a VAR(1) process.
In comparison with \eqref{eq:dynamic_part}, the response and predictor spaces in \eqref{eq:SFM_dynamic} are identical, and this may be too restrictive; see Section \ref{sec:7} for empirical evidence.

We finally consider model \eqref{eq:rrVAR} with the same response and predictor spaces, i.e., $\mathcal{M}(\bm{U})=\mathcal{M}(\bm{V})$. There exists an orthogonal matrix $\bm{O}\in\mathbb{O}^{r\times r}$ such that $\bm{U}\bm{O}=\bm{V}$. As a result, model \eqref{eq:rrVAR} can be rewritten as $\bm{y}_t=\bm{U}\bm{Q}\bm{U}^\top\bm{y}_{t-1}+\bbm{\varepsilon}_t$, where $\bm{Q}=\bm{S}\bm{O}^\top$,
\begin{equation}
	\bm{U}\bm{U}^\top\bm{y}_t=(\bm{U}\bm{Q}\bm{U}^\top)(\bm{U}\bm{U}^\top\bm{y}_{t-1})+\bm{U}\bm{U}^\top\bbm{\varepsilon}_t\quad\text{and}\quad(\bm{I}_p-\bm{U}\bm{U}^\top)\bm{y}_t=(\bm{I}_p-\bm{U}\bm{U}^\top)\bbm{\varepsilon}_t,
\end{equation}
which remarkably coincide with dynamic structures of the above-mentioned dynamic factor model with $\bm{U}\bm{U}^\top=\bm{\Lambda}\bm{\Lambda}^\top$, $\bbm{\varepsilon}_t=\bbm{\eta}_t$, $\bm{U}\bm{Q}\bm{U}^\top=\bm{\Lambda}\bm{B}\bm{\Lambda}^\top$, and $\bm{\Lambda}^\top\bm{\Gamma}=\bm{0}_{r\times(p-r)}$.

\subsection{Common response and predictor factors}\label{sec:2.2}

For model \eqref{eq:rrVAR}, consider its response and predictor spaces, $\mathcal{M}(\bm{U})$ and $\mathcal{M}(\bm{V})$, and suppose that their intersection is of dimension $d$, i.e. $\text{rank}(\lb\bm{U}~\bm{V}\rb)=2r-d$, with $0\leq d\leq r$. 
As a result, when $d\geq 1$, there exist two orthogonal matrices $\bm{O}_1,\bm{O}_2\in\mathbb{O}^{r\times r}$ such that
\begin{equation}
\bm{U}\bm{O}_1=[\bm{C}~\bm{R}\textbf{]}\in\mathbb{R}^{p\times r},~~ \bm{V}\bm{O}_2=\textbf{[}\bm{C}~\bm{P}\textbf{]}\in\mathbb{R}^{p\times r}~~\text{and}~~\bm{C}^\top\bm{R}=\bm{C}^\top\bm{P}=\bm{0}_{d\times (r-d)},
\end{equation}
where the orthonormal matrices $\bm{R},\bm{P}\in\mathbb{O}^{p\times (r-d)}$ represent the \textit{response-specific} and \textit{predictor-specific subspaces} of dimension $r-d$, respectively, the orthonormal matrix $\bm{C}\in\mathbb{O}^{p\times d}$ represents the \textit{common subspace} of dimension $d$, and $\bm{C}$ is orthogonal to $\bm{R}$ and $\bm{P}$.

Let $\bm{D}=\bm{O}_1^\top\bm{S}\bm{O}_2$, and model \eqref{eq:rrVAR} can be rewritten as
\begin{equation}\label{eq:equivalent_form}
\bm{y}_t=\textbf{[}\bm{C}~\bm{R}\textbf{]}\bm{D}\textbf{[}\bm{C}~\bm{P}\textbf{]}^\top\bm{y}_{t-1}+\bbm{\varepsilon}_t.
\end{equation}
Let $\bm{c}_t=\bm{C}^\top\bm{y}_t\in\mathbb{R}^d$, $\bm{r}_t=\bm{R}^\top\bm{y}_t\in\mathbb{R}^{r-d}$, and $\bm{p}_t=\bm{P}^\top\bm{y}_t\in\mathbb{R}^{r-d}$. Then model \eqref{eq:equivalent_form} implies
\begin{equation}\label{eq:var1}
\renewcommand{\arraystretch}{0.6}
\begin{bmatrix}
\bm{c}_t \\ \bm{r}_t
\end{bmatrix} = \bm{D}\begin{bmatrix}
\bm{c}_{t-1} \\ \bm{p}_{t-1}
\end{bmatrix}+\begin{bmatrix}
\bm{C}^\top\bbm{\varepsilon}_t \\ \bm{R}^\top\bbm{\varepsilon}_t
\end{bmatrix}.
\end{equation}
We call the model in \eqref{eq:equivalent_form} or equivalently 
in \eqref{eq:var1} the \textit{vector autoregression with common response and predictor factors}, and $\bm{c}_t$, $\bm{r}_t$, and $\bm{p}_t$ are referred to as the  \textit{common, response-specific, and predictor-specific factors}, respectively.

The proposed model in \eqref{eq:equivalent_form} 
can provide a better physical interpretation, especially for financial and economic series, than reduced-rank VAR models in \eqref{eq:rrVAR} by distinguishing these three types of factors; see Section \ref{sec:7} for empirical evidence.
Moreover, the reduced-rank model in \eqref{eq:rrVAR} has $d_\text{RR}(p,r)=r(2p-r)$ parameters, whereas the proposed model has $d_\text{CS}(p,r,d)=r(2p-r)-d(p-(d+1)/2)$ parameters. When $r$ and $d$ are much smaller than $p$, the model complexity is roughly reduced from $2pr$ to $2pr-pd$, and hence the corresponding estimation efficiency can be improved; see simulation experiments in Section \ref{sec:6}.

Using tensor operations, we extend the proposed model to general VAR($\ell$) processes,
\begin{equation}\label{eq:VAR_ell}
\bm{y}_t=\bm{A}_1\bm{y}_{t-1}+\cdots+\bm{A}_\ell\bm{y}_{t-\ell}+\bbm{\varepsilon}_t.
\end{equation}
The parameter matrices are first rearranged into a tensor $\cm{A}\in\mathbb{R}^{p\times p\times \ell}$ such that its mode-1 matricization is  $\cm{A}_{(1)}=\lb\bm{A}_1~\bm{A}_2~\cdots~\bm{A}_\ell\rb$, and its mode-2 matricization assumes the form  $\cm{A}_{(2)}=\lb\bm{A}_1^\top~\bm{A}_2^\top~\cdots~\bm{A}_\ell^\top\rb$.
Note that the column spaces of $\cm{A}_{(1)}$ and $\cm{A}_{(2)}$ are the column and row spaces of all parameter matrices, respectively. Suppose that they are of dimensions $r_1$ and $r_2$, respectively; that is, $\text{rank}(\cm{A}_{(1)})=r_1$ and $\text{rank}(\cm{A}_{(2)})=r_2$, where $r_1$ and $r_2$ may not be equal. We have a Tucker decomposition via higher-order singular value decomposition (HOSVD) \citep{de2000multilinear},
$$\cm{A}=\cm{S}\times_1\bm{U} \times_2\bm{V},$$
where $\bm{U}\in\mathbb{O}^{p\times r_1}$ and $\bm{V}\in\mathbb{O}^{p\times r_2}$ consist of the top $r_1$ and $r_2$ left singular vectors of $\cm{A}_{(1)}$ and $\cm{A}_{(2)}$, respectively, the core tensor $\cm{S}=\cm{A}\times_1\bm{U}^{\top} \times_2\bm{V}^{\top}\in\mathbb{R}^{r_1\times r_2\times \ell}$, and $\times_i$ is the tensor-matrix mode-$i$ multiplication defined in Appendix \ref{append:VAR_L}. 

Split the mode-1 matricization of $\cm{S}$ into $\lb\bm{S}_1~\bm{S}_2~\cdots~\bm{S}_\ell\rb$ with each $\bm{S}_k\in\mathbb{R}^{r_1\times r_2}$, and then model \eqref{eq:VAR_ell} becomes
\begin{equation}\label{eq:var-factor}
	\bm{y}_t=\bm{U}\sum_{k=1}^{\ell}\bm{S}_k \bm{V}^{\top} \bm{y}_{t-k}+\bbm{\varepsilon}_t \hspace{5mm}\text{or}\hspace{5mm}
	\bm{U}^{\top}\bm{y}_t=\sum_{k=1}^{\ell}\bm{S}_k \bm{V}^{\top} \bm{y}_{t-k}+\bm{U}^{\top}\bbm{\varepsilon}_t,
\end{equation}
where $\bm{U}^{\top}\bm{y}_t$ and $\bm{V}^{\top} \bm{y}_{t-k}$ are the response and predictor factors, respectively.
Note that $\bm{U}$ and $\bm{V}$ are not unique, but the subspaces $\mathcal{M}(\bm{U})$ and $\mathcal{M}(\bm{V})$, together with their projectors $\bm{U}\bm{U}^\top$ and $\bm{V}\bm{V}^\top$, can be uniquely defined.
As in VAR(1), we can interpret model \eqref{eq:var-factor} as a supervised factor model with $\mathcal{M}(\bm{U})$ and $\mathcal{M}(\bm{V})$ being the response and predictor factor spaces, respectively. 
\begin{remark}
    Let $\bm{\Lambda}_k=\bm{U}\bm{S}_{k+1}$ and $\bm{f}_{t-k}=\bm{V}^\top\bm{y}_{t-k-1}$, and then model \eqref{eq:var-factor} can be rewritten into $\bm{y}_t=\sum_{k=0}^{\ell-1}\bm{\Lambda}_k\bm{f}_{t-k}+\bbm{\varepsilon}_t$, i.e. it admits a generalized dynamic factor modeling form in \citet{forni2000generalized,forni2005generalized}. Note that the proposed model is for a supervised problem, while the generalized dynamic factor modeling is fundamentally for an unsupervised one. In addition, the factors and errors in \citet{forni2000generalized,forni2005generalized} are assumed to be uncorrelated in all leads and lags, but those in our model are not.
\end{remark}

Suppose that the response and predictor subspaces share a common subspace of dimension $d$, i.e., $\text{rank}(\lb\bm{U} ~\bm{V}\rb)=r_1+r_2-d$, with $0\leq d\leq \min(r_1,r_2)$.
Then there exist two matrices, $\bm{O}_1\in\mathbb{O}^{r_1\times r_1}$ and $\bm{O}_2\in\mathbb{O}^{r_2\times r_2}$, such that
$\bm{U}\bm{O}_1=[\bm{C}~\bm{R}\textbf{]}\in\mathbb{R}^{p\times r_1}$, $\bm{V}\bm{O}_2=\textbf{[}\bm{C}~\bm{P}\textbf{]}\in\mathbb{R}^{p\times r_2}$, and $\bm{C}^\top\bm{R}=\bm{C}^\top\bm{P}=\bm{0}_{d\times (r-d)}$,
where $\bm{R}\in\mathbb{O}^{p\times (r_1-d)}$, $\bm{P}\in\mathbb{O}^{p\times (r_2-d)}$, and $\bm{C}\in\mathbb{O}^{p\times d}$ are the response-specific, predictor-specific, and common subspaces of dimensions $r_1-d$, $r_2-d$, and $d$, respectively.
As a result, the parameter tensor can be formulated into
$\cm{A}=\cm{D}\times_1\lb\bm{C}~\bm{R}\rb\times_2\lb\bm{C}~\bm{P}\rb$,
where $\cm{D}=\cm{S}\times_1 \bm{O}_1^\top\times_2\bm{O}_2^\top$.
Furthermore, let $\bm{c}_t=\bm{C}^\top\bm{y}_t\in\mathbb{R}^d$, $\bm{r}_t=\bm{R}^\top\bm{y}_t\in\mathbb{R}^{r_1-d}$, and $\bm{p}_t=\bm{P}^\top\bm{y}_t\in\mathbb{R}^{r_2-d}$, and then model \eqref{eq:var-factor} has the form 
\begin{equation}\label{eq:FA_interpretation}
	\bm{y}_t=\lb\bm{C}~\bm{R}\rb \sum_{k=1}^{\ell}\bm{D}_k \lb\bm{C}~\bm{P}\rb^{\top} \bm{y}_{t-k}+\bbm{\varepsilon}_t 
	\hspace{5mm}\text{or}\hspace{5mm}
\renewcommand{\arraystretch}{0.6}
\begin{bmatrix}
\bm{c}_t \\ \bm{r}_t
\end{bmatrix} = \sum_{k=1}^{\ell}\bm{D}_k\begin{bmatrix}
\bm{c}_{t-k} \\ \bm{p}_{t-k}
\end{bmatrix} +\begin{bmatrix}
\bm{C}^\top\bbm{\varepsilon}_t \\ \bm{R}^\top\bbm{\varepsilon}_t
\end{bmatrix},
\end{equation}
where each $\bm{D}_k$ is a $r_1$-by-$r_2$ matrix such that $\cm{D}_{(1)}=\lb\bm{D}_1~\bm{D}_2~\cdots~\bm{D}_\ell\rb$. 
The model \eqref{eq:FA_interpretation} defines a general vector autoregression with common response and predictor factors, and $\bm{c}_t$, $\bm{r}_t$, and $\bm{p}_t$ are the common, response-specific, and predictor-specific factors, respectively.

For the parameter tensor $\cm{A}\in\mathbb{R}^{p\times p\times \ell}$, dimension reduction is conducted along the first two modes, and when the lag order $\ell$ is large, it is also of interest to further restrict the parameter space along the third mode. Specifically, assume that $\text{rank}(\cm{A}_{(3)})=r_3$, and we have the Tucker decomposition: $\cm{D}=\cm{G}\times_3 \bm{L}$ and
\begin{equation}\label{eq:common_tensor_decomp2}
    \cm{A}=\cm{G}\times_1\lb\bm{C}~\bm{R}\rb\times_2\lb\bm{C}~\bm{P}\rb\times_3\bm{L},
\end{equation}
where $\cm{G}\in\mathbb{R}^{r_1\times r_2\times r_3}$ is the core tensor and $\bm{L}\in\mathbb{O}^{\ell\times r_3}$ is the lag factor matrix.
Similarly, this additional low-rankness along the third mode would lead to a lag-specific factor.
The number of parameters under the low-rank structure is $d_\text{CS}(p,\ell,r_1,r_2,r_3,d)=r_1r_2r_3+r_1(p-r_1)+r_2(p-r_2)+r_3(\ell-r_3)-d(p-(d+1)/2)$, while model \eqref{eq:VAR_ell} has $p^2\ell$ parameters.

\section{High-Dimensional Estimation Methods}\label{sec:3}


\subsection{Regularized estimation and gradient descent algorithm}\label{sec:3.1}

Consider the observed sequence, $\{\bm{y}_0,\bm{y}_1,\ldots,\bm{y}_T\}$, generated by the VAR(1) model in \eqref{eq:equivalent_form}, and suppose that both $r$ and $d$ are known.
Our aim is to estimate the parameter matrix
\begin{equation*}
	\bm{A}:=\bm{A}(\bm{C},\bm{R},\bm{P},\bm{D})=\textbf{[}\bm{C}~\bm{R}\textbf{]}\bm{D}\textbf{[}\bm{C}~\bm{P}\textbf{]}^\top=\lb\bm{C}~\bm{R}\rb
    \renewcommand{\arraystretch}{0.6}
	\begin{bmatrix}
		\bm{D}_{11} & \bm{D}_{12}\\
		\bm{D}_{21} & \bm{D}_{22}
	\end{bmatrix}\begin{bmatrix}
	\bm{C}^\top \\ \bm{P}^{\top}
\end{bmatrix},
\end{equation*}
where $\bm{D}_{11}\in\mathbb{R}^{d\times d}$, $\bm{D}_{12}\in\mathbb{R}^{d\times(r-d)}$, $\bm{D}_{21}\in\mathbb{R}^{(r-d)\times d}$, and $\bm{D}_{22}\in\mathbb{R}^{(r-d)\times (r-d)}$ are four blocks of $\bm{D}$.
Let $\bm{Y}=\lb\bm{y}_1~\cdots~\bm{y}_T\rb$ and $\bm{X}=\lb\bm{y}_0~\cdots~\bm{y}_{T-1}\rb$, and the squared loss function is
\begin{equation}
\mathcal{L}(\bm{C},\bm{R},\bm{P},\bm{D})
=\frac{1}{2T}\sum_{t=1}^T\left\|\bm{y}_t-\lb\bm{C}~\bm{R}\rb
\bm{D}\lb\bm{C}~\bm{P}\rb^\top\bm{y}_{t-1}\right\|_2^2
=\frac{1}{2T}\left\|\bm{Y}-\lb\bm{C}~\bm{R}\rb
\bm{D}\lb\bm{C}~\bm{P}\rb^\top\bm{X}\right\|_\text{F}^2.
\end{equation}
With $a,b>0$ being regularization parameters, the components in \eqref{eq:equivalent_form} can be estimated by
\begin{equation}\label{eq:extra_decompose}
	\begin{split}
		\left(\widehat{\bm{C}},\widehat{\bm{R}},\widehat{\bm{P}},\widehat{\bm{D}}\right)
		= & \argmin_{\substack{\bm{C}\in\mathbb{R}^{p\times d},\bm{D}\in\mathbb{R}^{r\times r},\\ \bm{R},\bm{P}\in\mathbb{R}^{p\times(r-d)}}} \Bigg\{\mathcal{L}(\bm{C},\bm{R},\bm{P},\bm{D}) \\
        &+  \frac{a}{2}\left\|\lb\bm{C}~\bm{R}\rb^\top\lb\bm{C}~\bm{R}\rb-b^2\bm{I}_r\right\|_\text{F}^2
		 + \frac{a}{2}\left\|\lb\bm{C}~\bm{P}\rb^\top\lb\bm{C}~\bm{P}\rb-b^2\bm{I}_r\right\|_\text{F}^2\Bigg\}.
	\end{split}
\end{equation}

The above estimation method is motivated by \citet{han2020optimal} for low-rank tensor estimation, and the regularization terms $\|\lb\bm{C}~\bm{R}\rb^\top\lb\bm{C}~\bm{R}\rb-b^2\bm{I}_r\|_\text{F}^2$ and $\|\lb\bm{C}~\bm{P}\rb^\top\lb\bm{C}~\bm{P}\rb-b^2\bm{I}_r\|_\text{F}^2$ are used to keep $\lb\bm{C}~\bm{R}\rb$ and $\lb\bm{C}~\bm{P}\rb$ from being singular and to balance the scaling of these components. 
It is noteworthy that $\lb\widehat{\bm{C}}~\widehat{\bm{R}}\rb^\top\lb\widehat{\bm{C}}~\widehat{\bm{R}}\rb=\lb\widehat{\bm{C}}~\widehat{\bm{P}}\rb^\top\lb\widehat{\bm{C}}~\widehat{\bm{P}}\rb=b^2\bm{I}_r$.
Moreover, the estimated parameter matrix $\bm{A}(\widehat{\bm{C}},\widehat{\bm{R}},\widehat{\bm{P}},\widehat{\bm{D}})$ is not sensitive to the choices of regularization parameters $a$ and $b$, and they are set to one in all our numerical analysis.

We use the gradient descent method to solve the optimization problem in \eqref{eq:extra_decompose}. Specifically, the partial derivatives can be calculated as
\begin{equation}
	\begin{split}
		\nabla_{\bm{C}}\mathcal{L}=&\nabla\mathcal{L}(\bm{A})(\bm{C}\bm{D}_{11}^\top+\bm{P}\bm{D}_{12}^\top)+\nabla\mathcal{L}(\bm{A})^\top(\bm{C}\bm{D}_{11}+\bm{R}\bm{D}_{21}),\nabla_{\bm{D}}\mathcal{L}=\lb\bm{C}~\bm{R}\rb^\top\nabla\mathcal{L}(\bm{A})\lb\bm{C}~\bm{P}\rb,\\
        \nabla_{\bm{R}}\mathcal{L}=&\nabla\mathcal{L}(\bm{A})\lb\bm{C}~\bm{P}\rb\lb\bm{D}_{21}~\bm{D}_{22}\rb^\top,~\text{and}~
		\nabla_{\bm{P}}\mathcal{L}=\nabla\mathcal{L}(\bm{A})^\top\lb\bm{C}~\bm{R}\rb\lb\bm{D}_{12}^\top~\bm{D}_{22}^\top\rb^\top,
	\end{split}
\end{equation}
where $\nabla\mathcal{L}(\bm{A})=T^{-1}(\bm{A}\bm{X}-\bm{Y})\bm{X}^\top=T^{-1}(\lb\bm{C}~\bm{R}\rb
\bm{D}\lb\bm{C}~\bm{P}\rb^\top\bm{X}-\bm{Y})\bm{X}^\top$.
Given an initial estimator $\bm{A}^{(0)}=\lb\bm{C}^{(0)}~\bm{R}^{(0)}\rb\bm{D}^{(0)}\lb\bm{C}^{(0)}~\bm{P}^{(0)}\rb^\top$ and a step size $\eta$, we can then design a gradient descent algorithm to search for the minimizer of \eqref{eq:extra_decompose}; see Algorithm \ref{alg:GD_LRP}.

\begin{algorithm}[!htp]
	\caption{Gradient descent algorithm for VAR(1) model with known $r$ and $d$}
	\label{alg:GD_LRP}
	1: \textbf{Input}: $\bm{Y}$, $\bm{X}$, step size $\eta$, number of iteration $I$, initial values $\bm{C}^{(0)}$, $\bm{R}^{(0)}$, $\bm{P}^{(0)}$, and $\bm{D}^{(0)}$\\[-0.5em]
	2: \textbf{for} $i=0,\dots,I-1$\\[-0.3em]
	3: \hspace*{0.7cm} $\bm{C}^{(i+1)}=\bm{C}^{(i)}-\eta\nabla_{\bm{C}}\mathcal{L}^{(i)}-\eta a\big[2\bm{C}^{(i)}(\bm{C}^{(i)\top}\bm{C}^{(i)}-b^2\bm{I}_d)+\bm{R}^{(i)}\bm{R}^{(i)\top}\bm{C}^{(i)}+\bm{P}^{(i)}\bm{P}^{(i)\top}\bm{C}^{(i)}\big]$\\[-0.3em]
	4: \hspace*{0.7cm} $\bm{R}^{(i+1)}=\bm{R}^{(i)}-\eta\nabla_{\bm{R}}\mathcal{L}^{(i)}-\eta a\big[\bm{R}^{(i)}(\bm{R}^{(i)\top}\bm{R}^{(i)}-b^2\bm{I}_{r-d})+\bm{C}^{(i)}\bm{C}^{(i)\top}\bm{R}^{(i)}\big]$\\[-0.3em]
	5: \hspace*{0.7cm} $\bm{P}^{(i+1)}=\bm{P}^{(i)}-\eta\nabla_{\bm{P}}\mathcal{L}^{(i)}-\eta a\big[\bm{P}^{(i)}(\bm{P}^{(i)\top}\bm{P}^{(i)}-b^2\bm{I}_{r-d})+\bm{C}^{(i)}\bm{C}^{(i)\top}\bm{P}^{(i)}\big]$\\[-0.3em]
	6: \hspace*{0.7cm}
	$\bm{D}^{(i+1)}=\bm{D}^{(i)}-\eta\nabla_{\bm{D}}\mathcal{L}^{(i)}$\\[-0.3em]
	7: \textbf{end for}\\[-0.3em]
	8: \textbf{Return}: $\bm{A}^{(I)}=\lb\bm{C}^{(I)}~\bm{R}^{(I)}\rb\bm{D}^{(I)}\lb\bm{C}^{(I)}~\bm{P}^{(I)}\rb^\top$
\end{algorithm}

\subsection{Initialization of the algorithm}\label{sec:3.2}

The problem in \eqref{eq:extra_decompose} is non-convex, and the initial values $(\bm{C}^{(0)},\bm{R}^{(0)},\bm{P}^{(0)},\bm{D}^{(0)})$ play important roles in the algorithm. Hence, we provide a simple spectral initialization method.

Consider model \eqref{eq:rrVAR} with $\text{rank}(\lb\bm{U}~\bm{V}\rb)=2r-d$ and its equivalent form in \eqref{eq:equivalent_form} with $\lb\bm{C}~\bm{R}\rb$ and $\lb\bm{C}~\bm{P}\rb$ being orthonormal matrices. It then holds that $\bm{U}\bm{U}^\top=\bm{C}\bm{C}^\top+\bm{R}\bm{R}^\top$, $\bm{V}\bm{V}^\top=\bm{C}\bm{C}^\top+\bm{P}\bm{P}^\top$, and
$\bm{U}\bm{U}^\top-\bm{V}\bm{V}^\top=\bm{R}\bm{R}^\top-\bm{P}\bm{P}^\top$. Moreover, since $\bm{C}$ is orthogonal to $\bm{R}$ and $\bm{P}$, we have
$\bm{U}\bm{U}^\top(\bm{I}_p-\bm{V}\bm{V}^\top)=\bm{R}\bm{R}^\top(\bm{I}_p-\bm{P}\bm{P}^\top)$
and
$\bm{V}\bm{V}^\top(\bm{I}_p-\bm{U}\bm{U}^\top)=\bm{P}\bm{P}^\top(\bm{I}_p-\bm{R}\bm{R}^\top)$.
It implies that $\mathcal{M}(\bm{R})$ and $\mathcal{M}(\bm{P})$ are the subspaces spanned by the first $r-d$ left singular vectors of $\bm{U}\bm{U}^\top(\bm{I}_p-\bm{V}\bm{V}^\top)$ and $\bm{V}\bm{V}^\top(\bm{I}_p-\bm{U}\bm{U}^\top)$. In addition, $\bm{D}=\lb\bm{C}~\bm{R}\rb^\top\bm{A}\lb\bm{C}~\bm{P}\rb$ and
$(\bm{I}_p-\bm{R}\bm{R}^\top)(\bm{I}_p-\bm{P}\bm{P}^\top)(\bm{U}\bm{U}^\top+\bm{V}\bm{V}^\top)(\bm{I}_p-\bm{R}\bm{R}^\top)(\bm{I}_p-\bm{P}\bm{P}^\top)=2\bm{C}\bm{C}^\top$.

The above finding motivates us to use a reduced-rank VAR estimation \citep{velu2013multivariate} to construct an initialization. Specifically, denote by $\widehat{\bm{H}}\in\mathbb{O}^{p\times r}$ the first $r$ eigenvectors of $\bm{Y}\bm{X}^\top(\bm{X}\bm{X}^\top)^{-1}\bm{X}\bm{Y}^\top$, corresponding to the $r$ largest eigenvalues in the decreasing order, and then the reduced-rank VAR estimation  has an explicit form of
\begin{equation}\label{eq:rr_init}
	\widetilde{\bm{A}}_\text{RR}(r)=\argmin_{\text{rank}(\bm{A})\leq r}\|\bm{Y}-\bm{A}\bm{X}\|_\text{F}^2=\widehat{\bm{H}}\widehat{\bm{H}}^\top\bm{Y}\bm{X}^\top(\bm{X}\bm{X}^\top)^{-1}.
\end{equation}
As a result, the following procedure is suggested for initialization:
\begin{itemize}
	\item[(i.)] Conduct SVD to the reduced-rank VAR estimator: $\widetilde{\bm{A}}_\text{RR}(r)=\widetilde{\bm{U}}\widetilde{\bm{S}}\widetilde{\bm{V}}^\top$;\vspace{-0.3cm}
	\item[(ii.)] Calculate the top $r-d$ left singular vectors of $\widetilde{\bm{U}}\widetilde{\bm{U}}^\top(\bm{I}_p-\widetilde{\bm{V}}\widetilde{\bm{V}}^\top)$ and $\widetilde{\bm{V}}\widetilde{\bm{V}}^\top(\bm{I}_p-\widetilde{\bm{U}}\widetilde{\bm{U}}^\top)$, and denote them by $\widetilde{\bm{R}}$ and $\widetilde{\bm{P}}$, respectively;\vspace{-0.3cm}
	\item[(iii.)] Calculate the top $d$ eigenvectors of $(\bm{I}_p-\widetilde{\bm{R}}\widetilde{\bm{R}}^\top)(\bm{I}_p-\widetilde{\bm{P}}\widetilde{\bm{P}}^\top)(\widetilde{\bm{U}}\widetilde{\bm{U}}^\top+\widetilde{\bm{V}}\widetilde{\bm{V}}^\top)(\bm{I}_p-\widetilde{\bm{R}}\widetilde{\bm{R}}^\top)(\bm{I}_p-\widetilde{\bm{P}}\widetilde{\bm{P}}^\top)$, and denote it by $\widetilde{\bm{C}}$;\vspace{-0.3cm}
	\item[(iv.)] Calculate $\widetilde{\bm{D}}=\lb\widetilde{\bm{C}}~\widetilde{\bm{R}}\rb^\top\widetilde{\bm{A}}_\text{RR}(r)\lb\widetilde{\bm{C}}~\widetilde{\bm{P}}\rb$;\vspace{-0.3cm}
	\item[(v.)] Set the initialization to $\bm{C}^{(0)}=b\widetilde{\bm{C}}$, $\bm{R}^{(0)}=b\widetilde{\bm{R}}$, $\bm{P}^{(0)}=b\widetilde{\bm{P}}$, and $\bm{D}^{(0)}=b^{-2}\widetilde{\bm{D}}$.
\end{itemize}

\subsection{Rank selection and common dimension selection}\label{sec:3.3}

The rank $r$ and common dimension $d$ are assumed to be known in the previous two subsections, but they are 
unknown in most real applications. Here we propose a two-stage selection procedure to select them and relegate its theoretical justification to Section \ref{sec:4}.

A ridge-type ratio method \citep{xia2015consistently} is first introduced to select the rank $r$, regardless of the existence of the common subspace. Specifically, we first give a pre-specified upper bound $\bar{r}=c\cdot r\ll p$ for some $c>1$, and then calculate the estimate $\widetilde{\bm{A}}_\text{RR}(\bar{r})$ in \eqref{eq:rr_init}.  
Denote by $\widetilde{\sigma}_1\geq \widetilde{\sigma}_2\geq\cdots\geq\widetilde{\sigma}_{\bar{r}}$ its singular values, and then the rank $r$ can be selected by
\begin{equation}\label{eq:ridge_ratio}
	\widehat{r}=\argmin_{1\leq i\leq \bar{r}-1}\frac{\widetilde{\sigma}_{i+1}+s(p,T)}{\widetilde{\sigma}_{i}+s(p,T)},
\end{equation}
where the ridge parameter $s(p,T)$ is a positive sequence depending on $p$ and $T$.

The proposed method is not sensitive to the choice of $\bar{r}$ as long as it is greater than $r$. Thus, for large datasets with a large dimension $p$, we can choose the upper bound $\bar{r}$ to be reasonably large but much smaller than $p$. When $p$ is small, we may even simply set $\bar{r}$ to $p$. On the other hand, the ridge parameter $s(p,T)$ is essential for consistent rank selection. We suggest using  $s(p,T)=\sqrt{p\log(T)/(10T)}$, according to Theorem \ref{thm:rank_selection} in Section \ref{sec:4.3}, and its satisfactory performance is observed in our simulation experiments of Section \ref{sec:6}. 

\begin{remark}
    Note that we do not consider the case of $r=0$ throughout this article as it implies that the time series data is a pure white noise sequence, and the ratio estimator naturally rules it out. To formally test whether $r=0$, one may apply the high-dimensional white noise test \citep{li2019testing,tsay2020test}.
\end{remark}

Next, we consider the selection of the common 
dimension $d$ in model \eqref{eq:equivalent_form}.
Denote by $\widehat{\bm{A}}(r,d)$ the estimator obtained from Algorithm \ref{alg:GD_LRP} with the rank $r$ and common dimension $d$, and then the Bayesian information criterion (BIC) can be constructed below,
\begin{equation}\label{eq:BIC}
\text{BIC}(r,d) = Tp\log(\|\bm{Y}-\widehat{\bm{A}}(r,d)\bm{X}\|_\text{F}^2) + d_{\text{CS}}(p,r,d)\log(T),
\end{equation}
where $d_{\text{CS}}(p,r,d)=r(2p-r)-d(p-(d+1)/2)$ is the number of free parameters.
As a result, given $r$, the common dimension $d$ can be selected by
$\widehat{d} = \argmin_{0\leq k\leq r}\text{BIC}(r,d)$.
Note that the BIC in \eqref{eq:BIC} can also be used to select $r$ and $d$ simultaneously, but it would be  time-consuming in practice.


\subsection{The case of VAR($\ell$) models}
\label{sec:3.4}

This subsection extends the proposed methodology to VAR($\ell$) models with common response and predictor factors.
Suppose that $(r_1,r_2,r_3)$ and $d$ are known. To estimate the parameter tensor $\cm{A}=\cm{G}\times_1\lb\bm{C}~\bm{R}\rb\times_2\lb\bm{C} ~\bm{P}\rb\times_3\bm{L}$, the loss function is
$\mathcal{L}(\bm{C},\bm{R},\bm{P},\bm{L},\cm{G})
=(2T)^{-1}\sum_{t=1}^T\|\bm{y}_t-(\cm{G}\times_1\lb\bm{C}~\bm{R}\rb\times_2\lb\bm{C}~\bm{P}\rb\times_3\bm{L})_{(1)}\bm{x}_{t}\|_2^2$,
where $\bm{x}_t=\lb\bm{y}_{t-1}^\top~\dots~\bm{y}_{t-\ell}^\top\rb^\top$. With regularization parameters $a,b>0$, we can use a gradient descent algorithm to find the following estimators
\begin{align*}\label{eq:extra_decompose_2}
		\left(\widehat{\bm{C}},\widehat{\bm{R}},\widehat{\bm{P}},\widehat{\bm{L}},\cm{\widehat{G}}\right)
		= &\argmin_{\substack{\bm{C}\in\mathbb{R}^{p\times d},\bm{R}\in\mathbb{R}^{p\times (r_1-d)},\\ \bm{P}\in\mathbb{R}^{p\times(r_2-d)},\bm{L}\in\mathbb{R}^{\ell\times r_3},\\\scalebox{0.75}{\cm{G}}\in\mathbb{R}^{r_1\times r_2\times r_3}}} \Bigg\{\mathcal{L}(\bm{C},\bm{R},\bm{P},\bm{L},\cm{G})+  \frac{a}{2}\left\|\lb\bm{C}~\bm{R}\rb^\top\lb\bm{C}~\bm{R}\rb-b^2\bm{I}_{r_1}\right\|_\text{F}^2\\
		& \hspace{35mm}+ \frac{a}{2}\left\|\lb\bm{C}~\bm{P}\rb^\top\lb\bm{C}~\bm{P}\rb-b^2\bm{I}_{r_2}\right\|_\text{F}^2 + \frac{a}{2}\left\|\bm{L}^\top\bm{L}-b^2\bm{I}_{r_3}\right\|_\text{F}^2\Bigg\}.
\end{align*}
For initialization, consider the rank-constrained estimator \citep{wang2019high}
$$\cm{\widetilde{A}}_\text{RR}=\argmin_{\text{rank}(\scalebox{0.7}{\cm{A}}_{(i)})=r_i}(2T)^{-1}\sum_{t=1}^T\|\bm{y}_t-\cm{A}_{(1)}\bm{x}_t\|_2^2,$$
and apply the similar initialization method in Section \ref{sec:3.2} to obtain $(\bm{C}^{(0)},\bm{R}^{(0)},\bm{P}^{(0)},\bm{L}^{(0)},\cm{G}^{(0)})$. In addition, the ridge-type rank selection and the common dimension selection via BIC can also be extended to VAR($\ell$) models. For brevity, the algorithm and implementation details are relegated to Appendix \ref{append:D.3}.

\section{Computational and Statistical Convergence Analysis}\label{sec:4}

Sections \ref{sec:4.1} and \ref{sec:4.2} establish the computational and statistical convergence for the VAR(1) model, respectively. Section \ref{sec:4.3} studies the consistency of the rank and common dimension selection, and Section \ref{sec:4.4} 
provides the theoretical justification for the VAR($\ell$) model. In what follows, we denote $\bm{A}^*$ and $\cm{A}^*$ as the ground truth of the parameter matrix and tensor.

\subsection{Computational convergence analysis}\label{sec:4.1}

The optimization problem in \eqref{eq:extra_decompose} is non-convex, and it is challenging to establish the convergence analysis of Algorithm \ref{alg:GD_LRP}.
To solve it, we introduce some regulatory conditions.

\begin{definition}\label{def:RSC_RSS}
    A function $\mathcal{L}(\cdot): \mathbb{R}^{p\times p} \rightarrow \mathbb{R}$ is restricted strongly convex with parameter $\alpha$ and restricted strongly smooth with parameter $\beta$, if for any matrices $\bm{A},\bm{A}'\in\mathbb{R}^{p\times p}$ of rank $r$,
    \begin{equation}
        \frac{\alpha}{2}\|\bm{A}-\bm{A}'\|_\textup{F}^2 \leq \mathcal{L}(\bm{A})-\mathcal{L}(\bm{A}')-\left\langle\nabla\mathcal{L}(\bm{A}'),\bm{A}-\bm{A}'\right\rangle \leq \frac{\beta}{2}\|\bm{A}-\bm{A}'\|_\textup{F}^2.
    \end{equation}
\end{definition}

\begin{definition}\label{def:deviation}
    For the given rank $r$, common dimension $d$, and the true parameter matrix $\bm{A}^*$, the deviation bound is defined as
    \begin{equation}
        \xi(r,d)=\sup_{\substack{\textup{\bf{[}}\bm{C}~\bm{R}\textup{\bf{]}},\textup{\bf{[}}\bm{C}~\bm{P}\textup{\bf{]}}\in\mathbb{O}^{p\times r}, \bm{D}\in\mathbb{R}^{r\times r},\|\bm{D}\|_\textup{F}=1}}\left\langle\nabla\mathcal{L}(\bm{A}^*),\textup{\bf{[}}\bm{C}~\bm{R}\textup{\bf{]}}\bm{D}\textup{\bf{[}}\bm{C}~\bm{P}\textup{\bf{]}}^\top\right\rangle.
    \end{equation}
\end{definition}

The restricted strong convexity and smoothness of Definition \ref{def:RSC_RSS} are essential in establishing the convergence analysis for many non-convex optimization problems; see \cite{jain2017non} and references therein. 
The deviation bound $\xi(r,d)$ in Definition \ref{def:deviation} characterizes the magnitude of statistical noises projected onto a low-dimensional space of matrices with rank $r$ and common dimension $d$, and we can treat it as a statistical error as in \cite{han2020optimal}. 

For the true parameter matrix $\bm{A}^*$, denote its largest and smallest singular values and its condition number by $\sigma_1=\sigma_1(\bm{A}^*)$, $\sigma_{r}=\sigma_{r}(\bm{A}^*)$, and $\kappa=\sigma_1/\sigma_{r}$, respectively.
Assuming that both $r$ and $d$ are known, we state the convergence analysis of Algorithm \ref{alg:GD_LRP} below.

\begin{theorem}
    \label{thm:gd} 
    Suppose that the loss function $\mathcal{L}(\cdot)$ satisfies the restricted strong convexity and smoothness of Definition \ref{def:RSC_RSS} and the deviation bound in Definition \ref{def:deviation}.
    If $\|\bm{A}^{(0)}-\bm{A}^*\|_\textup{F}\lesssim \sigma_{r}$, $a\asymp\alpha\sigma_1^{2/3}\kappa^{-2}$, $b\asymp\sigma_1^{1/3}$, and $\eta=\eta_0\alpha^{-1} \kappa^2\sigma_1^{-4/3}$ with $\eta_0$ being a positive constant not greater than $1/260$, then it holds that, for all $i\geq 1$,
    \begin{equation}\label{eq:convegence_init}
    \|\bm{A}^{(i)}-\bm{A}^*\|_\textup{F}^2 \lesssim \kappa^2(1-C\eta_0\alpha\beta^{-1}\kappa^{-2})^i\|\bm{A}^{(0)}-\bm{A}^*\|_\textup{F}^2 + \kappa^2\alpha^{-2}\xi^2(r,d).
    \end{equation}
\end{theorem}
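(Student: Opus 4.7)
The plan is to run a factored gradient descent analysis in the spirit of \cite{tu2016low} and \cite{han2020optimal}, adapted to the common-specific structure. Since the minimizer in \eqref{eq:extra_decompose} is identified only up to orthogonal transformations that preserve the block form $\lb\bm{C}~\bm{R}\rb$ and $\lb\bm{C}~\bm{P}\rb$, I would first fix a representative $(\bm{C}^*,\bm{R}^*,\bm{P}^*,\bm{D}^*)$ of the truth satisfying the balance conditions $\lb\bm{C}^*~\bm{R}^*\rb^\top\lb\bm{C}^*~\bm{R}^*\rb = \lb\bm{C}^*~\bm{P}^*\rb^\top\lb\bm{C}^*~\bm{P}^*\rb = b^2\bm{I}_r$, and define the lifted error
\[
E^{(i)} = \min_{\bm{Q}_0,\bm{Q}_1,\bm{Q}_2}\Bigl\{\|\bm{C}^{(i)}\bm{Q}_0-\bm{C}^*\|_\text{F}^2 + \|\bm{R}^{(i)}\bm{Q}_1-\bm{R}^*\|_\text{F}^2 + \|\bm{P}^{(i)}\bm{Q}_2-\bm{P}^*\|_\text{F}^2 + \|\bm{M}_1^\top\bm{D}^{(i)}\bm{M}_2-\bm{D}^*\|_\text{F}^2\Bigr\},
\]
where the $\bm{Q}_j$ are orthogonal and $\bm{M}_j$ is the block-diagonal matrix with $\bm{Q}_0$ and $\bm{Q}_j$ on its diagonal. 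Under the scaling $b\asymp\sigma_1^{1/3}$ all four summands live on the common scale $\sigma_1^{2/3}$.

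Next I would establish a one-step contraction for $E^{(i)}$. Stacking $\bm{W}=(\bm{C},\bm{R},\bm{P},\bm{D})$ and writing $\Phi(\bm{W})$ for the regularized objective in \eqref{eq:extra_decompose}, expand $\|\bm{W}^{(i+1)}-\bm{W}^*\|_\text{F}^2$ to produce a cross term $-2\eta\langle\nabla\Phi(\bm{W}^{(i)}),\bm{W}^{(i)}-\bm{W}^*\rangle$ and a squared-gradient term $\eta^2\|\nabla\Phi(\bm{W}^{(i)})\|_\text{F}^2$. The loss part of the cross term reduces via the chain rule to $-2\eta\langle\nabla\mathcal{L}(\bm{A}^{(i)}),\bm{A}^{(i)}-\bm{A}^*\rangle$ plus a higher-order residual, which I would bound from below by restricted strong convexity (Definition \ref{def:RSC_RSS}) together with the deviation bound (Definition \ref{def:deviation}), yielding $\alpha\|\bm{A}^{(i)}-\bm{A}^*\|_\text{F}^2 - C\xi(r,d)\|\bm{A}^{(i)}-\bm{A}^*\|_\text{F}$. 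The regularization part of the cross term, coming from the derivatives of $\|\lb\bm{C}~\bm{R}\rb^\top\lb\bm{C}~\bm{R}\rb-b^2\bm{I}_r\|_\text{F}^2$ and its twin, vanishes at $\bm{W}^*$ and has a positive-definite Hessian on the equal-norm manifold, giving a direct margin of order $ab^2 E^{(i)}$. The squared-gradient term is controlled by restricted strong smoothness. Plugging in $a\asymp\alpha\sigma_1^{2/3}\kappa^{-2}$, $b\asymp\sigma_1^{1/3}$, and $\eta\asymp\alpha^{-1}\kappa^2\sigma_1^{-4/3}$ so that $\eta$ is small relative to $1/\beta$ on the relevant scale, these pieces assemble into
\[
E^{(i+1)} \leq \bigl(1-C\eta_0\alpha\beta^{-1}\kappa^{-2}\bigr)\,E^{(i)} + C\eta_0\alpha^{-1}\kappa^{-2}\sigma_1^{-4/3}\,\xi^2(r,d).
\]

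The final step is translation back to the matrix metric. A multilinear expansion of $\bm{A}^{(i)}-\bm{A}^* = \lb\bm{C}^{(i)}~\bm{R}^{(i)}\rb\bm{D}^{(i)}\lb\bm{C}^{(i)}~\bm{P}^{(i)}\rb^\top - \lb\bm{C}^*~\bm{R}^*\rb\bm{D}^*\lb\bm{C}^*~\bm{P}^*\rb^\top$, combined with $\|\lb\bm{C}^*~\bm{R}^*\rb\|\asymp b$ and $\|\bm{D}^*\|\asymp\sigma_1/b^2$, yields $\|\bm{A}^{(i)}-\bm{A}^*\|_\text{F}^2 \lesssim \sigma_1^{4/3} E^{(i)}$, while in the reverse direction $\sigma_1^{4/3} E^{(0)} \lesssim \kappa^2\|\bm{A}^{(0)}-\bm{A}^*\|_\text{F}^2$, the latter requiring the initialization bound $\|\bm{A}^{(0)}-\bm{A}^*\|_\text{F}\lesssim\sigma_r$ so that the alignment triple $(\bm{Q}_0,\bm{Q}_1,\bm{Q}_2)$ realizing $E^{(0)}$ is well defined. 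Unrolling the recursion and combining with these two inequalities produces \eqref{eq:convegence_init}.

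The central obstacle is the coupling induced by the common block $\bm{C}$, which appears in both $\lb\bm{C}~\bm{R}\rb$ and $\lb\bm{C}~\bm{P}\rb$. Its gradient $\nabla_{\bm{C}}\mathcal{L}$ receives contributions from both sides of the bilinear form, and the alignment rotation for $\bm{C}$ is forced to agree across the two factorizations. This breaks the separability that standard reduced-rank factorization analyses enjoy and forces the alignment in $E^{(i)}$ to be a single block-orthogonal pair rather than two independent rotations. I expect the bulk of the technical effort to lie in verifying positive-definiteness of the regularization Hessian restricted to this manifold, and in showing that the $\bm{C}$-to-$(\bm{R},\bm{P})$ cross-gradient interactions can be absorbed by the $\asymp ab^2 E^{(i)}$ margin supplied by the two balance penalties.
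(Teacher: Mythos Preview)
Your proposal is correct and follows essentially the same route as the paper: define the lifted error $E^{(i)}$ over block-diagonal rotations that share the common rotation for $\bm{C}$, expand the one-step update into cross and squared-gradient terms, reduce the loss cross term via the chain rule to $\langle\nabla\mathcal{L}(\bm{A}^{(i)}),\bm{A}^{(i)}-\bm{A}^*+\bm{H}^{(i)}\rangle$ with a higher-order residual $\bm{H}^{(i)}$, apply RSC/RSS and the deviation bound, and translate between $E^{(i)}$ and $\|\bm{A}^{(i)}-\bm{A}^*\|_\text{F}^2$ via a perturbation lemma of the \cite{han2020optimal} type. The one refinement worth flagging is that the regularization cross terms do not yield a margin on $E^{(i)}$ directly; rather, the paper shows that $T_\text{R}+T_\text{P}+T_\text{C}$ telescopes into inner products involving the \emph{full} stacked matrices $\lb\bm{C}^{(i)}~\bm{R}^{(i)}\rb$ and $\lb\bm{C}^{(i)}~\bm{P}^{(i)}\rb$ (this is precisely how the $\bm{C}$-coupling is absorbed), giving a positive contribution of order $a\|\lb\bm{C}~\bm{R}\rb^\top\lb\bm{C}~\bm{R}\rb-b^2\bm{I}_r\|_\text{F}^2$, which is then combined with $\alpha\|\bm{A}^{(i)}-\bm{A}^*\|_\text{F}^2$ through the error-bound lemma to recover the full $E^{(i)}$.
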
\vspace{-0.6cm}
For the upper bound in Theorem \ref{thm:gd}, the first term corresponds to optimization errors, while the second term is related to statistical errors.
From Theorem \ref{thm:gd}, the estimation error decreases toward a statistical limit exponentially with respect to iterations.
Moreover, when the parameters $\alpha$, $\sigma_1$, and $\kappa$ are bounded away from zero and infinity, the tuning parameters $a$, $b$, and $\eta$ would be at a constant level and, hence, do not depend on $p$ and $T$.

\subsection{Statistical convergence analysis}\label{sec:4.2}

Consider VAR(1) models in \eqref{eq:rrVAR} and \eqref{eq:equivalent_form}. We first state some general conditions.

\vspace{-0.1cm}
\begin{assumption}\label{asmp:1}
	The parameter matrix $\bm{A}^*$ has a spectral radius strictly less than one.\vspace{-0.1cm}
\end{assumption}

\begin{assumption}\label{asmp:2}
	The error term is  $\bbm{\varepsilon}_t=\bm{\Sigma}_{\bbm{\varepsilon}}^{1/2}\bbm{\zeta}_t$, where $\{\bbm{\zeta}_t\}$ are $i.i.d.$ random vectors with $\mathbb{E}(\bbm{\zeta}_t)=\bm{0}$ and $\textup{var}(\bbm{\zeta}_t)=\bm{I}_p$. Moreover, the entries $(\bbm{\zeta}_{it})_{1\leq i\leq p}$ of $\bbm{\zeta}_t$ are mutually independent and $\tau^2$-sub-Gaussian, i.e. $\mathbb{E}[\exp(\mu\bbm{\zeta}_{it})]\leq \exp(\tau^2\mu^2/2)$ for any $\mu\in\mathbb{R}$ and $1\leq i\leq p$.
\end{assumption}
\vspace{-0.1cm}


\begin{remark}
    Assumption \ref{asmp:1} is sufficient and necessary for the existence of a unique strictly stationary solution to model \eqref{eq:rrVAR} with any finite $p$, and this is consistent with the non-asymptotic framework used in this article. For the case with $p\rightarrow\infty$, we may refer to \cite{zhu2017network} for the definition of strict stationarity, which is given via a mechanism similar to the Cramer-Wold device. Moreover, the Gaussian condition is commonly used in the literature of high-dimensional time series \citep{basu2015regularized}, while the sub-Gaussian condition in Assumption \ref{asmp:2} is more general.
\end{remark}

In the decomposition in \eqref{eq:equivalent_form}, intuitively, the response-specific and predictor-specific subspaces $\bm{R}\in\mathbb{O}^{p\times (r-d)}$ and $\bm{P}\in\mathbb{O}^{p\times (r-d)}$ cannot be too close so that we can separate the common subspace $\bm{C}\in\mathbb{O}^{p\times d}$ out successfully.
Here we use the $\sin\theta$ distance for two spaces.
Specifically, let $s_1\geq \dots\geq s_{r-d}\geq0$ be the singular values of $\bm{R}^\top\bm{P}$. Then, the canonical angles between $\mathcal{M}(\bm{R})$ and $\mathcal{M}(\bm{P})$ can be defined as
$\theta_i(\bm{R},\bm{P})=\arccos(s_i)$ for $1\leq i\leq r-d$.
The following condition is added to the smallest canonical angle between $\mathcal{M}(\bm{R})$ and $\mathcal{M}(\bm{P})$.

\begin{assumption}\label{asmp:3}
    There exists a constant $g_{\min}>0$ such that $\sin\theta_{1}(\bm{R}^*,\bm{P}^*)\geq g_{\min}$.
\end{assumption}

Furthermore, we quantify the temporal and cross-sectional dependency as in \cite{basu2015regularized}.
For any $z\in\mathbb{C}$, let $\mathcal{A}(z)=\bm{I}_p-\bm{A}^*z$ be the matrix polynomial, where $\mathbb{C}$ is the set of all complex numbers. Let $\mu_{\min}(\mathcal{A})=\min_{|z|=1}\lambda_{\min}(\mathcal{A}^\dagger(z)\mathcal{A}(z))$ and $\mu_{\max}(\mathcal{A})=\max_{|z|=1}\lambda_{\max}(\mathcal{A}^\dagger(z)\mathcal{A}(z))$, where $\mathcal{A}^\dagger(z)$ is the conjugate transpose of $\mathcal{A}(z)$. Moreover, denote
\begin{equation}
    \alpha_\textup{RSC}=\frac{\lambda_{\min}(\bm{\Sigma}_{\bbm{\varepsilon}})}{2\mu_{\max}(\mathcal{A})},~\beta_\textup{RSS}=\frac{3\lambda_{\max}(\bm{\Sigma}_{\bbm{\varepsilon}})}{2\mu_{\min}(\mathcal{A})},~M_1=\frac{\lambda_{\max}(\bm{\Sigma}_{\bbm{\varepsilon}})}{\mu_{\min}^{1/2}(\mathcal{A})},~\text{and}~M_2=\frac{\lambda_{\min}(\bm{\Sigma}_{\bbm{\varepsilon}})\mu_{\max}(\mathcal{A})}{\lambda_{\max}(\bm{\Sigma}_{\bbm{\varepsilon}})\mu_{\min}(\mathcal{A})}.
\end{equation}
Based on them, we have the following statistical convergence analysis for Algorithm \ref{alg:GD_LRP}. 

\begin{theorem}\label{thm:stat}
    Suppose that Assumptions \ref{asmp:1}--\ref{asmp:3} hold, 
    $T\gtrsim\max(\tau^4,\tau^2)M_2^{-2}p$, and the conditions in Theorem \ref{thm:gd} are satisfied with $\alpha=\alpha_\textup{RSC}$ and $\beta=\beta_\textup{RSS}$. Then, after $I$-th iteration with $
    I \gtrsim \log(\kappa^{-1}\sigma_1^{-1/3}g_{\min})/\log(1-C\eta_0\alpha_\textup{RSC}\beta_\textup{RSS}^{-1}\kappa^{-2})$, with probability at least $1-4\exp[-CM_2^2\min(\tau^{-2},\tau^{-4})T]-2\exp(-Cp)$,
	\begin{equation}
    \|\bm{A}^{(I)}-\bm{A}^*\|_\textup{F}\lesssim \kappa\alpha_\textup{RSC}^{-1}\tau^2M_1\sqrt{d_\textup{CS}(p,r,d)/T}.
    \end{equation}
\end{theorem}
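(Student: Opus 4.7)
The plan is to combine Theorem \ref{thm:gd} (the deterministic linear convergence result) with three ingredients tailored to the stationary VAR(1) setting: (i) Proposition \ref{prop:init} to control the spectral initialization, (ii) a verification that the squared-error loss satisfies restricted strong convexity and smoothness with parameters $\alpha_\textup{RSC}$ and $\beta_\textup{RSS}$ on the relevant low-rank set, and (iii) a high-probability bound on the deviation quantity $\xi(r,d)$ with the correct dimensional scaling $d_\textup{CS}(p,r,d)$. With these three pieces in hand, the theorem will follow by choosing $I$ large enough so that the optimization error in \eqref{eq:convegence_init} is dominated by the statistical error.

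First I would verify the RSC/RSS conditions. For the quadratic loss $\mathcal{L}(\bm{A})=(2T)^{-1}\|\bm{Y}-\bm{A}\bm{X}\|_\textup{F}^2$, the remainder in Definition \ref{def:RSC_RSS} equals $(2T)^{-1}\|(\bm{A}-\bm{A}')\bm{X}\|_\textup{F}^2$, so I only need to show that for any rank-$2r$ matrix $\bm{\Delta}$ we have $\alpha_\textup{RSC}\|\bm{\Delta}\|_\textup{F}^2\le T^{-1}\|\bm{\Delta}\bm{X}\|_\textup{F}^2\le \beta_\textup{RSS}\|\bm{\Delta}\|_\textup{F}^2$ with high probability. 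Under Assumptions \ref{asmp:1}--\ref{asmp:2}, this is a direct consequence of the spectral bounds $\lambda_{\min}(\bm{\Sigma}_{\bbm{\varepsilon}})/\mu_{\max}(\mathcal{A})\le \lambda_{\min}(\bm{\Gamma})$ and $\lambda_{\max}(\bm{\Gamma})\le \lambda_{\max}(\bm{\Sigma}_{\bbm{\varepsilon}})/\mu_{\min}(\mathcal{A})$ for the stationary covariance $\bm{\Gamma}=\textup{var}(\bm{y}_t)$, combined with a restricted-eigenvalue concentration inequality for sub-Gaussian stationary series on the low-dimensional set of rank-$2r$ matrices (in the spirit of \cite{basu2015regularized}). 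The sample-size requirement $T\gtrsim \max(\tau^4,\tau^2)M_2^{-2}p$ is exactly what makes this concentration valid on an $\varepsilon$-net of size $\exp(Cpr)$.

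Next I would bound the deviation $\xi(r,d)$. Writing $\nabla\mathcal{L}(\bm{A}^*)=T^{-1}\sum_{t=1}^T \bbm{\varepsilon}_t\bm{y}_{t-1}^\top$, I would control
\begin{equation*}
\xi(r,d)=\sup_{\lb\bm{C}~\bm{R}\rb,\lb\bm{C}~\bm{P}\rb\in\mathbb{O}^{p\times r},\,\|\bm{D}\|_\textup{F}=1}\left\langle T^{-1}\sum_{t=1}^T \bbm{\varepsilon}_t\bm{y}_{t-1}^\top,\,\lb\bm{C}~\bm{R}\rb\bm{D}\lb\bm{C}~\bm{P}\rb^\top\right\rangle
\end{equation*}
via an $\varepsilon$-net argument on the parameter set. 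The key observation is that the set of admissible triples $(\lb\bm{C}~\bm{R}\rb,\lb\bm{C}~\bm{P}\rb,\bm{D})$ has effective dimension exactly $d_\textup{CS}(p,r,d)=r(2p-r)-d(p-(d+1)/2)$: each of the two Stiefel components contributes $pr-r(r+1)/2$ degrees of freedom, the constraint that they share a common $d$-dimensional subspace removes $d(p-(d+1)/2)$, and $\bm{D}$ on the Frobenius unit sphere contributes $r^2-1$. A covering number of order $\exp(Cd_\textup{CS}(p,r,d))$ combined with a Hanson--Wright-type tail bound for $\langle\nabla\mathcal{L}(\bm{A}^*),\bm{M}\rangle$ (scaling as $\tau^2 M_1/\sqrt{T}$ for fixed unit-Frobenius $\bm{M}$, using Assumptions \ref{asmp:1}--\ref{asmp:2}) then yields $\xi(r,d)\lesssim \tau^2 M_1\sqrt{d_\textup{CS}(p,r,d)/T}$ with the advertised probability.

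Finally I would assemble the pieces. Proposition \ref{prop:init} gives $\|\bm{A}^{(0)}-\bm{A}^*\|_\textup{F}\lesssim \sigma_1^{2/3}\kappa^2 g_{\min}^{-2}\alpha_\textup{RSC}^{-1}\tau^2 M_1\sqrt{d_\textup{RR}(p,r)/T}$, which under the sample-size assumption is $\lesssim \sigma_r$, so the initialization hypothesis of Theorem \ref{thm:gd} holds. Plugging $\alpha=\alpha_\textup{RSC}$, $\beta=\beta_\textup{RSS}$, and the bound on $\xi(r,d)$ into \eqref{eq:convegence_init} yields
\begin{equation*}
\|\bm{A}^{(i)}-\bm{A}^*\|_\textup{F}^2\lesssim \kappa^2(1-C\eta_0\alpha_\textup{RSC}\beta_\textup{RSS}^{-1}\kappa^{-2})^i\|\bm{A}^{(0)}-\bm{A}^*\|_\textup{F}^2 + \kappa^2\alpha_\textup{RSC}^{-2}\tau^4 M_1^2\frac{d_\textup{CS}(p,r,d)}{T},
\end{equation*}
and the stated choice of $I$ is precisely what is needed to drive the contractive first term below the statistical second term, giving the conclusion after taking a square root. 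The main obstacle will be step (iii): the dimension-counting for the common-subspace parameter set and a clean control of the $\varepsilon$-net discretization error under the mixed projector structure $\lb\bm{C}~\bm{R}\rb\bm{D}\lb\bm{C}~\bm{P}\rb^\top$, where perturbations in $\bm{C}$ affect both factors simultaneously; a careful decomposition of the perturbation into $\bm{C}$-, $\bm{R}$-, $\bm{P}$-, and $\bm{D}$-parts, each bounded by its own net, is the cleanest way to keep the effective dimension at $d_\textup{CS}(p,r,d)$ rather than the larger $d_\textup{RR}(p,r)$.
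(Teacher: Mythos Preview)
Your proposal is correct and mirrors the paper's proof essentially step by step: the paper establishes the RSC/RSS property (its Lemma~\ref{lemma:RSC}) and the deviation bound $\xi(r,d)\lesssim\tau^2M_1\sqrt{d_\textup{CS}(p,r,d)/T}$ via a covering argument on the common-subspace parameter set with cardinality $(24/\epsilon)^{p(2r-d)+r^2}$ (Lemmas~\ref{lemma:deviation} and~\ref{lemma:cs_covering}), then invokes Proposition~\ref{prop:init} and Theorem~\ref{thm:gd} exactly as you describe. The only minor technical difference is that for the pointwise tail on $\langle\nabla\mathcal{L}(\bm{A}^*),\bm{W}\rangle$ the paper uses a self-normalized martingale/Chernoff bound conditioned on $R_T(\bm{W})=\sum_t\|\bm{W}\bm{y}_{t-1}\|_2^2$ rather than a direct Hanson--Wright inequality, but this does not alter the structure or the resulting rate.
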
\vspace{-0cm}

The above theorem gives an estimation error bound after a sufficiently large number of iterations. 
When the quantities of $\kappa,\sigma_1,g_{\min},\alpha_\textup{RSC}$ and $\beta_\textup{RSS}$ are bounded away from zero and infinity, the required number of iterations does not depend on the dimension $p$ or the sample size $T$, and this makes sure that the proposed algorithm can be applied to large datasets without any difficulty.
Moreover, the estimated parameter matrix from Algorithm \ref{alg:GD_LRP} has the convergence rate of $\sqrt{d_\text{CS}(p,r,d)/T}$, while the  reduced-rank VAR estimation has the rate of $\sqrt{d_\text{RR}(p,r)/T}$ \citep{negahban2011estimation}. Note that $d_\text{RR}(p,r)-d_\text{CS}(p,r,d)$ roughly equals to $pd$ when both $r$ and $d$ are much smaller than $p$. This  efficiency improvement is due to the fact that the proposed methodology takes into account the possible common subspace.\vspace{-0.1cm}

\subsection{Rank and common dimension selection consistency}\label{sec:4.3}

In this section, we provide theoretical justifications for rank and common dimension selection. 
First, we establish the rank selection consistency for the ridge-type ratio in \eqref{eq:ridge_ratio}.

\begin{theorem}\label{thm:rank_selection}
	Under Assumptions \ref{asmp:1} and \ref{asmp:2}, if $T\gtrsim\max(\tau^4,\tau^2)M_2^{-2}p$, $\alpha_\textup{RSC}^{-1}\tau^2M_1\sqrt{p\bar{r}/T}=o(s(p,T))$,
	$s(p,T)=o(\sigma_r^{-1}\min_{1\leq i\leq r-1}\sigma_{j+1}/\sigma_j)$,
	and $r <\bar{r}$, then $\mathbb{P}(\widehat{r}=r)\to1$ as $T\to\infty$.
\end{theorem}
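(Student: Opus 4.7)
The plan is to show that the ridge-type ratio $R(i) := (\widetilde{\sigma}_{i+1}+s(p,T))/(\widetilde{\sigma}_i+s(p,T))$ is uniquely minimized at $i=r$ with probability tending to one. The strategy is first to convert a high-probability Frobenius-norm bound on the rank-$\bar{r}$ reduced-rank estimator $\widetilde{\bm{A}}_\textup{RR}(\bar{r})$ into singular-value perturbation bounds via Weyl's inequality, and then to split $\{1,\ldots,\bar{r}-1\}$ into three regimes according to whether $i$ is strictly below, equal to, or strictly above the true rank $r$. Showing that $R(r)$ is of much smaller order than $R(i)$ in either of the other two regimes will finish the argument.

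For the perturbation step, I would first argue that $\|\widetilde{\bm{A}}_\textup{RR}(\bar{r}) - \bm{A}^*\|_\textup{F}$ enjoys essentially the same high-probability bound as in Proposition \ref{prop:init}, with $r$ replaced by $\bar{r}$, yielding an upper bound of order $\Delta_T := \alpha_\textup{RSC}^{-1} \tau^2 M_1 \sqrt{p\bar{r}/T}$. The derivation of Proposition \ref{prop:init} only invokes restricted strong convexity and the deviation bound for the rank-$\bar{r}$ constrained least-squares program together with $\textup{rank}(\bm{A}^*) = r \leq \bar{r}$, so the extension is routine. Weyl/Mirsky's inequality then yields
\[
\max_{1 \leq i \leq \bar{r}} |\widetilde{\sigma}_i - \sigma_i(\bm{A}^*)| \leq \|\widetilde{\bm{A}}_\textup{RR}(\bar{r}) - \bm{A}^*\|_\textup{F} \lesssim \Delta_T,
\]
with the convention $\sigma_i(\bm{A}^*) = 0$ for $i > r$. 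The assumption $\Delta_T = o(s(p,T))$ and the displayed condition on $s(p,T)$ (which in particular forces $s(p,T) = o(\sigma_r(\bm{A}^*))$, since the minimum of the ratios is bounded by one) together ensure that, on this high-probability event, $\widetilde{\sigma}_r \geq \sigma_r(\bm{A}^*)/2$ while $\widetilde{\sigma}_{r+1} \leq \Delta_T = o(s(p,T))$.

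Next I carry out the three-regime analysis on this event. For $1 \leq i \leq r-1$, both $\widetilde{\sigma}_i$ and $\widetilde{\sigma}_{i+1}$ differ from $\sigma_i(\bm{A}^*)$ and $\sigma_{i+1}(\bm{A}^*)$ by $o(s(p,T)) = o(\sigma_r(\bm{A}^*))$, and since $s(p,T) = o(\sigma_r(\bm{A}^*))$ is negligible against both $\widetilde{\sigma}_i$ and $\widetilde{\sigma}_{i+1}$,
\[
R(i) = \frac{\sigma_{i+1}(\bm{A}^*)}{\sigma_i(\bm{A}^*)}(1+o(1)) \geq (1-o(1)) \min_{1 \leq j \leq r-1} \frac{\sigma_{j+1}(\bm{A}^*)}{\sigma_j(\bm{A}^*)}.
\]
For $i = r$,
\[
R(r) = \frac{\widetilde{\sigma}_{r+1}+s(p,T)}{\widetilde{\sigma}_r+s(p,T)} \leq \frac{(1+o(1))\,s(p,T)}{\sigma_r(\bm{A}^*)/2} = O\!\left(\frac{s(p,T)}{\sigma_r(\bm{A}^*)}\right),
\]
which by the second displayed condition on $s(p,T)$ is of strictly smaller order than $\min_{1 \leq j \leq r-1} \sigma_{j+1}(\bm{A}^*)/\sigma_j(\bm{A}^*)$. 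For $r+1 \leq i \leq \bar{r}-1$, both $\widetilde{\sigma}_i$ and $\widetilde{\sigma}_{i+1}$ lie in $[0,\Delta_T]$ with $\Delta_T = o(s(p,T))$, so $R(i) = 1 + o(1)$, which is strictly larger than $R(r) = o(1)$. Combining the three cases yields $\widehat{r} = r$ on the event in question, and the probability of its complement is controlled by the tail bound inherited from the extension of Proposition \ref{prop:init}.

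The main obstacle I foresee is extending Proposition \ref{prop:init} from rank $r$ to rank $\bar{r}$: one must verify that the deviation bound of Definition \ref{def:deviation} and the restricted strong convexity still hold uniformly over matrices of rank up to $\bar{r}$, with $p r$ replaced by $p \bar{r}$ in the statistical rate. This reduces to bounding the deviation term $\xi(\bar{r},\cdot)$ of order $\tau^2 M_1 \sqrt{p\bar{r}/T}$, which follows from a standard $\varepsilon$-net argument over low-rank matrices combined with the sub-Gaussian tail bound implied by Assumption \ref{asmp:2}, along the lines already used in the proof of Proposition \ref{prop:init}. A minor bookkeeping issue is parsing the second displayed condition on $s(p,T)$ consistently so that it simultaneously forces $s(p,T) \ll \sigma_r(\bm{A}^*)$ and makes the ratio $s(p,T)/\sigma_r(\bm{A}^*)$ strictly beat $\min_{j<r} \sigma_{j+1}(\bm{A}^*)/\sigma_j(\bm{A}^*)$; I interpret the stated $o(\cdot)$ in precisely this sense.
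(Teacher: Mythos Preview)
Your proposal is correct and follows essentially the same route as the paper: bound $\|\widetilde{\bm{A}}_\textup{RR}(\bar r)-\bm{A}^*\|_\textup{F}$ via RSC plus a rank-$\bar r$ deviation bound, transfer this to singular values by Mirsky/Weyl, and then split into the three regimes $i<r$, $i=r$, $i>r$ exactly as you describe. The only cosmetic difference is that the paper invokes its Lemma on the reduced-rank estimator directly rather than phrasing it as an ``extension of Proposition~\ref{prop:init},'' but the underlying ingredients you list (restricted strong convexity and the $\varepsilon$-net deviation bound over rank-$\bar r$ matrices) are precisely what the paper uses.
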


The conditions in this theorem reduce to $s^{-1}(p,T)\sqrt{p/T}\to 0$ and $s(p,T)\to0$ as $T\to\infty$, when $\sigma_1$, $\sigma_r^{-1}$, $\alpha_\textup{RSC}^{-1}$, $\tau$, and $M_1$ are bounded.
Moreover, the required sample size in Theorem \ref{thm:rank_selection} is the same as that for the  estimation consistency in Theorem \ref{thm:stat}. 

Given that the rank $r$ is known, the following theorem provides theoretical justifications for the proposed BIC in \eqref{eq:BIC}.

\begin{theorem}\label{thm:d_consistency}
	Suppose the conditions in Theorem \ref{thm:stat} hold. Then, $\mathbb{P}(\widehat{d}=d)\to1$ as $T\to\infty$.
\end{theorem}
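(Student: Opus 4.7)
The plan is to show, for every $d' \in \{0,\ldots,r\}\setminus\{d\}$, that $\textup{BIC}(r,d') > \textup{BIC}(r,d)$ with probability tending to one, splitting into the overparameterized regime $d' < d$ (the true $\bm{A}^*$ lies in the assumed class) and the misspecified regime $d' > d$ (it does not). Writing $S(k) = \|\bm{Y} - \widehat{\bm{A}}(r,k)\bm{X}\|_\textup{F}^2$, the BIC gap reads
\begin{equation*}
\textup{BIC}(r,d') - \textup{BIC}(r,d) = Tp\log\!\bigl(S(d')/S(d)\bigr) + \bigl(d_\textup{CS}(p,r,d') - d_\textup{CS}(p,r,d)\bigr)\log T,
\end{equation*}
and the two regimes differ in which summand dominates.

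In the overparameterized regime $d' < d$, both model classes contain $\bm{A}^*$, and Theorem \ref{thm:stat} gives $\|\widehat{\bm{A}}(r,k) - \bm{A}^*\|_\textup{F}^2 \lesssim d_\textup{CS}(p,r,k)/T$ for $k \in \{d', d\}$. Expanding $S(k) = \|\bbm{\varepsilon}\|_\textup{F}^2 + \|(\widehat{\bm{A}}(r,k) - \bm{A}^*)\bm{X}\|_\textup{F}^2 - 2\langle\bbm{\varepsilon},(\widehat{\bm{A}}(r,k) - \bm{A}^*)\bm{X}\rangle$, I would bound the quadratic term by $\lambda_{\max}(\bm{X}\bm{X}^\top/T)\cdot d_\textup{CS}(p,r,k)$ and control the inner-product term through the deviation bound of Definition \ref{def:deviation} together with $\|\widehat{\bm{A}}(r,d) - \widehat{\bm{A}}(r,d')\|_\textup{F} \lesssim \sqrt{d_\textup{CS}(p,r,d')/T}$, obtaining $|S(d') - S(d)| \lesssim d_\textup{CS}(p,r,d')$ with high probability. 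Since $S(d) \asymp Tp$ (dominated by $\|\bbm{\varepsilon}\|_\textup{F}^2$), the log-RSS term satisfies $|Tp\log(S(d')/S(d))| \lesssim d_\textup{CS}(p,r,d')$, while $d_\textup{CS}(p,r,d') - d_\textup{CS}(p,r,d) = (d-d')(p-(d+d'+1)/2) \gtrsim (d-d')p$, so the penalty difference of order $(d-d')p\log T$ dominates and sends the BIC gap to $+\infty$ in probability.

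In the misspecified regime $d' > d$, the model class does not contain $\bm{A}^*$. The crucial step is a uniform bias lower bound $\inf_{\bm{A}\in M_{d'}}\|\bm{A}^*-\bm{A}\|_\textup{F} \geq \gamma$ for some constant $\gamma > 0$ governed by $\sigma_r$ and the canonical-angle separation $g_{\min}$ of Assumption \ref{asmp:3}; heuristically, enlarging the common subspace beyond dimension $d$ forces a partial merger of directions in $\bm{R}^*$ and $\bm{P}^*$, which Assumption \ref{asmp:3} prohibits up to a cost $\gtrsim \sigma_r g_{\min}$. Granted such a $\gamma$, I would apply a restricted-eigenvalue bound $\|\bm{M}\bm{X}\|_\textup{F}^2 \gtrsim T\lambda_{\min}(\bm{\Sigma}_{\bm{y}})\|\bm{M}\|_\textup{F}^2$ to $\bm{M} = \bm{A}^* - \widehat{\bm{A}}(r,d')$, combined with $\mathbb{E}[\bbm{\varepsilon}_t\bm{y}_{t-1}^\top] = \bm{0}$ for the cross term, to obtain $S(d') - S(d) \gtrsim T\lambda_{\min}(\bm{\Sigma}_{\bm{y}})\gamma^2/2$ with high probability. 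The log-RSS term is then of order $T$, which dwarfs the negative penalty difference of order at most $pr\log T$ under the scaling $T \gg p\log T$ implicit in the hypotheses of Theorem \ref{thm:stat}.

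The hardest step is the uniform bias bound $\gamma$ in the misspecified regime: this is a purely geometric question about the minimum distance from $\bm{A}^*$ to the manifold of rank-$r$ matrices with common dimension at least $d'$. I would reduce to $d' = d + 1$, apply a Wedin-type singular-subspace perturbation argument, and exploit Assumption \ref{asmp:3} to conclude $\gamma \gtrsim \sigma_r g_{\min}$. A secondary subtlety is that Algorithm \ref{alg:GD_LRP} only delivers convergence to an in-class minimizer $\bm{A}^*_{d'}$ rather than to $\bm{A}^*$; I would handle this by re-running the proof of Theorem \ref{thm:gd} with $\bm{A}^*_{d'}$ as the target and an appropriately inflated deviation bound, so that $\widehat{\bm{A}}(r,d')$ lies close to $\bm{A}^*_{d'}$ and is therefore still bounded away from $\bm{A}^*$ by roughly $\gamma$.
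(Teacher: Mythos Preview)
Your approach mirrors the paper's: split into two regimes and show the BIC gap is positive in each, with the penalty term controlling the over-parameterized side and a nonvanishing RSS bias controlling the misspecified side. You also have the regimes labeled the right way---since $d_\textup{CS}(p,r,\cdot)$ is decreasing in the common dimension, $d'<d$ is the larger (over-parameterized) model class containing $\bm{A}^*$, while $d'>d$ is the smaller misspecified one. The paper's proof in Appendix~\ref{append:selection} reverses these labels, calling $k<d$ ``under-parameterized'' and attaching the $CTg_{\min}^2$ bias there while claiming $\|\widehat{\bm{A}}(k)-\bm{A}^*\|_\textup{F}\asymp\sqrt{d_\textup{CS}(p,r,k)/T}$ for $k>d$; that is only coherent if one reads the two case labels as swapped.

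Substantively the arguments coincide once the labels are aligned. For the misspecified regime the paper simply asserts the lower bound $\|\bm{Y}-\widehat{\bm{A}}(k)\bm{X}\|_\textup{F}^2\geq T\,\textup{tr}(\bm{\Sigma}_{\bbm{\varepsilon}})+CTg_{\min}^2+o_p(T)$ without derivation, whereas you correctly isolate this as the crux and sketch a route via the distance from $\bm{A}^*$ to the manifold $M_{d'}$ together with an RSC-type lower bound on $\|(\bm{A}^*-\widehat{\bm{A}}(r,d'))\bm{X}\|_\textup{F}^2$; your heuristic $\gamma\gtrsim\sigma_r g_{\min}$ is the right order. For the over-parameterized regime both you and the paper appeal to Theorem~\ref{thm:stat}-type consistency at the larger $d'$; neither treatment addresses the wrinkle that, once one reparametrizes $\bm{A}^*$ with a smaller common block, the analogue of Assumption~\ref{asmp:3} may fail (the reparametrized $\bm{R}^{*\prime}$ and $\bm{P}^{*\prime}$ can share directions), so the initialization and convergence analysis do not carry over verbatim. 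Your identification of the ``in-class minimizer'' subtlety in the misspecified case is likewise a point the paper glosses over.
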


\subsection{Convergence analysis for VAR($\ell$) models}\label{sec:4.4}

This subsection extends the convergence analysis of the gradient descent algorithm for VAR(1) to VAR($\ell$). We refer the readers to Appendix \ref{append:VAR_L} for the detailed algorithm and implementation.
The computational convergence analysis can be extended from that in Section \ref{sec:4.1}, and is omitted to save space. Here, we focus on the statistical convergence. 

For the VAR ($\ell$) model in \eqref{eq:VAR_ell}, define the matrix polynomial $\mathcal{A}(z)=\bm{I}_p-\bm{A}_1^*z-\bm{A}_2^*z^2-\cdots-\bm{A}_\ell^* z^\ell$, where $z\in\mathbb{C}$, and its stationarity condition is given below. 

\begin{assumption}\label{asmp:4}
	The determinant of $\mathcal{A}(z)$ is not equal to zero for all $|z|<1$.
\end{assumption}

Denote by $\bar{\sigma}=\max_{1\leq i\leq 3}\sigma_1(\cm{A}^*_{(i)})$, $\underline{\sigma}=\min_{1\leq i\leq 3}\sigma_{r_i}(\cm{A}^*_{(i)})$, and $\kappa=\bar{\sigma}/\underline{\sigma}$ the largest and smallest singular values and the condition number of the true parameter tensor, respectively.
As in Section \ref{sec:4.2}, we can similarly define the quantities, $\mu_{\min}(\mathcal{A})$, $\mu_{\max}(\mathcal{A})$, $\alpha_\textup{RSC}$, $\beta_\textup{RSS}$, $M_1$ and $M_2$.
The statistical convergence analysis is given below.

\begin{theorem}\label{thm:VAR_L}
	Suppose that Assumptions \ref{asmp:2}--\ref{asmp:4} hold, $T\gtrsim\max(\tau^4,\tau^2)M_2^{-2}(pr_1+pr_2+\ell r_3)$, $a\asymp\alpha_\textup{RSC}\bar{\sigma}^{3/4}\kappa^{-2}$, $b\asymp\bar{\sigma}^{1/4}$, and $\eta=\eta_0\alpha_\textup{RSC}^{-1}\kappa^2\bar{\sigma}^{-3/2}$. Then, after $I$-th iteration of the gradient descent with
	$I\gtrsim \log(\kappa^{-1}\bar{\sigma}^{-3/4}g_{\min})/\log(1-C\eta_0\alpha_\textup{RSC}\beta_\textup{RSS}^{-1}\kappa^{-2})$, with probability at least $1-4\exp[-CM_2^2\min(\tau^{-2},\tau^{-4})T]-2\exp[-C(r_1r_2r_3+pr_1+pr_2+\ell r_3)]$,
	\begin{equation}
		\|\cm{A}^{(I)}-\cm{A}^*\|_\textup{F}\lesssim\kappa\alpha_\textup{RSC}^{-1}\tau^2M_1\sqrt{d_\textup{CS}(p,\ell,r_1,r_2,r_3,d)/T}.
	\end{equation}
\end{theorem}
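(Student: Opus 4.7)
The plan is to lift the three ingredients used to prove Theorem \ref{thm:stat} to the tensor setting: (i) a deterministic convergence guarantee analogous to Theorem \ref{thm:gd}, (ii) a high-probability RSC/RSS bound together with a deviation bound tailored to the common-subspace Tucker stratum, and (iii) a warm-start guarantee extending Proposition \ref{prop:init} through the HOSVD of $\cm{\widetilde{A}}_\textup{RR}(r_1,r_2,r_3)$. I would carry these out in this order and then assemble them exactly as in Section \ref{sec:4.2}.

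First I would transcribe the gradient-descent bookkeeping of Theorem \ref{thm:gd} for the five factors $(\bm{C},\bm{R},\bm{P},\bm{L},\cm{G})$, using the partial derivatives listed immediately before Algorithm \ref{alg:GD_2} together with the three regularizers in \eqref{eq:extra_decompose_2}. With $a\asymp\alpha_\textup{RSC}\bar{\sigma}^{3/4}\kappa^{-2}$, $b\asymp\bar{\sigma}^{1/4}$, and $\eta=\eta_0\alpha_\textup{RSC}^{-1}\kappa^2\bar{\sigma}^{-3/2}$ as in the theorem, the same Lyapunov argument should yield, for all $i\geq 1$,
\begin{equation*}
\|\cm{A}^{(i)}-\cm{A}^*\|_\textup{F}^2 \lesssim \kappa^2\bigl(1-C\eta_0\alpha_\textup{RSC}\beta_\textup{RSS}^{-1}\kappa^{-2}\bigr)^i\|\cm{A}^{(0)}-\cm{A}^*\|_\textup{F}^2 + \kappa^2\alpha_\textup{RSC}^{-2}\xi^2(r_1,r_2,r_3,d),
\end{equation*}
where $\xi(r_1,r_2,r_3,d)$ is the supremum of $\langle\nabla\mathcal{L}(\cm{A}^*),\cm{B}\rangle$ over unit-Frobenius tensors $\cm{B}$ admitting the common-subspace Tucker form \eqref{eq:common_tensor_decomp2}. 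The main new bookkeeping item here is tracking the shared factor $\bm{C}$ across modes 1 and 2, but the Lyapunov function used in Theorem \ref{thm:gd} balances exactly the terms that the three regularizers in \eqref{eq:extra_decompose_2} control.

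Second, I would establish the RSC/RSS and bound $\xi$. Writing $\bm{x}_t=(\bm{y}_{t-1}^\top,\ldots,\bm{y}_{t-\ell}^\top)^\top$, the loss is quadratic in $\cm{A}_{(1)}$ with empirical Hessian $T^{-1}\sum_t\bm{x}_t\bm{x}_t^\top$. Following \cite{basu2015regularized} and \cite{wang2019high}, Assumption \ref{asmp:4} and the sub-Gaussian tail in Assumption \ref{asmp:2} give, on the low-Tucker-rank cone, $\alpha_\textup{RSC}\|\bm{\Delta}\|_\textup{F}^2\leq\langle\bm{\Delta},T^{-1}\sum_t\bm{x}_t\bm{x}_t^\top\bm{\Delta}\rangle\leq\beta_\textup{RSS}\|\bm{\Delta}\|_\textup{F}^2$ with the stated probability, once $T\gtrsim\max(\tau^4,\tau^2)M_2^{-2}(pr_1+pr_2+\ell r_3)$. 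For the deviation, an $\varepsilon$-net argument on the factors $\bm{C},\bm{R},\bm{P},\bm{L},\cm{G}$ combined with the sub-Gaussian concentration of $T^{-1}\sum_t\bbm{\varepsilon}_t\otimes\bm{x}_t$ should yield $\xi(r_1,r_2,r_3,d)\lesssim \tau^2 M_1\sqrt{d_\textup{CS}(p,\ell,r_1,r_2,r_3,d)/T}$. I expect this to be the main obstacle: a naive product net over the factor spaces double-counts the shared factor $\bm{C}$ and only recovers the cruder reduced-rank dimension $pr_1+pr_2+\ell r_3+r_1r_2r_3$, so extracting the refined $d_\textup{CS}$ rate requires parametrizing the stratum by independent coordinates and absorbing the gauge freedom induced by the regularization before counting metric entropy.

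Third, for the warm start I would invoke the tensor rank-constrained bound $\|\cm{\widetilde{A}}_\textup{RR}(r_1,r_2,r_3)-\cm{A}^*\|_\textup{F}\lesssim\alpha_\textup{RSC}^{-1}\tau^2 M_1\sqrt{(pr_1+pr_2+\ell r_3)/T}$ from \cite{wang2019high}, then apply a Davis--Kahan $\sin\theta$ perturbation to the mode-$1$ and mode-$2$ HOSVD factors $\widetilde{\bm{U}}_1,\widetilde{\bm{U}}_2$ before running the construction of Section \ref{sec:3.2}. Under Assumption \ref{asmp:3} on the separation between $\mathcal{M}(\bm{R}^*)$ and $\mathcal{M}(\bm{P}^*)$, this mirrors Proposition \ref{prop:init} and produces $\|\cm{A}^{(0)}-\cm{A}^*\|_\textup{F}\lesssim\bar{\sigma}^{3/4}\kappa^2 g_{\min}^{-2}\alpha_\textup{RSC}^{-1}\tau^2 M_1\sqrt{(pr_1+pr_2+\ell r_3)/T}$, which falls below the $\underline{\sigma}$ basin-of-attraction threshold needed for step one under the stated sample-size assumption. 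Choosing $I$ as in the theorem forces the first term in the displayed bound below the second, and substituting the $\xi$-bound from step two delivers the advertised rate with the stated probability.
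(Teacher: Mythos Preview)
Your proposal is correct and follows essentially the same architecture as the paper's proof: a deterministic tensor analogue of Theorem \ref{thm:gd} (the paper invokes the Tucker framework of \citet{han2020optimal} and a tensor counterpart of Lemma \ref{lemma:errorbound}), verification of RSC/RSS and the deviation bound by extending Lemmas \ref{lemma:RSC} and \ref{lemma:deviation}, and a warm-start bound obtained by combining the rank-constrained tensor estimator error with Davis--Kahan perturbation as in Proposition \ref{prop:init}. The only minor remark is that the covering-number step you flag as the main obstacle is actually no harder than in Lemma \ref{lemma:cs_covering}: covering $\bm{C},\bm{R},\bm{P},\bm{L},\cm{G}$ separately already avoids double-counting the shared $\bm{C}$ and delivers the $d_\textup{CS}(p,\ell,r_1,r_2,r_3,d)$ entropy directly, so no additional gauge-fixing is needed.
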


From the theorem, the estimation efficiency can be achieved by considering the common structure between $\mathcal{M}(\cm{A}_{(1)})$ and $\mathcal{M}(\cm{A}_{(2)})$.
We can also establish the consistency for rank and common dimension selection, but it is omitted for brevity.

\section{Diverging Eigenvalue Effect}\label{sec:diverging}

\subsection{Diverging eigenvalue and elimination transformation}

In many high-dimensional time series data, it is common to observe the diverging eigenvalue effect in $\bm{\Sigma}_{\bm{y}}=\text{var}(\bm{y}_t)$: all diagonal entries in $\bm{\Sigma}_{\bm{y}}$ are bounded, but the leading eigenvalues of $\bm{\Sigma}_{\bm{y}}$ are diverging to infinity with $p$ increasing. This phenomenon implies the pervasive cross-sectional dependency and has been well studied in the econometrics literature of factor modeling with common factors and idiosyncratic errors \citep{bai2008large}.

If we model the data with the diverging eigenvalue effect via a VAR(1) model in \eqref{eq:VAR1}, the relationship between $\bm{\Sigma}_{\bm{y}}$ and $\bm{\Sigma}_{\bbm{\varepsilon}}$, namely
$\bm{\Sigma}_{\bm{y}}=\bm{A}\bm{\Sigma}_{\bm{y}}\bm{A}^\top+\bm{\Sigma}_{\bbm{\varepsilon}}$,
implies that the diverging eigenvalues of $\bm{\Sigma}_{\bm{y}}$ may be splitted into the autoregression part and white noise part. In other words, at least one of the conditional expectation of the response and white noise have strong cross-sectional dependence.
Under Assumption \ref{asmp:1} for the eigenvalues of $\bm{A}$, some singular values of $\bm{A}$ are allowed to diverge. However, if $\lambda_{\max}(\bm{\Sigma}_{\bbm{\varepsilon}})$ is diverging, the estimation error bound in Theorem \ref{thm:stat} is $O_p(\mu_{\max}(\mathcal{A})\lambda_{\max}(\bm{\Sigma}_{\bbm{\varepsilon}})\sqrt{pr/T})$, resulting in a much larger sample size requirement. To this end, we propose an elimination transformation to remove the diverging eigenvalue effect in white noise errors.

\begin{remark}\label{rml:diveging_singular_value}
    For $\bm{A}=\bm{U}\bm{S}\bm{V}^\top$, the diverging singular values exist typically when $\bm{U}$ is pervasive such that the response factor is related to most or even all of the variables, but the predictor loading $\bm{V}$ is highly sparse. For example, consider the rank-1 $\bm{A}=0.9\bm{1}_p(1,0,0,\cdots,0)^\top$, where $\sigma_1(\bm{A})=0.9\sqrt{p}$ and the nonzero eigenvalue of $\bm{A}$ is 0.9. In this case, the diverging eigenvalues in $\bm{\Sigma}_{\bm{y}}$ may come from the pervasive dependence on the conditional expectation of the response;
    see the empirical evidence in Section \ref{sec:7}.
\end{remark}

Based on the decomposition in \eqref{eq:dynamic_part}, $\bbm{\varepsilon}_{1t}:=\bm{U}^\top\bbm{\varepsilon}_t$ is involved in  the low-dimensional autoregressive model, and $\bbm{\varepsilon}_{2t}:=\bm{U}_\perp^\top\bbm{\varepsilon}_t$, where $\bm{U}_\perp\in\mathbb{O}^{p\times(p-r)}$ such that $\bm{U}^\top\bm{U}_\perp=\bm{0}$, is not related to the parameter matrix $\bm{A}$. Hence, when estimating $\bm{A}$, it is beneficial to eliminate the diverging eigenvalue effect in $\bbm{\varepsilon}_{2t}$ and preserve information in $\mathcal{M}({\bm{U}})$ to avoid model bias.
Suppose $\bm{\Sigma}_{\bbm{\varepsilon}_2}:=\text{var}(\bbm{\varepsilon}_{2t})$ has $K$ diverging eigenvalues, i.e.,
    $\bm{\Sigma}_{\bbm{\varepsilon}_2}=\bm{K}\bm{\Lambda}_{\bbm{\varepsilon}_2}^K\bm{K}^\top+\bm{K}_\perp\bm{\Lambda}_{\bbm{\varepsilon}_2}^{-K}\bm{K}^\top_\perp$, 
where $\bm{\Lambda}_{\bbm{\varepsilon}_2}^K=\text{diag}(\lambda_1(\bm{\Sigma}_{\bbm{\varepsilon}_2}),\dots,\lambda_K(\bm{\Sigma}_{\bbm{\varepsilon}_2}))$ and $\bm{\Lambda}_{\bbm{\varepsilon}_2}^{-K}=\text{diag}(\lambda_{K+1}(\bm{\Sigma}_{\bbm{\varepsilon}_2}),\dots,\lambda_{p-r}(\bm{\Sigma}_{\bbm{\varepsilon}_2}))$ contains the diverging and bounded eigenvalues, respectively, and $\bm{K}$ and $\bm{K}_\perp$ are the corresponding eigenvectors. The transformation 
$\bm{T}_K=\bm{K}(\bm{\Lambda}_{\bbm{\varepsilon}_2}^{K})^{-1/2}\bm{K}^\top+\bm{K}_\perp\bm{K}_\perp^\top$ can remove the diverging eigenvalue effect in $\bm{\Sigma}_{\bbm{\varepsilon}_2}$, since
\begin{equation}
    \text{var}(\bm{T}_K\bbm{\varepsilon}_{2t})=\bm{T}_K\bm{\Sigma}_{\bbm{\varepsilon}_2}\bm{T}_K=\bm{K}\bm{K}^\top+\bm{K}_\perp\bm{\Lambda}_{\bbm{\varepsilon}_2}^{-K}\bm{K}_\perp^\top=[\bm{K}~\bm{K}_\perp]\begin{bmatrix}\bm{I}&\bm{0}_{K\times(p-K)}\\\bm{0}_{(p-K)\times K}&\bm{\Lambda}_{\bbm{\varepsilon}_2}^{-K}\end{bmatrix}[\bm{K}~\bm{K}_\perp]^\top.
\end{equation}

In order to preserve the informative factor loading in $\bm{U}$ when estimating $\bm{A}$, we consider the $\bm{U}$-preserved transformation
$\bm{T}_U=\bm{U}\bm{U}^\top+\bm{U}_\perp\bm{T}_K\bm{U}_\perp^\top$
and denote $\widebar{\bbm{\varepsilon}}_t=\bm{T}_U\bbm{\varepsilon}_t$ such that
\begin{equation}
    \text{var}(\widebar{\bbm{\varepsilon}}_t)=\text{var}(\bm{U}\bbm{\varepsilon}_{1t}+\bm{U}_\perp\bm{T}_K\bbm{\varepsilon}_{2t})=[\bm{U}~\bm{U}_\perp]\begin{bmatrix}\bm{\Sigma}_{\bbm{\varepsilon}_1} & \mathbb{E}[\bbm{\varepsilon}_{1t}\bbm{\varepsilon}_{2t}^\top\bm{T}_K]\\ \mathbb{E}[\bm{T}_K\bbm{\varepsilon}_{2t}\bbm{\varepsilon}_{1t}^\top] & \bm{T}_K\bm{\Sigma}_{\bbm{\varepsilon}_2}\bm{T}_K
    \end{bmatrix}[\bm{U}~\bm{U}_\perp]^\top.
\end{equation}
If the eigenvalues of $\bm{\Sigma}_{\bbm{\varepsilon}_1}$ are bounded, all eigenvalues of $\text{var}(\widebar{\bbm{\varepsilon}}_t)$ are bounded. In other words, the diverging eigenvalue effects of $\bm{\Sigma}_{\bbm{\varepsilon}}$ in $\mathcal{M}^\perp(\bm{U})$ can be removed.
By the $\bm{U}$-preseved property of the transformation $\bm{T}_U$, we have that $\bm{A}=\bm{T}_U\bm{A}$. Thus, the reduced-rank VAR model in \eqref{eq:rrVAR} implies that
$\widebar{\bm{y}}_t\equiv\bm{T}_U\bm{y}_t=\bm{T}_U\bm{A}\bm{y}_{t-1}+\bm{T}_U\bbm{\varepsilon}_t=\bm{A}\bm{y}_{t-1}+\widebar{\bbm{\varepsilon}}_t$,
in which the diverging eigenvalue effects in the white noise innovations are alleviated. 

\subsection{Estimation methodology}

First, the reduced-rank VAR(1) model can be formulated to the factor model in \eqref{eq:structuralFM}
\begin{equation}
    \label{eq:VAR_in_FM}
    \bm{y}_t=\bm{U}\bm{S}\bm{V}^\top\bm{y}_{t-1}+\bbm{\varepsilon}_t
    =\bm{U}(\bm{S}\bm{V}^\top\bm{y}_{t-1}+\bbm{\varepsilon}_{1t})+\bm{U}_\perp\bbm{\varepsilon}_{2t}:=\bm{U}\bm{f}_t+\bm{U}_\perp\bbm{\varepsilon}_{2t},
\end{equation}
where $\bm{f}_t$ is an $r$-dimensional dynamic factor and $\bbm{\varepsilon}_{2t}$ is a white noise.
Following the literature of factor models with dynamically dependent factors and white noise errors \citep{lam2012factor,gao2021modeling}, we consider the autocovariance matrices $\bm{\Sigma}_\bm{y}(j)=\mathbb{E}[\bm{y}_t\bm{y}_{t-j}^\top]$, for $j\geq0$. It follows from \eqref{eq:VAR_in_FM} that
    $\bm{\Sigma}_{\bm{y}}(j)=\bm{U}\bm{\Sigma}_{\bm{f}}(j)\bm{U}^\top+\bm{U}\bm{\Sigma}_{\bm{f}\bbm{\varepsilon}_2}(j)\bm{U}_\perp^\top$, for $j\geq1$,
where $\bm{\Sigma}_{\bm{f}}(j)=\mathbb{E}[\bm{f}_t\bm{f}_{t-j}^\top]$ and $\bm{\Sigma}_{\bm{f}\bbm{\varepsilon}_2}(j)=\mathbb{E}[\bm{f}_t\bbm{\varepsilon}_{2,t-j}^\top]$. For a prespecified integer $N>0$, define $\bm{M}=\sum_{j=1}^N\bm{\Sigma}_{\bm{y}}(j)\bm{\Sigma}_{\bm{y}}(j)^\top$,
and $\mathcal{M}(\bm{U})$ is the subspace spanned by the first $r$ eigenvectors of $\bm{M}$. Given $\bm{U}_\perp$, the covariance matrix of $\bbm{\varepsilon}_{2t}$ is $\bm{U}_\perp^\top\bm{\Sigma}_{\bm{y}}\bm{U}_\perp$ and its first $K$ eigenvectors are $\bm{K}$. 

Therefore, we can first obtain the estimates $\widehat{\bm{U}}$ and $\widehat{\bm{U}}_\perp$ by calculating the first $r$ and last $p-r$ eigenvectors of $\widehat{\bm{M}}=\sum_{j=1}^N\widehat{\bm{\Sigma}}_{\bm{y}}(j)\widehat{\bm{\Sigma}}_{\bm{y}}(j)^\top$, respectively, where each sample autocovariance is calculated via $\widehat{\bm{\Sigma}}_{\bm{y}}(j)=(T-j)^{-1}\sum_{t=j+1}^T\bm{y}_t\bm{y}_{t-j}^\top$. The diverging eigenvalues and the corresponding eigenvectors of $\text{var}(\bbm{\varepsilon}_{2t})$ can be estimated by the first $K$ eigenvalues and eigenvectors of $\widehat{\bm{U}}_\perp^\top\widehat{\bm{\Sigma}}_{\bm{y}}\widehat{\bm{U}}_\perp$, denoted by $\widehat{\bm{\Lambda}}^K_{\bbm{\varepsilon}_2}$ and $\widehat{\bm{K}}$.
Then, we can obtain the estimated transformations $\widehat{\bm{T}}_K=\widehat{\bm{K}}(\widehat{\bm{\Lambda}}^K_{\bbm{\varepsilon}_2})^{-1/2}\widehat{\bm{K}}^\top+\widehat{\bm{K}}_\perp\widehat{\bm{K}}_\perp^\top$ and $\widehat{\bm{T}}_U=\widehat{\bm{U}}\widehat{\bm{U}}^\top+\widehat{\bm{U}}_\perp\widehat{\bm{T}}_K\widehat{\bm{U}}^\top_\perp$, and apply the transformation to obtain $\widetilde{\bm{y}}_t=\widehat{\bm{T}}_U\bm{y}_t$. Let the matrix $\widetilde{\bm{Y}}=[\widetilde{\bm{y}}_1~\cdots~\widetilde{\bm{y}}_T]\in\mathbb{R}^{p\times T}$ collect all transformed response vectors, and we can use the transformed response $\widetilde{\bm{Y}}$ and the original predictor $\bm{X}$ in the methods described in Section \ref{sec:3} to complete the estimation procedure.

When the number of factors $r$ in \eqref{eq:VAR_in_FM} is unknown, we may use the eigenvalue ridge-type ratios \citep{xia2015consistently} of $\widehat{\bm{M}}$ to estimate it numerically. To determine the number of diverging eigenvalues $K$, we may also calculate the eigenvalue ridge-type ratios of $\widehat{\bm{U}}_\perp^\top\widehat{\bm{\Sigma}}_{\bm{y}}\widehat{\bm{U}}_\perp$, or select the diverging eigenvalues based on a threshold $p^\delta$ for some $\delta\in(0,1)$.

Finally, for the general VAR($\ell$) model, we can also use the same factor model estimation method to obtain $\widehat{\bm{T}}_U$ and $\widetilde{\bm{y}}_t$, and apply the transformed response $\widetilde{\bm{y}}_t$ and the original predictors $\bm{y}_{t-1},\cdots,\bm{y}_{t-\ell}$ to the proposed estimation procedure in Section \ref{sec:3.4}.

\subsection{Theoretical results}

In this subsection, we focus on the VAR(1) model, as the results can  easily be extended to the general VAR($\ell$) models.
When the singular values of $\bm{A}^*$ are diverging, the spectral measurements $\mu_{\max}(\mathcal{A})$ and $\mu^{-1}_{\min}(\mathcal{A})$ may be diverging as well.
In the setting with diverging eigenvalue effect, representing the cross-sectional and temporal dependency in $\bm{y}_t$ by the spectral measurements in Section \ref{sec:4} may result in a loose result. As we assume the leading eigenvalues of $\bm{\Sigma}_{\bm{y}}$ are diverging, 
it is more natural to impose the following assumptions on the explicit diverging rates of the specific components in the model.
\begin{assumption}\label{asmp:5}
    All nonzero singular values of $\bm{A}^*$ scale as $p^{\delta_s}$ for some $\delta_s\in[0,1/2]$.
\end{assumption}

\begin{assumption}\label{asmp:6}
    The first $r$ eigenvalues of $\bm{U}^{*\top}\bm{\Sigma}_{\bbm{\varepsilon}}\bm{U}^*$ scale as $p^{\delta_u}$ for some $\delta_u\in[0,1]$. The first $K$ eigenvalues of $\bm{U}_\perp^{*\top}\bm{\Sigma}_{\bbm{\varepsilon}}\bm{U}_\perp^{*}$ scale as $p^{\delta_u'}$ for some $\delta_u'\in(0,1]$, and the other eigenvalues are bounded. In addition, the first $r$ eigenvalues of $\bm{V}^{*\top}\bm{\Sigma}_{\bm{y}}\bm{V}^*$ scale as $p^{\delta_v}$ for some $\delta_v\in[0,1]$.
\end{assumption}

The term $\delta_s$ in Assumption \ref{asmp:5} characterizes the strength of diverging singular values in $\bm{A}^*$. Note that it is possible that $\rho(\bm{A}^*)<1$ but $\sigma_1(\bm{A}^*)$ diverges to infinity. However, when $d=r$, we must have $\delta_s=0$ to ensure stationarity. The terms $\delta_u$ and $\delta_u'$ in Assumption \ref{asmp:6} represent the diverging eigenvalue effects of the white noise errors in the response subspace $\mathcal{M}(\bm{U}^*)$ and its orthogonal complement, respectively. The term $\delta_v$ characterizes the signal strength in the predictor factor $\bm{V}^{*\top}\bm{y}_t$. Based on these diverging rates, denote 
$\alpha_{\textup{RSC}}'=\lambda_{\min}(\bm{V}^{*\top}\bm{\Sigma}_{\bm{y}}\bm{V}^*)/2$, $\beta_{\textup{RSS}}'=2\lambda_{\max}(\bm{V}^{*\top}\bm{\Sigma}_{\bm{y}}\bm{V}^*)$ and $M_1'= Cp^{(\delta_u+\delta_v)/2}$
as the variants of $\alpha_\text{RSC}$, $\beta_\text{RSS}$ and $M_1$ defined in Section \ref{sec:4}. These quantities are related to the explicit diverging rates and are more suitable to derive the theory here.

Suppose that the number of diverging eigenvalues $K$ are known, we present the theoretical guarantees for the estimation procedures with $\widetilde{\bm{Y}}$ and $\bm{X}$ used in Algorithm \ref{alg:GD_LRP}.

\begin{theorem}\label{thm:diverging}
    Suppose that Assumptions \ref{asmp:1}-\ref{asmp:3}, \ref{asmp:5} and \ref{asmp:6} hold, $T\gtrsim\max(\tau^4,\tau^2)p$, $\max(\delta_u,\delta_v+2\delta_s)\geq\delta_u'$, and the conditions in Theorem \ref{thm:gd} are satisfied with $\alpha=\alpha'_\textup{RSC}$ and $\beta=\beta'_\textup{RSS}$. Then, after the $I$-th iteration with $I\gtrsim\log(p^{\delta_s/3}g_{\min})/\log(1-C\eta_0)$, with probability at least $1-4\exp[-C\min(\tau^{-2},\tau^{-4})T]-C\exp(-Cp)$,
    \begin{equation}
        \|\bm{A}^{(I)}-\bm{A}^*\|_\textup{F}\lesssim (\alpha'_\textup{RSC})^{-1}\tau^2M_1'\sqrt{\frac{d_\textup{CS}(p,r,d)}{T}} = \tau^2p^{(\delta_u-\delta_v)/2}\sqrt{\frac{d_\textup{CS}(p,r,d)}{T}}.
    \end{equation}
\end{theorem}

Theorem \ref{thm:diverging} presents the statistical convergence rate of the proposed estimator for the data with the diverging eigenvalue effect. First, the additional signal strengh condition, $\max(\delta_u,\delta_v+2\delta_s)\geq\delta_u'$, implies that the signal strength in the low-dimensional dynamic part $\bm{U}^{*\top}\bm{y}_t=\bm{S}^*\bm{V}^{*\top}\bm{y}_{t-1}+\bm{U}^{*\top}\bbm{\varepsilon}_t$ is not weaker than that in the white noise part $\bm{U}_\perp^{*\top}\bm{y}_t=\bm{U}_\perp^{*\top}\bbm{\varepsilon}_t$, such that the proposed factor modeling method can work. Second, if $\tau$ is fixed, the upper bound scales as $O_p(\sqrt{p^{1+\delta_u-\delta_v}/T})$. If $\delta_u\leq\delta_v$, i.e., the strength of the predictor factors is stronger than the that of white noise errors in $\mathcal{M}(\bm{U}^*)$, the rate is even faster than that in Theorem \ref{thm:stat}. Third, if we ignore the diverging effect in the data and apply the standard estimation procedure in Section \ref{sec:3}, the resulting rate in Theorem \ref{thm:stat}, scaling as $O_p(\mu_{\max}(\mathcal{A})\lambda_{\max}(\bm{\Sigma}_{\bbm{\varepsilon}})\sqrt{p/T})$, can be much larger than that in Theorem \ref{thm:diverging}, which confirms the efficacy of the proposed transformation method in removing the diverging effect. Finally, when $\delta_s>0$, i.e., $\|\bm{A}^*\|_\text{F}$ diverges with $p$, we may also consider the relative estimation error $\|\bm{A}^{(I)}-\bm{A}^*\|_\text{F}/\|\bm{A}^*\|_\text{F}\lesssim \tau^2\sqrt{p^{1+\delta_u-\delta_v-2\delta_s}/T}$ for the estimation consistency of factor loadings.

\section{Sparsity on Factor Loading Matrices}\label{sec:5}

The convergence analysis in Sections \ref{sec:4} and \ref{sec:diverging} requires $T\gtrsim p$ or $T\gtrsim p^{1+\delta_u-\delta_v}$; however, the number of series $p$ could be comparable to or even larger than the sample size $T$ for some real applications.
For this case, the common and specific factors are related to only a small subset of variables, while many other variables have no contribution in extracting factors.
Thus, in order to improve the estimation efficiency and model interpretation, it is of interest to further consider additional row-wise sparsity structure to factor loading matrices. If the factor loadings are sparse, it is unlikely to have strong cross-sectional dependence, so in this section, the diverging eigenvalue effect is not considered.

We consider the case of VAR(1) models and the result can be extended to that of general VAR($\ell$) models. Suppose that there are at most $s_c$, $s_r$, and $s_p$ variables related to the common, response-specific, and predictor-specific factors, respectively.
Let $\mathbb{S}(p,q,s)=\{\bm{M}\in\mathbb{R}^{p\times q}: \text{the number of non-zero rows of }\bm{M}\text{ is at most }s\}$. We can extend the regularized estimation method in \eqref{eq:extra_decompose} to encourage the row-wise sparsity on $\bm{C}$, $\bm{R}$, and $\bm{P}$,
\begin{equation}
\begin{split}
\left(\widehat{\bm{C}},\widehat{\bm{R}},\widehat{\bm{P}},\widehat{\bm{D}}\right)
= \argmin_{\substack{\bm{C}\in\mathbb{S}(p,d,s_c),\bm{D}\in\mathbb{R}^{r\times r},\\ \bm{R}\in\mathbb{S}(p,r-d,s_r),\bm{P}\in\mathbb{S}(p,r-d,s_p)}}& \Bigg\{\mathcal{L}(\bm{C},\bm{R},\bm{P},\bm{D})+  \frac{a}{2}\left\|\lb\bm{C}~\bm{R}\rb^\top\lb\bm{C}~\bm{R}\rb-b^2\bm{I}_r\right\|_\text{F}^2\\
&\hspace{29mm} + \frac{a}{2}\left\|\lb\bm{C}~\bm{P}\rb^\top\lb\bm{C}~\bm{P}\rb-b^2\bm{I}_r\right\|_\text{F}^2\Bigg\}.
\end{split}
\end{equation}
Accordingly, for the row-wise sparsity constraint, the hard thresholding operation $\text{HT}(\bm{M},s)$ can be added to the gradient descent algorithm, where $\text{HT}(\bm{M},s)$ projects the matrix $\bm{M}\in\mathbb{R}^{p\times q}$ onto $\mathbb{S}(p,q,s)$ by keeping the top $s$ largest rows of $\bm{M}$ in terms of Euclidean norm and truncating the rest to zeros; see Algorithm \ref{alg:HT_GD}.
Note that the row-wise sparsity structure is invariant with respect to rotation. 

\begin{algorithm}[!htp]
    \caption{Hard thresholding gradient descent algorithm for sparse estimation}
    \label{alg:HT_GD}
    1: \textbf{Input}: $\bm{Y}$, $\bm{X}$, $\eta$, $I$, $\bm{C}^{(0)}$, $\bm{R}^{(0)}$, $\bm{P}^{(0)}$, $\bm{D}^{(0)}$, and sparsity level $(s_c,s_r,s_p)$.\\[-0.3em]
    2: \textbf{for} $i=0,\dots,I-1$\\[-0.3em]
    3: \hspace*{0.7cm} Use lines 3-6 in Algorithm \ref{alg:GD_LRP} to obtain $\widetilde{\bm{C}}^{(i+1)}$, $\widetilde{\bm{R}}^{(i+1)}$, $\widetilde{\bm{P}}^{(i+1)}$, and $\bm{D}^{(i+1)}$\\[-0.3em]
    4: \hspace*{0.7cm} $\bm{C}^{(i+1)}=\text{HT}(\widetilde{\bm{C}}^{(i+1)},s_c)$, $\bm{R}^{(i+1)}=\text{HT}(\widetilde{\bm{R}}^{(i+1)},s_r)$, and $\bm{P}^{(i+1)}=\text{HT}(\widetilde{\bm{P}}^{(i+1)},s_p)$\\[-0.3em]
    5: \textbf{end for}\\[-0.3em]
    6: \textbf{Return}: $\bm{A}^{(I)}=\lb\bm{C}^{(I)}~\bm{R}^{(I)}\rb\bm{D}^{(I)}\lb\bm{C}^{(I)}~\bm{P}^{(I)}\rb^\top$
\end{algorithm}

For initialization of Algorithm \ref{alg:HT_GD},  we conduct the $L_1$ regularized least squares estimation
$\widetilde{\bm{A}}_{L_1} = \argmin\{(2T)^{-1}\sum_{t=1}^T\|\bm{y}_t-\bm{A}\bm{y}_{t-1}\|_2^2 + \lambda\|\bm{A}\|_1\}$.
Denote by $\widetilde{\bm{U}}$ and $\widetilde{\bm{V}}$ the first $r$ left and right singular vectors of $\widetilde{\bm{A}}_{L_1}$, respectively, and then the spectral initialization method in Section \ref{sec:3.2} can be used to obtain $(\widetilde{\bm{C}},\widetilde{\bm{R}},\widetilde{\bm{P}},\widetilde{\bm{D}})$. Finally we set $\bm{C}^{(0)}=\text{HT}(b\widetilde{\bm{C}},s_c)$, $\bm{R}^{(0)}=\text{HT}(b\widetilde{\bm{R}},s_r)$, $\bm{P}^{(0)}=\text{HT}(b\widetilde{\bm{P}},s_p)$, and $\bm{D}^{(0)}=b^{-2}\widetilde{\bm{D}}$. 
The rank and common dimension can similarly be selected by the proposed methods in Section \ref{sec:3.3}.
The sparsity level $(s_c,s_r,s_p)$ can be determined by the domain knowledge or estimated based on $(\widetilde{\bm{C}},\widetilde{\bm{R}},\widetilde{\bm{P}})$. 

Assume that the numbers of nonzero rows in the ground truth $\bm{C}^*$, $\bm{R}^*$, and $\bm{P}^*$ are $s_c^*$, $s_r^*$, and $s_p^*$, respectively, and the sparsity levels in Algorithm \ref{alg:HT_GD} satisfy that $s_c\geq (1+\gamma)s_c^*$, $s_r\geq (1+\gamma)s_r^*$, and $s_p\geq (1+\gamma)s_p^*$, for some constant $\gamma>0$.

\begin{theorem}\label{thm:sparse}

    Let $s=\max(s_c+s_r,s_c+s_p)$ and $S^*=(s_c^*+s_r^*)(s_c^*+s_p^*)$. Suppose that $T\gtrsim\max(\tau^4,\tau^2)M_2^{-2}\{sr+r^2+s\min(\log(p),\log(ep/s))+S^*\log(p)\}$, $\lambda\asymp\tau^2M_1\sqrt{S^*\log(p)/T}$, $\gamma\gtrsim \alpha_\textup{RSC}^{-2}\beta_\textup{RSS}^2\kappa^4$, and other conditions in Theorem \ref{thm:stat} hold. Then, after $I$-th iteration of Algorithm \ref{alg:HT_GD} with
    $I\gtrsim\log(\kappa^{-1}\sigma_1^{-1/3}g_{\min})/\log(1-C\eta_0\alpha_\textup{RSC}^2\beta_\textup{RSS}^{-2}\kappa^{-4})$, with probability at least $1-4\exp[-CM_2^2\min(\tau^{-2},\tau^{-4})T]-2\exp[-C\log(p)]$,
	\begin{equation}
		\|\bm{A}^{(I)}-\bm{A}^*\|_\textup{F}\lesssim\kappa\alpha_\textup{RSC}^{-1}\tau^2M_1\sqrt{[sr+r^2+s\min(\log(p),\log(ep/s))]/T}.
	\end{equation}

\end{theorem}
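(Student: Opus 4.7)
The plan is to reproduce the three-step analysis that underlies Theorems \ref{thm:gd}--\ref{thm:stat} (deterministic linear contraction, verification of restricted strong convexity/smoothness and deviation bounds, and a sufficiently accurate initialization), adapting each piece to account for the row-wise hard thresholding step and the sparsity-induced statistical complexity. Writing $\bm{F} = \lb\bm{C}~\bm{R}\rb$ and $\bm{H} = \lb\bm{C}~\bm{P}\rb$ with true counterparts $\bm{F}^*, \bm{H}^*$ having at most $s_c^* + s_r^*$ and $s_c^* + s_p^*$ nonzero rows respectively, I would treat Algorithm \ref{alg:HT_GD} as the unconstrained gradient iteration of Algorithm \ref{alg:GD_LRP} composed with a row-wise projection onto $\mathbb{S}(p,q,s)$, and then combine the contraction from the gradient step with an inflation bound for the projection step.

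For the deterministic half, the key auxiliary lemma is the standard hard-thresholding inflation bound: for any $\bm{M}$ and $\bm{M}^*$ with $\bm{M}^*$ having $\leq s^*$ nonzero rows and $s \geq (1+\gamma)s^*$, one has
\[
\|\textup{HT}(\bm{M},s) - \bm{M}^*\|_\textup{F}^2 \leq (1 + 2/\sqrt{\gamma})\,\|\bm{M} - \bm{M}^*\|_\textup{F}^2.
\]
Applying this to each of $\bm{C},\bm{R},\bm{P}$ after every gradient step, and plugging it into the one-step descent bound from the proof of Theorem \ref{thm:gd}, yields
\[
\|\bm{A}^{(i+1)} - \bm{A}^*\|_\textup{F}^2 \leq (1 + c/\sqrt{\gamma})\,(1 - C\eta_0\alpha_\textup{RSC}\beta_\textup{RSS}^{-1}\kappa^{-2})\,\|\bm{A}^{(i)} - \bm{A}^*\|_\textup{F}^2 + \kappa^2\alpha_\textup{RSC}^{-2}\xi_s^2,
\]
where $\xi_s$ is the sparse-restricted deviation. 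The hypothesis $\gamma \gtrsim \alpha_\textup{RSC}^{-2}\beta_\textup{RSS}^2\kappa^4$ is exactly what makes $c/\sqrt{\gamma} \leq \tfrac{1}{2}C\eta_0\alpha_\textup{RSC}\beta_\textup{RSS}^{-1}\kappa^{-2}$, so the effective contraction factor becomes $1 - C'\eta_0\alpha_\textup{RSC}^2\beta_\textup{RSS}^{-2}\kappa^{-4}$, producing the stated lower bound on $I$. For the stochastic side, I would restrict RSC/RSS of $\mathcal{L}$ to matrices of rank $\leq 2r$ whose row support has size $\lesssim s$; a Basu--Michailidis style concentration for $T^{-1}\bm{X}\bm{X}^\top$ on a $\binom{p}{s}\,e^{C(sr+r^2)}$-net gives parameters $\alpha_\textup{RSC},\beta_\textup{RSS}$ under $T \gtrsim sr + r^2 + s\log(ep/s)$, and a union bound over sparsity patterns combined with an $\varepsilon$-net over rank-$r$ factors and the sub-Gaussian tail used for Proposition \ref{prop:init} yields
\[
\xi_s \lesssim \tau^2 M_1\sqrt{\frac{sr + r^2 + s\min(\log p, \log(ep/s))}{T}}.
\]

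For the initialization, a standard restricted-eigenvalue analysis of the LASSO \eqref{eq:L1_init} with $\lambda \asymp \tau^2 M_1\sqrt{S^*\log(p)/T}$ gives $\|\widetilde{\bm{A}}_{L_1} - \bm{A}^*\|_\textup{F} \lesssim \tau^2 M_1\sqrt{S^*\log(p)/T}$; the $r$-truncated SVD of $\widetilde{\bm{A}}_{L_1}$, together with the same spectral recipe as in Section \ref{sec:3.2} and the inflation bound above applied once to the thresholded $\widetilde{\bm{C}},\widetilde{\bm{R}},\widetilde{\bm{P}}$, keeps $\bm{A}^{(0)}$ of the same order, which is $\lesssim \sigma_r$ once $T \gtrsim S^*\log p$, so that the iterates enter the basin where the deterministic contraction applies. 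Unrolling the recursion and balancing optimization and statistical errors then produces the stated bound. The main obstacle is the tension between hard thresholding and the balancing regularization $\|\bm{F}^\top\bm{F} - b^2\bm{I}_r\|_\textup{F}^2$ and $\|\bm{H}^\top\bm{H} - b^2\bm{I}_r\|_\textup{F}^2$: row-wise HT destroys exact balance and can enlarge the smallest singular value gap between $\bm{F}^{(i)}$ and $\bm{F}^*$. The fix is to carry an inductive invariant along the iterations (approximate balance plus bounded operator norms of $\bm{F}^{(i)},\bm{H}^{(i)}$) and show it is preserved after thresholding provided $\gamma$ is large enough; this is what forces the specific scaling $\gamma \gtrsim \alpha_\textup{RSC}^{-2}\beta_\textup{RSS}^2\kappa^4$ and, consequently, the extra $\kappa^{-2}$ factor in the contraction rate relative to Theorem \ref{thm:stat}.
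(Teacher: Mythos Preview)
Your proposal is correct and follows essentially the same route as the paper: the hard-thresholding inflation lemma $\|\textup{HT}(\bm{M},s)-\bm{M}^*\|_\textup{F}^2\le(1+2/\sqrt{\gamma})\|\bm{M}-\bm{M}^*\|_\textup{F}^2$ composed with the one-step contraction from Theorem~\ref{thm:gd}, a sparse RSC/RSS lemma via a union over $\binom{p}{s}$ support patterns and a low-rank $\varepsilon$-net, a sparse deviation bound of the same form, and the LASSO initialization with $\lambda\asymp\tau^2M_1\sqrt{S^*\log p/T}$.

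Two small points where the paper's write-up differs from yours. First, the paper runs the entire recursion at the level of the component error $E^{(i)}$ (and its pre-thresholding analogue $\widetilde{E}^{(i)}$), not at the level of $\|\bm{A}^{(i)}-\bm{A}^*\|_\textup{F}^2$: since hard thresholding acts on $\bm{C},\bm{R},\bm{P}$ separately, the inflation bound yields $E^{(i+1)}\le(1+2\gamma^{-1/2})\widetilde{E}^{(i+1)}$ directly, and the gradient step gives $\widetilde{E}^{(i+1)}\le(1-C\eta_0\alpha\beta^{-1}\kappa^{-2})E^{(i)}+\cdots$; only at the end is this converted to an $\bm{A}$-bound via Lemma~\ref{lemma:errorbound}. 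Writing the recursion directly in $\|\bm{A}^{(i)}-\bm{A}^*\|_\textup{F}^2$ as you do would require an extra pair of conversions each step and pick up spurious $\kappa$ factors. Second, your worry about hard thresholding destroying the balance $\bm{F}^\top\bm{F}\approx b^2\bm{I}_r$ is not treated as a separate obstacle in the paper: the inductive invariants (operator-norm bounds on $\lb\bm{C}^{(i)}~\bm{R}^{(i)}\rb$, $\lb\bm{C}^{(i)}~\bm{P}^{(i)}\rb$, $\bm{D}^{(i)}$ and $E^{(i)}\lesssim\sigma_1^{2/3}\alpha\beta^{-1}\kappa^{-2}$) are preserved simply because $E^{(i+1)}\le E^{(i)}$ under the recursion, exactly as in Step~5 of Theorem~\ref{thm:gd}. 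The condition $\gamma\gtrsim\alpha_\textup{RSC}^{-2}\beta_\textup{RSS}^2\kappa^4$ is needed only to absorb the $(1+2\gamma^{-1/2})$ inflation into the contraction factor, not to maintain approximate balance.
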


From the above theorem, the estimation efficiency of the sparsity-constrained estimator is improved significantly. Specifically, when both $s$ and $r$ are much smaller than $p$, the required sample size is reduced from $T\gtrsim p$ to $T\gtrsim s\log(p)$. In other words, the proposed sparsity-constrained method can be applied to the case with $T\ll p$.

\section{Simulation Studies}\label{sec:6}

\subsection{VAR with common factors}

We conduct two simulation experiments to evaluate the finite-sample performance of the proposed estimation methods. The number of replications is set to 500 for each experiment.

In the first experiment, the VAR(1) model in \eqref{eq:equivalent_form} is considered with $r=3$, $d=0,1,2,3$, and $\bm{\Sigma}_{\bbm{\varepsilon}}=\bm{I}_p$. 
The dimension is $p=40$ or 100, and we consider $T\in\{500,600,700,800\}$ for $p=40$ and $T\in\{1000,1200,1400,1600\}$ for $p=100$. 
The orthonormal matrices $\bm{C},\bm{R},\bm{P}$ and $\bm{O}_1,\bm{O}_2\in\mathbb{O}^{r\times r}$are generated randomly in each replication such that $\bm{C}^\top\bm{R}=\bm{C}^\top\bm{P}=\bm{0}_{d\times(r-d)}$.
Moreover, let $\bm{S}=\text{diag}(c_1,c_2,c_3)$ with each $c_i\sim_{i.i.d.}\text{Unif}(0.8,1.5)$ in each replication.
As a result, from \eqref{eq:rrVAR} and \eqref{eq:equivalent_form}, the parameter matrix assumes the form $\bm{A}=\lb\bm{C}~\bm{R}\rb\bm{O}_1^{\top}\bm{S}\bm{O}_2\lb\bm{C}~\bm{P}\rb^\top$.

The proposed methodology in Sections \ref{sec:3.1}-\ref{sec:3.3} is applied to the generated data. 
The ridge-type ratio method with $\bar{r}=10$ and $s(p,T)=\sqrt{p\log(T)/(10T)}$ is used to select $r$, and BIC in \eqref{eq:BIC} is used to select $d$. 
Table \ref{tbl:r&d_ex1} lists the percentages of correct rank and common dimension selection.
It can be seen that both can be correctly selected almost for all cases, and the percentages of correct selection in rank and common dimension both increase as the sample size $T$ increases.
This confirms the selection consistency derived in Section \ref{sec:4.3}.

We next compare the estimation efficiency between two models: the proposed model with a common subspace (CS) in \eqref{eq:equivalent_form} and the reduced-rank (RR) model in \eqref{eq:rrVAR}.
The median of estimation errors, $\|\widehat{\bm{A}}-\bm{A}^*\|_\text{F}$, over 500 replications is presented in Figure \ref{fig:sim1}, and the 0.75- and 0.25-th quantiles are also given in terms of error bars.
The two models have similar performances when $d=0$. However, when $d\geq 1$, the proposed model is more efficient, and the efficiency gain increases as the common dimension $d$ becomes larger. This result shows that it pays to explore 
 the common subspace between response and predictor spaces. 

The data generating process of the second experiment is a VAR($\ell$) model in \eqref{eq:VAR_ell} with $\ell=5$, and its parameter tensor is in the form of \eqref{eq:common_tensor_decomp2} with ranks $r_1=r_2=r_3=3$ and common dimension $0\leq d\leq 3$. 
The dimension is $p=30$ or 50, and we consider $T\in\{500,600,700,800\}$ for $p=30$ and $T\in\{1000,1200,1400,1600\}$ for $p=50$. 
We generate the orthonormal matrices, $\bm{C}$, $\bm{R}$, $\bm{P}$, $\bm{L}$, $\bm{O}_1$, $\bm{O}_2$ and $\bm{O}_3$, randomly for each replication.
Let $\cm{S}\in\mathbb{R}^{3\times 3\times 3}$ be a super-diagonal tensor with diagonal entries $\{c_1,c_2,c_3\}$, and $c_i$'s are generated by 
the same method as that of the first experiment.
The parameter tensor has the form  $\cm{A}=\cm{G}\times_1\lb\bm{C}~\bm{R}\rb\times_2\lb\bm{C}~\bm{P}\rb\times_3\bm{L}$, where $\cm{G}=\cm{S}\times_1\bm{O}_1\times_2 \bm{O}_2\times_3\bm{O}_3$.
The proposed methodology in Section \ref{sec:3.4} and Appendix \ref{append:D.3} is applied to each generated sample.

Table \ref{tbl:r&d_ex2} gives the percentages of correct rank and common dimension selection, respectively, and Figure \ref{fig:sim2} presents the median, 0.75- and 0.25-th quantiles of estimation errors, $\|\cm{\widehat{A}}-\cm{A}^*\|_\text{F}$, from our model (CS) in \eqref{eq:common_tensor_decomp2} and the reduced-rank model (RR) in \citet{wang2019high}.
From Table \ref{tbl:r&d_ex2}, the percentages of correct rank selection are obviously smaller than those of correct common dimension selection.
This is mainly due to the fact that, when tensor ranks are over-selected, the proposed method may still correctly select the common dimension.
All other findings are similar to those of the first experiment.

\subsection{VAR with diverging eigenvalue effect}

Next, we conduct a simulation experiment to investigate how the diverging eigenvalues may affect the estimation procedure and support the proposed methodology in Section \ref{sec:diverging}. The DGP is the same as the first experiment with $p=40$, except for $\bm{\Sigma}_{\bbm{\varepsilon}}=0.5\bm{I}_p+0.5\bm{1}_p\bm{1}_p^\top$, where $\bm{1}_p$ is the $p$-dimensional vector whose entries are all one. Hence, $\bm{\Sigma}_{\bbm{\varepsilon}}$ has one diverging eigenvalue as $\lambda_{\max}(\bm{\Sigma}_{\bbm{\varepsilon}})=0.5p+0.5$. 

In this experiment, we apply the proposed estimation procedure with diverging eigenvalue effect, including common subspace with diverging eigenvalues (CS-DE) and reduced-rank with diverging eigenvalues (RR-DE), and compare them with the standard versions CS and RR. Table \ref{tbl:r&d_ex3} contains the percentages of correct rank and common dimension selection for CS and CS-DE. When $\bm{\Sigma}_{\bbm{\varepsilon}}$ has a diverging eigenvalue, the CS method fails to find the correct rank while CS-DE can consistently estimate the rank and common dimension.
The estimation errors over 500 replications are presented in Figure \ref{fig:exp3}. In all cases of $d$, the errors of RR-DE and CS-DE are significantly smaller than those of the standard methods. When $d>0$, the CS-DE performs better than RR-DE, which supports the theoretical results in Section \ref{sec:diverging}.

\section{An Empirical Example}\label{sec:7}

We apply the proposed methodology to a macroeconomic time series data set with $p=40$ variables of the United States. The data  consist of quarterly economic series from Q3-1959 to Q4-2007 with the length $T=194$. 
These macroeconomic variables, selected by \citet{koop2013forecasting}, can be classified into eight categories: GDP decomposition, NAPM indices, industrial production, housing, interest rates, employment, prices, and others.
All series have been transformed to stationary series and standardized, and seasonal adjustment has also been conducted to all variables except the financial series. 
More information of these macroeconomic 
series can be found in Appendix \ref{append:real_data}.

This dataset has been well studied in various factor models, and the largest eigenvalues of the sample covariance matrix $\widehat{\bm{\Sigma}}_{\bm{y}}$ is 11.68. Hence, we first apply the factor modeling procedure in \citet{gao2021modeling}, but no diverging eigenvalue in the white noise is found.
Following \citet{koop2013forecasting}, we consider a VAR(4) model and the parameter tensor is specified as in \eqref{eq:common_tensor_decomp2}.
The modeling procedure in Section \ref{sec:3.4} is applied to the above high-dimensional macroeconomic time series. 
The estimated ranks are $(\widehat{r}_1,\widehat{r}_2,\widehat{r}_3)=(4,3,2)$, while the selected common dimension is $\widehat{d}=2$, i.e., there are two common, two response-specific, and one predictor-specific factors. 
The singular values of $\cm{\widehat{A}}_{(1)}$ are 8.87, 3.70, 1.15, and 0.56, and the first two singular values can explain the diverging eigenvalue in $\widehat{\bm{\Sigma}}_{\bm{y}}$.
For the factor interpretation, since the tensor decomposition in \eqref{eq:common_tensor_decomp2} is not unique, we standardize the matrices $\widehat{\bm{C}}$, $\widehat{\bm{R}}$ and $\widehat{\bm{P}}$ to be orthonormal, and calculate their projection matrices $\widehat{\bm{C}}\widehat{\bm{C}}^\top$, $\widehat{\bm{R}}\widehat{\bm{R}}^\top$ and $\widehat{\bm{P}}\widehat{\bm{P}}^\top$, which are uniquely defined and can be used to represent the subspaces $\mathcal{M}(\widehat{\bm{C}})$, $\mathcal{M}(\widehat{\bm{R}})$, and $\mathcal{M}(\widehat{\bm{P}})$, respectively; see Section \ref{sec:2} for more details.

Figure \ref{fig:subspace_macro40} plots the calculated projection matrices $\widehat{\bm{C}}\widehat{\bm{C}}^\top$, $\widehat{\bm{R}}\widehat{\bm{R}}^\top$ and $\widehat{\bm{P}}\widehat{\bm{P}}^\top$.
It can be seen that both $\widehat{\bm{P}}\widehat{\bm{P}}^\top$ and $\widehat{\bm{C}}\widehat{\bm{C}}^\top$ are highly sparse, and these nonzero entries exhibit certain clustering pattern, which is consistent with the classification of macroeconomic variables.
Specifically, almost all significant entries in the projection matrix of predictor-specific factors can be observed for the first two categories of variables, GDP decomposition and NAPM indices, while those of common factors are from NAMP indices, industrial production, and housing. 
We may argue that the three fitted predictor factors mainly extract information from four classes of variables, including GDP, NAPM indices, industrial production and housing, for the sake of predicting the future values of all series. 
The predictability of these variables is consistent with our empirical experience: GDP is the most important measure of the current status of an economy, and purchasing manager indices, industrial production indices, and housing starts are widely recognized as leading indicators of economic activities. 
Nevertheless, the estimated response-specific projection matrix is much denser, indicating that almost all economic variables are related to the response-specific factors. The patterns in these projection matrices can also help us interpret the diverging eigenvalues in $\widehat{\bm{\Sigma}}_{\bm{y}}$. Since the response-specific loading $\widehat{\bm{R}}\in\mathbb{R}^{40\times 2}$ is pervasive, strong cross-sectional dependency may exist in the conditional expectation of the response, but the common and predictor-specific factors are only related to a small subset of variables. As discussed in Remark \ref{rml:diveging_singular_value}, the pervasive response loadings and sparse predictor loadings may lead to the large singular values of $\cm{\widehat{A}}_{(1)}$.

On the other hand, it is interesting to observe that the upper left corner of the common projection matrix is almost sparse, while those in predictor-specific and response-specific projection matrices are not. We may argue that the information of GDP extracted for predictors and responses are different in general. The GDP components in the responses are positively correlated, as shown in the green upper-left block in the response-specific projection matrix, whereas in the predictor-specific projection matrix, the first variable, real GDP, is negatively correlated to real personal consumption, private domestic investment, real exports, and government consumption and investment. By definition, as the real GDP is the summation of its decomposition, the negative relationship in the predictor subspace partially cancels out the double counting.

Finally, we compare the proposed model with two other commonly used models, the rank-constrained VAR(4) model (VAR-RR) in \citet{wang2019high} with ranks  $(r_1,r_2,r_3)=(4,3,2)$ and the dynamic factor model in \citet{lam2012factor} with a low-dimensional VAR(4) for factors (DFM-VAR), in terms of rolling forecast.
Specifically, from the time point of Q1-2000 ($t=163$) to Q2-2007 ($t=192$), we fit these three models utilizing all available historical data until time $t-1$ and obtain one-, two- and three-step-ahead forecasts. 
We consider two forecasting tasks: one is to predict all forty variables, and the other is to only forecast the 34th series, CPI for all items, as the inflation rate is one of the typical macroeconomic variables of interest in forecasting.
The average rolling forecast errors for both tasks are summarized in Table \ref{tbl:forecast_macro40}, and it can be seen that our model has the smallest errors in both tasks, especially the overall forecasting. This is due to the fact that, compared to the dynamic factor modeling, our model is able to flexibly extract useful information for responses and predictors. In the meanwhile, in the proposed model, substantial dimension reduction can be further achieved by exploring the possible common subspace between response and predictor factor spaces of the rank-constrained model. The CPI forecasting errors of both methods are quite close, possibly because CPI is only involved in response-specific factor loading and DFM can also estimate it consistently.

\section{Conclusion and Discussion}\label{sec:8}

Vector autoregressive and factor models are two mainstream modeling frameworks for high-dimensional time series, and they have their own strengths in real applications.
This article proposed a new model by focusing on the dependent factor structure of the series, and it was shown by simulation experiments and an empirical example that the proposed 
model enjoys advantages over both VAR and factor models. Theoretical justifications are established for both computational and statistical convergence of the 
new model. 

The research of this article can be extended in two directions.
Firstly, heavy-tailed distributions and outliers are commonly observed in empirical data sets, which violates Assumption \ref{asmp:2}. Robust estimation methods against the heavy-tailed distribution for high-dimensional VAR models have been investigated recently \citep{wang2021robust}, and it is of practical importance to investigate the robust methods for the proposed model.
Secondly, inspired by the emerging literature on matrix and tensor-valued time series \citep{chen2018autoregressive,chen2019factor,wang2021high}, we may generalize the proposed model, methodology, and theory to autoregressive models for matrix and tensor-valued time series.

\linespread{1.5}
\selectfont{}

\setlength{\bibsep}{1pt}
\bibliography{mybib}

\vspace{0.7cm}

\begin{figure}[!htp]
    \begin{center}
        \includegraphics[width=0.7\textwidth]{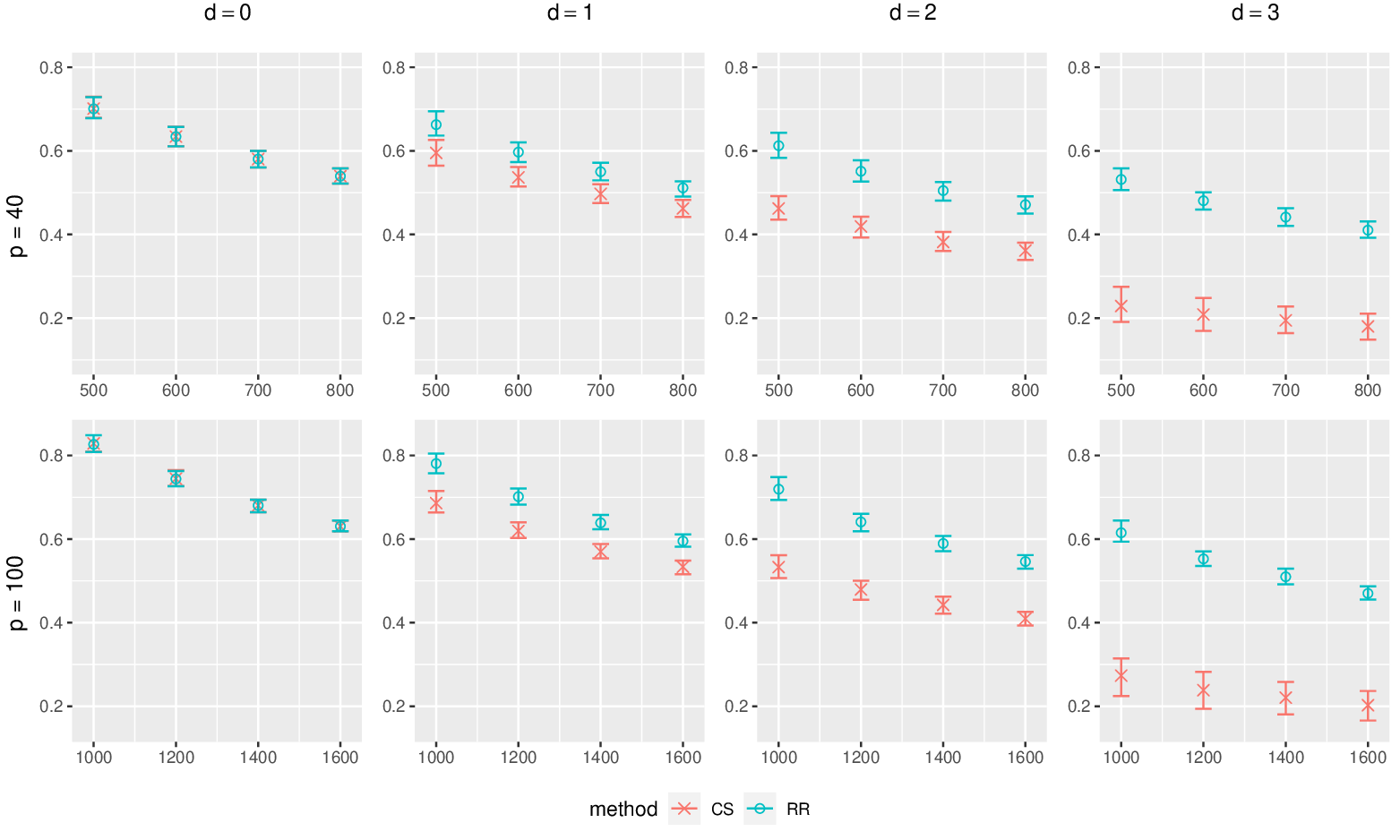}
        \vspace{-1cm}
    \end{center}
    \caption{\small{Plots of estimation errors $\|\widehat{\bm{A}}-\bm{A}^*\|_\text{F}$ of VAR(1) model by common subspace (CS) and reduced-rank (RR) methods. The dimension is $p=40$ (upper panel) or 100 (lower panel).}}
    \label{fig:sim1}
\end{figure}

\vspace{-0.7cm}

\begin{table}[!htp]
    \begin{center}
        \renewcommand{\arraystretch}{1.1}
        \vspace{-0.8cm}
        \caption{\small{Percentages of correct rank and common dimension selection for VAR($1$) models .}}
        \label{tbl:r&d_ex1}
        \renewcommand{\arraystretch}{0.75}
        \vspace{0.2cm}
        \small{\begin{tabular}{r C{1cm} C{1cm} C{1cm} C{1cm} c C{1cm} C{1cm} C{1cm} C{1cm}}
                \hline\hline
                  & \multicolumn{4}{c}{Rank selection} && \multicolumn{4}{c}{Common dimension selection}\\ 
                  \cline{2-5}\cline{7-10}
                $d$ & 0 & 1 & 2 & 3 && 0    & 1    & 2    & 3 \\
                \hline
                &\multicolumn{9}{c}{dimension $p=40$}\\
                 $T=500$ & 97.8 & 97.2 & 96.2 & 98.4  && 98.6 & 97.6 & 95.8 & 98.4 \\
                     600 & 99.8 & 99.6 & 98.6 & 99.8  && 99.4 & 98.8 & 98.4 & 99.8 \\
                     700 & 99.8 & 99.8 & 99.8 & 100   && 100  & 99.8 & 99.8 & 100  \\
                     800 & 99.8 & 100  & 100  & 100   && 100  & 100  & 99.8 & 99.8 \\
                \hline
                &\multicolumn{9}{c}{dimension $p=100$}\\
                $T=1000$ & 94.8 & 92.4 & 92.2 & 93.0 && 93.4 & 91.8 & 85.8 & 93.0 \\
                    1200 & 96.6 & 96.6 & 95.8 & 98.0 && 98.4 & 95.8 & 95.6 & 98.0 \\
                    1400 & 99.0 & 99.4 & 97.8 & 99.6 && 99.8 & 99.2 & 97.6 & 99.2 \\
                    1600 & 99.4 & 98.6 & 98.8 & 99.8 && 100  & 98.6 & 98.8 & 99.8 \\
                \hline
        \end{tabular}}
    \end{center}
\end{table}

\vspace{-2.3cm}

\begin{figure}[!htp]
    \begin{center}
        \includegraphics[width=0.78\textwidth]{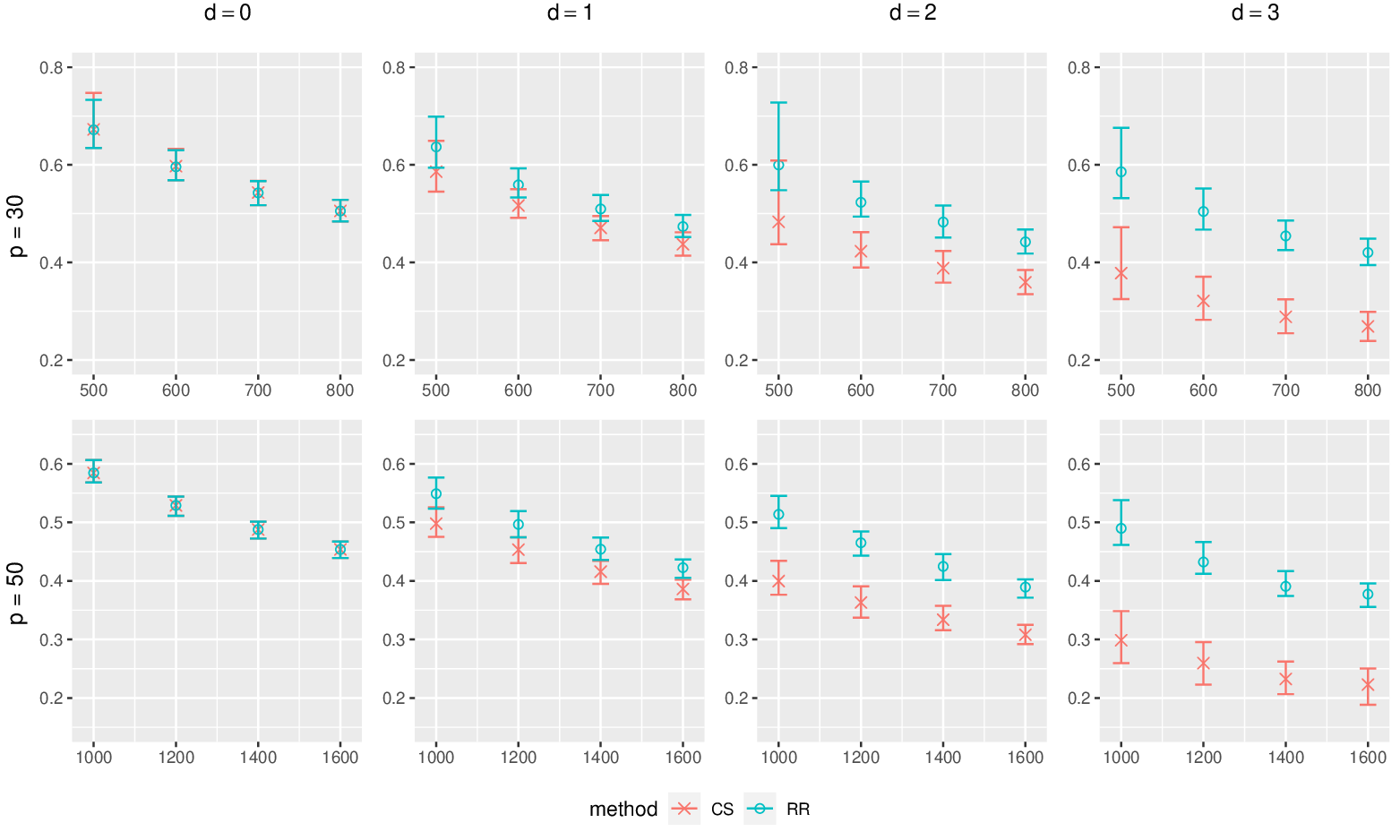}
        \vspace{-1cm}
    \end{center}
    \caption{\small{Plots of estimation errors $\|\cm{\widehat{A}}-\cm{A}^*\|_\text{F}$ of VAR(5) model by common subspace (CS) and reduced-rank (RR) methods. The dimension is $p=30$ (upper panel) or 50 (lower panel).}}
    \label{fig:sim2}
\end{figure}

\vspace{-2.5cm}

\begin{figure}[!htp]
    \begin{center}
        \includegraphics[width=0.78\textwidth]{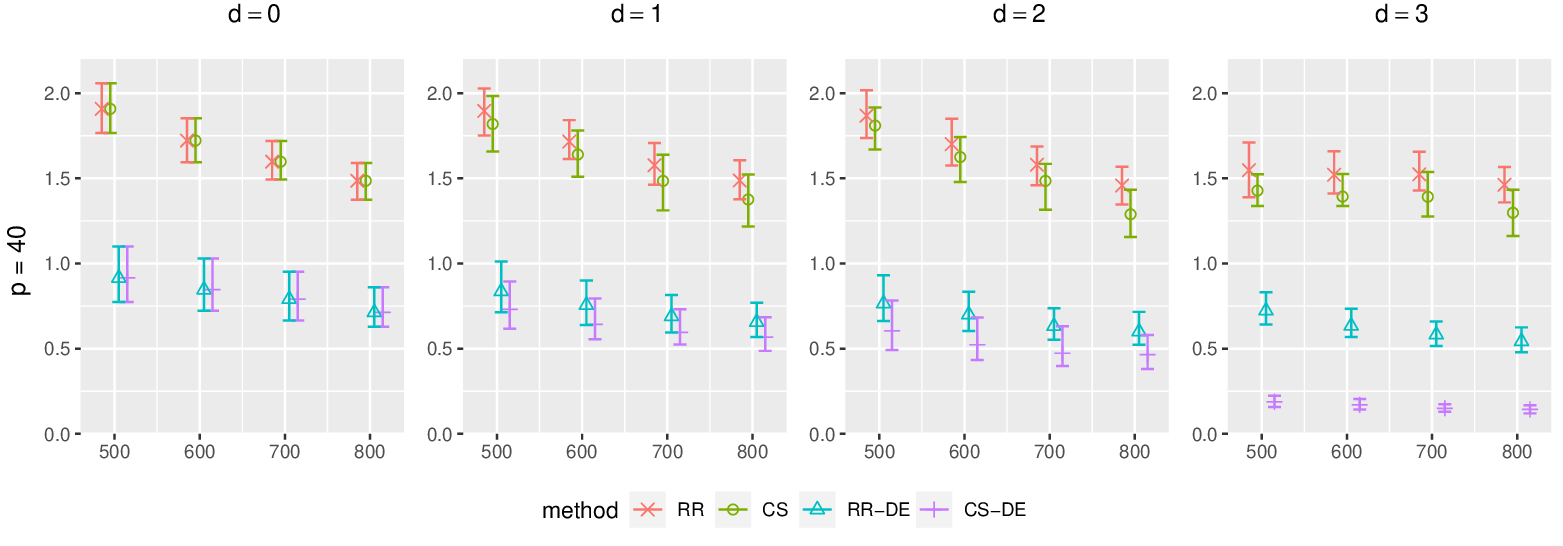}
        \vspace{-1cm}
    \end{center}
    \caption{\small{Plots of estimation errors $\|\cm{\widehat{A}}-\cm{A}^*\|_\text{F}$ of VAR(1) model with diverging eigenvalue by RR, CS, RR-DE and CS-DE methods.}}
    \label{fig:exp3}
\end{figure}
\vspace{-3cm}

\begin{table}[!htp]
    \begin{center}
        \renewcommand{\arraystretch}{0.75}
        \caption{\small{Percentages of correct rank and common dimension selection for VAR($5$) models.}}
        \label{tbl:r&d_ex2}
        \vspace{0.2cm}
        \small{\begin{tabular}{r C{1cm} C{1cm} C{1cm} C{1cm} c C{1cm} C{1cm} C{1cm} C{1cm}}
                \hline\hline
                  & \multicolumn{4}{c}{Rank selection} && \multicolumn{4}{c}{Common dimension selection}\\ 
                  \cline{2-5}\cline{7-10}
                $d$ & 0 & 1 & 2 & 3 && 0    & 1    & 2    & 3 \\
                \hline
                &\multicolumn{9}{c}{dimension $p=30$}\\
                $T=500$ & 76.6 & 75.6 & 71.4 & 68.8  && 89.0 & 87.0 & 77.6 & 80.4 \\
                    600 & 88.0 & 89.4 & 88.4 & 84.8  && 96.6 & 94.4 & 92.0 & 90.6 \\
                    700 & 97.6 & 94.8 & 93.6 & 91.6  && 98.6 & 97.4 & 95.0 & 94.4 \\
                    800 & 98.0 & 97.4 & 98.2 & 97.2  && 99.2 & 99.0 & 98.6 & 98.4 \\
                \hline
                &\multicolumn{9}{c}{dimension $p=50$}\\
                $T=1000$ & 94.6 & 91.6 & 94.6 & 86.8 && 98.2 & 96.0 & 95.8 & 90.8 \\
                    1200 & 98.2 & 97.4 & 97.4 & 98.2 && 99.2 & 98.2 & 98.4 & 99.0 \\
                    1400 & 99.0 & 99.6 & 99.2 & 99.8 && 99.8 & 99.8 & 99.6 & 99.8 \\
                    1600 & 99.8 & 100  & 99.8 & 100  && 100  & 100  & 100  & 100  \\
                \hline
        \end{tabular}}
    \end{center}
\end{table}

\vspace{-1cm}

\begin{table}[!htp]
    \begin{center}
        \renewcommand{\arraystretch}{0.75}
        \caption{\small{Percentages of correct rank and common dimension selection for VAR(1) model with diverging eigenvalue effect.}}
        \label{tbl:r&d_ex3}
        \vspace{0.2cm}
        \small{\begin{tabular}{r C{1cm} C{1cm} C{1cm} C{1cm} c C{1cm} C{1cm} C{1cm} C{1cm}}
                \hline\hline
                  & \multicolumn{4}{c}{Rank selection} && \multicolumn{4}{c}{Common dimension selection}\\ 
                  \cline{2-5}\cline{7-10}
                $d$ & 0 & 1 & 2 & 3 && 0    & 1    & 2    & 3 \\
                \hline
                &\multicolumn{9}{c}{Method: CS}\\
                $T=500$ &  0.4 & 0.2 & 0.0 & 0.0 && 100 & 76.0 & 73.8 & 49.6\\
                    600 &  0.2 & 0.0 & 0.2 & 0.0 && 100 & 78.0 & 76.6 & 66.4\\
                    700 &  0.8 & 0.2 & 0.0 & 0.0 && 100 & 80.6 & 84.8 & 84.4\\
                    800 &  1.2 & 0.4 & 1.6 & 0.0 && 99.8 & 88.6 & 83.2 & 89.4\\
                \hline
                &\multicolumn{9}{c}{Method: CS-DE}\\
                $T=500$ &  99.6 & 98.8 & 97.8 & 99.4 && 100 & 84.2 & 84.8 & 81.6\\
                    600 &  99.4 & 99.8 & 98.8 & 99.8 && 100 & 85.0 & 84.4 & 82.8\\
                    700 &  99.8 & 99.4 & 98.8 & 100 && 100 & 84.2 & 86.6 & 89.8\\
                    800 &  100 & 99.6 & 98.2 & 100 && 100 & 86.8 & 89.0 & 90.4\\
                \hline
        \end{tabular}}
    \end{center}
\end{table}

\vspace{-1cm}

\begin{table}[!htp]
	\begin{center}
		\renewcommand{\arraystretch}{0.8}
		\caption{\small{One-, two-, and three-step ahead errors of the overall and CPI forecasting from our model (VAR-CS), rank-constrained model (VAR-RR), and dynamic factor modeling (DFM-VAR).}}
		\label{tbl:forecast_macro40}
		\vspace{0.2cm}
		\small{\begin{tabular}{cccccccc}
				\hline\hline
                \multirow{2}{*}{Model} & \multicolumn{3}{c}{Overall forecast} && \multicolumn{3}{c}{CPI forecast}\\
                \cline{2-4}\cline{6-8}
				 & One-step & Two-step & Three-step && One-step & Two-step & Three-step\\
				\hline
				VAR-CS & \textbf{4.889} & \textbf{5.156} & \textbf{5.254} && \textbf{0.958} & \textbf{0.929} & \textbf{0.967} \\
				VAR-RR & 5.622 & 5.702 & 5.593 && 1.087 & 1.006 & 1.019 \\
				DFM-VAR & 5.104 & 5.283 & 5.330 && 0.967 & 0.997 & 0.980 \\
				\hline
		\end{tabular}}
	\end{center}
\end{table}

\begin{figure}
	\begin{center}
		\includegraphics[height=0.9\textheight]{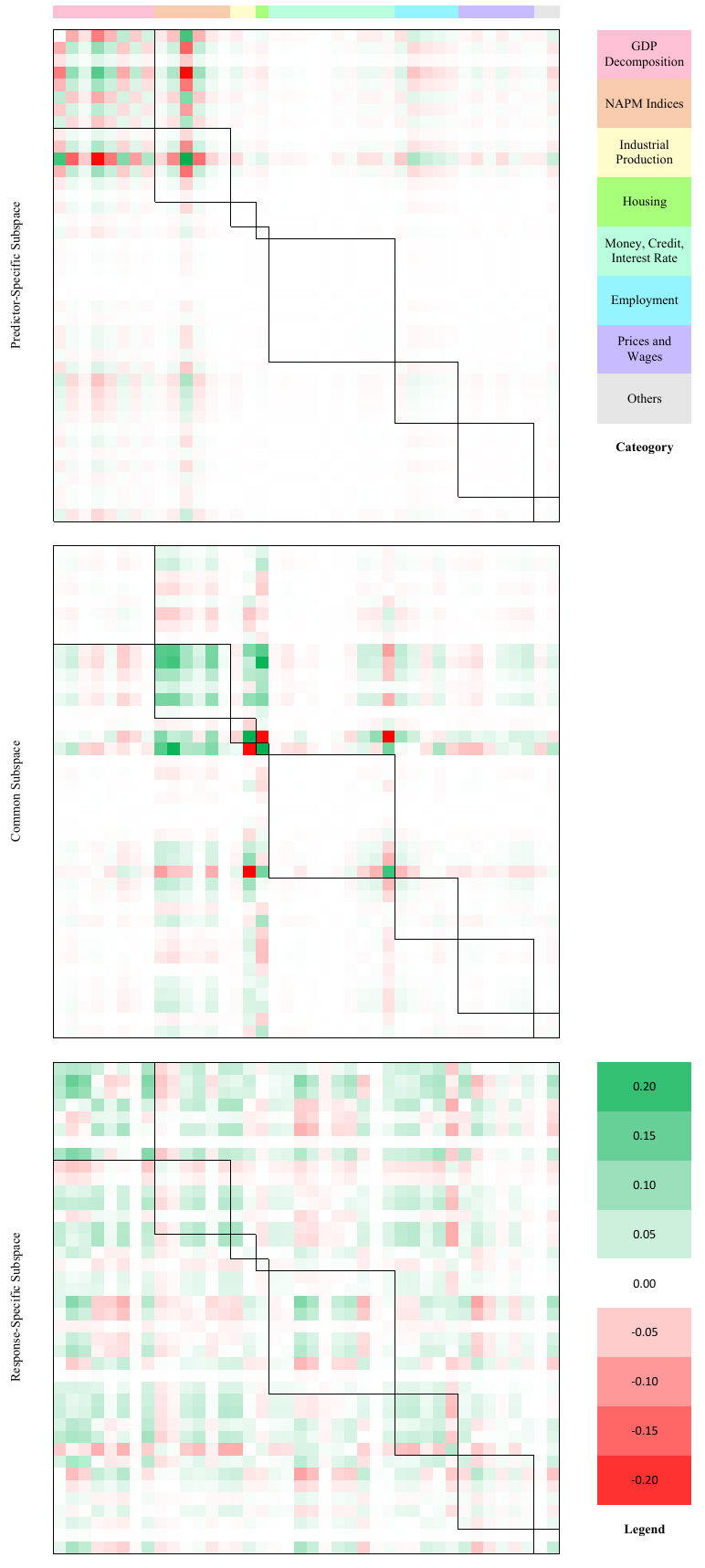}
		\caption{\small{Estimated projection matrices for predictor-specific (upper panel), common (middle panel), and response-specific (lower panel) subspaces, respectively.}}
		\label{fig:subspace_macro40}
	\end{center}
\end{figure}

\newpage

\AtAppendix{\counterwithin{lemma}{section}}
\AtAppendix{\counterwithin{definition}{section}}

\begin{appendix}

\section{Computational convergence analysis of gradient descent}\label{append:comp}

In this appendix, we present the proof of Theorem \ref{thm:gd} and necessary lemmas for the deterministic convergence analysis.

\subsection{Proof of Theorem \ref{thm:gd}}

\begin{proof}
    
    The proof consists of five steps. In the first step, we introduce some notations and conditions essential to the convergence analysis. In the second to fourth steps, we provide a deterministic convergence result for the iterates, given that some regulatory conditions are satisfied. Finally, in the last step, we show that these regulatory conditions hold iteratively.\\
    
    \noindent\textit{Step 1.} (Notations and conditions)\\
    We begin by introducing some notations and conditions for the convergence analysis.
    Denote the empirical least squares loss function as
    \begin{equation}
        \mathcal{L}(\bm{A}) = \frac{1}{2T}\sum_{t=1}^T\|\bm{y}_t-\bm{A}\bm{y}_{t-1}\|_2^2.
    \end{equation}
    
    As the matrix decomposition is not unique, for the iterate at the step $i$, define the combined estimation errors of $(\bm{C},\bm{R},\bm{P},\bm{D})$ up to the optimal rotations as 
    \begin{equation}
        \begin{split}
            E^{(i)}= \min_{\substack{\bm{O}_c\in\mathbb{O}^{d\times d}\\\bm{O}_r,\bm{O}_p\in\mathbb{O}^{(r-d)\times(r-d)}}}&\Big\{\|\bm{C}^{(i)}-\bm{C}^*\bm{O}_c\|_\text{F}^2 + \|\bm{R}^{(i)}-\bm{R}^*\bm{O}_r\|_\text{F}^2+\|\bm{P}^{(i)}-\bm{P}^*\bm{O}_p\|_\text{F}^2\\
            &+\left\|\bm{D}^{(i)}-\text{diag}(\bm{O}_c,\bm{O}_r)^\top\bm{D}^*\text{diag}(\bm{O}_c,\bm{O}_p)\right\|_\textup{F}^2\Big\}
        \end{split}
    \end{equation}
    and the corresponding optimal rotations as $(\bm{O}_c^{(i)},\bm{O}_r^{(i)},\bm{O}_p^{(i)})$. For simplicity in presentation, denote $\bm{O}_1^{(i)}=\text{diag}(\bm{O}_c^{(i)},\bm{O}_r^{(i)})$ and $\bm{O}_2^{(i)}=\text{diag}(\bm{O}_c^{(i)},\bm{O}_p^{(i)})$.
    
    By Definition \ref{def:RSC_RSS}, for the given sample size $T$, $\mathcal{L}$ is restricted strongly convex (RSC) with parameter $\alpha$ and restricted strongly smooth (RSS) with parameter $\beta$, such that for any rank-$r$ matrices $\bm{A},\bm{A}'\in\mathbb{R}^{p\times p'}$,
    \begin{equation}
        \frac{\alpha}{2}\|\bm{A}-\bm{A}'\|_\textup{F}^2 \leq \mathcal{L}(\bm{A})-\mathcal{L}(\bm{A}')-\langle\nabla\mathcal{L}(\bm{A}'),\bm{A}-\bm{A}'\rangle \leq \frac{\beta}{2}\|\bm{A}-\bm{A}'\|_\textup{F}^2.
    \end{equation}
    The $\alpha$-RSC condition implies that
    \begin{equation}
        \mathcal{L}(\bm{A})\geq \mathcal{L}(\bm{A}')+\langle\nabla\mathcal{L}(\bm{A}'),\bm{A}-\bm{A}'\rangle+\frac{\alpha}{2}\|\bm{A}-\bm{A}'\|_\text{F}^2,
    \end{equation}
    and as in \citet{nesterov2003introductory}, the convexity and $\beta$-RSS condition jointly imply that
    \begin{equation}
        \mathcal{L}(\bm{A}')-\mathcal{L}(\bm{A})\geq \langle\nabla\mathcal{L}(\bm{A}),\bm{A}'-\bm{A}\rangle+\frac{1}{2\beta}\|\nabla\mathcal{L}(\bm{A}')-\nabla\mathcal{L}(\bm{A})\|_\text{F}^2.
    \end{equation}
    Combining these two inequalities, we have that
    \begin{equation}
        \langle\nabla\mathcal{L}(\bm{A})-\nabla\mathcal{L}(\bm{A}'),\bm{A}-\bm{A}'\rangle \geq \frac{\alpha}{2}\|\bm{A}-\bm{A}'\|_\text{F}^2 + \frac{1}{2\beta}\|\nabla\mathcal{L}(\bm{A})-\nabla\mathcal{L}(\bm{A}')\|_\text{F}^2,
    \end{equation}
    which is also known as the restricted correlated gradient condition in \citet{han2020optimal}. Moreover, by definition, we immediately have that $\alpha\leq \beta$.
        
    In addition, by Definition \ref{def:deviation}, we assume that given the sample,
    \begin{equation}
        \xi(r,d)=\sup_{\substack{\textup{\bf{[}}\bm{C}~\bm{R}\textup{\bf{]}},\textup{\bf{[}}\bm{C}~\bm{P}\textup{\bf{]}}\in\mathbb{O}^{p\times r},\\ \bm{D}\in\mathbb{R}^{r\times r},\|\bm{D}\|_\textup{F}=1}}\left\langle\nabla\mathcal{L}(\bm{A}^*),\textup{\bf{[}}\bm{C}~\bm{R}\textup{\bf{]}}\bm{D}\textup{\bf{[}}\bm{C}~\bm{P}\textup{\bf{]}}^\top\right\rangle.
    \end{equation}

    For simplicity, we assume $b=\sigma_1^{1/3}$ and $a=C\alpha\sigma_1^{2/3}\kappa^{-2}$, and the proof can  readily be extended to the case with $b\asymp\sigma_1^{1/3}$. Before starting the proof, we also assume that the following conditions hold and will verify them in the last step. For any $i=0,1,2,\dots$, we assume that
    \begin{equation}\label{eq:UVDopbound}
        \begin{split}
            \|\lb\bm{C}^{(i)}~\bm{R}^{(i)}\rb\|_\text{op}\leq 1.1b, ~~ \|\lb\bm{C}^{(i)}~\bm{P}^{(i)}\rb\|_\text{op}\leq 1.1b, ~~
            \textup{and}~~\|\bm{D}^{(i)}\|_\text{op}\leq \frac{1.1\sigma_1}{b^2},
        \end{split}
    \end{equation}
    which obviously implies that
    \begin{equation}
        \|\bm{C}^{(i)}\|_\text{op}\leq 1.1b,~~\|\bm{R}^{(i)}\|_\text{op}\leq 1.1b,~~\text{and}~~\|\bm{P}^{(i)}\|_\text{op}\leq 1.1b.
    \end{equation}
    Note that the constant 1.1 can be replaced by any arbitrary constant greater than 1. In addition, we assume that for any $i=0,1,2,\dots$, $E^{(i)} \leq C\sigma_1^{2/3}\alpha\beta^{-1}\kappa^{-2}$.\\
    
    \noindent\textit{Step 2.} (Upper bound of $E^{(i+1)}-E^{(i)}$)\\
    By definition,
    \begin{equation}
        \begin{split}
            &E^{(i+1)} \\ = & \|\bm{R}^{(i+1)}-\bm{R}^*\bm{O}_r^{(i+1)}\|_\textup{F}^2+\|\bm{P}^{(i+1)}-\bm{P}^*\bm{O}_p^{(i+1)}\|_\textup{F}^2 + \|\bm{C}^{(i+1)}-\bm{C}^*\bm{O}_c^{(i+1)}\|_\textup{F}^2\\
            +&\|\bm{D}^{(i+1)}-\bm{O}_1^{(i+1)\top}\bm{D}^*\bm{O}_2^{(i+1)}\|_\textup{F}^2\\
            \leq & \|\bm{R}^{(i+1)}-\bm{R}^*\bm{O}_r^{(i)}\|_\textup{F}^2+\|\bm{P}^{(i+1)}-\bm{P}^*\bm{O}_p^{(i)}\|_\textup{F}^2 + \|\bm{C}^{(i+1)}-\bm{C}^*\bm{O}_c^{(i)}\|_\textup{F}^2\\ +&\|\bm{D}^{(i+1)}-\bm{O}_1^{(i)\top}\bm{D}^*\bm{O}_2^{(i)}\|_\textup{F}^2.
        \end{split}
    \end{equation}
    \noindent\textit{Step 2.1.} ($\bm{R}$ and $\bm{P}$ steps)\\
    By definition, $\bm{R}^{(i+1)}=\bm{R}^{(i)}-\eta\nabla_{\bm{R}}\mathcal{L}^{(i)}-\eta a[\bm{R}^{(i)}(\bm{R}^{(i)\top}\bm{R}^{(i)}-b^2\bm{I}_{r-d})+\bm{C}^{(i)}\bm{C}^{(i)\top}\bm{R}^{(i)}]$. Thus, we have
    \begin{equation}\label{eq:U_bound}
        \begin{split}
            &\|\bm{R}^{(i+1)}-\bm{R}^*\bm{O}_r^{(i)}\|_\text{F}^2\\
            =&\Big\|\bm{R}^{(i)}-\bm{R}^*\bm{O}_r^{(i)}-\eta\Big(\nabla\mathcal{L}(\bm{A}^{(i)})\lb\bm{C}^{(i)}~\bm{P}^{(i)}\rb\lb\bm{D}_{21}^{(i)}~\bm{D}_{22}^{(i)}\rb^\top\\
            &+a\bm{R}^{(i)}(\bm{R}^{(i)\top}\bm{R}^{(i)}-b^2\bm{I}_{r-d})+a\bm{C}^{(i)}\bm{C}^{(i)\top}\bm{R}^{(i)}\Big)\Big\|_\text{F}^2\\
            =&\|\bm{R}^{(i)}-\bm{R}^*\bm{O}_r^{(i)}\|_\text{F}^2\\
            +&\eta^2\|\nabla\mathcal{L}(\bm{A}^{(i)})\lb\bm{C}^{(i)}~\bm{P}^{(i)}\rb\lb\bm{D}_{21}^{(i)}~\bm{D}_{22}^{(i)}\rb^\top+a\bm{R}^{(i)}(\bm{R}^{(i)\top}\bm{R}^{(i)}-b^2\bm{I}_{r-d})+a\bm{C}^{(i)}\bm{C}^{(i)\top}\bm{R}^{(i)}\|_\text{F}^2\\
            -&2\eta\left\langle\bm{R}^{(i)}-\bm{R}^*\bm{O}_r^{(i)},\nabla\mathcal{L}(\bm{A}^{(i)})\lb\bm{C}^{(i)}~\bm{P}^{(i)}\rb\lb\bm{D}_{21}^{(i)}~\bm{D}_{22}^{(i)}\rb^\top\right\rangle\\
            -&2a\eta\left\langle\bm{R}^{(i)}-\bm{R}^*\bm{O}_r^{(i)},\bm{R}^{(i)}(\bm{R}^{(i)\top}\bm{R}^{(i)}-b^2\bm{I}_{r-d})\right\rangle\\
            -&2a\eta\left\langle\bm{R}^{(i)}-\bm{R}^*\bm{O}_r^{(i)},\bm{C}^{(i)}\bm{C}^{(i)\top}\bm{R}^{(i)}\right\rangle.
        \end{split}
    \end{equation}
    
    First, for the second term in the right hand side of \eqref{eq:U_bound}, by Cauchy's inequality,
    \begin{equation}
        \begin{split}
            &\left\|\nabla\mathcal{L}(\bm{A}^{(i)})\lb\bm{C}^{(i)}~\bm{P}^{(i)}\rb\lb\bm{D}_{21}~\bm{D}_{22}^{(i)}\rb^\top+a\bm{R}^{(i)}(\bm{R}^{(i)\top}\bm{R}^{(i)}-b^2\bm{I}_{r-d})+a\bm{C}^{(i)}\bm{C}^{(i)\top}\bm{R}^{(i)}\right\|_\text{F}^2\\
            \leq &3\|\nabla\mathcal{L}(\bm{A}^{(i)})\lb\bm{C}^{(i)}~\bm{P}^{(i)}\rb\lb\bm{D}_{21}~\bm{D}_{22}^{(i)}\rb^\top\|_\text{F}^2+3a^2\|\bm{R}^{(i)}(\bm{R}^{(i)\top}\bm{R}^{(i)}-b^2\bm{I}_{r-d})\|_\text{F}^2\\
            +&3a^2\|\bm{C}^{(i)}\bm{C}^{(i)\top}\bm{R}^{(i)}\|_\text{F}^2,
        \end{split}
    \end{equation}
    where the first term can be bounded by mean inequality
    \begin{equation}
        \begin{split}
            &\|\nabla\mathcal{L}(\bm{A}^{(i)})\lb\bm{C}^{(i)}~\bm{P}^{(i)}\rb\lb\bm{D}_{21}^{(i)}~\bm{D}_{22}^{(i)}\rb^\top\|_\text{F}^2\\
            \leq & 2\|\nabla\mathcal{L}(\bm{A}^{*})\lb\bm{C}^{(i)}~\bm{P}^{(i)}\rb\lb\bm{D}_{21}^{(i)}~\bm{D}_{22}^{(i)}\rb^\top\|_\text{F}^2\\
            +&2\|[\nabla\mathcal{L}(\bm{A}^{(i)})-\nabla\mathcal{L}(\bm{A}^{*})]\lb\bm{C}^{(i)}~\bm{P}^{(i)}\rb\lb\bm{D}_{21}^{(i)}~\bm{D}_{22}^{(i)}\rb^\top\|_\text{F}^2.
        \end{split}
    \end{equation}

    By the duality of the Frobenius norm, we have
    \begin{equation}
        \begin{split}
            & \|\nabla\mathcal{L}(\bm{A}^{*})\lb\bm{C}^{(i)}~\bm{P}^{(i)}\rb\lb\bm{D}_{21}^{(i)}~\bm{D}_{22}^{(i)}\rb^\top\|_\text{F}\\
            = & \sup_{\bm{W}\in\mathbb{R}^{p\times(r-d)},\|\bm{W}\|_\text{F}=1}\langle\nabla\mathcal{L}(\bm{A}^{*})\lb\bm{C}^{(i)}~\bm{P}^{(i)}\rb\lb\bm{D}_{21}^{(i)}~\bm{D}_{22}^{(i)}\rb^\top,\bm{W}\rangle\\
            = & \sup_{\bm{W}\in\mathbb{R}^{p\times(r-d)},\|\bm{W}\|_\text{F}=1}\langle\nabla\mathcal{L}(\bm{A}^{*}),\bm{W}\lb\bm{D}_{21}^{(i)}~\bm{D}_{22}^{(i)}\rb\lb\bm{C}^{(i)}~\bm{P}^{(i)}\rb^\top\rangle\\
            = & \sup_{\bm{W}\in\mathbb{R}^{p\times(r-d)},\|\bm{W}\|_\text{F}=1}\left\langle\nabla\mathcal{L}(\bm{A}^{*}),\lb\bm{C}^{(i)}~\bm{W}\rb\begin{bmatrix}\bm{0} & \bm{0}\\\bm{D}_{21}^{(i)}&\bm{D}_{22}^{(i)}\end{bmatrix}\lb\bm{C}^{(i)}~\bm{P}^{(i)}\rb^\top\right\rangle\\
            \leq & \|\lb\bm{D}_{21}^{(i)}~\bm{D}_{22}^{(i)}\rb\|_\textup{op}\cdot\|\lb\bm{C}^{(i)}~\bm{P}^{(i)}\rb\|_\textup{op}\cdot\xi(r,d)
        \end{split}
    \end{equation}
    and the first term can be bounded as
    \begin{equation}
        \begin{split}
            &\|\nabla\mathcal{L}(\bm{A}^{(i)})\lb\bm{C}^{(i)}~\bm{P}^{(i)}\rb\lb\bm{D}_{21}^{(i)}~\bm{D}_{22}^{(i)}\rb^\top\|_\text{F}^2\\
            \leq & 2\|\lb\bm{C}^{(i)}~\bm{P}^{(i)}\rb\|_\text{op}^2\cdot\|\lb\bm{D}^{(i)}_{21}~\bm{D}^{(i)}_{22}\rb\|_\text{op}^2\cdot\xi^2(T,\delta)\\
            +&2\|\lb\bm{C}^{(i)}~\bm{P}^{(i)}\rb\|_\text{op}^2\cdot\|\lb\bm{D}^{(i)}_{21}~\bm{D}^{(i)}_{22}\rb\|_\text{op}^2\cdot\|\nabla\mathcal{L}(\bm{A}^{(i)})-\nabla\mathcal{L}(\bm{A}^{*})\|_\text{F}^2\\
            \leq & 4b^{-2}\sigma_1^2\left[\xi^2(r,d)+\|\nabla\mathcal{L}(\bm{A}^{(i)})-\nabla\mathcal{L}(\bm{A}^*)\|_\text{F}^2\right],
        \end{split}
    \end{equation}
    the second term can be bounded as
    \begin{equation}
        \begin{split}
            &a^2\|\bm{R}^{(i)}(\bm{R}^{(i)\top}\bm{R}^{(i)}-b^2\bm{I}_{r-d})\|_\text{F}^2\leq a^2\|\bm{R}^{(i)}\|_\text{op}^2\|\bm{R}^{(i)\top}\bm{R}^{(i)}-b^2\bm{I}_{r-d}\|_\text{F}^2\\
            \leq& 2a^2b^2\|\bm{R}^{(i)\top}\bm{R}^{(i)}-b^2\bm{I}_{r-d}\|_\text{F}^2,
        \end{split}
    \end{equation}
    and the third term can be bounded as
    \begin{equation}
        a^2\|\bm{C}^{(i)}\bm{C}^{(i)\top}\bm{R}^{(i)}\|_\text{F}^2 \leq 2a^2b^2\|\bm{C}^{(i)\top}\bm{R}^{(i)}\|_\text{F}^2.
    \end{equation}
    Thus, we have
    \begin{equation}
        \begin{split}
            & \|\nabla\mathcal{L}(\bm{A}^{(i)})\lb\bm{C}^{(i)}~\bm{P}^{(i)}\rb\lb\bm{D}_{21}^{(i)}~\bm{D}_{22}^{(i)}\rb^\top+a\bm{R}^{(i)}(\bm{R}^{(i)\top}\bm{R}^{(i)}-b^2\bm{I}_{r-d})+2a\bm{C}^{(i)}\bm{C}^{(i)\top}\bm{R}^{(i)}\|_\text{F}^2\\
            \leq & 12b^{-2}\sigma_1^2\left[\xi^2(r,d)+\|\nabla\mathcal{L}(\bm{A}^{(i)})-\nabla\mathcal{L}(\bm{A}^*)\|_\text{F}^2\right] \\
            + & 6a^2b^2[\|\bm{R}^{(i)\top}\bm{R}^{(i)}-b^2\bm{I}_{r-d}\|_\text{F}^2+\|\bm{C}^{(i)\top}\bm{R}^{(i)}\|_\text{F}^2]:=Q_{\text{R},2}.
        \end{split}
    \end{equation}
    
    For the third term in \eqref{eq:U_bound}, denote
    \begin{equation}
        \begin{split}
            & \left\langle\bm{R}^{(i)}-\bm{R}^*\bm{O}_r^{(i)},\nabla\mathcal{L}(\bm{A}^{(i)})\lb\bm{C}^{(i)}~\bm{P}^{(i)}\rb\lb\bm{D}_{21}^{(i)}~\bm{D}_{22}^{(i)}\rb^\top\right\rangle\\
            =&\left\langle\bm{R}^{(i)}\lb\bm{D}_{21}^{(i)}~\bm{D}_{22}^{(i)}\rb\lb\bm{C}^{(i)}~\bm{P}^{(i)}\rb^\top-\bm{R}^*\bm{O}_r^{(i)}\lb\bm{D}_{21}^{(i)}~\bm{D}_{22}^{(i)}\rb\lb\bm{C}^{(i)}~\bm{P}^{(i)}\rb^\top,\nabla\mathcal{L}(\bm{A}^{(i)})\right\rangle\\
            =&\left\langle\bm{A}_\text{R}^{(i)},\nabla\mathcal{L}(\bm{A}^{(i)})\right\rangle,
        \end{split}
    \end{equation}
    where $\bm{A}_\text{R}^{(i)}=(\bm{R}^{(i)}-\bm{R}^*\bm{O}_r^{(i)})\lb\bm{D}_{21}^{(i)}~\bm{D}_{22}^{(i)}\rb\lb\bm{C}^{(i)}~\bm{P}^{(i)}\rb^\top$.

    For the fourth and fifth terms in \eqref{eq:U_bound}, denote
    \begin{equation}
        \begin{split}
            T_\text{R} & = \langle\bm{R}^{(i)}-\bm{R}^*\bm{O}_r^{(i)},\bm{R}^{(i)}(\bm{R}^{(i)\top}\bm{R}^{(i)}-b^2\bm{I}_{r-d}) + \bm{C}^{(i)}\bm{C}^{(i)\top}\bm{R}^{(i)}\rangle
        \end{split}
    \end{equation}

    Therefore, we can rewrite the last three terms in \eqref{eq:U_bound} as
    \begin{equation}
        \begin{split}
            &\left\langle\bm{R}^{(i)}-\bm{R}^*\bm{O}_r^{(i)},\nabla\mathcal{L}(\bm{A}^{(i)})\lb\bm{C}^{(i)}~\bm{P}^{(i)}\rb\lb\bm{D}^{(i)}_{21}~\bm{D}^{(i)}_{22}\rb\right\rangle\\
            +&a\left\langle\bm{R}^{(i)}-\bm{R}^*\bm{O}_r^{(i)},\bm{R}^{(i)}(\bm{R}^{(i)\top}\bm{R}^{(i)}-b^2\bm{I}_{r-d})\right\rangle\\
            +&a\langle\bm{R}^{(i)}-\bm{R}^*\bm{O}_r^{(i)},\bm{C}^{(i)}\bm{C}^{(i)\top}\bm{R}^{(i)}\rangle\\
            =&\left\langle\bm{A}_\text{R}^{(i)},\nabla\mathcal{L}(\bm{A}^{(i)})\right\rangle+a T_\text{R}:=Q_{\text{R},1}.
        \end{split}
    \end{equation}
    Combining the bounds for the terms in \eqref{eq:U_bound}, we have
    \begin{equation}
        \begin{split}
            &\|\bm{R}^{(i+1)}-\bm{R}^*\bm{O}_r^{(i)}\|_\text{F}^2-\|\bm{R}^{(i)}-\bm{R}^*\bm{O}_r^{(i)}\|_\text{F}^2
            \leq -2\eta Q_{\text{R},1}+\eta^2Q_{\text{R},2}.
        \end{split}
    \end{equation}

    Similarly, for $\bm{P}^{(i+1)}$, we can define similar 
    quantities $Q_{\text{V},1}$ and $Q_{\text{V},2}$, and show that
    \begin{equation}
        \begin{split}
            &\|\bm{P}^{(i+1)}-\bm{P}^*\bm{O}_p^{(i)}\|_\text{F}^2-\|\bm{P}^{(i)}-\bm{P}^*\bm{O}_p^{(i)}\|_\text{F}^2
            \leq -2\eta Q_{\text{P},1}+\eta^2Q_{\text{P},2}.
        \end{split}
    \end{equation}
    
    \noindent\textit{Step 2.2.} ($\bm{C}$ step)\\
    For $\bm{C}^{(i+1)}$, note that 
    \begin{equation}
        \bm{C}^{(i+1)}=\bm{C}^{(i)}-\eta\nabla_{\bm{C}}\mathcal{L}^{(i)}-\eta a[2\bm{C}^{(i)}(\bm{C}^{(i)\top}\bm{C}^{(i)}-b^2\bm{I}_d)+\bm{U}^{(i)}\bm{U}^{(i)\top}\bm{C}^{(i)}+\bm{V}^{(i)}\bm{V}^{(i)\top}\bm{C}^{(i)}].
    \end{equation}
    Thus, we have
    \begin{equation}\label{eq:C_bound}
        \begin{split}
            & \|\bm{C}^{(i+1)}-\bm{C}^*\bm{O}_c^{(i)}\|_\text{F}^2\\
            = & \Big\| \bm{C}^{(i+1)} - \bm{C}^*\bm{O}_c^{(i)}-\eta\Big\{\nabla_{\bm{C}}\mathcal{L}^{(i)} + 2a\bm{C}^{(i)}(\bm{C}^{(i)\top}\bm{C}^{(i)}-b^2\bm{I}_d) + a\bm{R}^{(i)}\bm{R}^{(i)\top}\bm{C}^{(i)}\\
            & + a\bm{P}^{(i)}\bm{P}^{(i)\top}\bm{C}^{(i)}\Big\}\Big\|_\text{F}^2\\
            = & \|\bm{C}^{(i)} - \bm{C}^*\bm{O}_c^{(i)}\|_\text{F}^2\\
            + & \eta^2\Big\|\nabla_{\bm{C}}\mathcal{L}^{(i)} + 2a\bm{C}^{(i)}(\bm{C}^{(i)\top}\bm{C}^{(i)}-b^2\bm{I}_d) + a\bm{R}^{(i)}\bm{R}^{(i)\top}\bm{C}^{(i)}+a\bm{P}^{(i)}\bm{P}^{(i)\top}\bm{C}^{(i)}\Big\|_\text{F}^2\\
            - & 2\eta\left\langle\bm{C}^{(i)} - \bm{C}^*\bm{O}_c^{(i)},\nabla_{\bm{C}}\mathcal{L}^{(i)}\right\rangle\\
            - & 2a\eta\left\langle\bm{C}^{(i)} - \bm{C}^*\bm{O}_c^{(i)},2\bm{C}^{(i)}(\bm{C}^{(i)\top}\bm{C}^{(i)}-b^2\bm{I}_d)\right\rangle\\
            - & 2a\eta\left\langle\bm{C}^{(i)} - \bm{C}^*\bm{O}_c^{(i)},\bm{R}^{(i)}\bm{R}^{(i)\top}\bm{C}^{(i)} + \bm{P}^{(i)}\bm{P}^{(i)\top}\bm{C}^{(i)}\right\rangle,
        \end{split}
    \end{equation}
    where 
    \begin{equation}
        \begin{split}
            \nabla_{\bm{C}}\mathcal{L}^{(i)} = & \nabla\mathcal{L}(\bm{A}^{(i)})\lb\bm{C}^{(i)}~\bm{P}^{(i)}\rb\lb\bm{D}_{11}^{(i)}~\bm{D}_{12}^{(i)}\rb^\top + \nabla\mathcal{L}(\bm{A}^{(i)})^\top\lb\bm{C}^{(i)}~\bm{R}^{(i)}\rb\lb\bm{D}_{11}^\top~\bm{D}_{21}^\top\rb^\top .
        \end{split}
    \end{equation}

    First, we have
    \begin{equation}
        \begin{split}
            &\Big\|\nabla_{\bm{C}}\mathcal{L}^{(i)}+ 2a\bm{C}^{(i)}(\bm{C}^{(i)\top}\bm{C}^{(i)}-b^2\bm{I}_d) + a\bm{R}^{(i)}\bm{R}^{(i)\top}\bm{C}^{(i)}+a\bm{P}^{(i)}\bm{P}^{(i)\top}\bm{C}^{(i)}\Big\|_\text{F}^2\\
            \leq & 4\|\nabla_{\bm{C}}\mathcal{L}^{(i)}\|_\text{F}^2+ 16a^2\|\bm{C}^{(i)}(\bm{C}^{(i)\top}\bm{C}^{(i)}-b^2\bm{I}_d)\|_\text{F}^2\\
            + & 4a^2\|\bm{R}^{(i)}\bm{R}^{(i)\top}\bm{C}^{(i)}\|_\text{F}^2 + 4a^2\|\bm{P}^{(i)}\bm{P}^{(i)\top}\bm{C}^{(i)}\|_\text{F}^2,
        \end{split}
    \end{equation}
    where the first term can be bounded as
    \begin{equation}
        \begin{split}
            & \|\nabla_{\bm{C}}\mathcal{L}^{(i)}\|_\text{F}^2\\
            \leq & 2\|\nabla\mathcal{L}(\bm{A}^{(i)})\lb\bm{C}^{(i)}~\bm{P}^{(i)}\rb\lb\bm{D}_{11}^{(i)}~\bm{D}_{12}^{(i)}\rb^\top\|_\text{F}^2 + 2\|\nabla\mathcal{L}(\bm{A}^{(i)})^\top\lb\bm{C}^{(i)}~\bm{R}^{(i)}\rb\lb\bm{D}_{11}^\top~\bm{D}_{21}^\top\rb^\top\|_\text{F}^2\\
            \leq & 4\|\nabla\mathcal{L}(\bm{A}^{*})\lb\bm{C}^{(i)}~\bm{P}^{(i)}\rb\lb\bm{D}_{11}^{(i)}~\bm{D}_{12}^{(i)}\rb^\top\|_\text{F}^2 + 4\|\nabla\mathcal{L}(\bm{A}^{*})^\top\lb\bm{C}^{(i)}~\bm{R}^{(i)}\rb\lb\bm{D}_{11}^\top~\bm{D}_{21}^\top\rb^\top\|_\text{F}^2\\
            + & 4\|[\nabla\mathcal{L}(\bm{A}^{*})-\nabla\mathcal{L}(\bm{A}^{(i)})]\lb\bm{C}^{(i)}~\bm{P}^{(i)}\rb\lb\bm{D}_{11}^{(i)}~\bm{D}_{12}^{(i)}\rb^\top\|_\text{F}^2\\
            + & 4\|[\nabla\mathcal{L}(\bm{A}^{*})-\nabla\mathcal{L}(\bm{A}^{(i)})]^\top\lb\bm{C}^{(i)}~\bm{R}^{(i)}\rb\lb\bm{D}_{11}^\top~\bm{D}_{21}^\top\rb^\top\|_\text{F}^2\\
            \leq & 8b^{-2}\sigma_1^2[\xi^2(r,d)+\|\nabla\mathcal{L}(\bm{A}^{(i)})-\nabla\mathcal{L}(\bm{A}^{*})\|_\text{F}^2]
        \end{split}
    \end{equation}
    and the other three terms can be bounded as
    \begin{equation}
        a^2\|\bm{C}^{(i)}(\bm{C}^{(i)\top}\bm{C}^{(i)}-b^2\bm{I}_d)\|_\text{F}^2 \leq 2a^2b^2\|\bm{C}^{(i)\top}\bm{C}^{(i)}-b^2\bm{I}_d\|_
        \text{F}^2,
    \end{equation}
    \begin{equation}
        a^2\|\bm{R}^{(i)}\bm{R}^{(i)\top}\bm{C}^{(i)}\|_\text{F}^2 \leq 2a^2b^2\|\bm{R}^{(i)\top}\bm{C}^{(i)}\|_\text{F}^2,
    \end{equation}
    \begin{equation}
        a^2\|\bm{P}^{(i)}\bm{P}^{(i)\top}\bm{C}^{(i)}\|_\text{F}^2 \leq 2a^2b^2\|\bm{P}^{(i)\top}\bm{C}^{(i)}\|_\text{F}^2.
    \end{equation}
    Thus, the second term in \eqref{eq:C_bound} can be bounded as
    \begin{equation}
        \begin{split}
            & \Big\|\nabla_{\bm{C}}\mathcal{L} + 2a\bm{C}^{(i)}(\bm{C}^{(i)\top}\bm{C}^{(i)}-b^2\bm{I}_d) + 2a\bm{R}^{(i)}\bm{R}^{(i)\top}\bm{C}^{(i)}+2a\bm{P}^{(i)}\bm{P}^{(i)\top}\bm{C}^{(i)}\Big\|_\text{F}^2\\
            \leq & 32b^{-2}\sigma_1^2[\xi^2(r,d)+\|\nabla\mathcal{L}(\bm{A}^{(i)})-\nabla\mathcal{L}(\bm{A}^*)\|_\text{F}^2]\\
            + & 8a^2b^2(4\|\bm{C}^{(i)\top}\bm{C}^{(i)}-b^2\bm{I}_d\|_
            \text{F}^2 + \|\bm{R}^{(i)\top}\bm{C}^{(i)}\|_\text{F}^2 + \|\bm{P}^{(i)\top}\bm{C}^{(i)}\|_\text{F}^2) := Q_{\text{C},2}.
        \end{split}
    \end{equation}

    For the third term in \eqref{eq:C_bound},
    \begin{equation}
        \begin{split}
            & \left\langle \bm{C}^{(i)}-\bm{C}^*\bm{O}_c^{(i)}, \nabla_{\bm{C}}\mathcal{L}\right\rangle\\
            = & \left\langle\bm{R}^{(i)}\bm{D}_{12}^{(i)}\bm{C}^{(i)\top}-\bm{R}^{(i)}\bm{D}_{12}^{(i)}\bm{O}_c^{(i)\top}\bm{C}^{*\top},\nabla\mathcal{L}(\bm{A}^{(i)})\right\rangle\\
            + & \left\langle\bm{C}^{(i)}\bm{D}_{21}^{(i)}\bm{P}^{(i)\top}-\bm{C}^{*}\bm{O}_c^{(i)}\bm{D}_{21}^{(i)}\bm{P}^{(i)\top},\nabla\mathcal{L}(\bm{A}^{(i)})\right\rangle\\
            + & \left\langle\bm{C}^{(i)}\bm{D}_{11}^{(i)}\bm{C}^{(i)\top}-\bm{C}^{*}\bm{O}_c^{(i)}\bm{D}_{11}^{(i)}\bm{C}^{(i)\top},\nabla\mathcal{L}(\bm{A}^{(i)})\right\rangle\\
            + & \left\langle\bm{C}^{(i)}\bm{D}_{11}^{(i)}\bm{C}^{(i)\top}-\bm{C}^{(i)}\bm{D}_{11}^{(i)}\bm{O}_c^{(i)}\bm{C}^{*\top},\nabla\mathcal{L}(\bm{A}^{(i)})\right\rangle\\
            := & \left\langle\bm{A}_\text{C}^{(i)},\nabla\mathcal{L}(\bm{A}^{(i)})\right\rangle.
        \end{split}
    \end{equation}

    For the fourth and fifth terms in \eqref{eq:C_bound}, denote
    \begin{equation}
        \langle\bm{C}^{(i)}-\bm{C}^*\bm{O}_c^{(i)},2\bm{C}^{(i)}(\bm{C}^{(i)\top}\bm{C}^{(i)}-b^2\bm{I}_d)+\bm{R}^{(i)}\bm{R}^{(i)\top}\bm{C}^{(i)}+\bm{P}^{(i)}\bm{P}^{(i)\top}\bm{C}^{(i)}\rangle:=T_\text{C}.
    \end{equation}

    Hence, we can bound the last three terms in \eqref{eq:C_bound} as
    \begin{equation}
        \begin{split}
            & \eta\left\langle\bm{C}^{(i)} - \bm{C}^*\bm{O}_c^{(i)},\nabla_{\bm{C}}\mathcal{L}\right\rangle+ a\eta\left\langle\bm{C}^{(i)} - \bm{C}^*\bm{O}_c^{(i)},\bm{C}^{(i)}(\bm{C}^{(i)\top}\bm{C}^{(i)}-b^2\bm{I}_d)\right\rangle\\
            + & a\eta\left\langle\bm{C}^{(i)} - \bm{C}^*\bm{O}_c^{(i)},\bm{R}^{(i)}\bm{R}^{(i)\top}\bm{C}^{(i)} + \bm{P}^{(i)}\bm{P}^{(i)\top}\bm{C}^{(i)}\right\rangle\\
            \geq & \left\langle\bm{A}_\text{C}^{(i)},\nabla\mathcal{L}(\bm{A}^{(i)})\right\rangle+aT_\text{C}:=Q_{\text{C},1}.
        \end{split}
    \end{equation}
    Combining these bounds, we have
    \begin{equation}
        \|\bm{C}^{(i+1)}-\bm{C}^*\bm{O}_c^{(i)}\|_\text{F}^2 - \|\bm{C}^{(i)}-\bm{C}^*\bm{O}_c^{(i)}\|_\text{F}^2 \leq -2\eta Q_{\text{C},1} + \eta^2 Q_{\text{C},2}.
    \end{equation}
    
    \noindent\textit{Step 2.3.} ($\bm{D}$ step)\\
    For $\bm{D}^{(i)}$, we consider the following decomposition
    \begin{equation}
        \begin{split}
            & \|\bm{D}^{(i+1)}-\bm{O}_1^{(i)\top}\bm{D}^*\bm{O}_2^{(i)}\|_\text{F}^2\\
            = & \|\bm{D}^{(i)}-\bm{O}_1^{(i)\top}\bm{D}^*\bm{O}_2^{(i)}-\eta\lb\bm{C}^{(i)}~\bm{R}^{(i)}\rb^\top\nabla\mathcal{L}(\bm{A}^{(i)})\lb\bm{C}^{(i)}~\bm{P}^{(i)}\rb\|_\text{F}^2\\
            = & \|\bm{D}^{(i)}-\bm{O}_1^{(i)\top}\bm{D}^*\bm{O}_2^{(i)}\|_\text{F}^2 + \eta^2\|\lb\bm{C}^{(i)}~\bm{R}^{(i)}\rb^\top\nabla\mathcal{L}(\bm{A}^{(i)})\lb\bm{C}^{(i)}~\bm{P}^{(i)}\rb\|_\text{F}^2\\
            - & 2\eta\left\langle\bm{D}^{(i)}-\bm{O}_1^{(i)\top}\bm{D}^*\bm{O}_2^{(i)},\lb\bm{C}^{(i)}~\bm{R}^{(i)}\rb^\top\nabla\mathcal{L}(\bm{A}^{(i)})\lb\bm{C}^{(i)}~\bm{P}^{(i)}\rb\right\rangle.
        \end{split}
    \end{equation}
    For the third term, we have
    \begin{equation}
        \begin{split}
            & \langle\bm{D}^{(i)}-\bm{O}_1^{(i)\top}\bm{D}^*\bm{O}_2^{(i)}, \bm{R}^{(i)\top}\nabla\mathcal{L}(\bm{A}^{(i)})\bm{P}^{(i)}\rangle\\
            =&\langle\bm{A}^{(i)}-\lb\bm{C}^{(i)}~\bm{R}^{(i)}\rb\bm{O}_1^{(i)\top}\bm{D}^*\bm{O}_2^{(i)}\lb\bm{C}^{(i)}~\bm{P}^{(i)}\rb^\top,\nabla\mathcal{L}(\bm{A}^{(i)})\rangle\\
            =&\langle\bm{A}_\text{D}^{(i)},\nabla\mathcal{L}(\bm{A}^{(i)})\rangle:=Q_{\text{D},1}
        \end{split}
    \end{equation}
    In addition,
    \begin{equation}
        \begin{split}
            &\|\lb\bm{C}^{(i)}~\bm{R}^{(i)}\rb^\top\nabla\mathcal{L}(\bm{A}^{(i)})\lb\bm{C}^{(i)}~\bm{P}^{(i)}\rb\|_\text{F}^2\\
            \leq & 2\|\lb\bm{C}^{(i)}~\bm{R}^{(i)}\rb^\top\nabla\mathcal{L}(\bm{A}^*)\lb\bm{C}^{(i)}~\bm{P}^{(i)}\rb\|_\text{F}^2\\
            +&2\|\lb\bm{C}^{(i)}~\bm{R}^{(i)}\rb^\top[\nabla\mathcal{L}(\bm{A}^{(i)})-\nabla\mathcal{L}(\bm{A}^*)]\lb\bm{C}^{(i)}~\bm{P}^{(i)}\rb\|_\text{F}^2\\
            \leq & 2\|\lb\bm{C}^{(i)}~\bm{R}^{(i)}\rb\|_\text{op}^2\cdot\|\lb\bm{C}^{(i)}~\bm{P}^{(i)}\rb\|_\text{op}^2\cdot[\xi^2(T,\delta)+\|\nabla\mathcal{L}(\bm{A}^{(i)})-\nabla\mathcal{L}(\bm{A}^*)\|_\text{F}^2]\\
            =&4b^2[\xi^2(r,d)+\|\nabla\mathcal{L}(\bm{A}^{(i)})-\nabla\mathcal{L}(\bm{A}^*)\|_\text{F}^2]:=Q_{\text{D},2}.
        \end{split}
    \end{equation}
    Hence, we have
    \begin{equation}
        \|\bm{D}^{(i+1)}-\bm{O}_1^{(i)\top}\bm{D}^*\bm{O}_2^{(i)}\|_\text{F}^2-\|\bm{D}^{(i)}-\bm{O}_1^{(i)\top}\bm{D}^*\bm{O}_2^{(i)}\|_\text{F}^2\leq -2\eta Q_{\text{D},1}+\eta^2Q_{\text{D},2}.
    \end{equation}
    Together, we have that
    \begin{equation}
        \begin{split}
            E^{(i+1)}&\leq \|\bm{R}^{(i)}-\bm{R}^*\bm{O}_r^{(i)}\|_\text{F}^2 - 2\eta Q_{\text{R},1}+\eta^2Q_{\text{R},2}\\
            &+\|\bm{P}^{(i)}-\bm{P}^*\bm{O}_p^{(i)}\|_\text{F}^2 - 2\eta Q_{\text{P},1} + \eta^2 Q_{\text{P},2}\\
            &+\|\bm{C}^{(i)}-\bm{C}^*\bm{O}_p^{(i)}\|_\text{F}^2 - 2\eta Q_{\text{C},1} + \eta^2 Q_{\text{C},2}\\
            &+\|\bm{D}^{(i)}-\bm{O}_1^{(i)\top}\bm{D}^*\bm{O}_2^{(i)}\|_\text{F}^2-2\eta Q_{\text{D},1}+\eta^2Q_{\text{D},2}\\
            & \leq E^{(i)} -2\eta(Q_{\text{D},1}+Q_{\text{R},1}+Q_{\text{P},1}+Q_{\text{C},1})+\eta^2(Q_{\text{D},2}+Q_{\text{R},2}+Q_{\text{P},2}+Q_{\text{C},2}).
        \end{split}
    \end{equation}~\\

    \noindent\textit{Step 3.} (Lower bound of $Q_{\textup{D},1}+Q_{\textup{R},1}+Q_{\textup{P},1}+Q_{\textup{C},1}$)\\
    In the third step, we develop a lower bound for $Q_{\text{D},1}+Q_{\text{R},1}+Q_{\text{P},1}+Q_{\text{C},1}$. By definition, 
    \begin{equation}
        \label{eq:Q_bound}
        \begin{split}
            & Q_{\text{D},1}+Q_{\text{R},1}+Q_{\text{P},1}+Q_{\text{C},1}\\
            = & \langle\bm{A}_\text{D}^{(i)}+\bm{A}_\text{R}^{(i)}+\bm{A}_\text{P}^{(i)}+\bm{A}_\text{C}^{(i)},\nabla\mathcal{L}(\bm{A}^{(i)})\rangle + a(T_\text{R}+T_\text{P}+T_\text{C}).
        \end{split}
    \end{equation}
    Note that
    \begin{equation}
        \begin{split}
            & \bm{A}_\text{D}^{(i)}+\bm{A}_\text{R}^{(i)}+\bm{A}_\text{P}^{(i)}+\bm{A}_\text{C}^{(i)}\\
            = & 3\bm{A}^{(i)} - \lb\bm{C}^{(i)}~\bm{R}^{(i)}\rb\bm{O}_1^{(i)\top}\bm{D}^*\bm{O}_2^{(i)}
            \lb\bm{C}^{(i)}~\bm{P}^{(i)}\rb^\top-\lb\bm{C}^{(i)}~\bm{R}^{(i)}\rb\bm{D}^{(i)}\bm{O}_2^{(i)\top}\lb\bm{C}^*~\bm{P}^*\rb^\top\\
            - & \lb\bm{C}^*~\bm{P}^*\rb\bm{O}_1^{(i)}\bm{D}^{(i)}\lb\bm{C}^{(i)}~\bm{P}^{(i)}\rb^\top\\
            = & \bm{A}^{(i)}-\bm{A}^*+\bm{H}^{(i)},
        \end{split}
    \end{equation}
    where
    \begin{equation}
        \begin{split}
            \bm{H}^{(i)}=&\lb\bm{C}^{(i)}~\bm{R}^{(i)}\rb\bm{D}^{(i)}\lb\bm{C}^{(i)}~\bm{P}^{(i)}\rb^\top-\lb\bm{C}^{*}~\bm{R}^{*}\rb\bm{O}_1^{(i)}\bm{D}^{(i)}\lb\bm{C}^{(i)}~\bm{P}^{(i)}\rb^\top\\
            +&\lb\bm{C}^{(i)}~\bm{R}^{(i)}\rb\bm{D}^{(i)}\lb\bm{C}^{(i)}~\bm{P}^{(i)}\rb^\top-\lb\bm{C}^{(i)}~\bm{R}^{(i)}\rb\bm{D}^{(i)}\bm{O}_2^{(i)\top}\lb\bm{C}^{*}~\bm{P}^{*}\rb^\top\\
            +&\lb\bm{C}^{*}~\bm{R}^{*}\rb\bm{D}^{*}\lb\bm{C}^{*}~\bm{P}^{*}\rb^\top-\lb\bm{C}^{(i)}~\bm{R}^{(i)}\rb\bm{O}_1^{(i)\top}\bm{D}^{*}\bm{O}_2^{(i)}\lb\bm{C}^{(i)}~\bm{P}^{(i)}\rb^\top.
        \end{split}
    \end{equation}

    By Lemma \ref{lemma:perturb}, since $\|\lb\bm{C}^{(i)}~\bm{R}^{(i)}\rb\|_\text{op}\leq 1.01b$, $\|\lb\bm{C}^{(i)}~\bm{P}^{(i)}\rb\|_\text{op}\leq 1.01b$, $\|\bm{D}^{(i)}\|_\text{op}\leq 1.01\sigma_1b^{-2}$, $b=\sigma_1^{1/3}$ and $E^{(i)}\leq \sigma_1^{2/3}$, we can derive an upper bound for $\bm{H}^{(i)}$,
    \begin{equation}
        \begin{split}
            \|\bm{H}^{(i)}\|_\text{F} & \leq 1.01\sigma_1b^{-2}E^{(i)}+2(1.01b)E^{(i)} + (E^{(i)})^{3/2}\\
            & \leq (4\sigma_1^{1/3}+\sqrt{E^{(i)}})E^{(i)} \leq 5\sigma_1^{1/3}E^{(i)}.
        \end{split}
    \end{equation}

    By the $\alpha$-RSC and $\beta$-RSS conditions, the first term on the right hand side of \eqref{eq:Q_bound} can be bounded as
    \begin{equation}
        \begin{split}
            & \langle\bm{A}^{(i)}-\bm{A}^{*}+\bm{H}^{(i)},\nabla\mathcal{L}(\bm{A}^{(i)})\rangle\\
            = & \langle\bm{A}^{(i)}-\bm{A}^{*}+\bm{H}^{(i)},\nabla\mathcal{L}(\bm{A}^{*})\rangle + 
            \langle\bm{A}^{(i)}-\bm{A}^{*},\nabla\mathcal{L}(\bm{A}^{(i)})-\nabla\mathcal{L}(\bm{A}^{*})\rangle\\
            + & \langle\bm{H}^{(i)},\nabla\mathcal{L}(\bm{A}^{(i)})-\nabla\mathcal{L}(\bm{A}^{*})\rangle\\
            \geq & \frac{\alpha}{2}\|\bm{A}^{(i)}-\bm{A}^*\|_\text{F}^2+\frac{1}{2\beta}\|\nabla\mathcal{L}(\bm{A}^{(i)})-\nabla\mathcal{L}(\bm{A}^*)\|_\text{F}^2-\|\bm{H}^{(i)}\|_\text{F}\|\nabla\mathcal{L}(\bm{A}^{(i)})-\nabla\mathcal{L}(\bm{A}^*)\|_\text{F}\\
            - & |\langle\bm{A}^{(i)}-\bm{A}^{*}+\bm{H}^{(i)},\nabla\mathcal{L}(\bm{A}^{*})\rangle|.
        \end{split}
    \end{equation}
    In addition, we have that for any $c_1>0$
    \begin{equation}
        \begin{split}
            & \|\bm{H}^{(i)}\|_\text{F}\|\nabla\mathcal{L}(\bm{A}^{(i)})-\nabla\mathcal{L}(\bm{A}^*)\|_\text{F}\\
            \leq & \frac{1}{4\beta}\|\nabla\mathcal{L}(\bm{A}^{(i)})-\nabla\mathcal{L}(\bm{A}^*)\|_\text{F}^2+\beta\|\bm{H}^{(i)}\|_\text{F}^2\\
            \leq & \frac{1}{4\beta}\|\nabla\mathcal{L}(\bm{A}^{(i)})-\nabla\mathcal{L}(\bm{A}^*)\|_\text{F}^2+\beta\left(25\sigma_1^{2/3}E^{(i)}\right)E^{(i)}\\
            \leq & \frac{\beta}{2}\|\nabla\mathcal{L}(\bm{A}^{(i)})-\nabla\mathcal{L}(\bm{A}^*)\|_\text{F}^2+\frac{C\alpha\sigma_1^{4/3}}{\kappa^2}E^{(i)},
        \end{split}
    \end{equation}
    and
    \begin{equation}
        \begin{split}
            & |\langle\bm{A}^{(i)}-\bm{A}^*+\bm{H}^{(i)},\nabla\mathcal{L}(\bm{A}^{*})\rangle|\\
            \leq & |\langle\bm{A}^{(i)}-\lb\bm{C}^*~\bm{P}^*\rb\bm{O}_1^{(i)}\bm{D}^{(i)}\lb\bm{C}^{(i)}~\bm{P}^{(i)}\rb^\top,\nabla\mathcal{L}(\bm{A}^*)\rangle|\\
            + & |\langle\bm{A}^{(i)}-\lb\bm{C}^{(i)}~\bm{R}^{(i)}\rb\bm{D}^{(i)}\bm{O}_2^{(i)\top}\lb\bm{C}^*~\bm{P}^*\rb^\top,\nabla\mathcal{L}(\bm{A}^*)\rangle|\\
            + & |\langle\bm{A}^{(i)}-\lb\bm{C}^{(i)}~\bm{R}^{(i)}\rb\bm{O}_1^{(i)\top}\bm{D}^*\bm{O}_2^{(i)}
            \lb\bm{C}^{(i)}~\bm{P}^{(i)}\rb^\top,\nabla\mathcal{L}(\bm{A}^*)\rangle|\\
            \leq & \xi(r,d)\left(\|\bm{D}^{(i)}\|_\text{op}\cdot\|\lb\bm{C}^{(i)}~\bm{P}^{(i)}\rb\|_\text{op}\cdot\|\lb\bm{C}^{(i)}~\bm{R}^{(i)}\rb-\lb\bm{C}^{*}~\bm{R}^{*}\rb\bm{O}_1^{(i)}\|_\text{F}\right)\\
            + & \xi(r,d)\left(\|\bm{D}^{(i)}\|_\text{op}\cdot\|\lb\bm{C}^{(i)}~\bm{R}^{(i)}\rb\|_\text{op}\cdot\|\lb\bm{C}^{(i)}~\bm{P}^{(i)}\rb-\lb\bm{C}^{*}~\bm{P}^{*}\rb\bm{O}_2^{(i)}\|_\text{F}\right)\\
            + & \xi(r,d)\left(\|\lb\bm{C}^{(i)}~\bm{R}^{(i)}\rb\|_\text{op}\cdot\|\lb\bm{C}^{(i)}~\bm{P}^{(i)}\rb\|_\text{op}\cdot\|\bm{D}^{(i)}-\bm{O}_1^{(i)\top}\bm{D}^*\bm{O}_2^{(i)}\|_\text{F}\right)\\
            \leq & \left[1.01^2b^2+2\times1.01^2\sigma_1/b\right]\xi(r,d)(2E^{(i)})^{1/2}\\
            \leq & 4\sigma_1^{2/3}\xi(r,d)(2E^{(i)})^{1/2}\\
            \leq & 32c_2\sigma_1^{4/3}E^{(i)}+\frac{1}{4c_2}\xi^2(r,d),
        \end{split}
    \end{equation}
    for any $c_2>0$. Combining these inequalities, we have
    \begin{equation}
        \begin{split}
            &\langle\bm{A}^{(i)}-\bm{A}^{*}+\bm{H}^{(i)},\nabla\mathcal{L}(\bm{A}^{(i)})\rangle\\
            \geq & \frac{\alpha}{2}\|\bm{A}^{(i)}-\bm{A}^*\|_\text{F}^2+\frac{1}{4\beta}\|\nabla\mathcal{L}(\bm{A}^{(i)})-\nabla\mathcal{L}(\bm{A}^*)\|_\text{F}^2-\frac{1}{4c_2}\xi^2(r,d)\\
            -&\left(32c_2\sigma_1^{4/3}+\frac{C\alpha\sigma_1^{4/3}}{\kappa^2}\right)E^{(i)}.
        \end{split}
    \end{equation}
    Applying Lemma \ref{lemma:errorbound} with $b=\sigma_1^{1/3}$, we can obtain an upper bound for $E^{(i)}$,
    \begin{equation}
        \begin{split}
            &E^{(i)}\leq (4\sigma_1^{-4/3}+136\sigma_1^{2/3}\sigma_{r}^{-2})\|\bm{A}^{(i)}-\bm{A}^*\|_\text{F}^2\\
            +&28b^{-2}\left(\|\lb\bm{C}^{(i)}~\bm{R}^{(i)}\rb^\top\lb\bm{C}^{(i)}~\bm{R}^{(i)}\rb-b^2\bm{I}_r\|_\text{F}^2 + \|\lb\bm{C}^{(i)}~\bm{R}^{(i)}\rb^\top\lb\bm{C}^{(i)}~\bm{R}^{(i)}\rb-b^2\bm{I}_r\|_\text{F}^2\right)\\
            \leq & 140\sigma_1^{2/3}\sigma_{r}^{-2}\|\bm{A}^{(i)}-\bm{A}^*\|_\text{F}^2\\
            +&28\sigma_1^{-2/3}\left(\|\lb\bm{C}^{(i)}~\bm{R}^{(i)}\rb^\top\lb\bm{C}^{(i)}~\bm{R}^{(i)}\rb-b^2\bm{I}_r\|_\text{F}^2 + \|\lb\bm{C}^{(i)}~\bm{R}^{(i)}\rb^\top\lb\bm{C}^{(i)}~\bm{R}^{(i)}\rb-b^2\bm{I}_r\|_\text{F}^2\right)\\
            \leq & 140\sigma_1^{-4/3}\kappa^2\|\bm{A}^{(i)}-\bm{A}^*\|_\text{F}^2\\
            + & 28\sigma_1^{-2/3}\left(\|\lb\bm{C}^{(i)}~\bm{R}^{(i)}\rb^\top\lb\bm{C}^{(i)}~\bm{R}^{(i)}\rb-b^2\bm{I}_r\|_\text{F}^2 + \|\lb\bm{C}^{(i)}~\bm{R}^{(i)}\rb^\top\lb\bm{C}^{(i)}~\bm{R}^{(i)}\rb-b^2\bm{I}_r\|_\text{F}^2\right).
        \end{split}
    \end{equation}

    For the second term on the right-hand side of \eqref{eq:Q_bound}, note that 
    \begin{equation}
        \begin{split}
            & T_\text{R}+T_\text{P}+T_\text{C}\\
            = & \langle\lb\bm{C}^{(i)}~\bm{R}^{(i)}\rb-\lb\bm{C}^{*}~\bm{R}^{*}\rb\bm{O}_1^{(i)},\lb\bm{C}^{(i)}~\bm{R}^{(i)}\rb(\lb\bm{C}^{(i)}~\bm{R}^{(i)}\rb^\top\lb\bm{C}^{(i)}~\bm{R}^{(i)}\rb-b^2\bm{I}_r)\rangle\\
            + & \langle\lb\bm{C}^{(i)}~\bm{P}^{(i)}\rb-\lb\bm{C}^{*}~\bm{P}^{*}\rb\bm{O}_1^{(i)},\lb\bm{C}^{(i)}~\bm{P}^{(i)}\rb(\lb\bm{C}^{(i)}~\bm{P}^{(i)}\rb^\top\lb\bm{C}^{(i)}~\bm{P}^{(i)}\rb-b^2\bm{I}_r)\rangle.
        \end{split}
    \end{equation}
    Denote $\bm{U}^{(i)}=\lb\bm{C}^{(i)}~\bm{P}^{(i)}\rb$ and $\bm{U}^{*}=\lb\bm{C}^{*}~\bm{P}^{*}\rb$. Note that
    \begin{equation}
        \begin{split}
            & \left\langle\bm{U}^{(i)}-\bm{U}^*\bm{O}_1^{(i)},\bm{U}^{(i)}(\bm{U}^{(i)\top}\bm{U}^{(i)}-b^2\bm{I}_r)\right\rangle\\
            = & \left\langle\bm{U}^{(i)\top}\bm{U}^{(i)}-\bm{U}^{(i)\top}\bm{U}^*\bm{O}_1^{(i)},\bm{U}^{(i)\top}\bm{U}^{(i)}-b^2\bm{I}_r\right\rangle\\
            = & \frac{1}{2}\left\langle\bm{U}^{(i)\top}\bm{U}^{(i)}-\bm{U}^{*\top}\bm{U}^*,\bm{U}^{(i)\top}\bm{U}^{(i)}-b^2\bm{I}_r\right\rangle\\
            + & \frac{1}{2}\left\langle\bm{U}^{*\top}\bm{U}^{*}-2\bm{U}^{(i)\top}\bm{U}^*\bm{O}_1^{(i)}+\bm{U}^{(i)\top}\bm{U}^{(i)},\bm{U}^{(i)\top}\bm{U}^{(i)}-b^2\bm{I}_r\right\rangle\\
            = & \frac{1}{2}\|\bm{U}^{(i)\top}\bm{U}^{(i)}-b^2\bm{I}_r\|_\text{F}^2 + \frac{1}{2}\left\langle\bm{U}^{(i)\top}\left(\bm{U}^{(i)}-\bm{U}^{*}\bm{O}_1^{(i)}\right),\bm{U}^{(i)\top}\bm{U}^{(i)}-b^2\bm{I}_r\right\rangle\\
            + & \frac{1}{2}\left\langle\bm{U}^{*\top}\bm{U}^*-\bm{U}^{(i)\top}\bm{U}^*\bm{O}_1^{(i)},\bm{U}^{(i)\top}\bm{U}^{(i)}-b^2\bm{I}_r\right\rangle.
        \end{split}
    \end{equation}
    In addition, by the fact that $\bm{U}^{*\top}\bm{U}^*=b^2\bm{I}_r$ and $\bm{O}_1^{(i)\top}\bm{O}_1^{(i)}=\bm{I}_r$, we have
    \begin{equation}
        \begin{split}
            & \left\langle \bm{U}^{*\top}\bm{U}^*-\bm{U}^{(i)\top}\bm{U}^*\bm{O}_1^{(i)},\bm{U}^{(i)\top}\bm{U}^{(i)}-b^2\bm{I}_r \right\rangle\\
            = & \left\langle \bm{U}^{*\top}\bm{U}^*-\bm{O}_1^{(i)\top}\bm{U}^{*\top}\bm{U}^{(i)},\bm{U}^{(i)\top}\bm{U}^{(i)}-b^2\bm{I}_r \right\rangle\\
            = & \left\langle \bm{O}_1^{(i)\top}\bm{U}^{*\top}\bm{U}^*\bm{O}_1^{(i)}-\bm{O}_1^{(i)\top}\bm{U}^{*\top}\bm{U}^{(i)},\bm{U}^{(i)\top}\bm{U}^{(i)}-b^2\bm{I}_r \right\rangle\\
            = & \left\langle(\bm{U}^*\bm{O}_1^{(i)})^\top(\bm{U}^*\bm{O}_1^{(i)}-\bm{U}^{(i)}),\bm{U}^{(i)\top}\bm{U}^{(i)}-b^2\bm{I}_r\right\rangle
        \end{split}
    \end{equation}
    and
    \begin{equation}
        \begin{split}
            & \left\langle\bm{U}^{(i)}-\bm{U}^{*}\bm{O}_1^{(i)},\bm{U}^{(i)}(\bm{U}^{(i)\top}\bm{U}^{(i)}-b^2\bm{I}_r)\right\rangle \\
            = & \frac{1}{2}\|\bm{U}^{(i)\top}\bm{U}^{(i)}-b^2\bm{I}_r\|_\text{F}^2\\
            + & \frac{1}{2}\left\langle\left(\bm{U}^{*}\bm{O}_1^{(i)}-\bm{U}^{(i)}\right)^\top\left(\bm{U}^{*}\bm{O}_1^{(i)}-\bm{U}^{(i)}\right),\bm{U}^{(i)\top}\bm{U}^{(i)}-b^2\bm{I}_r\right\rangle\\
            \geq & \frac{1}{2}\|\bm{U}^{(i)\top}\bm{U}^{(i)}-b^2\bm{I}_r\|_\text{F}^2 - \frac{1}{2}\|\bm{U}^{*}\bm{O}_1^{(i)}-\bm{U}^{(i)}\|_\text{F}^2\cdot\|\bm{U}^{(i)\top}\bm{U}^{(i)}-b^2\bm{I}_r\|_\text{F}\\
            \geq & \frac{1}{2}\|\bm{U}^{(i)\top}\bm{U}^{(i)}-b^2\bm{I}_r\|_\text{F}^2 - \frac{1}{4}\|\bm{U}^{*}\bm{O}_1^{(i)}-\bm{U}^{(i)}\|_\text{F}^4-\frac{1}{4}\|\bm{U}^{(i)\top}\bm{U}^{(i)}-b^2\bm{I}_{r-d}\|_\text{F}^2\\
            \geq & \frac{1}{4}\|\bm{U}^{(i)\top}\bm{U}^{(i)}-b^2\bm{I}_r\|_\text{F}^2 - \frac{1}{4}E^{(i)}\|\bm{U}^{*}\bm{O}_1^{(i)}-\bm{U}^{(i)}\|_\text{F}^2.
        \end{split}
    \end{equation}
    Therefore, we have
    \begin{equation}
        \begin{split}
            & T_\text{R}+T_\text{P}+T_\text{C}\\
            \geq & \frac{1}{4}\|\lb\bm{C}^{(i)}~\bm{R}^{(i)}\rb^\top\lb\bm{C}^{(i)}~\bm{R}^{(i)}\rb-b^2\bm{I}_r\|_\text{F}^2 + \frac{1}{4}\|\lb\bm{C}^{(i)}~\bm{P}^{(i)}\rb^\top\lb\bm{C}^{(i)}~\bm{P}^{(i)}\rb-b^2\bm{I}_r\|_\text{F}^2\\
            - & \frac{1}{4}E^{(i)}\|\lb\bm{C}^{(i)}~\bm{R}^{(i)}\rb-\lb\bm{C}^{*}~\bm{R}^{*}\rb\bm{O}_1\|_\text{F}^2 - \frac{1}{4}E^{(i)}\|\lb\bm{C}^{(i)}~\bm{P}^{(i)}\rb-\lb\bm{C}^{*}~\bm{P}^{*}\rb\bm{O}_2\|_\text{F}^2.
        \end{split}
    \end{equation}

    Combining these inequalities, and  since $b=\sigma_1^{1/3}$,
    \begin{equation}
        \begin{split}
            & Q_{\text{D},1} + Q_{\text{R},1} + Q_{\text{P},1} + Q_{\text{C},1}\\
            \geq & \frac{\alpha}{2}\|\bm{A}^{(i)}-\bm{A}^*\|_\text{F}^2+\frac{1}{4\beta}\|\nabla\mathcal{L}(\bm{A}^{(i)})-\nabla\mathcal{L}(\bm{A}^*)\|_\text{F}^2-\frac{1}{4c_2}\xi^2(r,d)\\
            -&\left[32c_2\sigma_1^{4/3}+\frac{C\alpha\sigma_1^{4/3}}{\kappa^2}\right]E^{(i)}\\
            + & \frac{1}{4}\|\lb\bm{C}^{(i)}~\bm{R}^{(i)}\rb^\top\lb\bm{C}^{(i)}~\bm{R}^{(i)}\rb-b^2\bm{I}_r\|_\text{F}^2 + \frac{1}{4}\|\lb\bm{C}^{(i)}~\bm{P}^{(i)}\rb^\top\lb\bm{C}^{(i)}~\bm{P}^{(i)}\rb-b^2\bm{I}_r\|_\text{F}^2\\
            - & \frac{a}{4}E^{(i)}\|\lb\bm{C}^{(i)}~\bm{R}^{(i)}\rb-\lb\bm{C}^{*}~\bm{R}^{*}\rb\bm{O}_1\|_\text{F}^2 - \frac{a}{4}E^{(i)}\|\lb\bm{C}^{(i)}~\bm{P}^{(i)}\rb-\lb\bm{C}^{*}~\bm{P}^{*}\rb\bm{O}_2\|_\text{F}^2\\
            \geq & \frac{\alpha}{2}\Bigg[\|\bm{A}^{(i)}-\bm{A}^*\|_\text{F}^2 + \frac{\sigma_1^{2/3}}{5\kappa^2}\left(\|\lb\bm{C}^{(i)}~\bm{R}^{(i)}\rb^\top\lb\bm{C}^{(i)}~\bm{R}^{(i)}\rb-b^2\bm{I}_r\|_\text{F}^2 \right.\\
            + &\left. \|\lb\bm{C}^{(i)}~\bm{P}^{(i)}\rb^\top\lb\bm{C}^{(i)}~\bm{P}^{(i)}\rb-b^2\bm{I}_r\|_\text{F}^2\right)\Bigg]\\
            + & \frac{1}{4\beta}\|\nabla\mathcal{L}(\bm{A}^{(i)})-\nabla\mathcal{L}(\bm{A}^*)\|_\text{F}^2-\frac{1}{4c_2}\xi^2(r,d)\\
            - &  \left(32c_2\sigma_1^{4/3}+\frac{C\alpha\sigma_1^{4/3}}{\kappa^2}\right)E^{(i)}\\
            + & \left(\frac{a}{4}-\frac{\alpha\sigma_1^{2/3}}{10\kappa^2}\right)\left(\|\lb\bm{C}^{(i)}~\bm{R}^{(i)}\rb^\top\lb\bm{C}^{(i)}~\bm{R}^{(i)}\rb-b^2\bm{I}_r\|_\text{F}^2 + \|\lb\bm{C}^{(i)}~\bm{P}^{(i)}\rb^\top\lb\bm{C}^{(i)}~\bm{P}^{(i)}\rb-b^2\bm{I}_r\|_\text{F}^2).\right.
        \end{split}
    \end{equation}
    Letting $a=0.8\alpha\sigma_1^{2/3}\kappa^{-2}$ and since $E^{(i)}\leq\sigma_1^{2/3}$
    \begin{equation}
        \begin{split}
            & Q_{\text{D},1} + Q_{\text{R},1} + Q_{\text{P},1} + Q_{\text{C},1}\\
            \geq & \left(\frac{\alpha\sigma_1^{4/3}}{140\kappa^2}-32c_2\sigma_1^{4/3}-\frac{C\alpha\sigma_1^{4/3}}{\kappa^2}\right)E^{(i)}\\
            + & \frac{1}{4\beta}\|\nabla\mathcal{L}(\bm{A}^{(i)})-\nabla\mathcal{L}(\bm{A}^*)\|_\text{F}^2 - \frac{1}{4c_2}\xi^2(r,d)\\
            + & \frac{a}{8}\left(\|\lb\bm{C}^{(i)}~\bm{R}^{(i)}\rb^\top\lb\bm{C}^{(i)}~\bm{R}^{(i)}\rb-b^2\bm{I}_r\|_\text{F}^2 + \|\lb\bm{C}^{(i)}~\bm{P}^{(i)}\rb^\top\lb\bm{C}^{(i)}~\bm{P}^{(i)}\rb-b^2\bm{I}_r\|_\text{F}^2\right).
        \end{split}
    \end{equation}~\\

    \noindent\textit{Step 4.} (Convergence analysis of $E^{(i)}$)

    In the following, we combine all the results in the previous steps to establish the error bound for $E^{(i)}$ and $\|\bm{A}-\bm{A}^*\|_\text{F}$. Plugging in $b=\sigma_1^{1/3}$ and $a=1.6\alpha\sigma_1^{2/3}\kappa^{-2}$ to $Q_{\text{D},2}$, $Q_{\text{R},2}$, $Q_{\text{P},2}$ and $Q_{\text{C},2}$, we have
    \begin{equation}
        \begin{split}
            & Q_{\text{D},2} + Q_{\text{R},2} + Q_{\text{P},2} + Q_{\text{C},2} \\
            \leq & 92\sigma_1^{4/3}[\xi^2(r,d)+\|\nabla\mathcal{L}(\bm{A}^{(i)})-\nabla\mathcal{L}(\bm{A}^*)\|_\text{F}^2]\\
            + & \frac{52\alpha^2\sigma_1^{2}}{\kappa^4}\left(\|\lb\bm{C}^{(i)}~\bm{R}^{(i)}\rb^\top\lb\bm{C}^{(i)}~\bm{R}^{(i)}\rb-b^2\bm{I}_r\|_\text{F}^2 + \|\lb\bm{C}^{(i)}~\bm{P}^{(i)}\rb^\top\lb\bm{C}^{(i)}~\bm{P}^{(i)}\rb-b^2\bm{I}_r\|_\text{F}^2\right).
        \end{split}
    \end{equation}
    
    Combining the upper bound for $Q_{\text{D},2} + Q_{\text{R},2} + Q_{\text{P},2} + Q_{\text{C},2}$ and the lower bound for $Q_{\text{D},1} + Q_{\text{R},1} + Q_{\text{P},1} + Q_{\text{C},1}$, we have
    \begin{equation}\label{eq:Ebound}
        \begin{split}
            E^{(i+1)} & \leq \left(1-2\eta\left(\frac{\alpha\sigma_1^{4/3}}{140\kappa^2}-32c_2\sigma_1^{4/3}-\frac{C\alpha\sigma_1^{4/3}}{\kappa^2}\right)\right)E^{(i)}\\
            & + \left(92\sigma_1^{4/3}\eta^2-\frac{\eta}{2\beta}\right)\|\nabla\mathcal{L}(\bm{A}^{(i)})-\nabla\mathcal{L}(\bm{A}^*)\|_\text{F}^2\\
            & + \left(\frac{\eta}{2c_2}+92\sigma_1^{4/3}\eta^2\right)\xi^2(r,d)\\
            & - \left(\frac{\eta\alpha\sigma_1^{2/3}}{5\kappa^2}-\frac{52\eta^2\alpha^2\sigma_1^2}{\kappa^4}\right)(\|\bm{U}^{(i)\top}\bm{U}^{(i)}-b^2\bm{I}_{r}\|_\text{F}^2 + \|\bm{V}^{(i)\top}\bm{V}^{(i)}-b^2\bm{I}_{r}\|_\text{F}^2).
        \end{split}
    \end{equation}
    Letting $c_2=C\alpha\kappa^{-2}$, $\eta=\eta_0\beta^{-1}\sigma_1^{-4/3}$ with $\eta_0\leq 1/260$, since $E^{(i)}\leq C\sigma_1^{2/3}$, the coefficients of the second, third, and fourth term in \eqref{eq:Ebound} are
    \begin{equation}
        92\sigma_1^{4/3}\eta^2-0.5\eta\beta^{-1} = (92\eta_0-0.5)\eta_0\sigma_1^{-4/3}\beta^{-2}\leq 0,
    \end{equation}
    \begin{equation}
        \frac{\eta}{2c_2}+92\sigma_1^{4/3}\eta^2\leq \eta_0\alpha^{-1}\beta^{-1}\sigma_1^{-4/3}\kappa^2+92\sigma_1^{-4/3}\beta^{-2}\leq C\alpha^{-1}\beta^{-1}\sigma_1^{-4/3}\kappa^2,
    \end{equation}
    and
    \begin{equation}
        \frac{\eta\alpha\sigma_1^{2/3}}{5\kappa^2}-\frac{52\eta^2\alpha^2\sigma_1^2}{\kappa^4}=0.2\eta_0\alpha\beta^{-1}\sigma_1^{-2/3}\kappa^{-2}(1-260\eta_0\alpha\beta^{-1}\kappa^{-2})\geq0,
    \end{equation}
    as $\alpha\beta^{-1}\leq 1$, $\kappa^{-2}\leq 1$ and $\eta_0\leq 1/260$.
    Therefore, we can derive the following recursive inequality
    \begin{equation}\label{eq:recursive}
        \begin{split}
            & E^{(i+1)} \leq \left(1-C\eta_0\alpha\beta^{-1}\kappa^{-2}\right)E^{(i)} + C\kappa^2\alpha^{-2}\sigma_1^{-4/3}\xi^2(r,d).
        \end{split}
    \end{equation}
    By induction, we have that for any $i=1,2,\dots$,
    \begin{equation}
        E^{(i)} \leq (1-C\eta_0\alpha\beta^{-1}\kappa^{-2})^iE^{(0)} + C\kappa^2\alpha^{-2}\sigma_1^{-4/3}\xi^2(r,d).
    \end{equation}
    For the error bound of $\|\bm{A}^{(i)}-\bm{A}^*\|_\text{F}$, by Lemma \ref{lemma:errorbound},
    \begin{equation}
        \begin{split}
            & \|\bm{A}^{(i)}-\bm{A}^*\|_\text{F}^2\leq C\sigma_1^{4/3}E^{(i)} \\
            \leq & C\sigma_1^{4/3}(1-C\eta_0\alpha\beta^{-1}\kappa^{-2})^iE^{(0)} + C\kappa^2\alpha^{-2} \xi^2(r,d)\\
            \leq & C\kappa^2(1-C\eta_0\alpha\beta^{-1}\kappa^{-2})^i\|\bm{A}^{(0)}-\bm{A}^*\|_\text{F}^2 + C\kappa^2\alpha^{-2}\xi^2(r,d).
        \end{split}
    \end{equation}~\\

    \noindent\textit{Step 5.} (Verification of conditions)
    
    Finally, we show that conditions $E^{(i)}\leq C\sigma_1^{2/3}\alpha\beta^{-1}\kappa^{-2}$ and \eqref{eq:UVDopbound} hold. 
    
    Since $\lb\bm{C}^{(0)}~\bm{R}^{(0)}\rb^\top\lb\bm{C}^{(0)}~\bm{R}^{(0)}\rb=\bm{I}_r$ and $\lb\bm{C}^{(0)}~\bm{P}^{(0)}\rb^\top\lb\bm{C}^{(0)}~\bm{P}^{(0)}\rb=\bm{I}_r$, by Lemma \ref{lemma:errorbound} and initialization bound $\|\bm{A}^{(0)}-\bm{A}^*\|_\text{F}\leq C\sigma_{r}\alpha^{1/2}\beta^{-1/2}$, we have
    \begin{equation}
        E^{(0)}  \leq (C\sigma_1^{-4/3}+C\sigma_1^{2/3}\sigma_{r}^{-2})\|\bm{A}^{(0)}-\bm{A}^*\|_\text{F}^2\leq C  \sigma_1^{-4/3}\kappa^2\|\bm{A}^{(0)}-\bm{A}^*\|_\text{F}^2\leq C\sigma_1^{2/3}\alpha\beta^{-1}\kappa^{-2}.
    \end{equation}
    Based on the recursive relationship in \eqref{eq:recursive}, by induction it is easy to check that $E^{(i)}\leq C\sigma_1^{2/3}\alpha\beta^{-1}\kappa^{-2}$ for all $i\geq 1$. In other words, as $\alpha\beta^{-1}\leq 1$ and $\kappa^{-2}\leq 1$, we have $E^{(i)}\leq Cb^2$ for all $i\geq 1$, which further implies that
    \begin{equation}
        \begin{split}
            & \|\lb\bm{C}^{(i)}~\bm{R}^{(i)}\rb\|_\text{op} \leq \|\lb\bm{C}^{*}~\bm{R}^{*}\rb\bm{O}_1^{(i)}\|_\text{op} + \|\lb\bm{C}^{(i)}~\bm{R}^{(i)}\rb-\lb\bm{C}^{*}~\bm{R}^{*}\rb\bm{O}_1^{(i)}\|_\text{op}\\
            \leq & b + \|\lb\bm{C}^{(i)}~\bm{R}^{(i)}\rb-\lb\bm{C}^{*}~\bm{R}^{*}\rb\bm{O}_1^{(i)}\|_\text{F} \leq (1+\sqrt{C})b\leq (1+c_b)b,
        \end{split}
    \end{equation}
    and 
    \begin{equation}
        \begin{split}
            & \|\bm{D}^{(i)}\|_\text{op} \leq \|\bm{O}_1^{(i)\top}\bm{D}^*\bm{O}_2^{(i)}\|_\text{op} + \|\bm{D}^{(i)}-\bm{O}_1^{(i)\top}\bm{D}^*\bm{O}_2^{(i)}\|_\text{op}\\
            \leq & \sigma_1b^{-2} + \|\bm{D}^{(i)}-\bm{O}_1^{(i)\top}\bm{D}^*\bm{O}_2^{(i)}\|_\text{F} \leq (1+c_b)\sigma_1b^{-2},
        \end{split}
    \end{equation}
    which completes the deterministic analysis.

\end{proof}

\subsection{Auxiliary lemmas}
    
    The first lemma follows from Lemma E.3 in \citet{han2020optimal} with the tensor order changed from 3 to 2.
    
    \begin{lemma}\label{lemma:perturb}
        Suppose that $\bm{A}^*=\lb\bm{C}^*~\bm{R}^*\rb\bm{D}^*\lb\bm{C}^*~\bm{P}^*\rb^\top$, $\bm{A}=\lb\bm{C}~\bm{R}\rb\bm{D}\lb\bm{C}~\bm{P}\rb^\top$ with $\bm{D}^*,\bm{D}\in\mathbb{R}^{r\times r}$, $\bm{C}^*,\bm{C}\in\mathbb{R}^{p\times d}$, $\bm{R},\bm{R}^*,\bm{P},\bm{P}^*\in\mathbb{R}^{p\times (r-d)}$, $\bm{O}_c\in\mathbb{O}^{d\times d}$, $\bm{O}_r,\bm{O}_p\in\mathbb{O}^{(r-d)\times (r-d)}$, $\bm{O}_1=\textup{diag}(\bm{O}_c,\bm{O}_r)$, and $\bm{O}_2=\textup{diag}(\bm{O}_c,\bm{O}_p)$. Let
        \begin{equation}
            \begin{split}
                &\bm{A}_1 = \lb\bm{C}^*~\bm{R}^*\rb\bm{O}_1\bm{D}\lb\bm{C}~\bm{P}\rb^\top,~~\bm{A}_2 = \lb\bm{C}~\bm{R}\rb\bm{D}\bm{O}_2^\top\lb\bm{C}^*~\bm{P}^*\rb^\top,\\
                &\bm{H}_1=\lb\bm{C}^*~\bm{R}^*\rb-\lb\bm{C}~\bm{R}\rb\bm{O}_1^\top,~~\bm{H}_2=\lb\bm{C}^*~\bm{P}^*\rb-\lb\bm{C}~\bm{P}\rb\bm{O}_2^\top,\\
                &\bm{A}_{d}=\lb\bm{C}~\bm{R}\rb\bm{O}_1^\top\bm{D}^*\bm{O}_2\lb\bm{C}~\bm{P}\rb^{\top},~~\bm{H}_{d}=\bm{D}^*-\bm{O}_1\bm{D}\bm{O}_2^\top.
            \end{split}
        \end{equation}
        Then, defining  
        \begin{equation}
            \bm{H} = \bm{A}^*-\bm{A}_{d} - (\bm{A}_1-\bm{A}) - (\bm{A}_2-\bm{A}),
        \end{equation}
        we have 
        \begin{equation}
            \|\bm{H}\|_\textup{F} \leq B_2B_3 + 2B_1B_3 + B_3^{3/2},
        \end{equation}
        where
        \begin{equation}
            \begin{split}
                B_1&:=\max\{\|[\bm{C},\bm{R}]\|_\textup{op},\|[\bm{C}^*,\bm{R}^*]\|_\textup{op},\|[\bm{C},\bm{P}]\|_\textup{op},\|[\bm{C}^*,\bm{P}^*]\|_\textup{op}\},\\
                B_2&:=\max\{\|\bm{D}\|_\textup{op},\|\bm{D}^*\|_\textup{op}\},\\
                B_3&:=\max\{\|\bm{H}_{d}\|_\textup{F}^2,\|\bm{H}_1\|_\textup{F}^2,\|\bm{H}_2\|_\textup{F}^2\}.
            \end{split}
        \end{equation}
    \end{lemma}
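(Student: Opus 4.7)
The plan is to prove this by identifying $\bm{H}$ as a sum of purely higher-order perturbation terms and then bounding each one via submultiplicativity of the Frobenius/operator norms. First, I will introduce the shorthand $\bm{U}^* = [\bm{C}^*~\bm{R}^*]$, $\bm{U} = [\bm{C}~\bm{R}]$, $\bm{V}^* = [\bm{C}^*~\bm{P}^*]$, $\bm{V} = [\bm{C}~\bm{P}]$ so that $\bm{A}^* = \bm{U}^* \bm{D}^* \bm{V}^{*\top}$ and $\bm{A} = \bm{U}\bm{D}\bm{V}^\top$, and rewrite the three definitional identities as
\[
\bm{U}^* = \bm{U}\bm{O}_1^\top + \bm{H}_1,\quad \bm{V}^* = \bm{V}\bm{O}_2^\top + \bm{H}_2,\quad \bm{D}^* = \bm{O}_1\bm{D}\bm{O}_2^\top + \bm{H}_d.
\]
Substituting into $\bm{A}^*$ and fully expanding the triple product, the zeroth-order term collapses to $\bm{U}\bm{D}\bm{V}^\top = \bm{A}$ (using $\bm{O}_1^\top\bm{O}_1 = \bm{I}$ and $\bm{O}_2^\top\bm{O}_2 = \bm{I}$), the three first-order terms are exactly $\bm{A}_2 - \bm{A} = \bm{U}\bm{D}\bm{O}_2^\top\bm{H}_2^\top$, $\bm{A}_d - \bm{A} = \bm{U}\bm{O}_1^\top\bm{H}_d\bm{O}_2\bm{V}^\top$, and $\bm{A}_1 - \bm{A} = \bm{H}_1\bm{O}_1\bm{D}\bm{V}^\top$.

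Second, subtracting these from $\bm{A}^* - \bm{A}$ leaves precisely the four remaining cross terms:
\[
\bm{H} = \bm{U}\bm{O}_1^\top\bm{H}_d\bm{H}_2^\top + \bm{H}_1\bm{O}_1\bm{D}\bm{O}_2^\top\bm{H}_2^\top + \bm{H}_1\bm{H}_d\bm{O}_2\bm{V}^\top + \bm{H}_1\bm{H}_d\bm{H}_2^\top,
\]
each of which contains at least two perturbation factors. This algebraic bookkeeping is really the only substantive step; once it is in hand the rest is routine.

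Third, I will bound each of the four summands by the triangle inequality together with $\|AB\|_\text{F} \leq \|A\|_\text{op}\|B\|_\text{F}$, using $\|\bm{H}_1\|_\text{F}, \|\bm{H}_2\|_\text{F}, \|\bm{H}_d\|_\text{F} \leq \sqrt{B_3}$, dominating the operator norm of each $\bm{H}$ factor by its Frobenius norm, and noting that $\bm{O}_1, \bm{O}_2$ are orthogonal (so they drop out of operator norms). This gives $B_1 B_3$ for the first term, $B_2 B_3$ for the second, $B_1 B_3$ for the third, and $B_3^{3/2}$ for the fourth, summing to the claimed $B_2 B_3 + 2 B_1 B_3 + B_3^{3/2}$.

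The only real obstacle is keeping the algebra organized during the expansion of $\bm{A}^*$: with three factors each splitting into two pieces there are eight terms, and one must verify that the four ``kept'' terms line up correctly with $\bm{A}$, $\bm{A}_1 - \bm{A}$, $\bm{A}_2 - \bm{A}$, and $\bm{A}_d - \bm{A}$ (rather than, for instance, picking up leftover $\bm{O}_1, \bm{O}_2$ factors that spoil the identification). Once that identification is confirmed, the norm bounds are immediate and the constants $1, 2, 1$ in the final inequality follow from counting terms.
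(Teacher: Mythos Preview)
Your proof is correct and essentially identical to the paper's: both substitute the three perturbation identities into $\bm{A}^*$, identify the zeroth- and first-order pieces with $\bm{A}$, $\bm{A}_1-\bm{A}$, $\bm{A}_2-\bm{A}$, $\bm{A}_d-\bm{A}$, and bound the same four residual cross terms by submultiplicativity. The only cosmetic difference is that the paper first splits on $\bm{D}^*=\bm{O}_1\bm{D}\bm{O}_2^\top+\bm{H}_d$ (writing $\bm{A}^*=T_1+T_2$) and then expands the outer factors, whereas you expand all three factors at once; the resulting remainder $\bm{H}=\bm{H}_1\bm{O}_1\bm{D}\bm{O}_2^\top\bm{H}_2^\top+\bm{U}\bm{O}_1^\top\bm{H}_d\bm{H}_2^\top+\bm{H}_1\bm{H}_d\bm{O}_2\bm{V}^\top+\bm{H}_1\bm{H}_d\bm{H}_2^\top$ and the termwise bounds are exactly the same.
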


    \begin{proof}
        
        Since $\bm{D}^*=\bm{O}_1\bm{D}\bm{O}_2^\top+\bm{H}_{d}$, we have
        \begin{equation}
            \bm{A}^*=\underbrace{\lb\bm{C}^*~\bm{R}^*\rb\bm{O}_1\bm{D}\bm{O}_2^\top\lb\bm{C}^*~\bm{P}^*\rb^\top}_{T_1} + \underbrace{\lb\bm{C}^*~\bm{R}^*\rb\bm{H}_{\bm{D}}\lb\bm{C}^*~\bm{P}^*\rb^{\top}}_{T_2}.
        \end{equation}
        For $T_1$, since $\lb\bm{C}^*~\bm{R}^*\rb=\lb\bm{C}~\bm{R}\rb\bm{O}_1^\top + \bm{H}_1$ and $\lb\bm{C}^*~\bm{P}^*\rb=\lb\bm{C}~\bm{P}\rb\bm{O}_2^\top + \bm{H}_2$, we have
        \begin{equation}
            \begin{split}
                T_1 & = (\lb\bm{C}~\bm{R}\rb+\bm{H}_1\bm{O}_1)\bm{D}(\lb\bm{C}~\bm{P}\rb+\bm{H}_2\bm{O}_2)^\top\\
                & = \lb\bm{C}~\bm{R}\rb\bm{D}\lb\bm{C}~\bm{P}\rb^\top + \bm{H}_1\bm{O}_1\bm{D}\lb\bm{C}~\bm{P}\rb^\top + \lb\bm{C}~\bm{R}\rb\bm{D}\bm{O}_2^\top\bm{H}_2^\top + \bm{H}^{(1)}_\varepsilon \\
                & = \lb\bm{C}~\bm{R}\rb\bm{D}\lb\bm{C},\bm{P}\rb^\top + (\lb\bm{C}^*~\bm{R}^*\rb\bm{O}_1-\lb\bm{C}~\bm{R}\rb)\bm{D}\lb\bm{C}~\bm{P}\rb^\top\\ & + \lb\bm{C}~\bm{R}\rb\bm{D}(\lb\bm{C}^*~\bm{P}^*\rb\bm{O}_2-\lb\bm{C}~\bm{P}\rb)^\top + \bm{H}^{(1)}_\varepsilon \\
                & = \lb\bm{C}^*~\bm{R}^*\rb\bm{O}_1\bm{D}\lb\bm{C}~\bm{P}\rb^\top + \lb\bm{C}~\bm{R}\rb\bm{D}\bm{O}_2^\top\lb\bm{C}^*~\bm{P}^*\rb^\top - \lb\bm{C}~\bm{R}\rb\bm{D}\lb\bm{C}~\bm{P}\rb^\top + \bm{H}^{(1)}_\varepsilon,
            \end{split}
        \end{equation}
        where $\bm{H}^{(1)}_\varepsilon=\bm{H}_1\bm{O}_1\bm{D}\bm{O}_2^\top\bm{H}_2^\top$.
        For $T_2$, we have
        \begin{equation}
            \begin{split}
                T_2 & =(\bm{H}_1+\lb\bm{C}~\bm{R}\rb\bm{O}_1^\top)\bm{H}_{d}(\bm{H}_2+\lb\bm{C}~\bm{P}\rb\bm{O}_2^\top)^\top\\
                & = \lb\bm{C}~\bm{R}\rb\bm{O}_1^\top\bm{D}^*\bm{O}_2\lb\bm{C}~\bm{P}]^\top - \lb\bm{C}~\bm{R}\rb\bm{D}\lb\bm{C}~\bm{P}\rb^\top + \bm{H}_\varepsilon^{(2)},
            \end{split}
        \end{equation}
        where $\bm{H}_\varepsilon^{(2)}=\bm{H}_1\bm{H}_{d}\bm{H}_2^\top+\lb\bm{C}~\bm{R}\rb\bm{O}_1^\top\bm{H}_{d}\bm{H}_2^\top+\bm{H}_1\bm{H}_{d}\bm{O}_2\lb\bm{C}~\bm{P}\rb^\top$. Hence, $\bm{H}=\bm{H}_\varepsilon^{(1)}+\bm{H}_\varepsilon^{(2)}$ and
        \begin{equation}
            \begin{split}
                \|\bm{H}\|_\text{F} & \leq \|\bm{H}_1\bm{O}_1\bm{D}\bm{O}_2^\top\bm{H}_2^\top\|_\text{F} + \|\lb\bm{C}~\bm{R}\rb\bm{O}_1^\top\bm{H}_{d}\bm{H}_2^\top\|_\text{F}\\
                & + \|\bm{H}_1\bm{H}_{d}\bm{O}_2\lb\bm{C}~\bm{P}\rb^\top\|_\text{F} + \|\bm{H}_1\bm{H}_{d}\bm{H}_2^\top\|_\text{F}\\
                & \leq B_2B_3 + 2B_1B_3 + B_3^{3/2}.
            \end{split}
        \end{equation}
        
    \end{proof}
    
    The following lemma shares similar ideas and techniques as those of Lemma E.2 in \citet{han2020optimal} with the common subspace structure included.
    \begin{lemma}\label{lemma:errorbound}
        Suppose that $\bm{A}^*=\lb\bm{C}^*~\bm{R}^*\rb\bm{D}^*\lb\bm{C}^*~\bm{P}^*\rb^{\top}$, $\lb\bm{C}^*~\bm{R}^*\rb^{\top}\lb\bm{C}^*~\bm{R}^{*}\rb=\bm{I}_r$, $\lb\bm{C}^*~\bm{P}^*\rb^{\top}\lb\bm{C}^*~\bm{P}^{*}\rb=\bm{I}_r$, $\sigma_1=\|\bm{A}^*\|_\textup{op}$, and $\sigma_{r}=\sigma_{r}(\bm{A}^*)$. Let $\bm{A}=[\bm{C},\bm{R}]\bm{D}[\bm{C},\bm{P}]^{\top}$ with $\|\lb\bm{C}~\bm{R}\rb\|_\textup{op}\leq (1+c_b)b$, $\|\lb\bm{C}~\bm{P}\rb\|_\textup{op}\leq (1+c_b)b$ and $\|\bm{D}\|_\textup{op}\leq(1+c_b)\sigma_1/b^2$ for some constant $c_b>0$. Define
        \begin{equation}
            \begin{split}
                E:=&\min_{\substack{\bm{O}_c\in\mathbb{O}^{d\times d}\\\bm{O}_r,\bm{O}_p\in\mathbb{O}^{(r-d)\times(r-d)}}}\big(\|\lb\bm{C}^{(i)}~\bm{R}^{(i)}\rb-\lb\bm{C}^*~\bm{R}^*\rb\textup{diag}(\bm{O}_c,\bm{O}_r)\|_\textup{F}^2+\|\lb\bm{C}^{(i)}~\bm{P}^{(i)}\rb\\
                &-\lb\bm{C}^*~\bm{P}^*\rb\textup{diag}(\bm{O}_c,\bm{O}_p)\|_\textup{F}^2+ \|\bm{D}^{(i)}-\textup{diag}(\bm{O}_c,\bm{O}_r)^\top\bm{D}^*\textup{diag}(\bm{O}_c,\bm{O}_p)\|_\textup{F}^2\big).
            \end{split}
        \end{equation}
        Then, we have
        \begin{equation}
            \begin{split}
                E & \leq \left(4b^{-4}+\frac{8b^2}{\sigma_{r^*}^2}C_b\right)\|\bm{A}-\bm{A}^*\|_\textup{F}^2\\
                + & 2b^{-2}C_b\left(\|\lb\bm{C}~\bm{R}\rb^\top\lb\bm{C}~\bm{R}\rb-b^2\bm{I}_r\|_\textup{F}^2+\|\lb\bm{C}~\bm{P}\rb^\top\lb\bm{C}~\bm{P}\rb-b^2\bm{I}_r\|_\textup{F}^2\right),
            \end{split}
        \end{equation}
        and
        \begin{equation}
            \|\bm{A}-\bm{A}^*\|_\textup{F}^2 \leq 3b^4[1+4\sigma_1^2b^{-6}(1+c_b)^4]E,
        \end{equation}
        where $C_b=1+4\sigma_1^2b^{-6}((1+c_b)^4+(1+c_b)^2(2+c_b)^2/2)$.
    \end{lemma}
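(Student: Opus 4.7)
The lemma bundles two bounds that go in opposite directions; I would prove them separately, starting with the easier upper bound on $\|\bm{A}-\bm{A}^*\|_\text{F}^2$ in terms of $E$, and then tackling the harder lower bound on $E$. Throughout, write $\bm{U}_1 = \lb\bm{C}~\bm{R}\rb$, $\bm{U}_2 = \lb\bm{C}~\bm{P}\rb$, $\bm{U}_1^* = \lb\bm{C}^*~\bm{R}^*\rb$, $\bm{U}_2^* = \lb\bm{C}^*~\bm{P}^*\rb$, fix the rotations $\bm{O}_c,\bm{O}_r,\bm{O}_p$ that attain the minimum defining $E$, and set $\bm{O}_1 = \textup{diag}(\bm{O}_c,\bm{O}_r)$, $\bm{O}_2 = \textup{diag}(\bm{O}_c,\bm{O}_p)$, $\bm{H}_1 = \bm{U}_1 - \bm{U}_1^*\bm{O}_1$, $\bm{H}_2 = \bm{U}_2 - \bm{U}_2^*\bm{O}_2$, $\bm{H}_D = \bm{D} - \bm{O}_1^\top\bm{D}^*\bm{O}_2$, so that $E = \|\bm{H}_1\|_\text{F}^2 + \|\bm{H}_2\|_\text{F}^2 + \|\bm{H}_D\|_\text{F}^2$.

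\textbf{Second inequality.} The plan is to insert and subtract rotated copies of $\bm{U}_1^*,\bm{U}_2^*,\bm{D}^*$ and obtain the telescoping identity
\begin{equation*}
\bm{A}-\bm{A}^* \;=\; \bm{H}_1\bm{D}\bm{U}_2^\top \;+\; \bm{U}_1^*\bm{O}_1\bm{D}\bm{H}_2^\top \;+\; \bm{U}_1^*\bm{O}_1\bm{H}_D\bm{O}_2^\top\bm{U}_2^{*\top}.
\end{equation*}
Apply the triangle inequality and the trivial operator-norm bounds $\|\bm{U}_1^*\|_\textup{op}=\|\bm{U}_2^*\|_\textup{op}=1$, $\|\bm{U}_1\|_\textup{op},\|\bm{U}_2\|_\textup{op}\leq (1+c_b)b$, $\|\bm{D}\|_\textup{op}\leq (1+c_b)\sigma_1/b^2$, to obtain each summand in terms of one of $\|\bm{H}_1\|_\text{F},\|\bm{H}_2\|_\text{F},\|\bm{H}_D\|_\text{F}$. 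Squaring, bounding $(x+y+z)^2$ by $3(x^2+y^2+z^2)$, and collecting gives the claimed constant $3b^4[1+4\sigma_1^2 b^{-6}(1+c_b)^4]$ after grouping the dominant coefficient $\sigma_1^2 b^{-2}(1+c_b)^4$.

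\textbf{First inequality.} This direction is the essential content and the main obstacle, because it reconstructs the five factors from the product $\bm{A}$. I would proceed via the polar decomposition $\bm{U}_i = \tilde{\bm{Q}}_i \bm{S}_i$ with $\tilde{\bm{Q}}_i \in \mathbb{O}^{p\times r}$ and $\bm{S}_i = (\bm{U}_i^\top\bm{U}_i)^{1/2}$ for $i=1,2$. The balance term $\|\bm{U}_i^\top \bm{U}_i - b^2\bm{I}_r\|_\text{F}^2 = \|\bm{S}_i^2 - b^2\bm{I}_r\|_\text{F}^2$ controls $\|\bm{S}_i - b\bm{I}_r\|_\text{F}^2$ up to a factor of $b^{-2}$, pushing $b^{-1}\bm{U}_i$ close to an orthonormal frame $\tilde{\bm{Q}}_i$. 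Then I would choose the optimal rotations $\bm{O}_1,\bm{O}_2$ by taking the SVDs of $\tilde{\bm{Q}}_i^\top \bm{U}_i^*$ (so that $\|\bm{H}_i\|_\text{F}^2$ is controlled by $b^{-2}$ times the squared $\sin\Theta$ distance between $\mathcal{M}(\bm{U}_i)$ and $\mathcal{M}(\bm{U}_i^*)$, plus $\|\bm{S}_i - b\bm{I}_r\|_\text{F}^2$). A Wedin-type subspace perturbation bound, applied to the rank-$r$ matrices $\bm{A}$ and $\bm{A}^*$, yields the $\sin\Theta$ distance bound $\lesssim \|\bm{A}-\bm{A}^*\|_\text{F}/\sigma_r$, producing the $b^2\sigma_r^{-2}$ factor in the statement. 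Finally, $\bm{H}_D$ is handled by writing $\bm{U}_1^* \bm{O}_1\bm{H}_D \bm{O}_2^\top \bm{U}_2^{*\top}$ as $\bm{A}-\bm{A}^*$ minus the two error terms that appeared in the decomposition above, and inverting via $\bm{U}_i^{*\top}\bm{U}_i^* = \bm{I}_r$; this yields $\|\bm{H}_D\|_\text{F}^2 \leq 4b^{-4}\|\bm{A}-\bm{A}^*\|_\text{F}^2$ plus cross terms absorbed into $C_b$ using $\|\bm{U}_i\|_\textup{op},\|\bm{D}\|_\textup{op}$ bounds.

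\textbf{Main obstacle.} The delicate step is coupling the two error sources coherently: the subspace misalignment (which naturally scales with $\|\bm{A}-\bm{A}^*\|_\text{F}/\sigma_r$) and the scale deviation of $\bm{U}_i$ from the orthonormal frame (which scales with $b^{-1}$ times the balance terms). A clean choice of the polar factor as intermediary, and careful tracking of the operator-norm constants $(1+c_b)^k$ at each insertion/subtraction, is what yields the composite constant $C_b = 1 + 4\sigma_1^2 b^{-6}[(1+c_b)^4 + (1+c_b)^2(2+c_b)^2/2]$ rather than a looser bound; the asymmetry in the exponents of $(1+c_b)$ comes from distinguishing the term carrying $\bm{D}$ (hence $\sigma_1/b^2$) from the term carrying only $\bm{U}_i$ (hence $b$), as in Lemma \ref{lemma:perturb}.
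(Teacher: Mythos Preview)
Your proposal is correct and follows essentially the same route as the paper: for the second inequality you telescope $\bm{A}-\bm{A}^*$ into three terms and bound via operator norms (the paper expands $\bm{A}^*$ into four terms but the mechanism is identical); for the first inequality you separate the scale error via the polar factor and the subspace error via a $\sin\Theta$ bound, exactly as the paper does with the SVD $\lb\bm{C}~\bm{R}\rb=\widetilde{\bm{U}}\widetilde{\bm{D}}\widetilde{\bm{V}}^\top$, the inequality $\|\bm{A}-\bm{A}^*\|_\text{F}^2\geq\|\widetilde{\bm{U}}_\perp^\top\bm{A}^*\|_\text{F}^2\geq\sigma_r^2\|\widetilde{\bm{U}}_\perp^\top\bm{U}_1^*\|_\text{F}^2$, and Lemma~1 of \citet{cai2018rate} to pass from $\sin\Theta$ to Procrustes distance. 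One small caution: the lemma's stated constants are computed with $\|\bm{U}_i^*\|_\textup{op}=b$ (the proof treats the starred factors as scaled by $b$, despite the hypothesis reading $\bm{I}_r$), so use that normalization rather than $\|\bm{U}_i^*\|_\textup{op}=1$ when tracking the coefficients.
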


    \begin{proof}
        Denote $\bm{O}_1=\textup{diag}(\bm{O}_c,\bm{O}_r)$ and $\bm{O}_2=\textup{diag}(\bm{O}_c,\bm{O}_p)$.
        Note that $\|\bm{O}_1\bm{D}\bm{O}_2^\top-\bm{D}^*\|_\text{F}=b^{-2}\|\lb\bm{C}^*~\bm{R}^*\rb\bm{O}_1\bm{D}\bm{O}_2^\top\lb\bm{C}^*~\bm{P}^*\rb^{\top}-\lb\bm{C}^*~\bm{R}^*\rb\bm{D}^*\lb\bm{C}^*~\bm{P}^*\rb^{\top}\|_\text{F}$. We have the decomposition
        \begin{equation}
            \begin{split}
                &\lb\bm{C}^*~\bm{R}^*\rb\bm{O}_1\bm{D}\bm{O}_2^\top\lb\bm{C}^*~\bm{P}^*\rb^{\top}-\lb\bm{C}^*~\bm{R}^*\rb\bm{D}^*\lb\bm{C}^*~\bm{P}^*\rb^{\top}\\
                =&(\lb\bm{C}~\bm{R}\rb+\lb\bm{C}^*~\bm{R}^*\rb\bm{O}_1-\lb\bm{C}~\bm{R}\rb)\bm{D}(\lb\bm{C}~\bm{P}\rb+\lb\bm{C}^*~\bm{P}^*\rb\bm{O}_2-\lb\bm{C}~\bm{P}\rb)^\top
                -\lb\bm{C}^*~\bm{R}^*\rb\bm{D}^*\lb\bm{C}^*~\bm{P}^*\rb^{\top}\\
                =&(\lb\bm{C}~\bm{R}\rb\bm{D}\lb\bm{C}~\bm{P}\rb^\top-\lb\bm{C}^*~\bm{R}^*\rb\bm{D}^*\lb\bm{C}^*~\bm{P}^*\rb^{\top})+(\lb\bm{C}^*~\bm{R}^*\rb\bm{O}_1-\lb\bm{C}~\bm{R}\rb)\bm{D}\lb\bm{C}~\bm{P}\rb^\top\\
                +&\lb\bm{C}~\bm{R}\rb\bm{D}(\lb\bm{C}^*~\bm{P}^*\rb\bm{O}_2-\lb\bm{C}~\bm{P}\rb)^\top+(\lb\bm{C}^*~\bm{R}^*\rb\bm{O}_1-\lb\bm{C}~\bm{R}\rb)\bm{D}(\lb\bm{C}^*~\bm{P}^*\rb\bm{O}_2-\lb\bm{C}~\bm{P}\rb)^\top.
            \end{split}
        \end{equation}
        By mean inequality,
        \begin{equation}
            \begin{split}
                & \|\bm{D}-\bm{O}_1^\top\bm{D}^*\bm{O}_2\|_\text{F}^2 = \|\bm{O}_1\bm{D}\bm{O}_2^\top-\bm{D}^*\|_\text{F}^2\\
                \leq & 4b^{-4}\|\bm{A}-\bm{A}^*\|_\text{F}^2+4b^{-4}\|\lb\bm{C}^*~\bm{R}^*\rb\bm{O}_1-\lb\bm{C}~\bm{R}\rb\|_\text{F}^2\cdot\|\bm{D}\|_\text{op}^2\cdot\|\lb\bm{C}~\bm{P}\rb\|_\text{op}^2\\
                + & 4b^{-4}\|\lb\bm{C}^*~\bm{P}^*\rb\bm{O}_2-\lb\bm{C}~\bm{P}\rb\|_\text{F}^2\cdot\|\bm{D}\|_\text{op}^2\cdot\|\lb\bm{C}~\bm{R}\rb\|_\text{op}^2\\
                +& 2b^{-4}\|\lb\bm{C}^*~\bm{R}^*\rb\bm{O}_1-\lb\bm{C}~\bm{R}\rb\|_\text{F}^2\cdot\|\lb\bm{C}^*~\bm{P}^*\rb\bm{O}_2-\lb\bm{C}~\bm{P}\rb\|_\text{op}^2\cdot\|\bm{D}\|_\text{op}^2\\
                + & 2b^{-4} \|\lb\bm{C}^*~\bm{R}^*\rb\bm{O}_1-\lb\bm{C}~\bm{R}\rb\|_\text{op}^2\cdot\|\lb\bm{C}^*~\bm{P}^*\rb\bm{O}_2-\lb\bm{C}~\bm{P}\rb\|_\text{F}^2\cdot\|\bm{D}\|_\text{op}^2 \\
                \leq &  4b^{-4}\|\bm{A}-\bm{A}^*\|_\text{F}^2 +  4b^{-4}\left((1+c_b)^4\sigma_1^2b^{-2}+(1+c_b)^2(2+c_b)^2\sigma_1^2b^{-2}/2\right)\\
                &\left(\|\lb\bm{C}^*~\bm{R}^*\rb\bm{O}_1-\lb\bm{C}~\bm{R}\rb\|_\text{F}^2 + \|\lb\bm{C}^*~\bm{P}^*\rb\bm{O}_2-\lb\bm{C}~\bm{P}\rb\|_\text{F}^2\right).
            \end{split}
        \end{equation}
        Hence, it follows that
        \begin{equation}
            \begin{split}
                &E\\ = & \min_{\bm{O}_i,i=1,2}\left(\|\lb\bm{C}^*~\bm{R}^*\rb\bm{O}_1-\lb\bm{C}~\bm{R}\rb\|_\text{F}^2+\|\lb\bm{C}^*~\bm{P}^*\rb\bm{O}_2-\lb\bm{C}~\bm{P}\rb\|_\text{F}^2+\|\bm{D}-\bm{O}_1^\top\bm{D}^*\bm{O}_2\|_\text{F}^2\right)\\
                \leq & 4b^{-4}\|\bm{A}-\bm{A}^*\|_\text{F}^2 + C_b\min_{\bm{O}_i,i=1,2}\left\{\|\lb\bm{C}~\bm{R}\rb-\lb\bm{C}^*~\bm{R}^*\rb\bm{O}_1\|_\text{F}^2+\|\lb\bm{C}~\bm{P}\rb-\lb\bm{C}^*~\bm{P}^*\rb\bm{O}_1\|_\text{F}^2\right\},
            \end{split}
        \end{equation}
        where $C_b=1+4\sigma^2b^{-6}((1+c_b)^4+(1+c_b)^2(2+c_b)^2/2)$.
        
        Let $\widetilde{\bm{U}}\widetilde{\bm{D}}\widetilde{\bm{V}}^\top$ be the SVD of $\lb\bm{C}~\bm{R}\rb$. Then, we have
        \begin{equation}
            \begin{split}
                & \min_{\bm{O}_1}\|\lb\bm{C}~\bm{R}\rb-\lb\bm{C}^*~\bm{R}^*\rb\bm{O}_1\|_\text{F}^2\\ = & \min_{\bm{O}_1}\|\lb\bm{C}~\bm{R}\rb-b\widetilde{\bm{U}}\widetilde{\bm{V}}^\top+b\widetilde{\bm{U}}\widetilde{\bm{V}}^\top-\lb\bm{C}^*~\bm{R}^*\rb\bm{O}_1\|_\text{F}^2\\
                \leq & 2\|\widetilde{\bm{U}}\widetilde{\bm{D}}\widetilde{\bm{V}}^\top-b\widetilde{\bm{U}}\widetilde{\bm{V}}^\top\|_\text{F}^2+2\min_{\bm{O}_1}\|b\widetilde{\bm{U}}\widetilde{\bm{V}}^\top-\lb\bm{C}^*~\bm{R}^*\rb\bm{O}_1\|_\text{F}^2\\
                = & 2\|\widetilde{\bm{D}}-b\bm{I}_r\|_\text{F}^2+2\min_{\bm{O}_1}\|b\widetilde{\bm{U}}-\lb\bm{C}^*~\bm{R}^*\rb\bm{O}_1\|_\text{F}^2.
            \end{split}
        \end{equation}
        Similarly to Lemma E.2 in \citet{han2020optimal}, we have
        \begin{equation}
            \|\widetilde{\bm{D}}-b\bm{I}_r\|_\text{F}^2
            \leq  b^{-2}\|\lb\bm{C}~\bm{R}\rb^\top\lb\bm{C}~\bm{R}\rb-b^2\bm{I}_r\|_\text{F}^2.
        \end{equation}
        Let $\widetilde{\bm{U}}_\perp$ be the perpendicular orthonormal matrix of $\widetilde{\bm{U}}$. As $\widetilde{\bm{U}}$ and $\lb\bm{C}^*~\bm{R}^*\rb/b$ are orthonormal matrices spanning left singular subspaces of $\bm{A}$ and $\bm{A}^*$, we have
        \begin{equation}
            \begin{split}
                \|\bm{A}-\bm{A}^*\|_\text{F}^2 & \geq \|\widetilde{\bm{U}}_\perp^\top(\bm{A}-\bm{A}^*)\|_\text{F}^2=\|\widetilde{\bm{U}}_\perp^\top\bm{A}^*\|_\text{F}^2\\
                & =  \|\widetilde{\bm{U}}_\perp^\top(\lb\bm{C}^*~\bm{R}^*\rb/b)(\lb\bm{C}^*~\bm{R}^*\rb/b)^\top\bm{A}^*\|_\text{F}^2\\
                & \geq \sigma_{r}^2\|\widetilde{\bm{U}}_\perp^\top(\lb\bm{C}^*~\bm{R}^*\rb/b)\|_\text{F}^2.
            \end{split}
        \end{equation}
        By Lemma 1 in \citet{cai2018rate},
        \begin{equation}
            \min_{\bm{O}_1}\|b\widetilde{\bm{U}}-\lb\bm{C}^*~\bm{R}^*\rb\bm{O}_1\|_\text{F}^2
            \leq 2b^2\|\widetilde{\bm{U}}_\perp^\top\lb\bm{C}^*~\bm{R}^*\rb\|_\text{F}^2\leq2b^2\frac{\|\bm{A}-\bm{A}^*\|_\text{F}^2}{\sigma_{r}^2}.
        \end{equation}
        These imply that
        \begin{equation}
            \begin{split}
                E & \leq \left(4b^{-4}+\frac{8b^2}{\sigma_1^2}c_b\right)\|\bm{A}-\bm{A}^*\|_\text{F}^2\\
                & + 2b^{-2}c_b\left(\|\lb\bm{C}~\bm{R}\rb^\top\lb\bm{C}~\bm{R}\rb-b^2\bm{I}_r\|_\text{F}^2+\|\lb\bm{C}~\bm{P}\rb^\top\lb\bm{C}~\bm{P}\rb-b^2\bm{I}_r\|_\text{F}^2\right).
            \end{split}
        \end{equation}
        
        For the second inequality, denote the optimal rotation matrices by
        \begin{equation}
            \begin{split}
                (\bm{O}_1,\bm{O}_2)=\argmin_{\substack{\bm{Q}_k,k=1,2}}&\big\{\|\lb\bm{C}~\bm{R}\rb-\lb\bm{C}^*~\bm{R}^*\rb\bm{Q}_1\|_\text{F}^2\\
                &+\|\lb\bm{C}~\bm{P}\rb-\lb\bm{C}^*~\bm{P}^*\rb\bm{Q}_2\|_\text{F}^2+\|\bm{D}-\bm{Q}_1^\top\bm{D}^*\bm{Q}_2\|_\text{F}^2\big\}.
            \end{split}
        \end{equation}
        Let $\bm{H}_{d}=\bm{D}^*-\bm{O}_1^\top\bm{D}\bm{O}_2$,  $\bm{H}_1=\lb\bm{C}^*~\bm{R}^*\rb-\lb\bm{C}~\bm{R}\rb\bm{O}_1^\top$, and $\bm{H}_2=\lb\bm{C}^*~\bm{P}^*\rb-\lb\bm{C}~\bm{P}\rb\bm{O}_2^\top$. Then,
        \begin{equation}
            \bm{A}^*=(\bm{H}_1+\lb\bm{C}~\bm{R}\rb\bm{O}_1^\top)(\bm{H}_{d}+\bm{O}_1^\top\bm{D}\bm{O}_2)(\bm{H}_2+\lb\bm{C}~\bm{P}\rb\bm{O}_2^\top)^\top
        \end{equation}
        and
        \begin{equation}
            \begin{split}
                & \|\bm{A}^*-\bm{A}\|_\text{F} \\ \leq & \|\lb\bm{C}^*~\bm{R}^*\rb\bm{H}_{d}\lb\bm{C}^*~\bm{P}^*\rb^{\top}\|_\text{F} +
                \|\bm{H}_1\bm{O}_1^\top\bm{D}\bm{O}_2\bm{H}_2^\top\|_\text{F}\\
                + & \|\bm{H}_1\bm{O}_1^\top\bm{D}\lb\bm{C}~\bm{P}\rb^\top\|_\text{F} + \|\lb\bm{C}~\bm{R}\rb\bm{D}\bm{O}_2\bm{H}_2^\top\|_\text{F}\\
                \leq & \|\lb\bm{C}^*~\bm{R}^*\rb\|_\text{op}\cdot\|\lb\bm{C}^*~\bm{P}^*\rb\|_\text{op}\cdot\|\bm{H}_{d}\|_\text{F} + \frac{1}{2}\|\bm{D}\|_\text{op}\cdot\|\bm{H}_1\|_\text{op}\cdot\|\bm{H}_2\|_\text{F}\\
                + &  \frac{1}{2}\|\bm{D}\|_\text{op}\cdot\|\bm{H}_1\|_\text{F}\cdot\|\bm{H}_2\|_\text{op} + \|\bm{H}_1\|_\text{F}\cdot\|\bm{D}\|_\text{op}\cdot\|\lb\bm{C}~\bm{P}\rb\|_\text{op} + \|\lb\bm{C}~\bm{R}\rb\|_\text{op}\cdot\|\bm{D}\|_\text{op}\cdot\|\bm{H}_2\|_\text{F} \\
                \leq & b^2\|\bm{H}_{d}\|_\text{F} + \sigma_1b^{-1}[(1+c_b)^2+(1+c_b)(2+c_b)/2](\|\bm{H}_1\|_\text{F}+\|\bm{H}_2\|_\text{F})\\
                \leq & b^2\|\bm{H}_{d}\|_\text{F}+2\sigma_1b^{-1}(1+c_b)^2(\|\bm{H}_1\|_\text{F}+\|\bm{H}_2\|_\text{F}).
            \end{split}
        \end{equation}
        Hence,
        \begin{equation}
            \|\bm{A}-\bm{A}^*\|_\text{F}^2 \leq 3\left[b^4\|\bm{H}_{d}\|_\text{F}^2+4\sigma_1^2b^{-2}(1+c_b)^4(\|\bm{H}_1\|_\text{F}^2 + \|\bm{H}_2\|_\text{F}^2)\right].
        \end{equation}
        
    \end{proof}

\section{Statistical convergence analysis of gradient descent}\label{append:stat}

In this appendix, we present the stochastic properties of the time series data.

\begin{proof}[Proof of Theorem \ref{thm:stat}]
        
        The proof of Theorem \ref{thm:stat} consists of two steps. In the first step, we show that the RSC, RSS and deviation bound conditions defined in the deterministic computational convergence analysis hold with high probability, and proofs of these conditions are presented in Appendix \ref{sec:B.1}. Given these regularity conditions, it suffices to show the statistical properties of the initial values, which will be discussed in Appendix \ref{sec:B.2}.
        
        By Lemmas \ref{lemma:RSC} and \ref{lemma:deviation}, with probability at least $1-2\exp[-CM_2^2\min(\tau^{-2},\tau^{-4})T]-\exp(-Cp)$, the empirical loss function $\mathcal{L}(\cdot)$ satisfies the RSC-$\alpha_\text{RSC}$ and RSS-$\beta_\text{RSS}$ conditions, and
        \begin{equation}
            \xi(r,d)\lesssim\tau^2M_1\sqrt{\frac{d_\textup{CS}(p,r,d)}{T}}.
        \end{equation}
        
        By Theorem \ref{thm:gd}, we have that, for all $i=1,2,\dots$,
        \begin{equation}
            \begin{split}
                & \|\bm{A}^{(i)}-\bm{A}^*\|_\text{F}^2 \\
                \lesssim & \kappa^2(1-C\eta_0\alpha\beta^{-1}\kappa^{-2})^i\|\bm{A}^{(0)}-\bm{A}^*\|_\text{F}^2 + \kappa^2\alpha^{-1}\beta^{-1}\xi^2(r,d).
            \end{split}
        \end{equation}
        Hence, when
        \begin{equation}
            I\gtrsim\frac{\log(\kappa^2\alpha^{-1}\beta^{-1}\xi^2(r,d))-\log(\|\bm{A}^{(0)}-\bm{A}^*\|_\text{F}^2)}{\log(1-C\eta_0\alpha\beta^{-1}\kappa^{-2})},
        \end{equation}
        the statistical error will absorb the optimization error, so
        \begin{equation}
            \|\bm{A}^{(I)}-\bm{A}^*\|_\text{F}^2\lesssim \kappa^2\alpha^{-1}\beta^{-1}\xi^2(r,d).
        \end{equation}
    
        Moreover, by Lemma \ref{lemma:A0_init}, with probability at least $1-2\exp[-CM_2^2\min(\tau^{-2},\tau^{-4})T]-\exp(-Cp)$,
        \begin{equation}
            \|\bm{A}^{(0)}-\bm{A}^*\|_\text{F}\lesssim \sigma_1^{2/3}\kappa^2 g_{\min}^{-2}\alpha^{-1}\tau^2M_1\sqrt{d_\text{RR}(p,r)/T}.
        \end{equation} 
        Combining these results, we have that when
        \begin{equation}
            I \gtrsim \frac{\log(\kappa^{-2}\sigma_1^{-4/3}g_{\min}^2\alpha_\text{RSC}\beta_\text{RSS}^{-1}[d_\text{CS}(p,r,d)/d_\text{RR}(p,r)])}{\log(1-C\eta_0\alpha_\text{RSC}\beta_\text{RSS}^{-1}\kappa^{-2})},
        \end{equation}
        with probability at least $1-4\exp[-CM_2^2\min(\tau^{-2},\tau^{-4})T]-2\exp(-Cp)$,
        \begin{equation}
            \|\bm{A}^{(I)}-\bm{A}^*\|_\text{F}\lesssim \kappa\alpha_\text{RSC}^{-1}\tau^2M_1\sqrt{\frac{d_\text{CS}(p,r,d)}{T}}.
        \end{equation}
    
\end{proof}

\subsection{Proofs of RSC, RSS and deviation bound}\label{sec:B.1}

    We first prove the restricted strong convexity (RSC) and restricted strong smoothness (RSS) conditions. For the least squares loss function $\mathcal{L}(\bm{A})={(2T)}^{-1}\|\bm{Y}-\bm{A}\bm{X}\|_\text{F}^2$, it is easy to check that for any $\bm{A},\bm{A}'\in\mathbb{R}^{p\times p}$,
    \begin{equation}
        \mathcal{L}(\bm{A})-\mathcal{L}(\bm{A}')-\langle\nabla\mathcal{L}(\bm{A}'),\bm{A}-\bm{A}'\rangle = \frac{1}{2T}\|(\bm{A}-\bm{A}')\bm{X}\|_\text{F}^2=\frac{1}{2T}\sum_{t=0}^{T-1}\|(\bm{A}-\bm{A}')\bm{y}_t\|_2^2.
    \end{equation}
    
    \begin{lemma}\label{lemma:RSC}
        Assume the conditions in Theorem \ref{thm:stat} hold. Suppose that $T\gtrsim M_2^{-2}\max(\tau^4,\tau^2) p$. For any rank-$2r$ matrix $\bm{\Delta}\in\mathbb{R}^{p\times p}$, with probability at least $1-2\exp[-CM_2^2\min(\tau^{-4},\tau^{-2})T]$,
        \begin{equation}
            \alpha_\textup{RSC}\|\bm{\Delta}\|_\textup{F}^2 \leq \frac{1}{T}\sum_{t=0}^{T-1}\|\bm{\Delta}\bm{y}_t\|_2^2\leq \beta_\textup{RSS}\|\bm{\Delta}\|_\textup{F}^2,
        \end{equation}
        where $\alpha_\textup{RSC}=\lambda_{\min}(\bm{\Sigma}_{\bbm{\varepsilon}})/(2\mu_{\max}(\mathcal{A}))$ and $\beta_\textup{RSS}=(3\lambda_{\max}(\bm{\Sigma}_{\bbm{\varepsilon}}))/(2\mu_{\min}(\mathcal{A}))$.
    \end{lemma}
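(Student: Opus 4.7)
The plan is to first establish the bounds at the population level and then control the deviation of the empirical sum from its expectation uniformly over rank-$2r$ matrices on the Frobenius unit sphere.

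First I will compute the population quantity. Since $\{\bm{y}_t\}$ is a stationary VAR(1) process under Assumption \ref{asmp:1}, with innovation covariance $\bm{\Sigma}_{\bbm{\varepsilon}}$, the unconditional covariance $\bm{\Gamma}_y=\mathbb{E}(\bm{y}_t\bm{y}_t^\top)$ satisfies the well-known spectral sandwich
\begin{equation}
\frac{\lambda_{\min}(\bm{\Sigma}_{\bbm{\varepsilon}})}{\mu_{\max}(\mathcal{A})}\leq \lambda_{\min}(\bm{\Gamma}_y)\leq \lambda_{\max}(\bm{\Gamma}_y)\leq \frac{\lambda_{\max}(\bm{\Sigma}_{\bbm{\varepsilon}})}{\mu_{\min}(\mathcal{A})},
\end{equation}
obtained by writing $\bm{y}_t=\sum_{k\geq 0}\bm{A}^{*k}\bbm{\varepsilon}_{t-k}$ and diagonalising in the frequency domain (this is Proposition 2.3 in \cite{basu2015regularized}). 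Hence for any matrix $\bm{\Delta}\in\mathbb{R}^{p\times p}$,
\begin{equation}
2\alpha_\textup{RSC}\|\bm{\Delta}\|_\textup{F}^2\leq \mathbb{E}\|\bm{\Delta}\bm{y}_t\|_2^2=\textup{tr}(\bm{\Delta}\bm{\Gamma}_y\bm{\Delta}^\top)\leq \tfrac{2}{3}\beta_\textup{RSS}\|\bm{\Delta}\|_\textup{F}^2.
\end{equation}
With these population bounds, it suffices to show that with the prescribed probability,
\begin{equation}
\sup_{\bm{\Delta}\in\mathcal{K}(2r)}\left|\frac{1}{T}\sum_{t=0}^{T-1}\|\bm{\Delta}\bm{y}_t\|_2^2-\mathbb{E}\|\bm{\Delta}\bm{y}_t\|_2^2\right|\leq \alpha_\textup{RSC},
\end{equation}
where $\mathcal{K}(2r):=\{\bm{\Delta}\in\mathbb{R}^{p\times p}:\textup{rank}(\bm{\Delta})\leq 2r,\|\bm{\Delta}\|_\textup{F}=1\}$. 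This deviation bound absorbs a one-sided gap of $\alpha_\textup{RSC}$ from below and $\tfrac{1}{3}\beta_\textup{RSS}\geq \alpha_\textup{RSC}$ from above, and yields the claim.

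Next I will handle the uniform deviation by a discretisation argument combined with a concentration inequality for quadratic forms of stationary sub-Gaussian linear processes. Any $\bm{\Delta}\in\mathcal{K}(2r)$ admits a factorisation $\bm{\Delta}=\bm{U}\bm{V}^\top$ with $\bm{U},\bm{V}\in\mathbb{R}^{p\times 2r}$ and $\|\bm{U}\|_\textup{F}\|\bm{V}\|_\textup{F}\leq 1$. A standard volume argument shows that $\mathcal{K}(2r)$ admits an $\epsilon$-net $\mathcal{N}_\epsilon$ of cardinality at most $(C/\epsilon)^{Cpr}$, so the log-covering number is of order $pr$ rather than $p^2$. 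For a fixed $\bm{\Delta}\in\mathcal{N}_\epsilon$, I will apply a Hanson--Wright type inequality for the quadratic form $T^{-1}\sum_{t=0}^{T-1}\|\bm{\Delta}\bm{y}_t\|_2^2$ in the sub-Gaussian innovations $\{\bbm{\zeta}_t\}$ via the MA$(\infty)$ representation: writing $\textup{vec}(\bm{y}_{0:T-1})=\bm{\mathcal{M}}\textup{vec}(\bbm{\zeta}_{\cdot})$ for a linear operator $\bm{\mathcal{M}}$ whose operator norm is controlled by $\mu_{\min}^{-1/2}(\mathcal{A})\lambda_{\max}^{1/2}(\bm{\Sigma}_{\bbm{\varepsilon}})$, the quadratic form becomes a Gaussian/sub-Gaussian chaos whose operator and Frobenius norms are bounded in terms of $M_1$ and $M_2$. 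Applying Proposition 2.4 of \cite{basu2015regularized}, or its sub-Gaussian analogue via Hanson--Wright, gives for each fixed $\bm{\Delta}$ and each $u>0$,
\begin{equation}
\mathbb{P}\!\left(\left|\tfrac{1}{T}\!\sum_{t=0}^{T-1}\!\|\bm{\Delta}\bm{y}_t\|_2^2-\mathbb{E}\|\bm{\Delta}\bm{y}_t\|_2^2\right|\geq u\,\tfrac{\lambda_{\max}(\bm{\Sigma}_{\bbm{\varepsilon}})}{\mu_{\min}(\mathcal{A})}\right)\leq 2\exp\!\left[-C\min\!\left(\tfrac{u^2}{\tau^4},\tfrac{u}{\tau^2}\right)T\right].
\end{equation}
Choosing $u\asymp M_2$ so that the right-hand side equals $\alpha_\textup{RSC}$ up to constants, the probability bound becomes $2\exp[-CM_2^2\min(\tau^{-4},\tau^{-2})T]$.

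Finally I will combine the pointwise bound with the union bound over $\mathcal{N}_\epsilon$ and transfer from the net to $\mathcal{K}(2r)$ via a standard approximation argument that replaces the supremum over $\mathcal{K}(2r)$ by $(1-2\epsilon)^{-1}$ times the maximum over $\mathcal{N}_\epsilon$. Taking $\epsilon$ a small numerical constant, the total cost is an extra $\exp(Cpr)$ factor from the covering number, which is absorbed by the exponential decay as soon as $T\gtrsim M_2^{-2}\max(\tau^4,\tau^2)p$ (since $r$ is treated as $O(1)$ or absorbed into the constant). The main obstacle is the concentration step: extending the Gaussian Hanson--Wright bound of Basu--Michailidis to the sub-Gaussian innovation setting while keeping the dependence on the stability measures $\mu_{\min}(\mathcal{A})$ and $\mu_{\max}(\mathcal{A})$ explicit, because the MA$(\infty)$ representation produces a chaos of infinite order whose effective variance proxy must be controlled uniformly. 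Once this is in place, the lemma follows directly.
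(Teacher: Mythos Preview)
Your approach is correct and leads to the same conclusion, but it differs from the paper's proof in one structural way that is worth noting.

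You cover the set $\mathcal{K}(2r)$ of rank-$2r$ unit-Frobenius matrices directly, with log-covering number of order $pr$, and then apply Hanson--Wright pointwise followed by a union bound. The paper instead observes that $R_T(\bm{\Delta})=\sum_{t}\|\bm{\Delta}\bm{y}_t\|_2^2$ depends on $\bm{\Delta}$ only through $\bm{\Delta}^\top\bm{\Delta}$, so by the spectral decomposition $\bm{\Delta}^\top\bm{\Delta}=\sum_i\sigma_i^2\bm{v}_i\bm{v}_i^\top$ one has $R_T(\bm{\Delta})=\sum_i\sigma_i^2 R_T(\bm{v}_i^\top)$ with $\sum_i\sigma_i^2=\|\bm{\Delta}\|_\textup{F}^2$. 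Hence uniform control of $R_T(\bm{v}^\top)$ over $\bm{v}\in\mathbb{S}^{p-1}$ (log-covering number of order $p$) immediately yields the matrix bound for \emph{any} $\bm{\Delta}$, without using the rank restriction at all. This buys a slightly cleaner argument and a nominally stronger statement; your route also works and the sample-size requirement $T\gtrsim M_2^{-2}\max(\tau^4,\tau^2)p$ is the same once $r$ is treated as a constant.

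One minor correction: your closing worry about an ``infinite-order chaos'' is not a real obstacle. After stacking $\bm{z}=(\bm{y}_{T-1}^\top,\dots,\bm{y}_0^\top)^\top=\widetilde{\bm{A}}\widetilde{\bm{\Sigma}}_{\bbm{\varepsilon}}\bbm{\zeta}$ via the MA$(\infty)$ representation, $R_T(\bm{\Delta})=\bbm{\zeta}^\top\bm{\Sigma}_{\bm{\Delta}}\bbm{\zeta}$ is an ordinary quadratic form in the independent sub-Gaussian coordinates of $\bbm{\zeta}$, and the standard Hanson--Wright inequality applies directly with $\|\bm{\Sigma}_{\bm{\Delta}}\|_\textup{op}$ and $\|\bm{\Sigma}_{\bm{\Delta}}\|_\textup{F}$ controlled by $\lambda_{\max}(\bm{\Sigma}_{\bbm{\varepsilon}})\lambda_{\max}(\widetilde{\bm{A}}\widetilde{\bm{A}}^\top)$; the paper then replaces $\lambda_{\max}(\widetilde{\bm{A}}\widetilde{\bm{A}}^\top)$ and $\lambda_{\min}(\widetilde{\bm{A}}\widetilde{\bm{A}}^\top)$ by $1/\mu_{\min}(\mathcal{A})$ and $1/\mu_{\max}(\mathcal{A})$ via the spectral-measure argument of \cite{basu2015regularized}. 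No higher-order chaos appears.
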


    \begin{proof}[Proof of Lemma \ref{lemma:RSC}]
        
        For any $\bm{M}\in\mathbb{R}^{m\times p}$, denote $R_T(\bm{M})=\sum_{t=0}^{T-1}\|\bm{M}\bm{y}_t\|_2^2$. Note that $R_T(\bm{\Delta})\geq\mathbb{E}R_T(\bm{\Delta})-\sup_{\bm{\Delta}}|R_T(\bm{\Delta})-\mathbb{E}R_T(\bm{\Delta})|$. 
        
        Based on the moving average representation of VAR(1), we can rewrite $\bm{y}_t$ as a VMA($\infty$) process,
        \begin{equation}
            \bm{y}_t=\bbm{\varepsilon}_t+\bm{A}\bbm{\varepsilon}_{t-1}+\bm{A}^2\bbm{\varepsilon}_{t-2}+\bm{A}^3\bbm{\varepsilon}_{t-3}+\cdots
        \end{equation}
        Let $\bm{z}=(\bm{y}_{T-1}^\top,\bm{y}_{T-2}^\top,\dots,\bm{y}_0^\top)^\top$, $\bbm{\varepsilon}=(\bbm{\varepsilon}_{T-1}^\top,\bbm{\varepsilon}_{T-2}^\top,\dots,\bbm{\varepsilon}_0,\dots)^\top$, and $\bbm{\zeta}=(\bbm{\zeta}_{T-1}^\top,\bbm{\zeta}_{T-2}^\top,\dots,\bbm{\zeta}_0,\dots)^\top$.
        Note that $\bm{z}=\widetilde{\bm{A}}\bbm{\varepsilon}$, where $\widetilde{\bm{A}}$ is defined as
        \begin{equation}
            \widetilde{\bm{A}} = \begin{bmatrix}
                \bm{I}_p & \bm{A} & \bm{A}^2 & \bm{A}^3 & \dots \\
                \bm{O} & \bm{I}_p & \bm{A} & \bm{A}^2 & \dots\\
                \vdots & \vdots & \vdots & \vdots & \ddots\\
            \end{bmatrix}.
        \end{equation}
        Then, we have
        \begin{equation}
            \begin{split}
                &R_T(\bm{\Delta})=\bm{z}^\top(\bm{I}_T\otimes\bm{\Delta}^\top\bm{\Delta})\bm{z}=\bbm{\varepsilon}^\top\widetilde{\bm{A}}^\top(\bm{I}_T\otimes\bm{\Delta}^\top\bm{\Delta})\widetilde{\bm{A}}\bbm{\varepsilon}\\
                =&\bbm{\zeta}^\top\widetilde{\bm{\Sigma}}_{\bbm{\varepsilon}}\widetilde{\bm{A}}^\top(\bm{I}_T\otimes \bm{\Delta}^\top\bm{\Delta})\widetilde{\bm{A}}\widetilde{\bm{\Sigma}}_{\bbm{\varepsilon}}\bbm{\zeta}:=\bbm{\zeta}^\top\bm{\Sigma_{\bm{\Delta}}}\bbm{\zeta},
            \end{split}
        \end{equation}
        where
        \begin{equation}
            \widetilde{\bm{\Sigma}}_{\bbm{\varepsilon}}=\begin{bmatrix}
                \bm{\Sigma}_{\bbm{\varepsilon}}^{1/2} & \bm{O} & \bm{O} & \cdots\\
                \bm{O} & \bm{\Sigma}_{\bbm{\varepsilon}}^{1/2} & \bm{O} & \cdots\\
                \bm{O} & \bm{O} & \bm{\Sigma}_{\bbm{\varepsilon}}^{1/2} & \cdots\\
                \vdots & \vdots & \vdots & \ddots
            \end{bmatrix}.
        \end{equation}
        Thus, $\mathbb{E}R_T(\bm{\Delta})=\|(\bm{I}_T\otimes\bm{\Delta})\widetilde{\bm{A}}\widetilde{\bm{\Sigma}}_{\bbm{\varepsilon}}\|_\text{F}^2 \geq T\|\bm{\Delta}\|_\text{F}^2\lambda_{\min}(\bm{\Sigma}_{\bbm{\varepsilon}})\lambda_{\min}(\widetilde{\bm{A}}\widetilde{\bm{A}}^\top)$.
        
        As $\|\bm{\Delta}\|_\text{F}=1$, by the sub-multiplicative property of the Frobenius norm and operator norm, we have
        \begin{equation}
            \|\bm{\Sigma}_{\bm{\Delta}}\|_\text{F}^2\leq T\lambda_{\max}^2(\bm{\Sigma}_{\bbm{\varepsilon}})\lambda_{\max}^2(\widetilde{\bm{A}}\widetilde{\bm{A}}^\top)
        \end{equation}
        and
        \begin{equation}
            \|\bm{\Sigma}_{\bm{\Delta}}\|_\text{op}\leq \lambda_{\max}(\bm{\Sigma}_{\bbm{\varepsilon}})\lambda_{\max}(\widetilde{\bm{A}}\widetilde{\bm{A}}^\top).
        \end{equation}
        For any $\bm{v}\in\mathbb{S}^{p-1}$ and any $t>0$, by Hanson-Wright inequality,
        \begin{equation}
            \begin{split}
                & \mathbb{P}[|R_T(\bm{v}^\top)-\mathbb{E}R_T(\bm{v}^\top)|\geq t]\\
                \leq & 2\exp\left(-\min\left(\frac{t^2}{\tau^4 T\lambda^2_{\max}(\bm{\Sigma}_{\bbm{\varepsilon}})\lambda^2_{\max}(\widetilde{\bm{A}}\widetilde{\bm{A}}^\top)},\frac{t}{\tau^2\lambda_{\max}^2(\bm{\Sigma}_{\bbm{\varepsilon}})\lambda^2_{\max}(\widetilde{\bm{A}}\widetilde{\bm{A}}^\top)}\right)\right).
            \end{split}
        \end{equation}
        Considering an $\epsilon$-covering net of $\mathbb{S}^{p-1}$, by Lemma \ref{lemma:covering}, we can easily construct the union bound for $T\gtrsim p$,
        \begin{equation}
            \begin{split}
                & \mathbb{P}\left[\sup_{\bm{v}\in\mathbb{S}^{p-1}}|R_T(\bm{v}^\top)-\mathbb{E}R_n(\bm{v}^\top)|\geq t\right]\\
                \leq & C\exp\left(p-\min\left(\frac{t^2}{\tau^4 T\lambda^2_{\max}(\bm{\Sigma}_{\bbm{\varepsilon}})\lambda^2_{\max}(\widetilde{\bm{A}}\widetilde{\bm{A}}^\top)},\frac{t}{\tau^2\lambda_{\max}^2(\bm{\Sigma}_{\bbm{\varepsilon}})\lambda^2_{\max}(\widetilde{\bm{A}}\widetilde{\bm{A}}^\top)}\right)\right).
            \end{split}
        \end{equation}
        Letting $t=T\lambda_{\min}(\bm{\Sigma}_{\bbm{\varepsilon}})\lambda_{\min}(\widetilde{\bm{A}}\widetilde{\bm{A}}^\top)/2$, for $T\gtrsim M_2^{-2}\max(\tau^4,\tau^2)p$, we have
        \begin{equation}\label{eq:deviation1}
            \begin{split}
                & \mathbb{P}\left[\sup_{\bm{v}\in\mathbb{S}^{p-1}}|R_n(\bm{v}^\top)-\mathbb{E}R_n(\bm{v}^\top)|\geq n\lambda_{\min}(\bm{\Sigma}_{\bbm{\varepsilon}})\lambda_{\min}(\widetilde{\bm{A}}\widetilde{\bm{A}}^\top)/2\right]\\
                \leq & 2\exp\left[-CM_2^2\min(\tau^{-4},\tau^{-2})T\right],
            \end{split}
        \end{equation}
        where $M_2=[\lambda_{\min}(\bm{\Sigma}_{\bbm{\varepsilon}})\lambda_{\min}(\widetilde{\bm{A}}\widetilde{\bm{A}}^\top)]/[\lambda_{\max}(\bm{\Sigma}_{\bbm{\varepsilon}})\lambda_{\max}(\widetilde{\bm{A}}\widetilde{\bm{A}}^\top)]$.
        
        Therefore, with probability at least $1-2\exp[-CM_2^2\min(\tau^{-4},\tau^{-2})T]$,
        \begin{equation}
            R_T(\bm{\Delta})\geq\frac{T}{2}\lambda_{\min}(\bm{\Sigma}_{\bbm{\varepsilon}})\lambda_{\min}(\widetilde{\bm{A}}\widetilde{\bm{A}}^\top)\|\bm{\Delta}\|_\text{F}^2.
        \end{equation}
    
        Similarly, $R_T(\bm{\Delta})\leq\mathbb{E}R_T(\bm{\Delta})+\sup_{\bm{\Delta}}|R_T(\bm{\Delta})-\mathbb{E}R_T(\bm{\Delta})|$ and $\mathbb{E}R_T(\bm{\Delta})\leq T\|\bm{\Delta}\|_\text{F}^2\cdot\lambda_{\max}(\bm{\Sigma}_{\bbm{\varepsilon}})\lambda_{\max}(\widetilde{\bm{A}}\widetilde{\bm{A}}^\top)$. Additionally, the upper bound in \eqref{eq:deviation1} can easily be expanded to $T\lambda_{\max}(\bm{\Sigma}_{\bbm{\varepsilon}})\lambda_{\max}(\widetilde{\bm{A}}\widetilde{\bm{A}}^\top)/2$, and the upper bound of $R_T(\bm{\Delta})$ follows.

        Finally, since $\widetilde{\bm{A}}$ is related to the VMA($\infty$) process, by the spectral measure of ARMA process discussed in \citet{basu2015regularized}, we may replace $\lambda_{\max}(\widetilde{\bm{A}}\widetilde{\bm{A}}^\top)$ and $\lambda_{\min}(\widetilde{\bm{A}}\widetilde{\bm{A}}^\top)$ with $1/\mu_{\min}(\mathcal{A})$ and $1/\mu_{\max}(\mathcal{A})$, respectively.

    \end{proof}

    We next prove the deviation bound for $\xi(r,d)$. For the least squares loss function $\mathcal{L}(\bm{A})=(2T)^{-1}\|\bm{Y}-\bm{A}\bm{X}\|_\text{F}^2$, it is clear that
    \begin{equation}
        \nabla\mathcal{L}(\bm{A}^*)=\frac{1}{T}\sum_{t=1}^T\bbm{\varepsilon}_t\bm{y}_{t-1}^\top.
    \end{equation}
    
    \begin{lemma}\label{lemma:deviation}
        Assume conditions in Theorem \ref{thm:stat} hold. If $T\gtrsim M_2^{-2}\max(\tau^4,\tau^2)p$, then, with probability at least $1-\exp(-Cp)$,
        \begin{equation}
            \xi(r,d):=\sup_{\substack{\textup{\bf{[}}\bm{C}~\bm{R}\textup{\bf{]}},\textup{\bf{[}}\bm{C}~\bm{P}\textup{\bf{]}}\in\mathbb{O}^{p\times r}\\ \bm{D}\in\mathbb{R}^{r\times r}}}\left\langle \nabla\mathcal{L}(\bm{A}^*),\lb\bm{C}~\bm{R}\rb\bm{D}\lb\bm{C}~\bm{P}\rb^\top\right\rangle\lesssim\tau^2M_1\sqrt{\frac{d_\textup{CS}(p,r,d)}{T}}
        \end{equation}
        where $M_1=\lambda_{\max}(\bm{\Sigma}_{\bbm{\varepsilon}})/\mu_{\min}^{1/2}(\mathcal{A})$.
    \end{lemma}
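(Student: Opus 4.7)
\textbf{Proof proposal for Lemma \ref{lemma:deviation}.}

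The plan is to reduce $\xi(r,d)$ to a supremum of a bilinear form in the noise and the process, then apply Hanson--Wright concentration pointwise and finish with an $\epsilon$-net whose cardinality reflects the effective dimension $d_{\textup{CS}}(p,r,d)$ imposed by the common subspace structure. First, by duality in the Frobenius norm, for fixed orthonormal $\bm{U}=[\bm{C}~\bm{R}]$ and $\bm{V}=[\bm{C}~\bm{P}]$, the inner supremum over $\bm{D}$ with $\|\bm{D}\|_\textup{F}=1$ gives
\begin{equation}
\xi(r,d)=\sup_{\bm{U},\bm{V}}\bigl\|\bm{U}^\top\nabla\mathcal{L}(\bm{A}^*)\bm{V}\bigr\|_\textup{F}=\sup_{\bm{U},\bm{V},\bm{D}:\|\bm{D}\|_\textup{F}=1}\bigl\langle\nabla\mathcal{L}(\bm{A}^*),\bm{U}\bm{D}\bm{V}^\top\bigr\rangle.
\end{equation}

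Second, for each fixed admissible triple $(\bm{U},\bm{V},\bm{D})$, I would rewrite the score as a quadratic form in the i.i.d.\ sub-Gaussian vector $\bbm{\zeta}=(\bbm{\zeta}_{T}^\top,\bbm{\zeta}_{T-1}^\top,\ldots)^\top$ using the VMA$(\infty)$ representation $\bm{y}_{t-1}=\sum_{k\ge 0}\bm{A}^{*k}\bbm{\varepsilon}_{t-1-k}$ and $\bbm{\varepsilon}_t=\bm{\Sigma}_{\bbm{\varepsilon}}^{1/2}\bbm{\zeta}_t$, exactly as in the proof of Lemma \ref{lemma:RSC}. This yields
\begin{equation}
\bigl\langle\nabla\mathcal{L}(\bm{A}^*),\bm{U}\bm{D}\bm{V}^\top\bigr\rangle=\frac{1}{T}\bbm{\zeta}^\top\bm{\Sigma}_{\bm{U},\bm{V},\bm{D}}\,\bbm{\zeta},
\end{equation}
where, by arguments parallel to those used for $R_T(\bm{\Delta})$, the operator and Frobenius norms of $\bm{\Sigma}_{\bm{U},\bm{V},\bm{D}}$ are controlled respectively by $\lambda_{\max}(\bm{\Sigma}_{\bbm{\varepsilon}})/\mu_{\min}^{1/2}(\mathcal{A})=M_1$ (times a constant) and $\sqrt{T}M_1$. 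Then the Hanson--Wright inequality gives, for any $u>0$,
\begin{equation}
\mathbb{P}\!\left(\bigl|\bigl\langle\nabla\mathcal{L}(\bm{A}^*),\bm{U}\bm{D}\bm{V}^\top\bigr\rangle\bigr|\ge u\right)\le 2\exp\!\left(-C\min\!\left(\frac{u^2 T}{\tau^4 M_1^2},\frac{u T}{\tau^2 M_1}\right)\right).
\end{equation}

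Third, I need to combine this pointwise bound with a uniform control over the parameter set. Writing $\mathcal{P}_{r,d}=\{(\bm{U},\bm{V},\bm{D}):\bm{U}=[\bm{C}~\bm{R}],\bm{V}=[\bm{C}~\bm{P}]\in\mathbb{O}^{p\times r},\,\bm{C}^\top\bm{R}=\bm{C}^\top\bm{P}=\bm{0},\,\|\bm{D}\|_\textup{F}=1\}$, a straightforward dimension count gives $\dim\mathcal{P}_{r,d}=d_{\textup{CS}}(p,r,d)$: the common loading $\bm{C}\in\mathbb{O}^{p\times d}$ contributes $pd-d(d+1)/2$, each of $\bm{R},\bm{P}\in\mathbb{O}^{p\times(r-d)}$ lying in the orthogonal complement of $\mathcal{M}(\bm{C})$ contributes $(p-d)(r-d)-(r-d)(r-d+1)/2$, and the unit-Frobenius core $\bm{D}$ contributes $r^2-1$; summing yields exactly $r(2p-r)-d(p-(d+1)/2)$ up to lower order. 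Lipschitzness of $(\bm{U},\bm{V},\bm{D})\mapsto\bm{U}\bm{D}\bm{V}^\top$ with respect to the Euclidean/Frobenius metric on this manifold lets me construct an $\epsilon$-net of cardinality at most $(3/\epsilon)^{d_{\textup{CS}}(p,r,d)}$ using Lemma \ref{lemma:covering}. Choosing $\epsilon$ of constant order and $u=C'\tau^2M_1\sqrt{d_{\textup{CS}}(p,r,d)/T}$ in the pointwise tail, the union bound absorbs the log-covering factor $d_{\textup{CS}}(p,r,d)\le Cp$ into the exponent (this is where the assumption $T\gtrsim \max(\tau^4,\tau^2)M_2^{-2}p$ ensures we are in the sub-Gaussian regime rather than the sub-exponential one), yielding
\begin{equation}
\xi(r,d)\lesssim \tau^2 M_1\sqrt{\frac{d_{\textup{CS}}(p,r,d)}{T}}
\end{equation}
with probability at least $1-\exp(-Cp)$, after discretization error is absorbed by a standard peeling step.

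The main obstacle is the third step: one must obtain the sharp dimension $d_{\textup{CS}}(p,r,d)$ rather than the naive $2pr$ that would come from covering $\bm{U}$ and $\bm{V}$ independently. This requires either building the net directly on the quotient manifold $\{(\bm{C},\bm{R},\bm{P}):\bm{C}^\top[\bm{R}~\bm{P}]=\bm{0}\}/(\mathbb{O}^{d\times d}\times\mathbb{O}^{(r-d)\times(r-d)}\times\mathbb{O}^{(r-d)\times(r-d)})$ or, more practically, covering $\bm{C}$ first and then covering $\bm{R},\bm{P}$ inside the orthogonal complement of the chosen $\bm{C}$, which reduces each of the latter covers from dimension $\asymp p(r-d)$ to $\asymp (p-d)(r-d)$ and produces the required saving of order $pd$ in the exponent.
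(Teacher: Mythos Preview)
Your overall strategy is sound and would yield the lemma, but it differs from the paper's proof in two places, and in one of them you are making your life harder than necessary.

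\textbf{Pointwise concentration.} You propose to write $\langle\nabla\mathcal{L}(\bm{A}^*),\bm{W}\rangle$ as a single quadratic form in the stacked innovations and apply Hanson--Wright. The paper instead exploits the martingale structure directly: it sets $S_T(\bm{W})=\sum_t\langle\bbm{\varepsilon}_t,\bm{W}\bm{y}_{t-1}\rangle$ and $R_T(\bm{W})=\sum_t\|\bm{W}\bm{y}_{t-1}\|_2^2$, obtains the self-normalized Chernoff bound $\mathbb{P}[\{S_T\ge z_1\}\cap\{R_T\le z_2\}]\le\exp\bigl(-z_1^2/(2\tau^2\lambda_{\max}(\bm{\Sigma}_{\bbm{\varepsilon}})z_2)\bigr)$, and then bounds $R_T$ with high probability via the RSS-type argument of Lemma \ref{lemma:RSC}. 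Both routes work; your Hanson--Wright bound is slightly less direct because you must separately argue that the sub-exponential branch of the minimum is inactive, whereas the paper's conditional sub-Gaussian tail avoids that split altogether (the $M_2$ dependence you invoke actually enters only through the bound on $R_T$, not through the deviation itself).

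\textbf{Covering.} Here you overcomplicate. You correctly identify that covering $\bm{U}=[\bm{C}~\bm{R}]$ and $\bm{V}=[\bm{C}~\bm{P}]$ independently wastes $pd$ dimensions, but the remedy you propose---covering $\bm{R},\bm{P}$ inside $\mathcal{M}^\perp(\bm{C})$ to reduce each from $p(r-d)$ to $(p-d)(r-d)$---only saves $2d(r-d)$, a lower-order term. The \emph{main} saving of $pd$ comes simply from covering $\bm{C}$ \emph{once} rather than twice, which requires no orthogonality at all. The paper's Lemma \ref{lemma:cs_covering} does exactly this: it covers $\bm{C}\in\mathbb{R}^{p\times d}$, $\bm{R}\in\mathbb{R}^{p\times(r-d)}$, $\bm{P}\in\mathbb{R}^{p\times(r-d)}$, and $\bm{D}\in\mathbb{R}^{r\times r}$ separately in the $\|\cdot\|_{2,\infty}$ norm (no constraints imposed), giving a net of size $(24/\epsilon)^{p(2r-d)+r^2}$, which already matches $d_{\textup{CS}}(p,r,d)$ up to $O(r^2)$. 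No quotient manifold or conditional covering is needed.

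Finally, your ``standard peeling step'' for the discretization is too vague. The paper handles it explicitly: for $\bm{W}\in\mathcal{W}(r,d;p)$ and its net approximant $\overline{\bm{W}}$, the difference $\bm{\Delta}=\bm{W}-\overline{\bm{W}}$ has rank $\le 2r$ and common dimension $\le 2d$, so one splits $\bm{\Delta}=\bm{\Delta}_1+\bm{\Delta}_2$ with each $\bm{\Delta}_i/\|\bm{\Delta}_i\|_\textup{F}\in\mathcal{W}(r,d;p)$ and $\|\bm{\Delta}_1\|_\textup{F}+\|\bm{\Delta}_2\|_\textup{F}\le\sqrt{2}\epsilon$, yielding $\xi(r,d)\le(1-\sqrt{2}\epsilon)^{-1}\max_{\overline{\bm{W}}}\langle\nabla\mathcal{L}(\bm{A}^*),\overline{\bm{W}}\rangle$.
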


    \begin{proof}[Proof of Lemma \ref{lemma:deviation}]
        
        Denote $\mathcal{W}(r,d;p)=\{\bm{W}\in\mathbb{R}^{p\times p}:\bm{W}=\lb\bm{C}~\bm{R}\rb\bm{D}\lb\bm{C}~\bm{P}\rb^\top,~\bm{C}\in\mathbb{R}^{p\times d},~\bm{R},\bm{P}\in\mathbb{R}^{p\times(r-d)},~\text{and}~\|\bm{W}\|_\text{F}=1\}$. By definition,
        \begin{equation}\label{eq:stat_error_xi}
            \xi(r,d) = \sup_{\bm{W}\in\mathcal{W}(r,d;p)}\left\langle\frac{1}{T}\sum_{t=1}^T\bbm{\varepsilon}_t\bm{y}_{t-1}^\top,\bm{W}\right\rangle.
        \end{equation}
    
        First, we consider an $\epsilon$-net $\overline{\mathcal{W}}(r,d;p)$ for $\mathcal{W}(r,d;p)$. For any matrix $\bm{W}\in\mathcal{W}(r,d;p)$, there exists a matrix $\overline{\bm{W}}\in\overline{\mathcal{W}}(r,d;p)$ such that $\|\bm{W}-\overline{\bm{W}}\|_\text{F}\leq \epsilon$. Obviously, $\bm{\Delta}=\bm{W}-\overline{\bm{W}}$ is a rank-$2r$ matrix with common dimension $2d$. Based on the SVD of $\bm{\Delta}$, we can split the first $2r$ pairs of left and right singular vectors into two equal-size groups such that the dimension of left and right singular vectors is $r-d$ in each group. By the splitting of SVD, we can write $\bm{\Delta}=\bm{\Delta}_1+\bm{\Delta}_2$, where both $\bm{\Delta}_1$ and $\bm{\Delta}_2$ are rank-$r$ matrix with common dimension $d$ and $\langle\bm{\Delta}_1,\bm{\Delta}_2\rangle=0$.
        
        By Cauchy's inequality, as $\|\bm{\Delta}\|_\text{F}^2=\|\bm{\Delta}_1\|_\text{F}^2+\|\bm{\Delta}_2\|_\text{F}^2$, we have $\|\bm{\Delta}_1\|_\text{F}+\|\bm{\Delta}_2\|_\text{F}\leq \sqrt{2}\|\bm{\Delta}\|_\text{F}\leq \sqrt{2}\epsilon$. Moreover, since $\bm{\Delta}_i/\|\bm{\Delta}_i\|_\text{F}\in\mathcal{W}(r,d;p)$,
        \begin{equation}
            \begin{split}
                \xi(r,d) & \leq \max_{\bm{W}\in\overline{\mathcal{W}}(r,d;p)}\left\langle\frac{1}{T}\sum_{t=1}^T\bbm{\varepsilon}_t\bm{y}_{t-1}^\top,\bm{W}\right\rangle + \sum_{i=1}^2\left\langle\frac{1}{T}\sum_{t=1}^T\bbm{\varepsilon}_t\bm{y}_{t-1}^\top,\bm{\Delta}_i/\|\bm{\Delta}\|_\text{F}\right\rangle\|\bm{\Delta}_i\|_\text{F}\\
                & \leq \max_{\bm{W}\in\overline{\mathcal{W}}(r,d;p)}\left\langle\frac{1}{T}\sum_{t=1}^T\bbm{\varepsilon}_t\bm{y}_{t-1}^\top,\bm{W}\right\rangle + \sqrt{2}\epsilon\xi(r,d),
            \end{split}
        \end{equation}
        which implies that
        \begin{equation}
            \xi(r,d) \leq (1-\sqrt{2}\epsilon)^{-1}\max_{\bm{W}\in\overline{\mathcal{W}}(r,d;p)}\left\langle\frac{1}{T}\sum_{t=1}^T\bbm{\varepsilon}_t\bm{y}_{t-1}^\top,\bm{W}\right\rangle.
        \end{equation}
    
        Next, for any fixed $\bm{W}\in\mathbb{R}^{p\times p}$ such that $\|\bm{W}\|_\text{F}=1$, $\langle\bbm{\varepsilon}_t\bm{y}_{t-1}^\top,\bm{W}\rangle=\langle\bbm{\varepsilon}_t,\bm{W}\bm{y}_{t-1}\rangle$, and we denote $S_t(\bm{W})=\sum_{s=1}^t\langle\bbm{\varepsilon}_s,\bm{W}\bm{y}_{s-1}\rangle$ and $R_t(\bm{W})=\sum_{s=0}^{t-1}\|\bm{W}\bm{y}_s\|_2^2$, for $1\leq t\leq T$. Similar to \citet{wang2021high}, by the standard Chernoff bound, for any $z_1>0$ and $z_2>0$,
        \begin{equation}
            \mathbb{P}[\{S_T(\bm{W})\geq z_1\}\cap\{R_T(\bm{W})\leq z_2\}]\leq \exp\left(-\frac{z_1^2}{2\tau^2\lambda_{\max}(\bm{\Sigma}_{\bbm{\varepsilon}}z_2)}\right).
        \end{equation}
        Similar to Lemma \ref{lemma:RSC}, with probability at least $1-2\exp[-CM_2^2\min(\tau^{-4},\tau^{-2})T]$,
        \begin{equation}
            R_T(\bm{W})\geq \frac{T}{2}\lambda_{\min}(\bm{\Sigma}_{\bbm{\varepsilon}})\lambda_{\min}(\widetilde{\bm{A}}\widetilde{\bm{A}}^\top).
        \end{equation}
        Therefore, for any $x>0$,
        \begin{equation}
            \begin{split}
                & \mathbb{P}\left[\sup_{\bm{W}\in\mathcal{W}(1;p)}\left\langle\frac{1}{n}\sum_{t=1}^n\bbm{\varepsilon}_t\bm{y}_{t-1}^\top,\bm{W}\right\rangle\geq x\right]\\
                \leq & \mathbb{P}\left[\max_{\overline{\bm{W}}\in\overline{\mathcal{W}}(1;p)}\left\langle\frac{1}{n}\sum_{t=1}^n\bbm{\varepsilon}_t\bm{y}_{t-1}^\top,\bm{W}\right\rangle\geq (1-\sqrt{2}\epsilon)x\right]\\
                \leq & |\overline{\mathcal{W}}(r,d;p)|\cdot\mathbb{P}\left[\left\langle\frac{1}{n}\sum_{t=1}^n\bbm{\varepsilon}_t\bm{y}_{t-1}^\top,\bm{W}\right\rangle\geq (1-\sqrt{2}\epsilon)x\right]\\
                \leq & |\overline{\mathcal{W}}(r,d;p)|\cdot\Big{\{}\mathbb{P}[\{S_T(\bm{W})\geq T(1-\sqrt{2}\epsilon)x\}\cap\{R_T(\bm{W})\leq C\tau^2T\lambda_{\max}(\bm{\Sigma}_{\bbm{\varepsilon}})\lambda_{\max}(\widetilde{\bm{A}}\widetilde{\bm{A}}^\top)\}] \\
                + & \mathbb{P}[R_T(\bm{W}) >  C\tau^2T\lambda_{\max}(\bm{\Sigma}_{\bbm{\varepsilon}})\lambda_{\max}(\widetilde{\bm{A}}\widetilde{\bm{A}}^\top)]\Big{\}}\\
                \leq & |\overline{\mathcal{W}}(r,d;p)|\cdot\Bigg\{\exp\left[-\frac{CTx^2}{\tau^4\lambda_{\max}^2(\bm{\Sigma}_{\bbm{\varepsilon}})\lambda_{\max}(\widetilde{\bm{A}}\widetilde{\bm{A}}^\top)}\right] + 2\exp[-CM_2^2\min(\tau^{-4},\tau^{-2})T]\Bigg\}.
            \end{split}
        \end{equation}
        
        By Lemma \ref{lemma:cs_covering}, $|\overline{\mathcal{W}}(r,d;p)|\leq (24/\epsilon)^{p(2r-d)+r^2}$. Thus, if we take $\epsilon=0.1$ and $x=C\tau^2\lambda_{\max}(\bm{\Sigma}_{\bbm{\varepsilon}})\lambda_{\max}(\widetilde{\bm{A}}\widetilde{\bm{A}}^\top)\sqrt{[p(2r-d)+r^2]/T}$, when $T\gtrsim M_2^{-2}\max(\tau^4,\tau^2)p$, we have
        \begin{equation}
            \mathbb{P}\left[\sup_{\bm{W}\in\mathcal{W}(r,d;p)}\left\langle\frac{1}{T}\sum_{t=1}^T\bbm{\varepsilon}_t\bm{y}_{t-1}^\top,\bm{W}\right\rangle\gtrsim\tau^2\lambda_{\max}(\bm{\Sigma}_{\bbm{\varepsilon}})\lambda_{\max}^{1/2}(\widetilde{\bm{A}}\widetilde{\bm{A}}^\top)\sqrt{\frac{d_\textup{CS}(p,r,d)}{T}}\right]\leq \exp(-Cp).
        \end{equation}
        Finally, by the spectral measure of ARMA processes, we can replace $\lambda_{\max}(\widetilde{\bm{A}}\widetilde{\bm{A}}^\top)$ with $1/\mu_{\min}(\mathcal{A})$.
        
    \end{proof}

\subsection{Properties of initial value}\label{sec:B.2}

We present some statistical properties of the initial value of the gradient descent algorithm, the reduced-rank estimator and the corresponding $\bm{C}$, $\bm{R}$, $\bm{P}$, and $\bm{D}$. Consider the reduced-rank estimator
\begin{equation}
    \widetilde{\bm{A}}_\textup{RR}(r)=\argmin_{\textup{rank}(\bm{A})\leq r}\frac{1}{2T}\|\bm{Y}-\bm{A}\bm{X}\|_\text{F}^2=\widehat{\bm{H}}\widehat{\bm{H}}^\top\bm{Y}\bm{X}^{-1}(\bm{X}\bm{X}^\top)^{-1}
\end{equation}
where $\widehat{\bm{H}}\in\mathbb{O}^{p\times r}$ contains the leading eigenvectors of $\bm{Y}\bm{X}^\top(\bm{X}\bm{X}^\top)^{-1}\bm{X}\bm{Y}^\top$.

\begin{lemma}\label{lemma:rr_initial}
    Under Assumptions \ref{asmp:1}--\ref{asmp:3}, with probability at least $1-2\exp[-CM_2^2\min(\tau^{-2},\tau^{-4})T]-\exp(-Cp)$,
    \begin{equation}
        \|\widetilde{\bm{A}}_\textup{RR}(r)-\bm{A}^*\|_\textup{F}\lesssim\alpha_{\textup{RSC}}^{-1}\tau^2M_1\sqrt{\frac{d_\textup{RR}(p,r)}{T}}.
    \end{equation}
\end{lemma}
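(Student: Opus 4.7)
The plan is to follow the standard basic-inequality route for rank-constrained least squares, combined with the restricted strong convexity developed in Lemma \ref{lemma:RSC} and a deviation bound in the spirit of Lemma \ref{lemma:deviation}, applied at rank $2r$ (rather than to the narrower common-subspace class $\mathcal{W}(r,d;p)$). Concretely, since $\bm{A}^*$ has rank $r$ and $\widetilde{\bm{A}}_\textup{RR}(r)$ is optimal among rank-$r$ matrices, the basic inequality $\mathcal{L}(\widetilde{\bm{A}}_\textup{RR}(r))\le \mathcal{L}(\bm{A}^*)$ gives
\begin{equation*}
\mathcal{L}(\widetilde{\bm{A}}_\textup{RR}(r))-\mathcal{L}(\bm{A}^*)-\langle\nabla\mathcal{L}(\bm{A}^*),\widetilde{\bm{A}}_\textup{RR}(r)-\bm{A}^*\rangle\le -\langle\nabla\mathcal{L}(\bm{A}^*),\widetilde{\bm{A}}_\textup{RR}(r)-\bm{A}^*\rangle.
\end{equation*}
The difference $\widetilde{\bm{A}}_\textup{RR}(r)-\bm{A}^*$ has rank at most $2r$, so Lemma \ref{lemma:RSC} lower-bounds the left-hand side by $(\alpha_\textup{RSC}/2)\|\widetilde{\bm{A}}_\textup{RR}(r)-\bm{A}^*\|_\textup{F}^2$ on the high-probability event of that lemma.

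Next I would control the right-hand side by a duality argument. Define $\xi_\textup{RR}(2r)=\sup_{\bm{W}\in\mathbb{R}^{p\times p},\,\textup{rank}(\bm{W})\le 2r,\,\|\bm{W}\|_\textup{F}=1}\langle\nabla\mathcal{L}(\bm{A}^*),\bm{W}\rangle$. Then
\begin{equation*}
|\langle\nabla\mathcal{L}(\bm{A}^*),\widetilde{\bm{A}}_\textup{RR}(r)-\bm{A}^*\rangle|\le \xi_\textup{RR}(2r)\cdot\|\widetilde{\bm{A}}_\textup{RR}(r)-\bm{A}^*\|_\textup{F},
\end{equation*}
so that combining with the RSC lower bound and dividing through by $\|\widetilde{\bm{A}}_\textup{RR}(r)-\bm{A}^*\|_\textup{F}$ yields $\|\widetilde{\bm{A}}_\textup{RR}(r)-\bm{A}^*\|_\textup{F}\lesssim \alpha_\textup{RSC}^{-1}\xi_\textup{RR}(2r)$. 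It remains to bound $\xi_\textup{RR}(2r)$, which is the key probabilistic step.

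For that I would re-run the covering/Chernoff argument of Lemma \ref{lemma:deviation}, but over the set of rank-$2r$, unit-Frobenius-norm matrices rather than over $\mathcal{W}(r,d;p)$. The same truncation of $R_T(\bm{W})$ via Lemma \ref{lemma:RSC} applies verbatim, and a standard $\epsilon$-net of the rank-$2r$ manifold has cardinality at most $(C/\epsilon)^{C p r}$ (this replaces the $(24/\epsilon)^{p(2r-d)+r^2}$ bound from Lemma \ref{lemma:cs_covering} with a pure rank-$2r$ covering count of order $pr$, which is exactly $d_\textup{RR}(p,r)$ up to constants). The union bound then gives $\xi_\textup{RR}(2r)\lesssim \tau^2 M_1\sqrt{d_\textup{RR}(p,r)/T}$ with probability at least $1-\exp(-Cp)$, provided $T\gtrsim M_2^{-2}\max(\tau^4,\tau^2)p$, matching the scaling in Lemma \ref{lemma:deviation}. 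Intersecting this event with the RSC event from Lemma \ref{lemma:RSC} gives the claimed probability $1-2\exp[-CM_2^2\min(\tau^{-2},\tau^{-4})T]-\exp(-Cp)$, and plugging back produces the asserted Frobenius-norm rate.

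The main obstacle is not the algebra but making sure the deviation bound at the broader class of rank-$2r$ matrices is justified with the same constants and sample-size threshold as Lemma \ref{lemma:deviation}; this is essentially a notational generalization (drop the common-subspace constraint, enlarge the effective dimension from $d_\textup{CS}(p,r,d)$ to $d_\textup{RR}(p,r)$ in the net), but it is the step that requires the most care because the covering number calculation and the splitting of $\bm{\Delta}=\bm{\Delta}_1+\bm{\Delta}_2$ used to re-derive $\xi$ must be redone for the rank-$2r$ class. Everything else — the basic inequality, the RSC inequality, and the duality bound — is immediate.
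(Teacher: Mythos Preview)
Your proposal is correct and follows essentially the same route as the paper: basic inequality from optimality of $\widetilde{\bm{A}}_\textup{RR}(r)$, rank-$2r$ structure of the error, RSC via Lemma~\ref{lemma:RSC}, and a low-rank deviation bound on $\nabla\mathcal{L}(\bm{A}^*)$. The only cosmetic difference is that the paper does not re-derive a separate rank-$2r$ deviation bound but simply invokes Lemma~\ref{lemma:deviation} at $d=0$, noting that $\xi(r,0)$ is exactly the supremum over unit-Frobenius rank-$r$ matrices and that $d_\textup{CS}(p,r,0)=d_\textup{RR}(p,r)$; the passage from rank $2r$ to rank $r$ costs only an absolute constant via the SVD splitting already used inside that lemma.
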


\begin{proof}
    
    Denote $\bm{\Delta}=\widetilde{\bm{A}}_\textup{RR}(r)-\bm{A}^*$, then by the optimality of the reduced-rank estimator
    \begin{equation}
        \begin{split}
            & \frac{1}{2T}\sum_{t=1}^T\|\bm{y}_t-\widetilde{\bm{A}}_\text{RR}\bm{y}_{t-1}\|_2^2 \leq \frac{1}{2T}\sum_{t=1}^T\|\bm{y}_t-\bm{A}^*\bm{y}_{t-1}\|_2^2\\
            \Rightarrow & \frac{1}{2T}\sum_{t=1}^T\|\bm{\Delta}\bm{y}_{t-1}\|_2^2\leq\frac{1}{T}\sum_{t=1}^T\langle\bbm{\varepsilon}_t,\bm{\Delta}\bm{y}_{t-1}\rangle\\
            \Rightarrow & \frac{1}{2T}\sum_{t=1}^T\|\bm{\Delta}\bm{y}_{t-1}\|_2^2\leq\frac{1}{T}\sum_{t=1}^T\langle\bbm{\varepsilon}_t\bm{y}_{t-1}^\top,\bm{\Delta}\rangle.
        \end{split}
    \end{equation} 
    Since the rank of both $\widetilde{\bm{A}}_\textup{RR}$ and $\bm{A}^*$ is $r$, $\bm{\Delta}$ is at most rank $2r$. Denote the set of matrices $\mathcal{W}(r;p)=\{\bm{W}\in\mathbb{R}^{p\times p}:\|\bm{W}\|_\text{F}=1,~\text{rank}(\bm{W})\leq r\}$. Then, we have
    \begin{equation}
        \frac{1}{2T}\sum_{t=1}^T\|\bm{\Delta}\bm{y}_{t-1}\|_2^2 \leq \|\bm{\Delta}\|_\text{F}\sup_{\bm{W}\in\mathcal{W}(r;p)}\left\langle\frac{1}{T}\sum_{t=1}^T\bbm{\varepsilon}_t\bm{y}_{t-1}^\top,\bm{W}\right\rangle.
    \end{equation}

    By Lemma \ref{lemma:RSC}, with probability at least $1-2\exp[-CM_2^2\min(\tau^{-4},\tau^{-2})T]$,
    \begin{equation}
        \frac{1}{2T}\sum_{t=1}^T\|\bm{\Delta}\bm{y}_{t-1}\|_2^2\geq \frac{\alpha_\textup{RSC}}{2}\|\bm{\Delta}\|_\text{F}^2,
    \end{equation}
    and thus,
    \begin{equation}
        \|\bm{\Delta}\|_\text{F}\leq \frac{2}{\alpha_{\textup{RSC}}}\sup_{\bm{W}\in\mathcal{W}(r;p)}\left\langle\frac{1}{T}\sum_{t=1}^T\bbm{\varepsilon}_t\bm{y}_{t-1}^\top,\bm{W}\right\rangle.
    \end{equation}

    By Lemma \ref{lemma:deviation}, with probability at least $1-\exp(-Cp)$,
    \begin{equation}
        \xi(r,0)=\sup_{\bm{W}\in\mathcal{W}(r;p)}\left\langle\frac{1}{T}\sum_{t=1}^T\bbm{\varepsilon}_t\bm{y}_{t-1}^\top,\bm{W}\right\rangle\lesssim\tau^2M_1\sqrt{\frac{d_\textup{RR}(p,r)}{T}}.
    \end{equation}
    
    Combining these results, we have that with probability at least $1-2\exp[-CM_2^2\min(\tau^{-4},\tau^{-2})T]-\exp(-Cp)$,
    \begin{equation}
        \|\bm{\Delta}\|_\text{F} \lesssim \alpha_{\textup{RSC}}^{-1}\tau^2M_1\sqrt{\frac{d_\textup{RR}(p,r)}{T}}.
    \end{equation}
    
\end{proof}

Next, we derive the estimation error rate for the resulting initial estimator $\bm{C}$, $\bm{R}$, $\bm{P}$, and $\bm{D}$.

\begin{lemma}\label{lemma:A0_init}
    Suppose that Assumptions \ref{asmp:1}--\ref{asmp:3} hold. If 
    $T\gtrsim\max(\tau^4,\tau^2)M_2^{-2}p$, then 
    \begin{equation}
        \|\bm{A}^{(0)}-\bm{A}^*\|_\textup{F}\lesssim\sigma_1^{2/3}\kappa^2g_{\min}^{-2}\alpha_\textup{RSC}^{-1}\tau^2M_1\sqrt{\frac{d_\textup{RR}(p,r)}{T}}
    \end{equation}
    with probability at least $1-2\exp[-CM_2^2\min(\tau^{-2},\tau^{-4})T]-\exp(-Cp)$.
\end{lemma}

\begin{proof}
    
    Throughout this proof, we assume that the true values $\bm{C}^*$, $\bm{R}^*$ and $\bm{P}^*$ satisfy that $\lb\bm{C}^*~\bm{R}^*\rb^\top\lb\bm{C}^*~\bm{R}^*\rb=\bm{I}_r$ and $\lb\bm{C}^*~\bm{P}^*\rb^\top\lb\bm{C}^*~\bm{P}^*\rb=\bm{I}_r$.
    
    We begin with the rate of $\widetilde{\bm{R}}$ and $\widetilde{\bm{P}}$. Based on Lemma \ref{lemma:rr_initial}, we have that $\|\widetilde{\bm{A}}_\text{RR}-\bm{A}^*\|_\text{F}=o_p(1)$. By Lemma \ref{lemma:svd_perturb}, we have that
    \begin{equation}
        \|\sin\Theta(\widetilde{\bm{U}},\bm{U}^*)\|_\text{F}\lesssim\frac{\sigma_1\|\widetilde{\bm{A}}_\text{RR}-\bm{A}^*\|_\text{F}}{\sigma_r^2}\asymp\kappa\sigma_r^{-1}\|\widetilde{\bm{A}}_\text{RR}-\bm{A}^*\|_\text{F}
    \end{equation}
    and
    \begin{equation}
        \|\sin\Theta(\widetilde{\bm{V}},\bm{V}^*)\|_\text{F}\lesssim\frac{\sigma_1\|\widetilde{\bm{A}}_\text{RR}-\bm{A}^*\|_\text{F}}{\sigma_r^2}\asymp\kappa\sigma_r^{-1}\|\widetilde{\bm{A}}_\text{RR}-\bm{A}^*\|_\text{F}.
    \end{equation}

    By triangle inequality,
    \begin{equation}
        \begin{split}
            &\|\widetilde{\bm{U}}\widetilde{\bm{U}}^\top(\bm{I}_p-\widetilde{\bm{V}}\widetilde{\bm{V}}^\top)-\widetilde{\bm{U}}^*\widetilde{\bm{U}}^{*\top}(\bm{I}_p-\bm{V}^*\bm{V}^{*\top})\|_\text{F}\\
            \leq & \|(\widetilde{\bm{U}}\widetilde{\bm{U}}^\top-\bm{U}^*\bm{U}^{*\top})(\bm{I}_p-\widetilde{\bm{V}}\widetilde{\bm{V}}^\top)\|_\text{F}+\|\widetilde{\bm{U}}^*\widetilde{\bm{U}}^{*\top}(\bm{V}^*\bm{V}^{*\top}-\widetilde{\bm{V}}\widetilde{\bm{V}}^\top)\|_\text{F}\\
            \leq & \|\bm{I}_p-\widetilde{\bm{V}}\widetilde{\bm{V}}^\top\|_\text{op}\|\widetilde{\bm{U}}\widetilde{\bm{U}}^\top-\bm{U}^*\bm{U}^{*\top}\|_\text{F}+\|\widetilde{\bm{U}}^*\widetilde{\bm{U}}^{*\top}\|_\text{op}\|\bm{V}^*\bm{V}^{*\top}-\widetilde{\bm{V}}\widetilde{\bm{V}}^\top\|_\text{F}\\
            \leq & \sqrt{2}(\|\sin\Theta(\widetilde{\bm{U}},\bm{U}^*)\|_\text{F}+\|\sin\Theta(\widetilde{\bm{V}},\bm{V}^*)\|_\text{F})\\
            \asymp & \kappa\sigma_r^{-1}\|\widetilde{\bm{A}}_\text{RR}-\bm{A}^*\|_\text{F}.
        \end{split}
    \end{equation}
    
    By Assumption \ref{asmp:3}, $\sigma_{r-d}(\widetilde{\bm{U}}^*\widetilde{\bm{U}}^{*\top}(\bm{I}_p-\bm{V}^*\bm{V}^{*\top}))=\sigma_{r-d}(\widetilde{\bm{U}}^{*\top}\bm{V}^*_\perp)\geq g_{\min}$. By Lemma \ref{lemma:svd_perturb},
    \begin{equation}
        \begin{split}
            \|\sin\Theta(\widetilde{\bm{R}},\bm{R}^*)\|_\text{F}&\lesssim\frac{\|\widetilde{\bm{U}}\widetilde{\bm{U}}^\top(\bm{I}_p-\widetilde{\bm{V}}\widetilde{\bm{V}}^\top)-\widetilde{\bm{U}}^*\widetilde{\bm{U}}^{*\top}(\bm{I}_p-\bm{V}^*\bm{V}^{*\top})\|_\text{F}}{g_{\min}^2}\\
            &\asymp\kappa\sigma_r^{-1}g_{\min}^{-2}\|\widetilde{\bm{A}}_\text{RR}-\bm{A}^*\|_\text{F}.
        \end{split}
    \end{equation}
    Similarly, we also have
    \begin{equation}
        \|\sin\Theta(\widetilde{\bm{P}},\bm{P}^*)\|_\text{F}\lesssim\kappa\sigma_r^{-1}g_{\min}^{-2}\|\widetilde{\bm{A}}_\text{RR}-\bm{A}^*\|_\text{F}.
    \end{equation}
    
    For $\widetilde{\bm{C}}$, by triangle inequality,
    \begin{equation}
        \begin{split}
            &\|(\bm{I}_p-\bm{R}^*\bm{R}^{*\top})(\bm{I}_p-\bm{P}^*\bm{P}^{*\top})(\bm{U}^*\bm{U}^{*\top}+\bm{V}^*\bm{V}^{*\top})(\bm{I}_p-\bm{R}^*\bm{R}^{*\top})(\bm{I}_p-\bm{P}^*\bm{P}^{*\top})\\
            &-(\bm{I}_p-\widetilde{\bm{R}}\widetilde{\bm{R}}^\top)(\bm{I}_p-\widetilde{\bm{P}}\widetilde{\bm{P}}^\top)(\widetilde{\bm{U}}\widetilde{\bm{U}}^\top+\widetilde{\bm{V}}\widetilde{\bm{V}}^\top)(\bm{I}_p-\widetilde{\bm{R}}\widetilde{\bm{R}}^\top)(\bm{I}_p-\widetilde{\bm{P}}\widetilde{\bm{P}}^\top)\|_\text{F}\\
            \leq & \|(\widetilde{\bm{R}}\widetilde{\bm{R}}^\top-\bm{R}^*\bm{R}^{*\top})(\bm{I}_p-\bm{P}^*\bm{P}^{*\top})(\bm{U}^*\bm{U}^{*\top}+\bm{V}^*\bm{V}^{*\top})(\bm{I}_p-\bm{R}^*\bm{R}^{*\top})(\bm{I}_p-\bm{P}^*\bm{P}^{*\top})\|_\text{F}\\
            + & \|(\bm{I}_p-\widetilde{\bm{R}}\widetilde{\bm{R}}^{\top})(\widetilde{\bm{P}}\widetilde{\bm{P}}^\top-\bm{P}^*\bm{P}^{*\top})(\bm{U}^*\bm{U}^{*\top}+\bm{V}^*\bm{V}^{*\top})(\bm{I}_p-\bm{R}^*\bm{R}^{*\top})(\bm{I}_p-\bm{P}^*\bm{P}^{*\top})\|_\text{F}\\
            + & \|(\bm{I}_p-\widetilde{\bm{R}}\widetilde{\bm{R}}^\top)(\bm{I}_p-\widetilde{\bm{P}}\widetilde{\bm{P}}^\top)(\bm{U}^*\bm{U}^{*\top}+\bm{V}^*\bm{V}^{*\top}-\widetilde{\bm{U}}\widetilde{\bm{U}}^\top-\widetilde{\bm{V}}\widetilde{\bm{V}}^\top)(\bm{I}_p-\bm{R}^*\bm{R}^{*\top})\\
            &(\bm{I}_p-\bm{P}^*\bm{P}^{*\top})\|_\text{F}\\
            + & \|(\bm{I}_p-\widetilde{\bm{R}}\widetilde{\bm{R}}^\top)(\bm{I}_p-\widetilde{\bm{P}}\widetilde{\bm{P}}^\top)(\widetilde{\bm{U}}\widetilde{\bm{U}}^\top+\widetilde{\bm{V}}\widetilde{\bm{V}}^\top)(\bm{R}^*\bm{R}^{*\top}-\widetilde{\bm{R}}\widetilde{\bm{R}}^\top)(\bm{I}_p-\bm{P}^*\bm{P}^{*\top})\|_\text{F}\\
            + & \|(\bm{I}_p-\widetilde{\bm{R}}\widetilde{\bm{R}}^\top)(\bm{I}_p-\widetilde{\bm{P}}\widetilde{\bm{P}}^\top)(\widetilde{\bm{U}}\widetilde{\bm{U}}^\top+\widetilde{\bm{V}}\widetilde{\bm{V}}^\top)(\bm{I}_p-\widetilde{\bm{R}}\widetilde{\bm{R}}^\top)(\bm{P}^*\bm{P}^{*\top}-\widetilde{\bm{P}}\widetilde{\bm{P}}^\top)\|_\text{F}\\
            \lesssim & \|\sin\Theta(\widetilde{\bm{R}},\bm{R}^*)\|_\text{F} + \|\sin\Theta(\widetilde{\bm{P}},\bm{P}^*)\|_\text{F} + \|\sin\Theta(\widetilde{\bm{U}},\bm{U}^*)\|_\text{F} + \|\sin\Theta(\widetilde{\bm{V}},\bm{V}^*)\|_\text{F}\\
            \lesssim & \kappa\sigma_r^{-1}g_{\min}^{-2}\|\widetilde{\bm{A}}_\text{RR}-\bm{A}^*\|_\text{F}.
        \end{split}
    \end{equation}

    By Lemma \ref{lemma:eigen_perturb}, since
    \begin{equation}
        (\bm{I}_p-\bm{R}^*\bm{R}^{*\top})(\bm{I}_p-\bm{P}^*\bm{P}^{*\top})(\bm{U}^*\bm{U}^{*\top}+\bm{V}^*\bm{V}^{*\top})(\bm{I}_p-\bm{R}^*\bm{R}^{*\top})(\bm{I}_p-\bm{P}^*\bm{P}^{*\top})=2\bm{C}^*\bm{C^{*\top}},
    \end{equation}
    we have that
    \begin{equation}
        \|\sin\Theta(\widetilde{\bm{C}},\bm{C}^*)\|_\text{F}\lesssim\kappa\sigma_r^{-1}g_{\min}^{-2}\|\widetilde{\bm{A}}_\text{RR}-\bm{A}^*\|_\text{F}.
    \end{equation}
    
        Denote 
    \begin{equation}
        \begin{split}
            \widetilde{\bm{O}}_c&=\argmin_{\bm{O}\in\mathbb{O}^{p\times d}}\|\widetilde{\bm{C}}\bm{O}-\bm{C}^*\|_\text{F}\\
            \widetilde{\bm{O}}_r&=\argmin_{\bm{O}\in\mathbb{O}^{p\times (r-d)}}\|\widetilde{\bm{R}}\bm{O}-\bm{R}^*\|_\text{F}\\
            \widetilde{\bm{O}}_p&=\argmin_{\bm{O}\in\mathbb{O}^{p\times (r-d)}}\|\widetilde{\bm{P}}\bm{O}-\bm{P}^*\|_\text{F},
        \end{split}
    \end{equation}
    and let $\widetilde{\bm{O}}_1=\text{diag}(\widetilde{\bm{O}}_c,\widetilde{\bm{O}}_r)$
    and $\widetilde{\bm{O}}_2=\text{diag}(\widetilde{\bm{O}}_c,\widetilde{\bm{O}}_p)$.
    For $\widetilde{\bm{D}}$ and $\bm{D}^*$,
    \begin{equation}
        \begin{split}
            & \|\widetilde{\bm{O}}_1^\top\lb\widetilde{\bm{C}}~\widetilde{\bm{R}}\rb^\top\widetilde{\bm{A}}_\text{RR}\lb\widetilde{\bm{C}}~\widetilde{\bm{P}}\rb\widetilde{\bm{O}}_2-\lb\bm{C}^*~\bm{R}^*\rb^\top\bm{A}^*\lb\bm{C}^*~\bm{P}^*\rb\|_\text{F} \\
            \leq & \|\lb\widetilde{\bm{C}}~\widetilde{\bm{R}}\rb\widetilde{\bm{O}}_1-\lb\bm{C}^*~\bm{R}^*\rb\|_\text{F}\cdot\|\widetilde{\bm{A}}_\text{RR}\|_\text{op}\cdot\|\lb\widetilde{\bm{C}}~\widetilde{\bm{P}}\rb\|_\text{op}\\
            + & \|\lb\bm{C}^*~\bm{R}^*\rb\|_\text{op}\cdot\|\widetilde{\bm{A}}_\text{RR}-\bm{A}^*\|_\text{F}\cdot\|\lb\widetilde{\bm{C}}~\widetilde{\bm{P}}\rb\|_\text{op}\\
            + & \|\lb\bm{C}^*~\bm{R}^*\rb\|_\text{op}\cdot\|\bm{A}^*\|_\text{op}\cdot\|\lb\widetilde{\bm{C}}~\widetilde{\bm{P}}\rb\widetilde{\bm{O}}_2-\lb\bm{C}^*~\bm{P}^*\rb\|_\text{F}\\
            \lesssim & \kappa^2g_{\min}^{-2}\|\widetilde{\bm{A}}_\text{RR}-\bm{A}^*\|_\text{F}.
        \end{split}
    \end{equation}
    In summary, when $b\asymp\sigma_1^{1/3}$, we have that
    \begin{equation}
        \begin{split}
            \|\bm{C}^{(0)}\widetilde{\bm{O}}_c-b\bm{C}^*\|_\text{F}=b\|\widetilde{\bm{C}}\widetilde{\bm{O}}_c-\bm{C}^*\|_\text{F} & \lesssim \sigma_1^{4/3}\sigma_r^{-2}g_{\min}^{-2}\|\widetilde{\bm{A}}_\text{RR}-\bm{A}^*\|_\text{F},\\
            \|\bm{R}^{(0)}\widetilde{\bm{O}}_r-b\bm{R}^*\|_\text{F}=b\|\widetilde{\bm{R}}\widetilde{\bm{O}}_r-\bm{R}^*\|_\text{F} & \lesssim \sigma_1^{4/3}\sigma_r^{-2}g_{\min}^{-2}\|\widetilde{\bm{A}}_\text{RR}-\bm{A}^*\|_\text{F},\\
            \|\bm{P}^{(0)}\widetilde{\bm{O}}_r-b\bm{P}^*\|_\text{F}=b\|\widetilde{\bm{P}}\widetilde{\bm{O}}_p-\bm{P}^*\|_\text{F} & \lesssim \sigma_1^{4/3}\sigma_r^{-2}g_{\min}^{-2}\|\widetilde{\bm{A}}_\text{RR}-\bm{A}^*\|_\text{F},\\
            \text{and}~\|\widetilde{\bm{O}}_1^\top\bm{D}^{(0)}\widetilde{\bm{O}}_2-b^{-2}\bm{D}^*\|_\text{F}=b^{-2}\|\widetilde{\bm{O}}_1^\top\widetilde{\bm{D}}\widetilde{\bm{O}}_2-\bm{D}^*\|_\text{F} & \lesssim \sigma_1^{4/3}\sigma_r^{-2}g_{\min}^{-2}\|\widetilde{\bm{A}}_\text{RR}-\bm{A}^*\|_\text{F}.
        \end{split}
    \end{equation}
    
    By Lemmas \ref{lemma:errorbound} and \ref{lemma:rr_initial}, with probability approaching one,
    \begin{equation}
        \|\bm{A}^{(0)}-\bm{A}^*\|_\text{F}\lesssim \sigma_1^{2/3}\kappa^2g_{\min}^{-2}\|\widetilde{\bm{A}}_\text{RR}-\bm{A}^*\|_\text{F}\lesssim\sigma_1^{2/3}\kappa^2g_{\min}^{-2}\alpha_{\textup{RSC}}^{-1}\tau^2M_1\sqrt{\frac{d_\text{RR}(p,r)}{T}}.
    \end{equation}
\end{proof}

\subsection{Auxiliary lemmas}

We first present a deviation bound inequality for the quadratic term $R_T(\bm{M})=\sum_{t=0}^{T-1}\|\bm{M}\bm{y}_t\|_2^2$. This is Lemma 6 in \citet{wang2021high}

\begin{lemma}\label{lemma:quadratic}
    
For any $\bm{M}\in\mathbb{R}^{p\times p}$ such that $\|\bm{M}\|_\textup{F}=1$ and any $t>0$,
    \begin{equation}
        \begin{split}
            & \mathbb{P}[|R_T(\bm{M})-\mathbb{E}R_T(\bm{M})|\geq t]\\
            \leq & 2\exp\left(-\min\left(\frac{t^2}{\tau^4T\lambda_{\max}^2(\bm{\Sigma}_{\bbm{\varepsilon}})\lambda_{\max}^2(\widetilde{\bm{A}}\widetilde{\bm{A}}^\top)},\frac{t}{\tau^2\lambda_{\max}^2(\bm{\Sigma}_{\bbm{\varepsilon}})\lambda^2_{\max}(\widetilde{\bm{A}}\widetilde{\bm{A}}^\top)}\right)\right),
        \end{split}
    \end{equation}
    where $\widetilde{\bm{A}}$ is defined as
    \begin{equation}
        \widetilde{\bm{A}} = \begin{bmatrix}
            \bm{I}_p & \bm{A}^* & \bm{A}^{*2} & \bm{A}^{*3} & \dots & \bm{A}^{*(T-1)} & \dots\\
            \bm{O} & \bm{I}_p & \bm{A}^* & \bm{A}^{*2} & \dots & \bm{A}^{*(T-2)} & \dots\\
            \vdots & \vdots & \vdots & \vdots & \ddots & \vdots & \dots \\
            \bm{O} & \bm{O} & \bm{O} & \bm{O} & \dots & \bm{I}_p & \dots
        \end{bmatrix}.
    \end{equation}
\end{lemma}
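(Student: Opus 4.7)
The plan is to represent $R_T(\bm{M})$ as a centered quadratic form in the driving sub-Gaussian innovations and then invoke the Hanson--Wright inequality. Under Assumption \ref{asmp:1}, model \eqref{eq:rrVAR} admits the VMA$(\infty)$ representation $\bm{y}_t=\sum_{j=0}^\infty(\bm{A}^*)^j\bbm{\varepsilon}_{t-j}$. Stacking into $\bm{z}=(\bm{y}_{T-1}^\top,\dots,\bm{y}_0^\top)^\top$ and $\bbm{\varepsilon}=(\bbm{\varepsilon}_{T-1}^\top,\bbm{\varepsilon}_{T-2}^\top,\dots)^\top$ gives $\bm{z}=\widetilde{\bm{A}}\bbm{\varepsilon}$ with $\widetilde{\bm{A}}$ exactly as in the statement. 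Writing $\bbm{\varepsilon}=\widetilde{\bm{\Sigma}}_{\bbm{\varepsilon}}^{1/2}\bbm{\zeta}$ with the block-diagonal square root that appears in the proof of Lemma \ref{lemma:RSC}, Assumption \ref{asmp:2} guarantees that $\bbm{\zeta}$ has mutually independent, mean-zero, $\tau^2$-sub-Gaussian entries. Therefore
\[
R_T(\bm{M})=\bm{z}^\top(\bm{I}_T\otimes\bm{M}^\top\bm{M})\bm{z}=\bbm{\zeta}^\top\bm{\Sigma}_{\bm{M}}\bbm{\zeta},
\qquad
\bm{\Sigma}_{\bm{M}}:=\widetilde{\bm{\Sigma}}_{\bbm{\varepsilon}}^{1/2}\widetilde{\bm{A}}^\top(\bm{I}_T\otimes\bm{M}^\top\bm{M})\widetilde{\bm{A}}\widetilde{\bm{\Sigma}}_{\bbm{\varepsilon}}^{1/2}.
\]

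The next step is to bound the two relevant norms of $\bm{\Sigma}_{\bm{M}}$. Using submultiplicativity of the operator norm together with $\|\bm{I}_T\otimes\bm{M}^\top\bm{M}\|_\textup{op}=\|\bm{M}\|_\textup{op}^2\leq\|\bm{M}\|_\textup{F}^2=1$, one obtains $\|\bm{\Sigma}_{\bm{M}}\|_\textup{op}\leq\lambda_{\max}(\bm{\Sigma}_{\bbm{\varepsilon}})\lambda_{\max}(\widetilde{\bm{A}}\widetilde{\bm{A}}^\top)$. For the Frobenius norm, the inequality $\|\bm{P}\bm{Q}\bm{P}\|_\textup{F}\leq\|\bm{P}\|_\textup{op}^2\|\bm{Q}\|_\textup{F}$ combined with $\|\bm{I}_T\otimes\bm{M}^\top\bm{M}\|_\textup{F}^2=T\|\bm{M}^\top\bm{M}\|_\textup{F}^2\leq T\|\bm{M}\|_\textup{F}^4=T$ yields
\[
\|\bm{\Sigma}_{\bm{M}}\|_\textup{F}^2\leq T\lambda_{\max}^2(\bm{\Sigma}_{\bbm{\varepsilon}})\lambda_{\max}^2(\widetilde{\bm{A}}\widetilde{\bm{A}}^\top).
\]
Applying the Hanson--Wright inequality of Rudelson--Vershynin to the centered quadratic form $\bbm{\zeta}^\top\bm{\Sigma}_{\bm{M}}\bbm{\zeta}-\mathbb{E}[\bbm{\zeta}^\top\bm{\Sigma}_{\bm{M}}\bbm{\zeta}]$ and substituting these two norm bounds produces the claimed tail inequality.

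The main technical subtlety is that $\bbm{\zeta}$ is formally infinite-dimensional, so a direct appeal to the finite-dimensional Hanson--Wright inequality is not immediate. I would handle this by truncating $\widetilde{\bm{A}}$ to its first $N$ block columns, applying Hanson--Wright in $\mathbb{R}^{Np}$, and passing to the limit $N\to\infty$. This is legitimate because, under Assumption \ref{asmp:1}, the powers $(\bm{A}^*)^j$ decay geometrically in operator norm, so $\widetilde{\bm{A}}\widetilde{\bm{A}}^\top$ has finite operator norm (equivalently $1/\mu_{\min}(\mathcal{A})$ via the spectral identity of \citet{basu2015regularized}) and both the truncated Frobenius and operator norms converge to their infinite-dimensional counterparts monotonically; dominated convergence then transfers the tail bound. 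Because the Frobenius and operator norm bounds derived above depend only on $\lambda_{\max}(\widetilde{\bm{A}}\widetilde{\bm{A}}^\top)$ and $\lambda_{\max}(\bm{\Sigma}_{\bbm{\varepsilon}})$, and not on the truncation level, the limit inequality takes precisely the form stated in the lemma.
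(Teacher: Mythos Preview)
Your proposal is correct and follows essentially the same route the paper uses. In fact, the paper does not prove Lemma~\ref{lemma:quadratic} at all but merely cites it as Lemma~6 in \citet{wang2021high}; the argument you give (VMA$(\infty)$ stacking $\bm{z}=\widetilde{\bm{A}}\bbm{\varepsilon}$, rewriting $R_T(\bm{M})=\bbm{\zeta}^\top\bm{\Sigma}_{\bm{M}}\bbm{\zeta}$, bounding $\|\bm{\Sigma}_{\bm{M}}\|_\textup{op}$ and $\|\bm{\Sigma}_{\bm{M}}\|_\textup{F}$, then Hanson--Wright) is exactly the machinery the paper deploys in the proof of the neighboring Lemma~\ref{lemma:RSC}, and your truncation remark to handle the infinite-dimensional $\bbm{\zeta}$ is a welcome clarification that the paper leaves implicit.
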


The following lemma is the covering number of $\mathcal{W}(r,d;p)$. The proof  essentially follows that of the Lemma 3.1 in \citet{candes2011tight}.

\begin{lemma}\label{lemma:cs_covering}
    
    Let $\overline{\mathcal{W}}(r,d;p)$ be an $\epsilon$-net of $\mathcal{W}(r,d;p)$, where $\epsilon\in(0,1]$. Then
    \begin{equation}
        |\overline{\mathcal{W}}(r,d;p)| \leq \left(\frac{24}{\epsilon}\right)^{p(2r-d)+r^2}.
    \end{equation}
\end{lemma}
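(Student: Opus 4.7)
The plan is to build the $\epsilon$-net as a product net over a normalized parameterization of $\bm{W}$. First, I would show that every $\bm{W}\in\mathcal{W}(r,d;p)$ admits a representation $\bm{W}=\lb\bm{C}~\bm{R}\rb\bm{D}\lb\bm{C}~\bm{P}\rb^{\top}$ with $\lb\bm{C}~\bm{R}\rb,\lb\bm{C}~\bm{P}\rb\in\mathbb{O}^{p\times r}$ (so $\bm{C}\in\mathbb{O}^{p\times d}$, $\bm{R},\bm{P}\in\mathbb{O}^{p\times(r-d)}$, $\bm{C}^{\top}\bm{R}=\bm{C}^{\top}\bm{P}=\bm{0}$) and $\|\bm{D}\|_\textup{F}=1$. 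Starting from the SVD $\bm{W}=\bm{U}\bm{S}\bm{V}^{\top}$ and following the construction spelled out in Section~\ref{sec:2.2}, I pick an orthonormal basis $\bm{C}$ of any $d$-dimensional subspace of $\mathcal{M}(\bm{U})\cap\mathcal{M}(\bm{V})$, extend orthonormally to $\lb\bm{C}~\bm{R}\rb$ spanning $\mathcal{M}(\bm{U})$ and to $\lb\bm{C}~\bm{P}\rb$ spanning $\mathcal{M}(\bm{V})$, and set $\bm{D}=\lb\bm{C}~\bm{R}\rb^{\top}\bm{W}\lb\bm{C}~\bm{P}\rb$; the unit-Frobenius constraint on $\bm{W}$ transfers to $\|\bm{D}\|_\textup{F}=1$ by orthonormal isometry.

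Next, I would cover each factor independently using the textbook fact that the unit ball of any norm on $\mathbb{R}^{N}$ admits an $\eta$-net (in that norm) of cardinality at most $(3/\eta)^{N}$. Since the Stiefel manifold is contained in the operator-norm unit ball, take $(\epsilon/8)$-nets $\mathcal{N}_C,\mathcal{N}_R,\mathcal{N}_P$ in operator norm of the unit balls in $\mathbb{R}^{p\times d}$, $\mathbb{R}^{p\times(r-d)}$, $\mathbb{R}^{p\times(r-d)}$, together with an $(\epsilon/8)$-net $\mathcal{N}_D$ in Frobenius norm of the unit ball in $\mathbb{R}^{r\times r}$; their cardinalities are at most $(24/\epsilon)^{pd},(24/\epsilon)^{p(r-d)},(24/\epsilon)^{p(r-d)},(24/\epsilon)^{r^{2}}$. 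Assembling the induced external net $\overline{\mathcal{W}}(r,d;p)=\{\lb\bar{\bm{C}}~\bar{\bm{R}}\rb\bar{\bm{D}}\lb\bar{\bm{C}}~\bar{\bm{P}}\rb^{\top}:\bar{\bm{C}}\in\mathcal{N}_C,\bar{\bm{R}}\in\mathcal{N}_R,\bar{\bm{P}}\in\mathcal{N}_P,\bar{\bm{D}}\in\mathcal{N}_D\}$ produces at most $(24/\epsilon)^{p(2r-d)+r^{2}}$ approximations, which is exactly the claimed bound.

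To close the argument, I would verify the $\epsilon$-net property by a three-term triangle inequality on the product, combined with the column-concatenation bound $\|\lb\bm{A}~\bm{B}\rb\|_\textup{op}\leq\|\bm{A}\|_\textup{op}+\|\bm{B}\|_\textup{op}$ (immediate from $\|\bm{A}\bm{x}+\bm{B}\bm{y}\|_2\leq\|\bm{A}\|_\textup{op}\|\bm{x}\|_2+\|\bm{B}\|_\textup{op}\|\bm{y}\|_2$ with $\|\bm{x}\|_2^2+\|\bm{y}\|_2^2=1$). Using $\|\lb\bm{C}~\bm{R}\rb\|_\textup{op}=\|\lb\bm{C}~\bm{P}\rb\|_\textup{op}=1$ and $\|\bm{D}\|_\textup{F}=1$, one gets $\|\bm{W}-\bar{\bm{W}}\|_\textup{F}\leq 2\|\bm{C}-\bar{\bm{C}}\|_\textup{op}+\|\bm{R}-\bar{\bm{R}}\|_\textup{op}+\|\bm{P}-\bar{\bm{P}}\|_\textup{op}+\|\bm{D}-\bar{\bm{D}}\|_\textup{F}\leq 5(\epsilon/8)<\epsilon$. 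The coefficient $2$ on the $\bm{C}$-error is precisely the feature responsible for contracting the exponent from the naive $2pr+r^{2}$ of covering $\mathbb{O}^{p\times r}$ twice down to $p(2r-d)+r^{2}$: the common factor $\bm{C}$ appears in both orthonormal bases and is covered only once. The only mildly subtle step is the first one---producing a valid $d$-dimensional common basis when $\dim(\mathcal{M}(\bm{U})\cap\mathcal{M}(\bm{V}))>d$, which I handle by selecting any $d$-dimensional subspace of the intersection before completing orthonormally; all remaining work is standard volume-based covering bookkeeping.
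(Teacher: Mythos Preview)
Your approach is essentially the same as the paper's: parameterize $\bm{W}$ by $(\bm{C},\bm{R},\bm{P},\bm{D})$ with orthonormal $\lb\bm{C}~\bm{R}\rb,\lb\bm{C}~\bm{P}\rb$ and $\|\bm{D}\|_\textup{F}=1$, cover each factor separately at scale $\epsilon/8$, and assemble the product net. The only cosmetic difference is that the paper covers $\bm{C},\bm{R},\bm{P}$ in the $\|\cdot\|_{2,\infty}$ norm (the max column $\ell_2$-norm) rather than the operator norm; both choices yield the same exponent $p(2r-d)+r^2$ via the volumetric bound $(3/\eta)^N$.

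One small arithmetic slip: in your final inequality you implicitly use $\|\lb\bar{\bm{C}}~\bar{\bm{R}}\rb\|_\textup{op}\leq 1$ and $\|\lb\bar{\bm{C}}~\bar{\bm{P}}\rb\|_\textup{op}\leq 1$, but your net points are merely in the operator-norm unit ball, not orthonormal with mutually orthogonal columns, so in general only $\|\lb\bar{\bm{C}}~\bar{\bm{R}}\rb\|_\textup{op}\leq\sqrt{2}$ is available from the concatenation bound. This does not break the argument---carrying the $\sqrt{2}$ factors through the same telescoping gives $(2+3\sqrt{2})(\epsilon/8)<\epsilon$---but the clean coefficients $2,1,1,1$ and the exact value $5(\epsilon/8)$ are not quite justified as written.
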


\begin{proof}
    
    For any $\bm{W}=\lb\bm{C}~\bm{R}\rb\bm{D}\lb\bm{C}~\bm{P}\rb^\top \in \mathcal{W}(r,d;p)$, where $\lb\bm{C}~\bm{R}\rb,\lb\bm{C}~\bm{P}\rb\in\mathbb{O}^{p\times r}$ and $\bm{D}\in\mathbb{R}^{r\times r}$, we construct an $\epsilon$-net for $\bm{W}$ by covering the set of $\bm{C}$, $\bm{R}$, $\bm{P}$, and $\bm{D}$. 
    
    By Lemma \ref{lemma:covering}, we take $\overline{\mathbb{D}}$ to be an $\epsilon/8$-net for $\bm{D}$ with $|\overline{\mathbb{D}}|\leq (24/\epsilon)^{r^2}$.
    
    Next, to cover $\mathbb{O}^{p\times r}$, we consider the $\|\cdot\|_{2,\infty}$ norm, defined as
    \begin{equation}
        \|\bm{X}\|_{2,\infty}=\max_{i}\|\bm{X}_i\|_2,
    \end{equation}
    where $\bm{X}_i$ is the $i$-th column of $\bm{X}$. Let $\mathbb{Q}^{p\times r}=\{\bm{X}\in\mathbb{R}^{p\times r}:\|\bm{X}\|_{2,\infty}\leq 1\}$. It can be easily checked that $\mathbb{O}^{p\times r}\subset\mathbb{Q}^{p\times r}$, and thus an $\epsilon/8$-net $\overline{\mathbb{O}}^{p\times r}$ for $\mathbb{O}^{p\times r}$ obeying $|\overline{\mathbb{O}}^{p\times r}|\leq (24/\epsilon)^{pr}$.
    
    Denote $\overline{\mathcal{W}}=\{\overline{\bm{D}}\in\overline{\mathbb{D}},\overline{\bm{C}}\in\overline{\mathbb{O}}^{p\times d},\overline{\bm{R}}\in\overline{\mathbb{O}}^{p\times (r-d)},\overline{\bm{P}}\in\overline{\mathbb{O}}^{p\times (r-d)}\}$ and we have
    \begin{equation}
        |\overline{\mathcal{W}}| \leq |\overline{\mathbb{D}}|\times |\overline{\mathbb{O}}^{p\times d}| \times |\overline{\mathbb{O}}^{p\times (r-d)}|^2= \left(\frac{24}{\epsilon}\right)^{p(2r-d)+r^2}.
    \end{equation}
    It suffices to show that for any $\bm{W}\in\mathcal{W}(r,d;p)$, there exists a $\overline{\bm{W}}\in\overline{\mathcal{W}}$ such that $\|\bm{W}-\overline{\bm{W}}\|_\text{F}\leq \epsilon$.
    
    For any fixed $\bm{W}\in\mathcal{W}(r,d;p)$, decompose it as $\bm{W}=\lb\bm{C}~\bm{R}\rb\bm{D}\lb\bm{C}~\bm{P}\rb^\top$. Then, there exist $\overline{\bm{W}}=\lb\overline{\bm{C}}~\overline{\bm{R}}\rb\overline{\bm{D}}\lb\overline{\bm{C}}~\overline{\bm{P}}\rb^\top$ satisfying that $\|\overline{\bm{C}}-\bm{C}\|_{2,\infty}\leq \epsilon/8$, $\|\overline{\bm{R}}-\bm{R}\|_{2,\infty}\leq \epsilon/8$, $\|\overline{\bm{P}}-\bm{P}\|_{2,\infty}\leq \epsilon/8$, and $\|\overline{\bm{D}}-\bm{D}\|_\text{F}\leq \epsilon/8$.
    This gives
    \begin{equation}
        \begin{split}
            &\|\bm{W}-\overline{\bm{W}}\|_\text{F}\\
            \leq & \|(\lb\bm{C}~\bm{R}\rb-\lb\overline{\bm{C}}~\overline{\bm{R}}\rb)\bm{D}\lb\bm{C}~\bm{P}\rb^\top\|_\text{F} + \|\lb\overline{\bm{C}}~\overline{\bm{R}}\rb(\bm{D}-\overline{\bm{D}})\lb\bm{C}~\bm{P}\rb^\top\|_\text{F}\\
            + & \|\lb\overline{\bm{C}}~\overline{\bm{R}}\rb\overline{\bm{D}}(\lb\bm{C}~\bm{P}\rb-\lb\overline{\bm{C}}~\overline{\bm{P}}\rb)^\top\|_\text{F}\\
            \leq & \|\bm{D}\|_\text{F}\cdot\|\lb\bm{C}~\bm{P}\rb\|_\text{op}\cdot\|\lb\bm{C}-\overline{\bm{C}}~\bm{R}-\overline{\bm{R}}\rb\|_{2,\infty}\\
            + & \|\lb\overline{\bm{C}}~\overline{\bm{R}}\rb\|_\text{op}\cdot\|\bm{D}-\overline{\bm{D}}\|_\text{F}\cdot\|\lb\bm{C}~\bm{P}\rb\|_\text{op}\\
            + & \|\overline{\bm{D}}\|_\text{F}\cdot\|\lb\overline{\bm{C}}~\overline{\bm{R}}\rb\|_\text{op}\cdot\|\lb\bm{C}-\overline{\bm{C}}~\bm{P}-\overline{\bm{P}}\rb\|_{2,\infty}\\
            \leq & \frac{\epsilon}{4}+\frac{\epsilon}{4}+\frac{\epsilon}{2}=\epsilon.
        \end{split}
    \end{equation}
    
\end{proof}

The next lemma is the covering number of the $p$-dimensional unit sphere, which follows directly from Corollary 4.2.13 of \citet{vershynin2018high}.

\begin{lemma}\label{lemma:covering}
    
    Let $\mathcal{N}$ be an $\epsilon$-net of the unit sphere $\mathbb{S}^{p-1}$, where $\epsilon\in(0,1]$. Then,
    \begin{equation}
        |\mathcal{N}| \leq \left(\frac{3}{\epsilon}\right)^p.
    \end{equation}
    
\end{lemma}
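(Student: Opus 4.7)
The plan is to run the standard volumetric packing argument. I will construct the net by taking $\mathcal{N}$ to be a maximal $\epsilon$-separated subset of $\mathbb{S}^{p-1}$, i.e., a subset such that $\|x-y\|_2 \geq \epsilon$ for any two distinct $x,y \in \mathcal{N}$, and such that no point can be added while preserving this property. By maximality, for every $v \in \mathbb{S}^{p-1}$ there must exist some $x \in \mathcal{N}$ with $\|v-x\|_2 < \epsilon$ (otherwise $v$ could be added), so $\mathcal{N}$ is automatically an $\epsilon$-net. It therefore suffices to upper bound the cardinality of any such maximal $\epsilon$-separated set; any $\epsilon$-net can be taken to have cardinality at most this bound.

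The key step is the volume comparison. First, I would observe that by the $\epsilon$-separation, the open Euclidean balls $B(x,\epsilon/2) \subset \mathbb{R}^p$ for $x \in \mathcal{N}$ are pairwise disjoint. Second, since each $x \in \mathcal{N}$ lies on $\mathbb{S}^{p-1}$, every point of $B(x,\epsilon/2)$ has norm at most $1+\epsilon/2$, so $\bigcup_{x \in \mathcal{N}} B(x,\epsilon/2) \subset B(0, 1+\epsilon/2)$. Taking $p$-dimensional Lebesgue volumes and using that $\mathrm{vol}(B(0,r)) = r^p \, \mathrm{vol}(B(0,1))$,
\begin{equation}
|\mathcal{N}| \cdot \left(\frac{\epsilon}{2}\right)^p \mathrm{vol}(B(0,1)) \;\leq\; \left(1+\frac{\epsilon}{2}\right)^p \mathrm{vol}(B(0,1)),
\end{equation}
so that $|\mathcal{N}| \leq (1+2/\epsilon)^p$. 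Finally, since $\epsilon \in (0,1]$, we have $1 + 2/\epsilon \leq 3/\epsilon$, which yields $|\mathcal{N}| \leq (3/\epsilon)^p$ and completes the argument.

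There is essentially no obstacle here; the only care point is the restriction $\epsilon \leq 1$, which is exactly what converts the raw bound $(1+2/\epsilon)^p$ into the cleaner form $(3/\epsilon)^p$ stated in the lemma. Since the lemma is invoked in Lemmas \ref{lemma:RSC} and \ref{lemma:deviation} with fixed small $\epsilon$ (and in Lemma \ref{lemma:deviation} with $\epsilon = 0.1$), this hypothesis is harmless in every application.
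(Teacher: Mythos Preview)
Your argument is correct and is precisely the standard volumetric packing argument underlying Corollary 4.2.13 of \citet{vershynin2018high}, which is exactly what the paper invokes (without spelling out details). The only minor point worth noting is that the lemma as stated reads ``Let $\mathcal{N}$ be an $\epsilon$-net\ldots'', which literally would be false for an arbitrary $\epsilon$-net; you correctly interpret it as an existence statement and construct the net via a maximal $\epsilon$-separated set, matching the intended meaning and usage in Lemmas \ref{lemma:RSC} and \ref{lemma:deviation}.
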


The following two lemmas are variants of the Davis-Kahan theorem for eigenvector perturbation for symmetric matrices and singular vector perturbation for generic matrices. These results are Theorems 2 and 4 in \citet{yu2015useful}. To make the proof self-contained, they are presented below.

\begin{lemma}\label{lemma:eigen_perturb}
    Let $\bm{\Sigma}$, $\widehat{\bm{\Sigma}}\in\mathbb{R}^{p\times p}$ be symmetric, with eigenvalues $\lambda_1\geq \dots\geq \lambda_p$ and $\widehat{\lambda}_1\geq\dots\geq\widehat{\lambda}_p$, respectively. Fix $1\leq r\leq s\leq p$ and assume that $\min(\lambda_{r-1}-\lambda_r,\lambda_{s}-\lambda_{s+1})>0$, where $\lambda_0:=\infty$ and $\lambda_{p+1}:=-\infty$. Let $d:=s-r+1$, and let $\bm{V}=\lb\bm{v}_r~\bm{v}_{r+1}~\dots~\bm{v}_s\rb\in\mathbb{O}^{p\times d}$ and $\widehat{\bm{V}}=\lb\widehat{\bm{v}}_r~\widehat{\bm{v}}_{r+1}~\dots~\widehat{\bm{v}}_s\rb\in\mathbb{O}^{p\times d}$ contain the eigenvectors corresponding to the eigenvalues. Then
    \begin{equation}
        \|\sin\Theta(\widehat{\bm{V}},\bm{V})\|_\textup{F}\leq\frac{2\min(d^{1/2}\|\widehat{\bm{\Sigma}}-\bm{\Sigma}\|_\textup{op},\|\widehat{\bm{\Sigma}}-\bm{\Sigma}\|_\text{F})}{\min(\lambda_{r-1}-\lambda_r,\lambda_s-\lambda_{s-1})}.
    \end{equation}
\end{lemma}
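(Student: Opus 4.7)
}
The plan is to reduce the statement to the classical Davis--Kahan $\sin\Theta$ theorem, which bounds $\|\sin\Theta(\widehat{\bm{V}},\bm{V})\|_\textup{F}$ in terms of the gap between the eigenvalues of $\bm{\Sigma}$ associated with $\bm{V}$ and the complementary eigenvalues of $\widehat{\bm{\Sigma}}$ (not $\bm{\Sigma}$). Concretely, if $[a,b]\subset\mathbb{R}$ contains exactly the eigenvalues $\lambda_r,\dots,\lambda_s$ of $\bm{\Sigma}$ (and no others), and if every eigenvalue $\widehat{\lambda}_j$ of $\widehat{\bm{\Sigma}}$ with $j\notin\{r,\dots,s\}$ lies outside the enlarged interval $[a-\delta,b+\delta]$ for some $\delta>0$, then the classical result (see, e.g., Theorem VII.3.1 of Bhatia) yields
\begin{equation*}
\|\sin\Theta(\widehat{\bm{V}},\bm{V})\|_\textup{F}\leq \frac{\|\widehat{\bm{\Sigma}}-\bm{\Sigma}\|_\textup{F}}{\delta}.
\end{equation*}
The obstacle is that the statement of the lemma uses only the gaps of $\bm{\Sigma}$, namely $\lambda_{r-1}-\lambda_r$ and $\lambda_s-\lambda_{s+1}$, so one has to transfer these into a gap condition for $\widehat{\bm{\Sigma}}$.

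I would handle this by splitting into two regimes. Set $\Delta:=\min(\lambda_{r-1}-\lambda_r,\lambda_s-\lambda_{s+1})$. In the trivial regime $\|\widehat{\bm{\Sigma}}-\bm{\Sigma}\|_\textup{op}\geq\Delta/2$, the right-hand side of the lemma is at least $d^{1/2}$, while $\|\sin\Theta(\widehat{\bm{V}},\bm{V})\|_\textup{F}\leq d^{1/2}$ holds automatically because the nontrivial singular values of $\widehat{\bm{V}}^\top\bm{V}_\perp$ are bounded by $1$. In the nontrivial regime $\|\widehat{\bm{\Sigma}}-\bm{\Sigma}\|_\textup{op}<\Delta/2$, Weyl's inequality gives $|\widehat{\lambda}_j-\lambda_j|\leq\|\widehat{\bm{\Sigma}}-\bm{\Sigma}\|_\textup{op}<\Delta/2$ for all $j$. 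Take $a=\lambda_s$ and $b=\lambda_r$; then for $j\leq r-1$, $\widehat{\lambda}_j\geq\lambda_j-\Delta/2\geq\lambda_{r-1}-\Delta/2\geq b+\Delta/2$, and for $j\geq s+1$, $\widehat{\lambda}_j\leq\lambda_j+\Delta/2\leq a-\Delta/2$. Hence the eigenvalues of $\widehat{\bm{\Sigma}}$ indexed outside $\{r,\dots,s\}$ are separated from $[a,b]$ by at least $\delta=\Delta/2$.

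Applying the classical Davis--Kahan theorem with this $\delta$ yields
\begin{equation*}
\|\sin\Theta(\widehat{\bm{V}},\bm{V})\|_\textup{F}\leq\frac{2\|\widehat{\bm{\Sigma}}-\bm{\Sigma}\|_\textup{F}}{\Delta},
\end{equation*}
which is the Frobenius-norm branch of the claimed bound. For the operator-norm branch, I would apply the same argument but use the operator-norm version of Davis--Kahan (e.g.\ Wedin's form), which bounds each sine of the $d$ canonical angles individually by $\|\widehat{\bm{\Sigma}}-\bm{\Sigma}\|_\textup{op}/\delta$; summing $d$ such squared entries and taking a square root contributes the factor $d^{1/2}$. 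Combining the two branches gives $\min(d^{1/2}\|\widehat{\bm{\Sigma}}-\bm{\Sigma}\|_\textup{op},\|\widehat{\bm{\Sigma}}-\bm{\Sigma}\|_\textup{F})$ in the numerator.

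The main obstacle, as mentioned, is the translation from a $\bm{\Sigma}$-gap to a $\widehat{\bm{\Sigma}}$-gap; the two-regime split with Weyl's inequality is what forces the factor of $2$ in the final bound. Everything else is a direct invocation of the classical theorem and the elementary norm inequalities $\|\bm{M}\|_\textup{F}\leq d^{1/2}\|\bm{M}\|_\textup{op}$ for matrices of rank at most $d$.
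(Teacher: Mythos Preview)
Your proposal is correct and is precisely the argument of Yu, Wang and Samworth (2015, Theorem~2), which is exactly the reference the paper cites for this lemma; the paper does not supply its own proof but simply states the result and attributes it to that source. Your two-regime split via Weyl's inequality, followed by the classical Davis--Kahan bound with $\delta=\Delta/2$, is the standard derivation and explains the factor of~$2$; the operator-norm branch comes from $\|\sin\Theta\|_\textup{F}\le d^{1/2}\|\sin\Theta\|_\textup{op}$ together with the operator-norm Davis--Kahan bound, just as you describe.
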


\begin{lemma}\label{lemma:svd_perturb}
    Let $\bm{A}$, $\widehat{\bm{A}}\in\mathbb{R}^{p\times q}$ have singular values $\sigma_1\geq\dots\geq\sigma_{\min(p,q)}$ and $\widehat{\sigma}_1\geq\dots\geq\widehat{\sigma}_{\min(p,q)}$, respectively. Fix $1\leq r\leq s\leq \textup{rank}(\bm{A})$ and assume that $\min(\sigma_{r-1}^2-\sigma_r^2,\sigma_s^2-\sigma_{s-1}^2)>0$, where $\sigma_0^2:=\infty$ and $\sigma_{\textup{rank}(\bm{A})+1}^2=-\infty$. Let $d:=s-r+1$, and let $\bm{V}=\lb\bm{v}_r~\bm{v}_{r+1}~\dots~\bm{v}_s\rb\in\mathbb{O}^{p\times d}$ and $\widehat{\bm{V}}=\lb\widehat{\bm{v}}_r~\widehat{\bm{v}}_{r+1}~\dots~\widehat{\bm{v}}_s\rb\in\mathbb{O}^{p\times d}$ contain the right singular vectors. Then,
    \begin{equation}
        \|\sin\Theta(\widehat{\bm{V}},\bm{V})\|_\textup{F}\leq \frac{2(2\sigma_1+\|\widehat{\bm{A}}-\bm{A}\|_\textup{op})\min(d^{1/2}\|\widehat{\bm{A}}-\bm{A}\|_\textup{op},\|\widehat{\bm{A}}-\bm{A}\|_\textup{F})}{\min(\sigma_{r-1}^2-\sigma_r^2,\sigma_{s}^2-\sigma_{s-1}^2)}.
    \end{equation}
\end{lemma}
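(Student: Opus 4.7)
The plan is to reduce the singular‐vector perturbation statement to the symmetric eigenvector perturbation result already available as Lemma \ref{lemma:eigen_perturb} by passing to the Gram matrices $\bm{M}:=\bm{A}^\top\bm{A}$ and $\widehat{\bm{M}}:=\widehat{\bm{A}}^\top\widehat{\bm{A}}$. Both are $q\times q$ symmetric positive semi‑definite matrices, and by construction the eigenvalues of $\bm{M}$ (respectively $\widehat{\bm{M}}$) are exactly $\sigma_1^2\ge\cdots\ge\sigma_{\min(p,q)}^2$ (respectively $\widehat{\sigma}_1^2\ge\cdots\ge\widehat{\sigma}_{\min(p,q)}^2$), padded with zeros if $q>\min(p,q)$. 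Moreover, the right singular vectors $\bm{v}_r,\dots,\bm{v}_s$ of $\bm{A}$ collected in $\bm{V}$ are precisely the eigenvectors of $\bm{M}$ associated to eigenvalues $\sigma_r^2,\dots,\sigma_s^2$, and likewise $\widehat{\bm{V}}$ collects the eigenvectors of $\widehat{\bm{M}}$ for $\widehat{\sigma}_r^2,\dots,\widehat{\sigma}_s^2$. The spectral gaps required by Lemma \ref{lemma:eigen_perturb} applied to $(\bm{M},\widehat{\bm{M}})$ are then $\sigma_{r-1}^2-\sigma_r^2$ and $\sigma_s^2-\sigma_{s+1}^2$, which matches (up to the obvious correction of the $\sigma_{s-1}^2$ typo in the displayed statement) the denominator on the right‑hand side.

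The next step is to convert the perturbation $\widehat{\bm{A}}-\bm{A}$ of the data matrix into a perturbation of $\bm{M}$. I would use the telescoping identity
\begin{equation}
    \widehat{\bm{M}}-\bm{M}=\widehat{\bm{A}}^\top(\widehat{\bm{A}}-\bm{A})+(\widehat{\bm{A}}-\bm{A})^\top\bm{A},
\end{equation}
combined with Weyl's inequality $\|\widehat{\bm{A}}\|_\textup{op}\le\|\bm{A}\|_\textup{op}+\|\widehat{\bm{A}}-\bm{A}\|_\textup{op}=\sigma_1+\|\widehat{\bm{A}}-\bm{A}\|_\textup{op}$, to obtain the two multiplicative bounds
\begin{equation}
    \|\widehat{\bm{M}}-\bm{M}\|_\textup{op}\le\bigl(2\sigma_1+\|\widehat{\bm{A}}-\bm{A}\|_\textup{op}\bigr)\|\widehat{\bm{A}}-\bm{A}\|_\textup{op},
\end{equation}
\begin{equation}
    \|\widehat{\bm{M}}-\bm{M}\|_\textup{F}\le\bigl(2\sigma_1+\|\widehat{\bm{A}}-\bm{A}\|_\textup{op}\bigr)\|\widehat{\bm{A}}-\bm{A}\|_\textup{F},
\end{equation}
where the second uses the fact that $\|\bm{X}\bm{Y}\|_\textup{F}\le\|\bm{X}\|_\textup{op}\|\bm{Y}\|_\textup{F}$ applied in each summand. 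This is the only real calculation and is entirely routine; there is no genuine technical obstacle, since the perturbation analysis for singular values and vectors of rectangular matrices is classically obtained from the symmetric case in precisely this manner.

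Finally, I would invoke Lemma \ref{lemma:eigen_perturb} on the pair $(\bm{M},\widehat{\bm{M}})$ with indices $r,s$ and dimension $d=s-r+1$, which yields
\begin{equation}
    \|\sin\Theta(\widehat{\bm{V}},\bm{V})\|_\textup{F}\le\frac{2\min\bigl(d^{1/2}\|\widehat{\bm{M}}-\bm{M}\|_\textup{op},\|\widehat{\bm{M}}-\bm{M}\|_\textup{F}\bigr)}{\min(\sigma_{r-1}^2-\sigma_r^2,\sigma_s^2-\sigma_{s+1}^2)},
\end{equation}
and substituting the two perturbation bounds above—observing that the common factor $(2\sigma_1+\|\widehat{\bm{A}}-\bm{A}\|_\textup{op})$ can be pulled outside the $\min$—produces exactly the asserted inequality. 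The mildly delicate book‑keeping to watch is (i) that the same subspace $\bm{V}$ appearing for right singular vectors of $\bm{A}$ is the eigenspace of $\bm{M}$ used in Lemma \ref{lemma:eigen_perturb}, so that the canonical angles on both sides refer to the same object, and (ii) ensuring that the gap condition $\min(\sigma_{r-1}^2-\sigma_r^2,\sigma_s^2-\sigma_{s+1}^2)>0$ needed to invoke the symmetric lemma is implied by the hypothesis as stated (which, as noted, appears with a typo in the paper).
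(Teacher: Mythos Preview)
Your proposal is correct and follows exactly the standard route: the paper does not actually prove this lemma but merely cites it as Theorem~4 of \citet{yu2015useful}, whose proof is precisely the reduction to the symmetric case via $\bm{A}^\top\bm{A}$ that you outline. Your observation about the typo $\sigma_{s-1}^2\to\sigma_{s+1}^2$ in the gap condition is also correct.
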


\section{Determination of rank and common dimension}\label{append:selection}

We start from the proof of rank selection consistency in Theorem \ref{thm:rank_selection}.

\begin{proof}[Proof of Theorem \ref{thm:rank_selection}]
    
    Following the proof of Lemma \ref{lemma:rr_initial}, if $T\gtrsim\max(\tau^4,\tau^2)M_2^{-2}p$, then with probability approaching one,
    \begin{equation}
        \|\widetilde{\bm{A}}_\text{RR}(\bar{r})-\bm{A}^*\|_\text{F}\lesssim \alpha_{\textup{RSC}}^{-1}\tau^2M_1\sqrt{\frac{d_\text{RR}(p,\bar{r})}{T}}.
    \end{equation}

    Obviously, $\text{rank}(\widetilde{\bm{A}}_\text{RR}(\bar{r})-\bm{A}^*)\leq \bar{r}+r$. By definition,
    \begin{equation}
        \|\widetilde{\bm{A}}_\text{RR}(\bar{r})-\bm{A}^*\|_\text{F}^2=\sum_{j=1}^{\bar{r}+r}\sigma_j^2(\widetilde{\bm{A}}_\text{RR}(\bar{r})-\bm{A}^*).
    \end{equation}
    By Mirsky's singular value inequality \citep{mirsky1960symmetric},
    \begin{equation}
        \sum_{j=1}^{\bar{r}+r}[\sigma_j(\widetilde{\bm{A}}_\text{RR}(\bar{r}))-\sigma_j(\bm{A}^*)]^2\leq\sum_{j=1}^{\bar{r}+r}\sigma_j^2(\widetilde{\bm{A}}_\text{RR}(\bar{r})-\bm{A}^*)=\|\widetilde{\bm{A}}_\text{RR}(\bar{r})-\bm{A}^*\|_\text{F}^2.
    \end{equation}
    As the $\ell_\infty$ norm of any vector is smaller than the $\ell_2$ norm, it follows the same upper bound
    \begin{equation}
        \max_{1\leq j\leq \bar{r}+r}\left|\sigma_j(\widetilde{\bm{A}}_\text{RR}(\bar{r}))-\sigma_j(\bm{A}^*)\right|\leq \left\{\sum_{j=1}^{\bar{r}+r}\sigma_j^2(\widetilde{\bm{A}}_\text{RR}(\bar{r})-\bm{A}^*)\right\}^{1/2}=\|\widetilde{\bm{A}}_\text{RR}(\bar{r})-\bm{A}^*\|_\text{F}.
    \end{equation}
    
    For any $1\leq j\leq \bar{r}$, note that $\sigma_j(\widetilde{\bm{A}}_\text{RR}(\bar{r}))+s(p,T)=\sigma_j(\bm{A}^*)+[\sigma_j(\widetilde{\bm{A}}_\text{RR}(\bar{r}))-\sigma_j(\bm{A}^*)]+s(p,T)$. For $j>r$, $\sigma_j(\bm{A}^*)=0$ and $|\sigma_j(\widetilde{\bm{A}}_\text{RR}(\bar{r}))-\sigma_j(\bm{A}^*)|=o_p(s(p,T))$, provided that $\alpha_{\textup{RSC}}^{-1}\tau^2M_1\sqrt{d_\text{RR}(p,\bar{r})/T}=o(s(p,T))$. Hence, $s(p,T)$ is the dominating term in $\sigma_j(\widetilde{\bm{A}}_\text{RR}(\bar{r}))+s(p,T)$, when $j>r$.
    When $j\leq r$, as $T\to\infty$, $s(p,T)/\sigma_r(\bm{A}^*)\to0$ and $\sigma_j(\widetilde{\bm{A}}_\text{RR}(p,\bar{r}))+s(p,T)\to\sigma_j(\bm{A}^*)$.
    
    Hence, for $j>r$, as $T\to\infty$,
    \begin{equation}
        \frac{\sigma_{j+1}(\widetilde{\bm{A}}_\text{RR}(\bar{r}))+s(p,T)}{\sigma_{j}(\widetilde{\bm{A}}_\text{RR}(\bar{r}))+s(p,T)}\to\frac{s(p,T)}{s(p,T)}=1.
    \end{equation}
    For $j<r$,
    \begin{equation}
        \frac{\sigma_{j+1}(\widetilde{\bm{A}}_\text{RR}(\bar{r}))+s(p,T)}{\sigma_{j}(\widetilde{\bm{A}}_\text{RR}(\bar{r}))+s(p,T)}\to\frac{\sigma_{j+1}(\bm{A}^*)}{\sigma_{j}(\bm{A}^*)}.
    \end{equation}
    For $j=r$,
    \begin{equation}
        \frac{\sigma_{j+1}(\widetilde{\bm{A}}_\text{RR}(\bar{r}))+s(p,T)}{\sigma_{j}(\widetilde{\bm{A}}_\text{RR}(\bar{r}))+s(p,T)}\to\frac{s(p,T)}{\sigma_r(\bm{A}^*)}=o\left(\min_{1\leq j\leq r-1}\frac{\sigma_{j+1}(\bm{A}^*)}{\sigma_j(\bm{A}^*)}\right).
    \end{equation}
    
\end{proof}

Next, we prove the common dimension selection consistency of the BIC.

\begin{proof}[Proof of Theorem \ref{thm:d_consistency}]
    
    To show the consistency of common dimension selection via BIC, it suffices to show that
    \begin{equation}
        \min_{0\leq k<d}\text{BIC}(k)-\text{BIC}(d)>0~~\text{and}~~\min_{d< k\leq r}\text{BIC}(k)-\text{BIC}(d)>0.
    \end{equation}
    Consider the under-parameterized case $k<d$ first. Note that
    \begin{equation}
        \begin{split}
            & \|\bm{Y}-\widehat{\bm{A}}(d)\bm{X}\|_\text{F}^2=\|\bm{E}-(\widehat{\bm{A}}(d)-\bm{A}^*)\bm{X}\|_\text{F}^2\\
            =&\|\bm{E}\|_\text{F}^2+\|(\widehat{\bm{A}}(d)-\bm{A}^*)\bm{X}\|_\text{F}^2+2\langle\bm{E}\bm{X}^\top,\widehat{\bm{A}}(d)-\bm{A}^*\rangle\\
            = & T\text{tr}(\bm{\Sigma}_{\bbm{\varepsilon}}) + Cd_{\text{CS}}(p,r,d) + o_p(T) = T\text{tr}(\bm{\Sigma}_{\bbm{\varepsilon}})+o_p(T)
        \end{split}
    \end{equation}
    and
    \begin{equation}
        \begin{split}
            & \|\bm{Y}-\widehat{\bm{A}}(k)\bm{X}\|_\text{F}^2=\|\bm{E}-(\widehat{\bm{A}}(k)-\bm{A}^*)\bm{X}\|_\text{F}^2\\
            =&\|\bm{E}\|_\text{F}^2+\|(\widehat{\bm{A}}(k)-\bm{A}^*)\bm{X}\|_\text{F}^2+2\langle\bm{E}\bm{X}^\top,\widehat{\bm{A}}(k)-\bm{A}^*\rangle\\
            \geq & T\text{tr}(\bm{\Sigma}_{\bbm{\varepsilon}}) + CTg_{\min}^2 + o_p(T).
        \end{split}
    \end{equation}

    Thus, by $\log(1+x)\asymp x$ for $x\to0$, we have
    \begin{equation}
        \begin{split}
            & \min_{0\leq k<d}\text{BIC}(k)-\text{BIC}(d)\\
            \asymp & pT\log\left(1+\frac{\|\bm{Y}-\widehat{\bm{A}}(k)\bm{X}\|_\text{F}^2-\|\bm{Y}-\widehat{\bm{A}}(d)\bm{X}\|_\text{F}^2}{\|\bm{Y}-\widehat{\bm{A}}(d)\bm{X}\|_\text{F}^2}\right)\\
            \geq & CTg_{\min}^2-Cpd\log(T)+o_p(T)
        \end{split}
    \end{equation}
    and it follows that $\mathbb{P}(\min_{0\leq k<d}\text{BIC}(k)-\text{BIC}(d)>0)\to1$, as $T\to\infty$, provided that $\log(T)pg_{\min}^{-2}/T\to0$.
    
    For the cases $k>d$, using the similar arguments in the proof of Theorem \ref{thm:stat}, we can show that $\|\widehat{\bm{A}}(k)-\bm{A}^*\|_\text{F}\asymp\sqrt{d_\text{CS}(p,r,k)/T}$. It follows that
    \begin{equation}
        \begin{split}
            & \min_{d< k\leq r}\text{BIC}(k)-\text{BIC}(d) \geq O_p(p)+p(k-d)\log(T)
        \end{split}
    \end{equation}
    which implies that $\min_{d< k\leq r}\text{BIC}(k)-\text{BIC}(d)>0$, since $\log(T)\to\infty$ as $T\to\infty$.
    
\end{proof}

\section{Supplementary materials for VAR($\ell$) models}\label{append:VAR_L}

Appendix \ref{append:VAR_L} presents the supplementary materials of modeling, estimation and theory for the VAR($\ell$). It begins with some preliminaries of tensor notation and tensor operation.

\subsection{Some basics of tensor algebra}

We follow the notations in \citet{kolda2009tensor} to denote tensors of order three or higher by Euler script boldface letters, e.g., $\cm{A}$. For a generic $d$-th order tensor $\cm{A}\in\mathbb{R}^{p_1\times\cdots\times p_d}$, denote its elements by $\cm{A}(i_1,i_2,\dots,i_d)$ and unfolding of $\cm{A}$ along the $n$-mode by $\cm{A}_{(n)}$, where the columns of $\cm{A}_{(n)}$ are the $n$-mode vectors of $\cm{A}$, for $n=1,\dots,d$. The Frobenius norm of a tensor $\cm{A}$ is defined as $\|\cm{A}\|_\text{F}=\sqrt{\sum_{i_1}\cdots\sum_{i_d}\cm{A}(i_1,\dots,i_d)^2}$. The mode-$n$ multiplication $\times_n$ of a tensor $\cm{A}\in\mathbb{R}^{p_1\times\cdots\times p_d}$ and a matrix $\bm{B}\in\mathbb{R}^{q_n\times p_n}$ is defined as
\begin{equation}
    (\cm{A}\times_n\bm{B})(i_1,\dots,j_n\dots,i_d)=\sum_{i_n=1}^{p_n}\cm{A}(i_1,\dots,i_n,\dots,i_d)\bm{B}(j_n,i_n),
\end{equation}
for $n=1,\dots,d$, respectively.

The tensor ranks considered in this paper are defined as the matrix ranks of the  unfoldings of $\cm{A}$ along all modes, namely $\text{rank}_i(\cm{A})=\text{rank}(\cm{A}_{(i)})$, for $i=1,\dots,d$. If the tensor ranks of $\cm{A}$ are $r_1,\dots,r_d$, where $1\leq r_i\leq p_i$, there exists a tensor $\cm{G}\in\mathbb{R}^{r_1\times\cdots\times r_d}$ and matrices $\bm{U}_i\in\mathbb{R}^{p_i\times r_i}$, such that
\begin{equation}
    \cm{A}=\cm{G}\times_1\bm{U}_1\times_2\bm{U}_2\cdots\times_d\bm{U}_d,
\end{equation}
which is well known as Tucker decomposition \citep{tucker1966some}. With the Tucker decomposition, the $n$-mode unfolding of $\cm{A}$ can be written as
\begin{equation}
    \cm{A}_{(n)} = \bm{U}_n\cm{G}_{(n)}(\bm{U}_d\otimes\cdots\otimes\bm{U}_{n+1}\otimes\bm{U}_{n-1}\otimes\cdots\otimes\bm{U}_1)^\top,
\end{equation}
where $\otimes$ denotes the Kronecker product for matrices.

\subsection{Algorithm, rank selection and common dimension selection for VAR($\ell$) models}\label{append:D.3}

A gradient descent algorithm (Algorithm \ref{alg:GD_2}) is proposed for the estimation of VAR($\ell$) model.
Note that the loss function with respect to the parameter tensor $\cm{A}$ is $\mathcal{L}(\cm{A})=(2T)^{-1}\sum_{t=1}^T\|\bm{y}_t-\cm{A}_{(1)}\bm{x}_t\|_2^2$, and its gradient has the form of $\nabla\mathcal{L}(\cm{A})=T^{-1}\sum_{t=1}^T(\cm{A}_{(1)}\bm{x}_t-\bm{y}_t)\circ\bm{X}_t$, where $\circ$ denotes the tensor outer product, and $\bm{X}_t=\lb\bm{y}_{t-1}\dots\bm{y}_{t-\ell}\rb\in\mathbb{R}^{p\times \ell}$.
The partial derivatives are listed below,
\begin{equation}
    \begin{split}
        \nabla_{\bm{C}}\mathcal{L} & = \nabla\mathcal{L}(\cm{A})_{(1)}\left[(\bm{L}\otimes\bm{C})(\cm{G}_{11})_{(1)}^\top+(\bm{L}\otimes\bm{P})(\cm{G}_{12})_{(1)}^\top\right]\\
        & + \nabla\mathcal{L}(\cm{A})_{(2)}\left[(\bm{L}\otimes\bm{C})(\cm{G}_{11})_{(2)}^\top+(\bm{L}\otimes\bm{R})(\cm{G}_{21})_{(2)}^\top\right],\\
        \nabla_{\bm{R}}\mathcal{L} & = \nabla\mathcal{L}(\cm{A})_{(1)}[(\bm{L}\otimes\bm{C})(\cm{G}_{21})_{(1)}^\top+(\bm{L}\otimes\bm{P})(\cm{G}_{22})_{(1)}^\top],\\
        \nabla_{\bm{P}}\mathcal{L} & = \nabla\mathcal{L}(\cm{A})_{(2)}[(\bm{L}\otimes\bm{C})(\cm{G}_{12})_{(2)}^\top+(\bm{L}\otimes\bm{R})(\cm{G}_{22})_{(2)}^\top],\\
        \nabla_{\bm{L}}\mathcal{L} & = \nabla\mathcal{L}(\cm{A})_{(3)}(\lb\bm{C}~\bm{P}\rb\otimes\lb\bm{C}~\bm{R}\rb)\cm{G}_{(3)}^\top,\\
        \text{and}~\nabla_{\scalebox{0.7}{\cm{G}}}\mathcal{L} & = \nabla\mathcal{L}(\cm{A})\times_1\lb\bm{C}~\bm{R}\rb^\top\times_2\lb\bm{C}~\bm{P}\rb^\top\times_3\bm{L}^\top,
    \end{split}
\end{equation}
where $\otimes$ is the Kronecker product of matrices.

\begin{algorithm}[!htp]
    \caption{Gradient descent algorithm for VAR($\ell$) model with known $r_1$, $r_2$, $r_3$ and $d$}
    \label{alg:GD_2}
    1: \textbf{Input}: $\bm{Y}$, $\bm{X}$, $\eta$, $I$, $\bm{C}^{(0)}$, $\bm{R}^{(0)}$, $\bm{P}^{(0)}$ ,$\bm{L}^{(0)}$ and $\cm{G}^{(0)}$\\[-0.3em]
    2: \textbf{for} $i=0,\dots,I-1$\\[-0.3em]
    3: \hspace*{0.7cm} $\bm{C}^{(i+1)}=\bm{C}^{(i)}-\eta\nabla_{\bm{C}}\mathcal{L}^{(i)}-\eta a\big[2\bm{C}^{(i)}(\bm{C}^{(i)\top}\bm{C}^{(i)}-b^2\bm{I}_d)+\bm{R}^{(i)}\bm{R}^{(i)\top}\bm{C}^{(i)}+\bm{P}^{(i)}\bm{P}^{(i)\top}\bm{C}^{(i)}\big]$\\[-0.3em]
    4: \hspace*{0.7cm} $\bm{R}^{(i+1)}=\bm{R}^{(i)}-\eta\nabla_{\bm{R}}\mathcal{L}^{(i)}-\eta a\big[\bm{R}^{(i)}(\bm{R}^{(i)\top}\bm{R}^{(i)}-b^2\bm{I}_{r_1-d})+\bm{C}^{(i)}\bm{C}^{(i)\top}\bm{R}^{(i)}\big]$\\[-0.3em]
    5: \hspace*{0.7cm} $\bm{P}^{(i+1)}=\bm{P}^{(i)}-\eta\nabla_{\bm{P}}\mathcal{L}^{(i)}-\eta a\big[\bm{P}^{(i)}(\bm{P}^{(i)\top}\bm{P}^{(i)}-b^2\bm{I}_{r_2-d})+\bm{C}^{(i)}\bm{C}^{(i)\top}\bm{P}^{(i)}\big]$\\[-0.3em]
    6: \hspace*{0.7cm} $\bm{L}^{(i+1)}=\bm{L}^{(i)}-\eta\nabla_{\bm{L}}\mathcal{L}^{(i)}-\eta a\bm{L}^{(i)}(\bm{L}^{(i)\top}\bm{L}^{(i)}-b^2\bm{I}_{r_3})$\\[-0.3em]
    7: \hspace*{0.7cm}
    $\cm{G}^{(i+1)}=\cm{G}^{(i)}-\eta\nabla_{\scalebox{0.7}{\cm{G}}}\mathcal{L}^{(i)}$\\[-0.5em]
    8: \textbf{end for}\\[-0.3em]
    9: \textbf{Return}: $\cm{A}^{(I)}=\cm{G}^{(I)}\times_1\lb\bm{C}^{(I)}~\bm{R}^{(I)}\rb\times_2\lb\bm{C}^{(I)}~\bm{P}^{(I)}\rb\times_3\bm{L}^{(I)}$
\end{algorithm}

For the initialization of Algorithm \ref{alg:GD_2}, we first consider the rank-constrained estimator, 
\begin{equation}\label{eq:tensor-init2}
    \cm{\widetilde{A}}_\text{RR}(r_1,r_2,r_3)=\argmin_{\text{rank}(\scalebox{0.7}{\cm{A}}_{(i)})=r_i,1\leq i\leq 3}\frac{1}{2T}\|\bm{y}_t-\cm{A}_{(1)}\bm{x}_t\|_\text{F}^2,
\end{equation}
and then conduct the HOSVD: $\cm{\widetilde{A}}_\text{RR}(r_1,r_2,r_3)=\cm{\widehat{G}}\times_1\widetilde{\bm{U}}_1\times_2\widetilde{\bm{U}}_2\times_3\widetilde{\bm{U}}_3$, where $\widetilde{\bm{U}}_i$ is the top $r_i$ left singular vectors of its mode-$i$ matricization for each $1\leq i\leq 3$. 
By applying the method in Section \ref{sec:3.2} to $\widetilde{\bm{U}}_1$ and $\widetilde{\bm{U}}_2$, we can obtain $\widetilde{\bm{C}}$, $\widetilde{\bm{R}}$, and $\widetilde{\bm{P}}$, and the initialization can then be set to $\bm{C}^{(0)}=b\widetilde{\bm{C}}$, $\bm{R}^{(0)}=b\widetilde{\bm{R}}$, $\bm{P}^{(0)}=b\widetilde{\bm{P}}$, $\bm{L}^{(0)}=b\widetilde{\bm{U}}_3$, and $\cm{G}^{(0)}=\cm{\widetilde{A}}_\text{RR}(r_1,r_2,r_3)\times_1\lb\bm{C}^{(0)}~\bm{R}^{(0)}\rb^\top\times_2\lb\bm{C}^{(0)}~\bm{P}^{(0)}\rb^\top\times_3\bm{L}^{(0)\top}$.

To select the tensor ranks of $(r_1,r_2,r_3)$, we first set their pre-specified upper bounds,  $(\bar{r}_1,\bar{r}_2,\bar{r}_3)$, where each $\bar{r}_i$ is greater than $r_i$ but much smaller than $p$. The rank-constrained estimator $\cm{\widetilde{A}}_\text{RR}(\bar{r}_1,\bar{r}_2,\bar{r}_3)$ can then be calculated according to \eqref{eq:tensor-init2}.
As a result, the tensor ranks can be selected by the ridge-type ratio method,
\begin{equation}
    \widehat{r}_i=\argmin_{1\leq j\leq \bar{r}_i-1}\frac{\sigma_{j+1}(\cm{\widetilde{A}}_\text{RR}(\bar{r}_1,\bar{r}_2,\bar{r}_3)_{(i)})+s(p,T)}{\sigma_{j}(\cm{\widetilde{A}}_\text{RR}(\bar{r}_1,\bar{r}_2,\bar{r}_3)_{(i)})+s(p,T)}, \hspace{5mm}1\leq i\leq 3.
\end{equation}

Denote by $\cm{\widehat{A}}(r_1,r_2,r_3,d)$ the estimated parameter tensor from Algorithm \ref{alg:GD_2} with the tensor ranks of $(r_1,r_2,r_3)$ and common dimension $d$. The BIC of models \eqref{eq:VAR_ell} and \eqref{eq:common_tensor_decomp2} can be defined as
\begin{equation*}
    \text{BIC}(r_1,r_2,r_3,d)=Tp\log\left(\sum_{t=1}^T\|\bm{y}_t-\cm{\widehat{A}}(r_1,r_2,r_3,d)\bm{x}_t\|_2^2\right)+d_\text{CS}(p,\ell,r_1,r_2,r_3,d)\log(T),
\end{equation*}
and we can choose the common dimension by
$   
\widehat{d}=\argmin_{0\leq d\leq \min(r_1,r_2)}\text{BIC}(r_1,r_2,r_3,d).
$

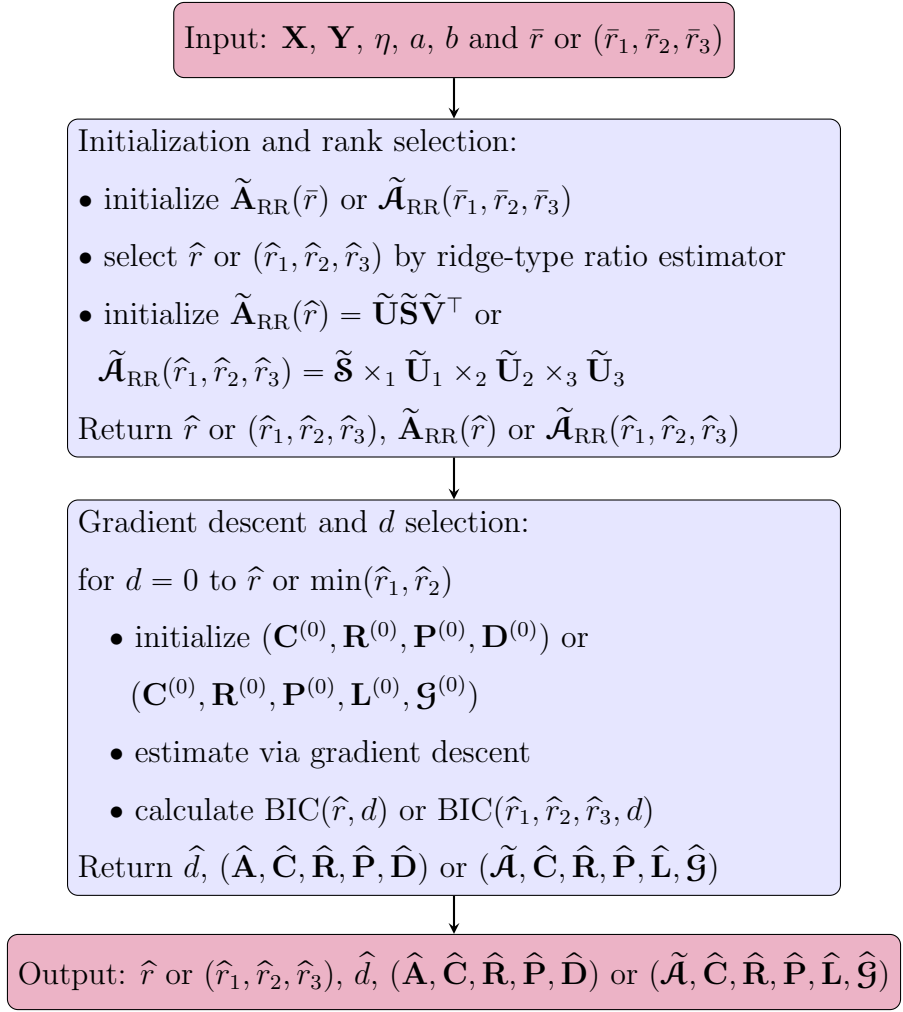
\begin{figure}[t]
    \begin{center}
        \begin{tikzpicture}[node distance = 3cm]
        \node(input)[startstop]{Input: $\bm{X}$, $\bm{Y}$,
            $\eta$, $a$, $b$ and $\bar{r}$ or ($\bar{r}_1,\bar{r}_2,\bar{r}_3$)};
        \node(init)[process2, below of = input, yshift = -0.3cm]{Initialization and rank selection:\\
            $\bullet$ initialize $\widetilde{\bm{A}}_\text{RR}(\bar{r})$ or $\cm{\widetilde{A}}_\text{RR}(\bar{r}_1,\bar{r}_2,\bar{r}_3)$\\
            $\bullet$ select $\widehat{r}$ or $(\widehat{r}_1,\widehat{r}_2,\widehat{r}_3)$ by ridge-type ratio estimator\\
            $\bullet$ initialize $\widetilde{\bm{A}}_\text{RR}(\widehat{r})=\widetilde{\bm{U}}\widetilde{\bm{S}}\widetilde{\bm{V}}^\top$ or\\ ~~$\cm{\widetilde{A}}_\text{RR}(\widehat{r}_1,\widehat{r}_2,\widehat{r}_3)=\cm{\widetilde{S}}\times_1\widetilde{\bm{U}}_1\times_2\widetilde{\bm{U}}_2\times_3\widetilde{\bm{U}}_3$\\
            Return $\widehat{r}$ or $(\widehat{r}_1,\widehat{r}_2,\widehat{r}_3)$, $\widetilde{\bm{A}}_\text{RR}(\widehat{r})$ or $\cm{\widetilde{A}}_\text{RR}(\widehat{r}_1,\widehat{r}_2,\widehat{r}_3)$};
        \node(gd)[process, below of = input, yshift = -5.75cm]{Gradient descent and $d$ selection:\\
            for $d=0$ to $\widehat{r}$ or $\min(\widehat{r}_1,\widehat{r}_2)$\\
            ~~~$\bullet$ initialize $(\bm{C}^{(0)},\bm{R}^{(0)}, \bm{P}^{(0)},\bm{D}^{(0)})$ or\\ ~~~~~$(\bm{C}^{(0)},\bm{R}^{(0)}, \bm{P}^{(0)},\bm{L}^{(0)},\cm{G}^{(0)})$\\
            ~~~$\bullet$ estimate via gradient descent\\
            ~~~$\bullet$ calculate $\text{BIC}(\widehat{r},d)$ or $\text{BIC}(\widehat{r}_1,\widehat{r}_2,\widehat{r}_3,d)$\\
            Return  $\widehat{d}$,  $(\widehat{\bm{A}},\widehat{\bm{C}},\widehat{\bm{R}},\widehat{\bm{P}},\widehat{\bm{D}})$ or $(\cm{\widetilde{A}},\widehat{\bm{C}},\widehat{\bm{R}},\widehat{\bm{P}},\widehat{\bm{L}},\cm{\widehat{G}})$};
        \node(output)[startstop, below of = input, yshift = -9.4cm]{Output: $\widehat{r}$ or $(\widehat{r}_1,\widehat{r}_2,\widehat{r}_3)$, $\widehat{d}$, $(\widehat{\bm{A}},\widehat{\bm{C}},\widehat{\bm{R}},\widehat{\bm{P}},\widehat{\bm{D}})$ or $ (\cm{\widetilde{A}},\widehat{\bm{C}},\widehat{\bm{R}},\widehat{\bm{P}},\widehat{\bm{L}},\cm{\widehat{G}})$};
        \draw [arrow] (input) -- (init);
        \draw [arrow] (init) -- (gd);
        \draw [arrow] (gd) -- (output);
        \end{tikzpicture}
    \end{center}
    \vspace{-0.5cm}
    \caption{Flowchart of the proposed estimation procedure.}
    \label{fig:summary}
\end{figure}

\subsection{Theoretical results for VAR($\ell$) models}

\begin{proof}[Proof of Theorem \ref{thm:VAR_L}]
    
    The proof consists of two parts. The first part is an  extension of Theorem \ref{thm:gd} for computational convergence analysis of the gradient descent iterates, provided that some regularity conditions are satisfied. The second part is the statistical analysis to show that these conditions do hold with high probability.
    
    First, similarly to Theorem \ref{thm:gd}, for the iterate at the step $i$, define the estimation error of $(\bm{C},\bm{R},\bm{P},\bm{L},\cm{G})$ up to the optimal rotations
    \begin{equation}
        \begin{split}
            E^{(i)} = & \min_{\substack{\bm{O}_c\in\mathbb{O}^{d\times d},\bm{O}_r\in\mathbb{O}^{(r_1-d)\times(r_1-d)},\\ \bm{O}_p\in\mathbb{O}^{(r_2-d)\times(r_2-d)},\bm{O}_l\in\mathbb{O}^{r_3\times r_3}}}  \Big\{\|\bm{C}^{(i)}-\bm{C}^*\bm{O}_c\|_\text{F}^2 + \|\bm{R}^{(i)}-\bm{R}^*\bm{O}_r\|_\text{F}^2+\|\bm{P}^{(i)}-\bm{P}^*\bm{O}_p\|_\text{F}^2\\
            & + \|\bm{L}^{(i)}-\bm{L}^*\bm{O}_l\|_\text{F}^2+\|\cm{G}^{(i)}-\cm{G}^*\times_1\text{diag}(\bm{O}_c,\bm{O}_r)\times_2\text{diag}(\bm{O}_c,\bm{O}_p)\times\bm{O}_l\|_\text{F}^2\Big\}
        \end{split}
    \end{equation}
    and the corresponding optimal rotations as $(\bm{O}_c^{(i)},\bm{O}_r^{(i)},\bm{O}_p^{(i)},\bm{O}_3^{(i)})$. Denote $\bm{O}_1^{(i)}=\text{diag}(\bm{O}_c^{(i)},\bm{O}_r^{(i)})$ and $\bm{O}_2^{(i)}=\text{diag}(\bm{O}_c^{(i)},\bm{O}_p^{(i)})$.
    
    In a similar fashion, the $\alpha$-RSC and $\beta$-RSS conditions are also assumed for the loss function $\mathcal{L}(\cm{A})$, and for the given sample, define
    \begin{equation}
        \xi(r_1,r_2,r_3,d) = \sup_{\substack{\lb\bm{C}~\bm{R}\rb\in\mathbb{O}^{p\times r_1},\lb\bm{C}~\bm{P}\rb\in\mathbb{O}^{p\times r_2},\\ \bm{L}\in\mathbb{O}^{\ell\times r_3},\scalebox{0.7}{\cm{G}}\in\mathbb{R}^{r_1\times r_2\times r_3},\|\scalebox{0.7}{\cm{G}}\|_\text{F}=1}}\big\langle\mathcal{L}(\cm{A}^*),\cm{G}\times_1\lb\bm{C}~\bm{R}\rb\times_2\lb\bm{C}~\bm{P}\rb\times_3\bm{L}\big\rangle.
    \end{equation}
    For simplicity, assume $b=\bar{\sigma}^{1/4}$ and $a=C\alpha\alpha_1^{3/4}\kappa^{-2}$. For any $i=0,1,2\dots$, we assume that
    \begin{equation}
        \|\lb\bm{C}^{(i)}~\bm{R}^{(i)}\rb\|_\text{op}\leq 1.1b,~\|\lb\bm{C}^{(i)}~\bm{P}^{(i)}\rb\|_\text{op}\leq 1.1b,~\|\bm{L}^{(i)}\|_\text{op}\leq 1.1b, \max_{i=1,2,3}\|\cm{G}_{(i)}\|_\text{op}\leq \frac{1.1\bar{\sigma}}{b^3},
    \end{equation}
    and $E^{(i)}\leq C\bar{\sigma}^{1/2}\alpha\beta^{-1}\kappa^{-2}$. In addition, for the initial value $\cm{A}^{(0)}=\cm{G}^{(0)}\times_1\lb\bm{C}^{(0)}~\bm{R}^{(0)}\rb\times_2\lb\bm{C}^{(0)}~\bm{P}^{(0)}\rb\times_3\bm{L}^{(0)}$, we assume that $\|\cm{A}^{(0)}-\cm{A}^*\|_\text{F}\lesssim \underline{\sigma}$.
    
    Based on these conditions, we have that
    \begin{equation}
        \begin{split}
            E^{(i+1)} \leq & E^{(i)} - 2\eta(Q_{\text{G},1}+Q_{\text{C},1}+Q_{\text{R},1}+Q_{\text{P},1}+Q_{\text{L},1})\\
            & +  \eta^2(Q_{\text{G},2}+Q_{\text{C},2}+Q_{\text{R},2}+Q_{\text{P},2}+Q_{\text{L},2}),
        \end{split}
    \end{equation}
    where the terms $Q_{\cdot,1}$ and $Q_{\cdot,2}$ are defined similarly as those in the proof of Theorem \ref{thm:gd}. The upper bound for $Q_{\text{G},2}+Q_{\text{C},2}+Q_{\text{R},2}+Q_{\text{P},2}+Q_{\text{L},2}$ and the lower bound for $Q_{\text{G},1}+Q_{\text{C},1}+Q_{\text{R},1}+Q_{\text{P},1}+Q_{\text{L},1}$ can be derived. Extending these results from the matrix case to the 3rd order tensor case hinges on the framework in the Theorem 3.1 of \citet{han2020optimal}, which leads to
    \begin{equation}
        \begin{split}
            E^{(i)} & \leq \left(1-C\eta_0\alpha\beta^{-1}\kappa^{-2}\right)E^{(i)} + C\kappa^2\alpha^{-2}\bar{\sigma}^{-3/2}\xi^2(r_1,r_2,r_3,d)
        \end{split}
    \end{equation}
    when $\eta=\eta_0\beta^{-1}\bar{\sigma}^{-3/2}$ for some $\eta_0<1/280$.
    
    By induction, we have that for any $t=1,2,\dots$,
    \begin{equation}
        E^{(i)}\leq (1-C\eta_0\alpha\beta^{-1}\kappa^{-2})^{(i)}E^{(i)} + C\kappa^2\alpha^{-2}\bar{\sigma}^{-3/2}\xi^2(r_1,r_2,r_3,d).
    \end{equation}
    For the error bound of $\|\cm{A}^{(i)}-\cm{A}^*\|_\text{F}^2$, by Lemma \ref{lemma:pertub_tensor},
    \begin{equation}
        \begin{split}
            & \|\cm{A}^{(i)}-\cm{A}^*\|_\text{F}^2\leq C\bar{\sigma}^{3/2}E^{(i)}\\
            \leq & C\bar{\sigma}^{3/2}(1-C\eta_0\alpha\beta^{-1}\kappa^{-2})^iE^{(0)}+C\kappa^2\alpha^{-2}\xi^2(r_1,r_2,r_3,d)\\
            \leq & C\kappa^2(1-C\eta_0\alpha\beta^{-1}\kappa^{-2})^i\|\cm{A}^{(0)}-\cm{A}^*\|_\text{F}^2+C\kappa^2\alpha^{-2}\xi^2(r_1,r_2,r_3,d).
        \end{split}
    \end{equation}
    The conditions for iterates $(\cm{G}^{(i)},\bm{C}^{(i)},\bm{R}^{(i)},\bm{P}^{(i)},\bm{L}^{(i)})$ can be verified similarly as in Theorem \ref{thm:gd}.

    Second, in the statistical analysis, we need to verify the upper bound  for the initial value $\cm{A}^{(0)}$, the deviation bound for $\xi(r_1,r_2,r_3,d)$, $\alpha$-RSC condition and $\beta$-RSS conditions. For the low-rank estimator $\widetilde{\cm{A}}_\text{RR}(r_1,r_2,r_3)$, according to Theorem 1 of \citet{wang2020compact}, with probability at least $1-\exp[-C(r_1r_2r_3+pr_1+pr_2+\ell r_3)]-\exp(-CM_2^2\min(\tau^{-4},\tau^{-2})T)$,
    \begin{equation}
        \|\widetilde{\cm{A}}_\text{RR}(r_1,r_2,r_3)-\cm{A}^*\|_\text{F}\lesssim \alpha^{-1}\tau^2M_1\sqrt{\frac{d_\text{RR}(p,\ell,r_1,r_2,r_3)}{T}}.
    \end{equation}
    Then, by the same techniques in the proof of Lemma \ref{lemma:A0_init}, we can obtain that
    \begin{equation}
        \|\cm{A}^{(0)}-\cm{A}^*\|_\text{F}\lesssim\bar{\sigma}^{3/4}\kappa^2g_{\min}^{-2}\tau^2M_1\sqrt{\frac{d_\text{RR}(p,\ell,r_1,r_2,r_3)}{T}}\lesssim \underline{\sigma}.
    \end{equation}
    The $\alpha_{\textup{RSC}}$-RSC and $\beta_\text{RSS}$-RSS conditions and the deviation bound for $\xi(r_1,r_2,r_3,d)$ can be proved to hold with high probability approaching one, in the same manner as Lemmas \ref{lemma:RSC} and \ref{lemma:deviation}. For brevity, the detailed proofs are omitted.
    
\end{proof}

The following lemma is a straightforward extension of Lemma E.2 in \citet{han2020optimal} with common subgroup structures included, so the proof is omitted for simplicity.

\begin{lemma}\label{lemma:pertub_tensor}
    Suppose that $\cm{A}^*=\cm{G}^*\times_1\lb\bm{C}^*~\bm{R}^*\rb\times_2\lb\bm{C}^*~\bm{P}^*\rb\times_3\bm{L}^*$, $\lb\bm{C}^*~\bm{R}^*\rb^{\top}\lb\bm{C}^*~\bm{R}^{*}\rb=\bm{I}_{r_1}$, $\lb\bm{C}^*~\bm{P}^*\rb^{\top}\lb\bm{C}^*~\bm{P}^{*}\rb=\bm{I}_{r_2}$,
    $\bm{L}^{*\top}\bm{L}^*=\bm{I}_{r_3}$, $\bar{\sigma}=\max_{1\leq i\leq 3}\|\cm{A}^*_{(i)}\|_\textup{op}$, and $\underline{\sigma}=\min_{1\leq i\leq 3}\sigma_{r_i}(\cm{A}_{(i)}^*)$. Let $\cm{A}=\cm{G}\times_1\lb\bm{C}~\bm{R}\rb\times_2\lb\bm{C}~\bm{P}\rb\times_3\bm{L}$ with $\|\lb\bm{C}~\bm{R}\rb\|_\textup{op}\leq (1+c_b)b$, $\|\lb\bm{C}~\bm{P}\rb\|_\textup{op}\leq (1+c_b)b$, $\|\bm{L}\|_\textup{op}\leq (1+c_b)b$ and $\max_{1\leq i\leq 3}\|\cm{G}_{(i)}\|_\textup{op}\leq(1+c_b)\bar{\sigma}/b^3$ for some constant $c_b>0$. Define
    \begin{equation}
        \begin{split}
            E:=&\min_{\bm{O}_1,\bm{O}_2,\bm{O}_3}\big(\|\lb\bm{C}^{(i)}~\bm{R}^{(i)}\rb-\lb\bm{C}^*~\bm{R}^*\rb\bm{O}_1\|_\textup{F}^2+\|\lb\bm{C}^{(i)}~\bm{P}^{(i)}\rb-\lb\bm{C}^*~\bm{P}^*\rb\bm{O}_2\|_\textup{F}^2 \\
            & + \|\bm{L}^{(i)}-\bm{L}^*\bm{O}_3\|_\textup{F}^2+ \|\cm{G}^{(i)}-\cm{G}^*\times_1\bm{O}_1\times_2\bm{O}_2\times_3\bm{O}_3\|_\textup{F}^2\big).
        \end{split}
    \end{equation}
    Then, we have
    \begin{equation}
        \begin{split}
            E & \leq \left(7b^{-6}+\frac{12b^2}{\underline{\sigma}^2}C_b\right)\|\cm{A}-\cm{A}^*\|_\textup{F}^2 + 2b^{-2}C_b\big(\|\lb\bm{C}~\bm{R}\rb^\top\lb\bm{C}~\bm{R}\rb-b^2\bm{I}_{r_1}\|_\textup{F}^2\\
            &+\|\lb\bm{C}~\bm{R}\rb^\top\lb\bm{C}~\bm{P}\rb-b^2\bm{I}_{r_2}\|_\textup{F}^2+\|\bm{L}^\top\bm{L}-b^2\bm{I}_{r_3}\|\big),
        \end{split}
    \end{equation}
    and
    \begin{equation}
        \|\cm{A}-\cm{A}^*\|_\textup{F}^2 \leq 4b^6[1+\bar{\sigma}^2b^{-8}(3+2c_b)^2(1+c_b)^4]E,
    \end{equation}
    where $C_b=1+7\bar{\sigma}^2b^{-8}[(1+c_b)^4+(1+c_b)^4(2+c_b)^2]$.
\end{lemma}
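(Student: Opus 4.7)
The plan is to mirror the matrix proof in Lemma \ref{lemma:errorbound} but adapt it to third-order tensors, tracking the extra cross-terms introduced by the third mode and the coupling imposed by the common subspace. Throughout, I will use the natural residual quantities
\begin{equation*}
\bm{H}_1=\lb\bm{C}^*~\bm{R}^*\rb-\lb\bm{C}~\bm{R}\rb\bm{O}_1^\top,\quad
\bm{H}_2=\lb\bm{C}^*~\bm{P}^*\rb-\lb\bm{C}~\bm{P}\rb\bm{O}_2^\top,\quad
\bm{H}_3=\bm{L}^*-\bm{L}\bm{O}_3^\top,
\end{equation*}
and $\cm{H}_G=\cm{G}-\cm{G}^*\times_1\bm{O}_1\times_2\bm{O}_2\times_3\bm{O}_3$, where $\bm{O}_1=\mathrm{diag}(\bm{O}_c,\bm{O}_r)$ and $\bm{O}_2=\mathrm{diag}(\bm{O}_c,\bm{O}_p)$ are the optimizers in the definition of $E$.

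For the first inequality, I would first bound $\|\cm{H}_G\|_\mathrm{F}$. Multiplying both sides of $\cm{G}^*-\cm{G}\times_1\bm{O}_1^\top\times_2\bm{O}_2^\top\times_3\bm{O}_3^\top$ through by the orthonormal columns of $\lb\bm{C}^*~\bm{R}^*\rb$, $\lb\bm{C}^*~\bm{P}^*\rb$, $\bm{L}^*$ (which scales the Frobenius norm by $b^3$), and then using the telescoping identity
\[
\cm{A}^*-\cm{A}_{\mathrm{rot}}=\cm{A}^*-\cm{A}+(\cm{A}-\cm{A}_{\mathrm{rot}}),
\]
where $\cm{A}_{\mathrm{rot}}$ is the rotated copy of $\cm{A}$ through $\lb\bm{C}^*~\bm{R}^*\rb$, $\lb\bm{C}^*~\bm{P}^*\rb$, $\bm{L}^*$, and expanding each mixed term by triangle inequality, yields $\|\cm{H}_G\|_\mathrm{F}^2\lesssim b^{-6}\|\cm{A}-\cm{A}^*\|_\mathrm{F}^2+b^{-2}C_b(\|\bm{H}_1\|_\mathrm{F}^2+\|\bm{H}_2\|_\mathrm{F}^2+\|\bm{H}_3\|_\mathrm{F}^2)$ under the operator-norm hypotheses. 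For $\bm{H}_1$, as in the matrix proof, I would split
\[
\|\lb\bm{C}~\bm{R}\rb-\lb\bm{C}^*~\bm{R}^*\rb\bm{O}_1\|_\mathrm{F}^2\le 2\|\widetilde{\bm{D}}_1-b\bm{I}_{r_1}\|_\mathrm{F}^2+2\min_{\bm{O}_1}\|b\widetilde{\bm{U}}_1-\lb\bm{C}^*~\bm{R}^*\rb\bm{O}_1\|_\mathrm{F}^2,
\]
where $\lb\bm{C}~\bm{R}\rb=\widetilde{\bm{U}}_1\widetilde{\bm{D}}_1\widetilde{\bm{V}}_1^\top$ is the SVD. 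The first summand is controlled by $b^{-2}\|\lb\bm{C}~\bm{R}\rb^\top\lb\bm{C}~\bm{R}\rb-b^2\bm{I}_{r_1}\|_\mathrm{F}^2$ as in Lemma \ref{lemma:errorbound}, and the second is bounded by applying Wedin's $\sin\Theta$ inequality to the mode-1 unfolding, giving a rate controlled by $b^2\|\cm{A}-\cm{A}^*\|_\mathrm{F}^2/\sigma_{r_1}^2(\cm{A}^*_{(1)})\le b^2\underline{\sigma}^{-2}\|\cm{A}-\cm{A}^*\|_\mathrm{F}^2$. The argument is symmetric for $\bm{H}_2$ (mode-2) and $\bm{H}_3$ (mode-3). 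Summing and combining with the $\cm{H}_G$ bound reproduces the claimed upper bound on $E$.

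For the second inequality, I would substitute $\cm{G}^*=\cm{G}\times_1\bm{O}_1\times_2\bm{O}_2\times_3\bm{O}_3+\text{(rotation of }\cm{H}_G)$ (equivalent up to rotation) and $\lb\bm{C}^*~\bm{R}^*\rb=\bm{H}_1+\lb\bm{C}~\bm{R}\rb\bm{O}_1^\top$, similarly for modes 2 and 3, into $\cm{A}^*=\cm{G}^*\times_1\lb\bm{C}^*~\bm{R}^*\rb\times_2\lb\bm{C}^*~\bm{P}^*\rb\times_3\bm{L}^*$. Expanding multilinearly produces $\cm{A}$ plus a sum of cross-terms in which at least one $\bm{H}_\cdot$ or $\cm{H}_G$ appears. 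Each cross-term is bounded by the product of one Frobenius norm (of the residual factor) and operator norms of the remaining factors; under $\|\lb\bm{C}~\bm{R}\rb\|_\mathrm{op}\le(1+c_b)b$ etc.\ and $\max_i\|\cm{G}_{(i)}\|_\mathrm{op}\le(1+c_b)\bar\sigma/b^3$, each such product is of order $b^3\|\cm{H}_G\|_\mathrm{F}$ or $\bar\sigma b^{-1}(3+2c_b)(1+c_b)^2\|\bm{H}_k\|_\mathrm{F}$. Squaring, applying Cauchy--Schwarz across the constant number of terms, and summing delivers the target bound $\|\cm{A}-\cm{A}^*\|_\mathrm{F}^2\le 4b^6[1+\bar\sigma^2 b^{-8}(3+2c_b)^2(1+c_b)^4]E$.

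The main obstacle is the coupled block structure enforced by the common subspace: the minimizing $\bm{O}_1$ and $\bm{O}_2$ in the definition of $E$ must share the block $\bm{O}_c$, whereas the standard $\sin\Theta$ argument on each mode would produce two unrelated orthogonal rotations. To handle this, I would first apply Wedin separately to obtain rotations $\bm{Q}_1,\bm{Q}_2$ for $\lb\bm{C}~\bm{R}\rb$ and $\lb\bm{C}~\bm{P}\rb$, then show that the upper-left $d\times d$ blocks of $\bm{Q}_1$ and $\bm{Q}_2$ can be realigned to a common orthogonal $\bm{O}_c$ at an additional cost that is already of the same order as $\|\cm{A}-\cm{A}^*\|_\mathrm{F}/\underline{\sigma}$; this realignment is possible because $\lb\bm{C}^*~\bm{R}^*\rb$ and $\lb\bm{C}^*~\bm{P}^*\rb$ share the same first $d$ columns $\bm{C}^*$, so the two $\sin\Theta$-perturbed projections onto the common subspace differ only by $O(\|\cm{A}-\cm{A}^*\|_\mathrm{F}/\underline{\sigma})$. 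Once this alignment lemma is in hand, the rest of the proof reduces to the bookkeeping sketched above.
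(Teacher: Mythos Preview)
Your proposal is correct and matches the paper's approach: the paper does not actually prove this lemma, stating only that it is ``a straightforward extension of Lemma E.2 in \citet{han2020optimal} with common subgroup structures included, so the proof is omitted for simplicity,'' and your sketch is precisely that extension, mirroring the matrix analogue in Lemma \ref{lemma:errorbound}. You in fact go a step further than the paper by flagging and proposing a fix for the block-diagonal coupling of $\bm{O}_1$ and $\bm{O}_2$ through a shared $\bm{O}_c$---a subtlety that is glossed over even in the paper's own proof of Lemma \ref{lemma:errorbound}, where the Cai--Zhang $\sin\Theta$ bound is applied to each factor with unconstrained orthogonal rotations.
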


\section{VAR for diverging eigenvalue effects}
\label{append:diverging}

\subsection{Factor analysis of reduced-rank VAR process}

In this section, we conduct factor analysis for the reduced-rank VAR process with diverging eigenvalue effects. For the VAR(1) model $\bm{y}_t=\bm{A}\bm{y}_{t-1}+\bbm{\varepsilon}_t$, the covariance matrix of $\bm{y}_t$ satisfies
\begin{equation}
    \bm{\Sigma}_{\bm{y}}=\bm{A}\bm{\Sigma}_{\bm{y}}\bm{A}^\top+\bm{\Sigma}_{\bbm{\varepsilon}},
\end{equation}
where $\bm{\Sigma}_{\bm{y}}=\text{var}(\bm{y}_t)$ and $\bm{\Sigma}_{\bbm{\varepsilon}}=\text{var}(\bbm{\varepsilon}_t)$.
When the covariance matrix $\bm{\Sigma}_{\bm{y}}$ has some diverging eigenvalues, the singular values of $\bm{A}$ and/or the eigenvalues of $\bm{\Sigma}_{\bbm{\varepsilon}}$ may also be diverging.
If $\bm{A}$ has diverging singular values, the dynamic autoregressive part has the diverging effect, whereas if the leading eigenvalues of $\bm{\Sigma}_{\bbm{\varepsilon}}$ are spiky, the white noise part has the diverging eigenvalue effect.

Note that the stationarity of the VAR process requires that all eigenvalues of $\bm{A}$ are strictly smaller than one in terms of absolute value. However, this condition can hold with some diverging singular values. For example, consider a $p\times p$ coefficient matrix
\begin{equation}
    \bm{A}=0.9\bm{1}_p(1,0,0,0,\cdots,0)^\top=\begin{pmatrix}
    0.9 & 0 & 0 & \cdots & 0 \\
    0.9 & 0 & 0 & \cdots & 0 \\
    0.9 & 0 & 0 & \cdots & 0 \\
    \vdots & \vdots & \vdots & \ddots & \vdots\\
    0.9 & 0 & 0 & \cdots & 0  
    \end{pmatrix}
\end{equation}
and it can be verified that $\sigma_1(\bm{A})=0.9\sqrt{p}$ and the nonzero eigenvalue of $\bm{A}$ is 0.9.

The reduced-rank VAR(1) process of rank $r$ can be formulated to the factor model in \citet{gao2021modeling},
\begin{equation}\label{eq:VAR_GTFM}
    \begin{split}
        \bm{y}_t & = \bm{U}\bm{S}\bm{V}^\top\bm{y}_{t-1}+\bbm{\varepsilon}_t = \bm{U}(\bm{S}\bm{V}^\top\bm{y}_{t-1}+\bm{U}^\top\bbm{\varepsilon}_t)+\bm{U}_\perp(\bm{U}_\perp^\top\bbm{\varepsilon}_t)\equiv \bm{U}\bm{f}_t+\bm{U}_\perp\bbm{\varepsilon}_{2t}
    \end{split}
\end{equation}
where $\bm{f}_t$ is the $r$-dimensional dynamic factor and $\bbm{\varepsilon}_{2t}$ is the $(p-r)$-dimensional white noise. To analyze the properties of this factor model, we first characterize the strength of the dynamic factor and the white noise term.


For simplicity, we assume that all nonzero singular values in $\bm{S}$ are diverging with a rate of $p^{\delta_s}$ with some $\delta_s\in[0,1/2]$. If $\delta_s=0$, all singular values are bounded.

In the factor model in \eqref{eq:VAR_GTFM}, $\bbm{\varepsilon}_t$ is split into the factor component and white noise component. Hence, it is essential to characterize the signal strength in $\mathcal{M}(\bm{U})$ and $\mathcal{M}(\bm{V})$.
Assume that the first $r$ eigenvalues of $\bm{U}^\top\bm{\Sigma}_{\bbm{\varepsilon}}\bm{U}$ scale as $p^{\delta_u}$ with some $\delta_u\in[0,1]$. Assume that the first $K$ eigenvalues of $\bm{U}_\perp^\top\bm{\Sigma}_{\bbm{\varepsilon}}\bm{U}_\perp$ scale as $p^{\delta_u'}:=p^{1-\delta_2}$ with some $\delta_u'\in(0,1]$ and $\delta_2=1-\delta_u'$. For the predictor factors, assume that the diverging eigenvalues of $\bm{V}^\top\bm{\Sigma}_{\bm{y}}\bm{V}^\top$ scale as $p^{\delta_v}$.


Since
$\bm{\Sigma}_{\bm{y}}=\bm{A}\bm{\Sigma}_{\bm{y}}\bm{A}^\top+\bm{\Sigma}_{\bbm{\varepsilon}}=\bm{U}\bm{S}\bm{V}^\top\bm{\Sigma}_{\bm{y}}\bm{V}\bm{S}\bm{U}^\top+\bm{\Sigma}_{\bbm{\varepsilon}}$,
we have
$$\lambda_{1}(\bm{U}^\top\bm{\Sigma}_{\bm{y}}\bm{U})\asymp\cdots\lambda_{r}(\bm{U}^\top\bm{\Sigma}_{\bm{y}}\bm{U})\asymp p^{\delta_u}+p^{2\delta_s+\delta_v},$$
and
$$\lambda_{1}(\bm{V}^\top\bm{\Sigma}_{\bm{y}}\bm{V})\asymp\cdots\asymp\lambda_{r}(\bm{V}^\top\bm{\Sigma}_{\bm{y}}\bm{V})\asymp p^{\delta_v}.$$
Since $\bm{\Sigma}_{\bm{f}}=\bm{S}\bm{V}^\top\bm{\Sigma}_{\bm{y}}\bm{V}\bm{S}+\bm{U}^\top\bm{\Sigma}_{\bbm{\varepsilon}}\bm{U}$, we have $\lambda_{1}(\bm{\Sigma}_{\bm{f}})\asymp\cdots\asymp\lambda_{r}(\bm{\Sigma}_{\bm{f}}) \asymp p^{(2\delta_s+\delta_v)\vee\delta_u}:=p^{1-\delta_1}$ with $\delta_1=1-((2\delta_s+\delta_v)\vee\delta_u)\in[0,1]$. 

Moreover, denote the standardized variable $\bm{x}_t=\bm{\Sigma}_{\bm{f}}^{-1/2}\bm{f}_t$ and $\bm{e}_t=\bm{\Sigma}_{\bbm{\varepsilon}_2}^{-1/2}\bbm{\varepsilon}_{2t}$. For any $k\geq1$,
\begin{equation}
    \mathbb{E}[\bm{x}_t\bm{e}_{t-k}^\top]=\bm{\Sigma}_{\bm{f}}^{-1/2}\mathbb{E}[\bm{S}\bm{V}^\top\bm{y}_{t-1}\bbm{\varepsilon}_{2,t-k}]\bm{\Sigma}_{\bbm{\varepsilon}_2}^{-1/2}.
\end{equation}
Since $\bm{y}_{t-1}=\bbm{\varepsilon}_{t-1}+\bm{A}\bbm{\varepsilon}_{t-2}+\bm{A}^2\bbm{\varepsilon}_{t-3}+\cdots$, we have
\begin{equation}
    \mathbb{E}[\bm{x}_t\bm{e}_{t-1}^\top]=\bm{\Sigma}_{\bm{f}}^{-1/2}\mathbb{E}[\bm{S}\bm{V}^\top\bbm{\varepsilon}_{t-1}\bbm{\varepsilon}_{2,t-1}]\bm{\Sigma}_{\bbm{\varepsilon}_2}^{-1/2}
\end{equation}
and $\|\mathbb{E}[\bm{x}_t\bm{e}_{t-1}^\top]\|_\text{op}\asymp p^{\delta_s+\delta_v/2-(1-\delta_1)/2}:=p^{\delta_3}$.


For any orthonormal matrices $\bm{H}_1,\bm{H}_2\in\mathbb{O}^{p\times r}$, consider the discrepancy measure 
$$D(\bm{H}_1,\bm{H}_2)=\sqrt{1-\frac{1}{r}\text{tr}(\bm{H}_1\bm{H}_1^\top\bm{H}_2\bm{H}_2^\top)}.$$

By the similar arguments as Theorem 3 in \citet{gao2021modeling}, we have the following lemma for the loading matrix estimation error upper bounds. 

\begin{lemma}\label{lemma:factor_upper_bound1}
    Under the conditions in Theorem \ref{thm:diverging}, if $p^{\delta_1\vee\delta_2}T^{-1}=o(1)$ and $\delta_1\leq\delta_2$, with probability at least $1-\exp(-Cp)$,
    \begin{equation}
        D(\widehat{\bm{U}},\bm{U}) \lesssim\sqrt{\frac{p^{\delta_1}}{T}},~~D(\widehat{\bm{U}}_\perp,\bm{U}_\perp)\lesssim\sqrt{\frac{p^{\delta_1}}{T}},~~\text{and}~~
        D(\widehat{\bm{U}}_\perp\widehat{\bm{K}},\bm{U}_\perp\bm{K}) \lesssim \sqrt{\frac{p^{2\delta_2-3\delta_1}}{T}}.
    \end{equation}
\end{lemma}

Moreover, for any $1\leq t\leq T$, by the triangle inequality,
\begin{equation}
    \begin{split}
        \|(\widehat{\bm{T}}_U-\bm{T}_U)\bm{y}_t\|_2
        \leq&\|\widehat{\bm{U}}\widehat{\bm{U}}^\top\bm{y}_t-\bm{U}\bm{U}^\top\bm{y}_t\|_2+
        \|\widehat{\bm{U}}_\perp\widehat{\bm{K}}_\perp\widehat{\bm{K}}_\perp^\top\widehat{\bm{U}}_\perp^\top\bm{y}_t-\bm{U}_\perp\bm{K}_\perp\bm{K}_\perp^\top\bm{U}_\perp^\top\bm{y}_t\|_2\\
        & + \|\widehat{\bm{U}}_\perp\widehat{\bm{K}}(\widehat{\bm{\Lambda}}_{\bbm{\varepsilon}_2}^{K})^{-1/2}\widehat{\bm{K}}^\top\widehat{\bm{U}}_\perp^\top\bm{y}_t-\bm{U}_\perp\bm{K}(\bm{\Lambda}_{\bbm{\varepsilon}_2}^{K})^{-1/2}\bm{K}^\top\bm{U}_\perp^\top\bm{y}_t\|_2.
    \end{split}
\end{equation}
Hence, following Theorem 4 of \citet{gao2021modeling}, we have the following lemma for the estimation upper bound for the transformed data.

\begin{lemma}\label{lemma:factor_upper_bound2}
    Under conditions in Lemma \ref{lemma:factor_upper_bound1}, with probability at least $1-\exp(-Cp)$,
    \begin{equation}
        \begin{split}
            \max_{1\leq t\leq T}\|(\widehat{\bm{T}}_U-\bm{T}_U)\bm{y}_t\|_2\lesssim & p^{(1-\delta_1)/2}\log(T)D(\widehat{\bm{U}},\bm{U})\\ &+p^{1/2-\delta_2}\log(T)D(\widehat{\bm{U}}_\perp\widehat{\bm{K}},\bm{U}_\perp\bm{K})\\
            \lesssim & \log(T)\left(\sqrt{\frac{p}{T}}+\sqrt{\frac{p^{1-3\delta_1}}{T}}\right)\asymp \sqrt{\frac{\log^2(T)p}{T}}.
        \end{split}
    \end{equation}
\end{lemma}

\subsection{Theoretical analysis of the transformed time series with the true transformation}

Suppose that we have the true transformation $\bm{T}_U=\bm{U}\bm{U}^\top+\bm{U}_\perp\bm{T}_K\bm{U}_\perp^\top$ and the transformed data $\widebar{\bm{y}}_t=\bm{T}_U\bm{y}_t$. Consider the model
$$\widebar{\bm{y}}_t=\bm{A}\bm{y}_{t-1}+\widebar{\bbm{\varepsilon}}_t$$
and the loss function $\widebar{\mathcal{L}}(\bm{A})=(2T)^{-1}\|\widebar{\bm{Y}}-\bm{A}\bm{X}\|_\text{F}^2$, where $\widebar{\bm{Y}}=[\widebar{\bm{y}}_1,\cdots,\widebar{\bm{y}}_T]$ is the transformed data and $\bm{X}$ is the original design matrix.



We start with the analysis of RSC and RSS with diverging eigenvalue effects.
\begin{lemma}\label{lemma:RSC2}
    Assume the conditions in Theorem \ref{thm:diverging} hold. For $\bm{A}$ and $\bm{A}^*$ such that $\|\bm{A}-\bm{A}^*\|_\textup{F}\asymp\sqrt{p^{1+\delta_u-\delta_v}/T}$, denote $\bm{\Delta}=\bm{A}-\bm{A}^*$, with probability at least $1-2\exp[-C\min(\tau^{-4},\tau^{-2})T]$,
    $$\alpha'_\textup{RSC}\|\bm{\Delta}\|_\textup{F}^2\leq\frac{1}{T}\sum_{t=0}^{T-1}\|\bm{\Delta}\bm{y}_t\|_2^2\leq\beta'_\textup{RSS}\|\bm{\Delta}\|_\textup{F}^2,$$
    where the dependency measurement quantities are defined as $\alpha'_\textup{RSC}=\lambda_{\min}(\bm{V}^{*\top}\bm{\Sigma}_{\bm{y}}\bm{V}^*)/2$ and $\beta'_\textup{RSS}=3\lambda_{\max}(\bm{V}^{*\top}\bm{\Sigma}_{\bm{y}}\bm{V}^*)/2$.
\end{lemma}

\begin{proof}[Proof of Lemma \ref{lemma:RSC2}]

Consider the SVD $\bm{A}=\bm{U}\bm{S}\bm{V}^\top$ and $\bm{A}^*=\bm{U}^*\bm{S}^*\bm{V}^{*\top}$. By Lemma \ref{lemma:svd_perturb}, $\|\bm{V}^\top\bm{V}_\perp^*\|_\text{F}=\|\sin\Theta(\bm{V},\bm{V}^*)\|_\text{F}\leq \sigma_r^{-1}(\bm{A}^*)\|\bm{A}-\bm{A}^*\|_\text{F}=p^{-\delta_s}\|\bm{A}-\bm{A}^*\|_\text{F}\asymp\sqrt{p^{1+\delta_u-\delta_v-2\delta_s}/T}$, which diverges to zero as $p$ goes to infinity.

Starting from the RSS,
\begin{equation*}
    \begin{split}
        & \left\|(\bm{U}\bm{S}\bm{V}^\top-\bm{U}^*\bm{S}^*\bm{V}^{*\top})\bm{y}_{t}\right\|^2_2 = \left\|(\bm{U}\bm{S}\bm{V}^\top-\bm{U}^*\bm{S}^*\bm{V}^{*\top})(\bm{V}^*\bm{V}^{*\top}+\bm{V}_\perp^*\bm{V}_\perp^{*\top})\bm{y}_{t}\right\|^2_2 \\
        \leq & 2\left\|(\bm{U}\bm{S}\bm{V}^\top-\bm{U}^*\bm{S}^*\bm{V}^{*\top})\bm{V}^*\bm{V}^{*\top}\bm{y}_{t}\right\|^2_2 + 2\left\|(\bm{U}\bm{S}\bm{V}^\top-\bm{U}^*\bm{S}^*\bm{V}^{*\top})\bm{V}_\perp^*\bm{V}_\perp^{*\top}\bm{y}_{t}\right\|^2_2\\
        \leq & 4\underbrace{\left\|(\bm{U}\bm{S}-\bm{U}^*\bm{S}^*)(\bm{V}^{*\top}\bm{y}_{t})\right\|_2^2}_{T_{1t}} + 4\underbrace{\left\|(\bm{U}\bm{S}-\bm{U}^*\bm{S}^*)(\bm{V}^\top\bm{V}_\perp^{*})(\bm{V}_\perp^{*\top}\bm{y}_{t})\right\|_2^2}_{T_{2t}}
    \end{split}
\end{equation*}

As in Lemma \ref{lemma:RSC}, for $T_{1t}$ we have for any $\bm{\Delta}\in\mathbb{R}^{p\times r}$, with probability at least $1-2\exp[-C\min(\tau^{-4},\tau^{-2})T]$
\begin{equation}
    \alpha'_\text{RSC}\|\bm{\Delta}\|_\text{F}^2\leq\frac{1}{T}\sum_{t=0}^{T-1}\|\bm{\Delta}(\bm{V}^{*\top}\bm{y}_t)\|_2^2\leq\beta'_\text{RSS}\|\bm{\Delta}\|_\text{F}^2,
\end{equation}
where $\alpha'_\text{RSC}=\lambda_{\min}(\bm{V}^{*\top}\bm{\Sigma}_{\bm{y}}\bm{V}^*)/2\asymp p^{\delta_v}$ and $\beta'_\text{RSS}=3\lambda_{\max}(\bm{V}^{*\top}\bm{\Sigma}_{\bm{y}}\bm{V}^*)/2\asymp p^{\delta_v}$.

Similarly, for $T_{2t}$, for any $\bm{\Delta}\in\mathbb{R}^{p\times r}$, with probability at least $1-2\exp[-C\min(\tau^{-4},\tau^{-2})T]$,
\begin{equation}
    \frac{1}{T}\sum_{t=0}^{T-1}\|\bm{\Delta}(\bm{V}^\top\bm{V}_\perp^*)(\bm{V}_\perp^{*\top}\bm{y}_t)\|_2^2\asymp \frac{p^{1+\delta_u-\delta_v-2\delta_s}}{T}\|\bm{\Delta}\|_\text{F}^2,
\end{equation}
and it is a much smaller than the sum of $T_{1t}$ given that $T\gtrsim p^{1+\delta_u-2\delta_v-2\delta_s}$, which is confirmed by the sample size requirement of Theorem \ref{thm:diverging}.

\end{proof}

Based on the transformed data $\widebar{\bm{y}}_t=\bm{A}\bm{y}_{t-1}+\widebar{\bbm{\varepsilon}}_t$, the statistical error becomes
\begin{equation}
    \begin{split}
        \widebar{\xi}(r,d)&=\sup_{\substack{\textup{\bf{[}}\bm{C}~\bm{R}\textup{\bf{]}},\textup{\bf{[}}\bm{C}~\bm{P}\textup{\bf{]}}\in\mathbb{O}^{p\times r},\\ \bm{D}\in\mathbb{R}^{r\times r},\|\bm{D}\|_\textup{F}=1}}\left\langle\nabla\mathcal{L}(\bm{A}^*),\textup{\bf{[}}\bm{C}~\bm{R}\textup{\bf{]}}\bm{D}\textup{\bf{[}}\bm{C}~\bm{P}\textup{\bf{]}}^\top\right\rangle\\
        &=\sup_{\substack{\textup{\bf{[}}\bm{C}~\bm{R}\textup{\bf{]}},\textup{\bf{[}}\bm{C}~\bm{P}\textup{\bf{]}}\in\mathbb{O}^{p\times r},\\ \bm{D}\in\mathbb{R}^{r\times r},\|\bm{D}\|_\textup{F}=1}}\frac{1}{T}\sum_{t=1}^T\left\langle\widebar{\bbm{\varepsilon}}_t\bm{y}_{t-1}^\top,\textup{\bf{[}}\bm{C}~\bm{R}\textup{\bf{]}}\bm{D}\textup{\bf{[}}\bm{C}~\bm{P}\textup{\bf{]}}^\top\right\rangle
    \end{split}
\end{equation}
Similarly to Lemma \ref{lemma:deviation}, we have the following lemma for the transformed deviation bound.

\begin{lemma}\label{lemma:deviation2}
    Assume conditions in Theorem \ref{thm:diverging} hold. If $T\gtrsim\max(\tau^2,\tau^4)p$, then, with probability at least $1-\exp(-Cp)$,
    $$\widebar{\xi}(r,d)\lesssim \tau^2 M_1'\sqrt{\frac{d_\textup{CS}(p,r,d)}{T}},$$
    where $M_1'=p^{(\delta_u+\delta_v)/2}$.
\end{lemma}
We can prove Lemma \ref{lemma:deviation2} by using the fact $\lambda_{\max}(\bm{\Sigma}_{\widebar{\bbm{\varepsilon}}})\asymp p^{\delta_u}$, Lemma \ref{lemma:RSC2}, and the same $\epsilon$-net construction method as in the proof of Lemma \ref{lemma:deviation}. The detailed proof is also omitted.

\subsection{Analysis of transformed VAR process with estimated transformation}

Finally, we consider the estimation error analysis for the estimated data $\widetilde{\bm{y}}_t$ with the estimated transformation matrix.
Denote the estimated transformation matrix by $\widehat{\bm{T}}_U=\widehat{\bm{U}}\widehat{\bm{U}}^\top+\widehat{\bm{U}}_\perp\widehat{\bm{T}}_K\widehat{\bm{U}}_\perp^\top$ and the corresponding data $\widetilde{\bm{y}}_t=\widehat{\bm{T}}_U\bm{y}_t=(\widehat{\bm{T}}_U-\bm{T}_U)\bm{y}_t+\bm{T}_U(\bm{A}^*\bm{x}_t+\bbm{\varepsilon}_t)=(\widehat{\bm{T}}_U-\bm{T}_U)\bm{y}_t+\bm{A}^*\bm{x}_t+\widebar{\bbm{\varepsilon}}_t$.
Denote the loss function with the estimated transformation as
\begin{equation}
    \widetilde{\mathcal{L}}(\bm{A})=\frac{1}{2T}\|\widetilde{\bm{Y}}-\bm{A}\bm{X}\|_\text{F}^2,
\end{equation}
where $\widetilde{\bm{Y}}=[\widetilde{\bm{y}}_1,\cdots,\widetilde{\bm{y}}_T]$ is the response matrix with the estimated transformation.

The deviation bound for the estimated data is defined as
\begin{equation}
    \begin{split}
        \widetilde{\xi}(r,d)&=\sup_{\substack{\textup{\bf{[}}\bm{C}~\bm{R}\textup{\bf{]}},\textup{\bf{[}}\bm{C}~\bm{P}\textup{\bf{]}}\in\mathbb{O}^{p\times r},\\ \bm{D}\in\mathbb{R}^{r\times r},\|\bm{D}\|_\textup{F}=1}}\left\langle\nabla\widetilde{\mathcal{L}}(\bm{A}^*),\textup{\bf{[}}\bm{C}~\bm{R}\textup{\bf{]}}\bm{D}\textup{\bf{[}}\bm{C}~\bm{P}\textup{\bf{]}}^\top\right\rangle\\
        &=\sup_{\substack{\textup{\bf{[}}\bm{C}~\bm{R}\textup{\bf{]}},\textup{\bf{[}}\bm{C}~\bm{P}\textup{\bf{]}}\in\mathbb{O}^{p\times r},\\ \bm{D}\in\mathbb{R}^{r\times r},\|\bm{D}\|_\textup{F}=1}}\frac{1}{T}\sum_{t=1}^T\left\langle[\widebar{\bbm{\varepsilon}}_t+(\widehat{\bm{T}}_U-\bm{T}_U)\bm{y}_t]\bm{y}_{t-1}^\top, \textup{\bf{[}}\bm{C}~\bm{R}\textup{\bf{]}}\bm{D}\textup{\bf{[}}\bm{C}~\bm{P}\textup{\bf{]}}^\top \right\rangle\\
        &\leq \widebar{\xi}(r,d) + \sup_{\substack{\textup{\bf{[}}\bm{C}~\bm{R}\textup{\bf{]}},\textup{\bf{[}}\bm{C}~\bm{P}\textup{\bf{]}}\in\mathbb{O}^{p\times r},\\ \bm{D}\in\mathbb{R}^{r\times r},\|\bm{D}\|_\textup{F}=1}}\frac{1}{T}\sum_{t=1}^T\langle(\widehat{\bm{T}}_U-\bm{T}_U)\bm{y}_t\bm{y}_{t-1}^\top, \textup{\bf{[}}\bm{C}~\bm{R}\textup{\bf{]}}\bm{D}\textup{\bf{[}}\bm{C}~\bm{P}\textup{\bf{]}}^\top \rangle:=T_1+T_2.
    \end{split}
\end{equation}

The upper bound of $T_1$ is established in Lemma \ref{lemma:deviation2}. In addition, by Lemma \ref{lemma:factor_upper_bound2}, with probability at least $1-\exp(-Cp)$,
$$\max_{1\leq t\leq T}\|(\widehat{\bm{T}}_U-\bm{T}_U)\bm{y}_t\|_2\lesssim\tau^2\sqrt{r}\sqrt{\frac{\log(T)^2p}{T}}:=B.$$
Then, conditioning on this upper bound, for any $\bm{M}\in\mathbb{R}^{p\times p}$ such that $\|\bm{M}\|_\text{F}=1$,
\begin{equation}
    \langle(\widehat{\bm{T}}_U-\bm{T}_U)\bm{y}_t\bm{y}_{t-1}^\top,\bm{M}\rangle\leq\|(\widehat{\bm{T}}_U-\bm{T}_U)\bm{y}_t\|_2\|\bm{M}\bm{y}_{t-1}\|_2\leq B\|\bm{M}\bm{y}_{t-1}\|_2,
\end{equation}
and by Cauchy's inequality and Lemma \ref{lemma:RSC2}, with probability approaching one,
\begin{equation}
    \begin{split}
        T_2 & \leq \frac{B}{T}\sup_{\substack{\textup{\bf{[}}\bm{C}~\bm{R}\textup{\bf{]}},\textup{\bf{[}}\bm{C}~\bm{P}\textup{\bf{]}}\in\mathbb{O}^{p\times r},\\ \bm{D}\in\mathbb{R}^{r\times r},\|\bm{D}\|_\textup{F}=1}}\sum_{t=1}^T\|\textup{\bf{[}}\bm{C}~\bm{R}\textup{\bf{]}}\bm{D}\textup{\bf{[}}\bm{C}~\bm{P}\textup{\bf{]}}^\top\bm{y}_{t-1}\|_2\\
        & \leq B\sup_{\substack{\textup{\bf{[}}\bm{C}~\bm{R}\textup{\bf{]}},\textup{\bf{[}}\bm{C}~\bm{P}\textup{\bf{]}}\in\mathbb{O}^{p\times r},\\ \bm{D}\in\mathbb{R}^{r\times r},\|\bm{D}\|_\textup{F}=1}}\sqrt{\frac{1}{T}\sum_{t=1}^T\|\textup{\bf{[}}\bm{C}~\bm{R}\textup{\bf{]}}\bm{D}\textup{\bf{[}}\bm{C}~\bm{P}\textup{\bf{]}}^\top\bm{y}_{t-1}\|_2^2}\\
        & \leq B\sqrt{\beta_\text{RSS}'}\asymp\tau^2\sqrt{r}\sqrt{\frac{\log^2(T)p^{1+\delta_v}}{T}}
    \end{split}
\end{equation}

Combining the upper bounds for $T_1$ and $T_2$, we have that with probability approaching one,
$$\widetilde{\xi}(r,d)\lesssim\tau^2p^{\delta^*}\sqrt{\frac{d_\text{CS}(p,r,d)}{T}},$$
where $\delta^*=(\delta_u+\delta_v)/2$.

Finally, we conclude this appendix by providing the proof of Theorem \ref{thm:diverging}.

\begin{proof}[Proof of Theorem \ref{thm:diverging}]

    According to Theorem \ref{thm:gd}, when we input the transformed response $\widetilde{\bm{Y}}$ and the original predictor, after $I$-th iteration with $I\gtrsim\log(p^{\delta_s/3}g_{\min})/\log(1-C\eta_0\alpha_\text{RSC}\beta_\text{RSS}^{-1})$,
    $$\|\bm{A}^{(I)}-\bm{A}\|_\text{F}\lesssim(\alpha'_\text{RSC})^{-1}\widetilde{\xi}(r,d)^2,$$
    where $\widetilde{\xi}(r,d)$ is defined above, given the $\alpha'_\text{RSC}$-RSC and $\beta'_\text{RSS}$-RSS conditions satisfied.

    By Lemma \ref{lemma:RSC2}, the RSC and RSS conditions hold with probability at least $1-2\exp[-C(M_2')^2\min(\tau^{-4},\tau^{-2})T]$. Conditioning on RSC and RSS conditions, by Lemmas \ref{lemma:factor_upper_bound1}, \ref{lemma:factor_upper_bound2} and \ref{lemma:deviation2},
    $$\widetilde{\xi}(r,d)\lesssim\tau^2p^{(\delta_u+\delta_v)/2}\sqrt{\frac{d_\text{CS}(p,r,d)}{T}},$$
    with probability at least $1-C\exp(-Cp)$. Combining these two results, as $\alpha'_\text{RSC}\asymp p^{\delta_v}$, we can obtain the desired result.

\end{proof}

\section{Supplementary materials for sparse estimation}\label{append:sparse}

\subsection{Proof of Theorem \ref{thm:sparse}}

\begin{proof}
The proof of Theorem \ref{thm:sparse} consists of two steps. In the first step, we establish the computational convergence result as an extension of Theorem \ref{thm:gd}. In the second step, a statistical convergence analysis is given to verify the sparsity-constrained restricted strong convexity and smoothness conditions.\\

\noindent \textit{Step 1.} (Computational convergence analysis)\\

First, we state some essential notations and conditions. Similarly to Theorem \ref{thm:gd}, define the combined estimation errors of $(\bm{C}^{(i)},\bm{R}^{(i)},\bm{P}^{(i)},\bm{D}^{(i)})$ up to the optimal rotations
    \begin{equation}
        \begin{split}
            E^{(i)}= \min_{\substack{\bm{O}_c\in\mathbb{O}^{d\times d}\\\bm{O}_r,\bm{O}_p\in\mathbb{O}^{(r-d)\times(r-d)}}}&\Big\{\|\bm{C}^{(i)}-\bm{C}^*\bm{O}_c\|_\text{F}^2 + \|\bm{R}^{(i)}-\bm{R}^*\bm{O}_r\|_\text{F}^2+\|\bm{P}^{(i)}-\bm{P}^*\bm{O}_p\|_\text{F}^2\\
            &+\left\|\bm{D}^{(i)}-\text{diag}(\bm{O}_c,\bm{O}_r)^\top\bm{D}^*\text{diag}(\bm{O}_c,\bm{O}_p)\right\|_\textup{F}^2\Big\}
        \end{split}
    \end{equation}
and the estimation errors of $(\widetilde{\bm{C}}^{(i)},\widetilde{\bm{R}}^{(i)},\widetilde{\bm{P}}^{(i)},\bm{D}^{(i)})$ before the hard thresholding as
\begin{equation}
        \begin{split}
            \widetilde{E}^{(i)}= \min_{\substack{\bm{O}_c\in\mathbb{O}^{d\times d}\\\bm{O}_r,\bm{O}_p\in\mathbb{O}^{(r-d)\times(r-d)}}}&\Big\{\|\widetilde{\bm{C}}^{(i)}-\bm{C}^*\bm{O}_c\|_\text{F}^2 + \|\widetilde{\bm{R}}^{(i)}-\bm{R}^*\bm{O}_r\|_\text{F}^2+\|\widetilde{\bm{P}}^{(i)}-\bm{P}^*\bm{O}_p\|_\text{F}^2\\
            &+\left\|\bm{D}^{(i)}-\text{diag}(\bm{O}_c,\bm{O}_r)^\top\bm{D}^*\text{diag}(\bm{O}_c,\bm{O}_p)\right\|_\textup{F}^2\Big\}.
        \end{split}
    \end{equation}

We assume that for the given sample size $T$, $\mathcal{L}$ is restricted strongly convex with parameter $\alpha$ and restricted strongly smooth with parameter $\beta$, such that for any rank-$r$ matrices $\bm{A}$ and $\bm{A}'$ where $\bm{A},\bm{A}^\top\in\mathbb{S}(p,p,s_c+s_r)$, and $\bm{A}',(\bm{A}')^{\top}\in\mathbb{S}(p,p,s_c+s_p)$,
\begin{equation}\label{eq:sparse_RSC_RSS}
    \frac{\alpha}{2}\|\bm{A}-\bm{A}'\|_\text{F}^2\leq \mathcal{L}(\bm{A})-\mathcal{L}(\bm{A}')-\langle\nabla\mathcal{L}(\bm{A}'),\bm{A}-\bm{A}'\rangle\leq\frac{\beta}{2}\|\bm{A}-\bm{A}'\|_\text{F}^2.
\end{equation}
In addition, for $\bm{s}=(s_c,s_r,s_p)$, we denote
\begin{equation}
    \xi(r,d,\bm{s})=\sup_{\substack{\bm{D}\in\mathbb{R}^{r\times r},\lb\bm{C}~\bm{R}\rb,\lb\bm{C}~\bm{P}\rb\in\mathbb{O}^{p\times r}\\ \bm{C}\in\mathbb{S}(p,d,s_c),\bm{R}\in\mathbb{S}(p,r-d,s_r),\bm{P}\in\mathbb{S}(p,r-d,s_p)}}\left\langle\nabla\mathcal{L}(\bm{A}^*),\lb\bm{C}~\bm{R}\rb\bm{D}\lb\bm{C}~\bm{P}^\top\rb\right\rangle.
\end{equation}

Moreover, similarly to the proof of Theorem \ref{thm:gd}, we assume $b=\sigma_1^{1/3}$ and $a=C\alpha\sigma_1^{2/3}\kappa^{-2}$. For any $i=0,1,2,\dots$, we assume that
\begin{equation}
    \|\lb\bm{C}^{(i)}~\bm{R}^{(i)}\rb\|_\text{op}\leq 1.1b, \|\lb\bm{C}^{(i)}~\bm{P}^{(i)}\rb\|_\text{op}\leq 1.1b,~\text{and}~\|\bm{D}^{(i)}\|_\text{op}\leq\frac{1.1\sigma_1}{b^2}.
\end{equation}
In addition, we assume that for any $i=0,1,2,\dots$, $E^{(i)}\leq C\sigma_1^{2/3}\alpha\beta^{-1}\kappa^{-2}$.

As $\bm{C}^{(i)}=\text{HT}(\widetilde{\bm{C}}^{(i)},s_c)$, $\bm{R}^{(i)}=\text{HT}(\widetilde{\bm{R}}^{(i)},s_r)$, $\bm{P}^{(i)}=\text{HT}(\widetilde{\bm{P}}^{(i)},s_c)$, $\bm{C}^*\bm{O}_c\in\mathbb{S}(p,d,s_c^*)$, $\bm{R}^*\bm{O}_r\in\mathbb{S}(p,r-d,s_r^*)$, and $\bm{P}^*\bm{O}_p\in\mathbb{S}(p,r-d,s_p^*)$, by Lemma \ref{lemma:hard_thresholding}, 
\begin{equation}\label{eq:HT_upper}
    E^{(i)}\leq (1+2\gamma^{-1/2})\widetilde{E}^{(i)}.
\end{equation}
Following the same arguments in the proof of Theorem \ref{thm:gd}, as in \eqref{eq:recursive}, we have the recursive arguments
\begin{equation}
    \widetilde{E}^{(i+1)} \leq \left(1-C\eta_0\alpha\beta^{-1}\kappa^{-2}\right)E^{(i)} + C\kappa^2\alpha^{-2}\sigma_1^{-4/3}\xi^2(r,d,\bm{s}),
\end{equation}
where the statistical error is replaced with $\xi(r,d,\bm{s})$. 

Combining \eqref{eq:HT_upper}, we have
\begin{equation}
    E^{(i+1)} \leq (1+2\gamma^{-1/2})(1-C\eta_0\alpha\beta^{-1}\kappa^{-2})E^{(i)} + C\kappa^2\alpha^{-2}\sigma_1^{-4/3}\xi^2(r,d,\bm{s}).
\end{equation}
When $\gamma\gtrsim\alpha^{-2}\beta^2\kappa^4$, it implies that
\begin{equation}
    E^{(i+1)} \leq (1-C\eta_0\alpha^2\beta^{-2}\kappa^{-4})E^{(i)} + C\kappa^2\alpha^{-2}\sigma_1^{-4/3}\xi^2(r,d,\bm{s}).
\end{equation}

By induction, we have that for any $i=1,2,\dots$,
\begin{equation}
    E^{(i)}\leq (1-C\eta_0\alpha^2\beta^{-2}\kappa^{-4})^iE^{(0)} + C\kappa^2\alpha^{-2}\sigma_1^{-4/3}\xi^2(r,d,\bm{s}).
\end{equation}
Furthermore, by Lemma \ref{lemma:errorbound},
\begin{equation}
    \|\bm{A}^{(i)}-\bm{A}^*\|_\text{F}^2\leq C\kappa^2(1-C\eta_0\alpha^2\beta^{-2}\kappa^{-4})^i\|\bm{A}^{(0)}-\bm{A}^*\|_\text{F}^2+C\kappa^2\alpha^{-2}\xi^2(r,d,\bm{s}).
\end{equation}

Lastly, similarly to Step 5 in the proof of Theorem \ref{thm:gd}, we can verify the essential conditions, except for RSC and RSS conditions, hold recursively, which concludes the computational convergence analysis.\\

\noindent \textit{Step 2.} (Statistical convergence analysis)

First, we establish the initialization error bound. By Lemma \ref{lemma:sparse_lasso}, with probability approaching one, the estimation error of the initial estimator $\widetilde{\bm{A}}_{L_1}$ is
\begin{equation}
    \|\widetilde{\bm{A}}_{L_1}-\bm{A}^*\|_\textup{F}\lesssim \alpha_\textup{RSC}^{-1}\tau^2M_1\sqrt{\frac{S^*\log(p)}{T}},
\end{equation}
where $S^*=(s_c^*+s_r^*)(s_c^*+s_p^*)$.

By Lemma \ref{lemma:hard_thresholding} and the same arguments in the proof of Lemma \ref{lemma:A0_init}, with probability at least $1-2\exp[-CM_2^2\min(\tau^{-2},\tau^{-4})T]-\exp[-C\log(p)]$,
\begin{equation}\label{eq:sparse_init}
    \|\bm{A}^{(0)}-\bm{A}^*\|_\textup{F}\lesssim\sigma_1^{2/3}\kappa^2g_{\min}^{-2}\|\widetilde{\bm{A}}_{L_1}-\bm{A}^*\|_\text{F}\lesssim \sigma_1^{2/3}\kappa^2g_{\min}^{-2}\alpha_{\textup{RSC}}^{-1}\tau^2M_1\sqrt{\frac{S^*\log(p)}{T}}.
\end{equation}

By the computational convergence results in Step 1, we have that, for all $i=1,2,\dots$,
\begin{equation}
    \|\bm{A}^{(i)}-\bm{A}^*\|_\textup{F}^2\lesssim\kappa^2(1-C\eta_0\alpha^2\beta^{-2}\kappa^{-4})^i\|\bm{A}^{(0)}-\bm{A}^*\|_\textup{F}^2+\kappa^2\alpha^{-2}\xi^2(r,d,\bm{s}).
\end{equation}
Combining the initialization upper bound in \eqref{eq:sparse_init}, the $\alpha_\textup{RSC}$-RSC and $\beta_\textup{RSS}$-RSS conditions in Lemma \ref{lemma:sparse_RSC_RSS}, and the deviation bound in Lemma \ref{lemma:sparse_deviation}, we have that after $I$-th iteration with
\begin{equation}
    I\gtrsim\frac{\log(\kappa^{-1}\sigma_1^{-1/3}g_{\min})}{\log(1-C\eta_0\alpha_\textup{RSC}^2\beta_\text{RSS}^{-2}\kappa^{-4})}
\end{equation}
with probability at least $1-2\exp[-CM_2^2\min(\tau^{-4},\tau^{-2})T]-2\exp[-C\log(p)]$, it holds that
\begin{equation}
    \|\bm{A}^{(I)}-\bm{A}^*\|_\textup{F}\lesssim \kappa\alpha_{\textup{RSC}}^{-1}\tau^2M_1\sqrt{\frac{sr+r^2+s\min[\log(p),\log(ep/s)]}{T}}
\end{equation}
where $s=\max(s_c+s_r,s_c+s_p)$.

\end{proof}

\subsection{Auxiliary lemmas}

The following lemma establishes an upper bound for the hard thresholding operation. This technique is Lemma 3.3 in \citet{li2016nonconvex}. To make the proof self-contained, it is presented here.

\begin{lemma}\label{lemma:hard_thresholding}
    Let $\textup{HT}(\cdot,k):\mathbb{R}^d\to\mathbb{R}^d$ be a hard thresholding operator that keeps the largest $k$ entries setting other entries to zero. For $k>k^*$, $\bbm{\theta}^*\in\mathbb{R}^{d}$ such that $\|\bbm{\theta}^*\|_0\leq k^*$, and any vector $\bbm{\theta}\in\mathbb{R}^d$, we have
    \begin{equation}
        \|\textup{HT}(\bbm{\theta},k)-\bbm{\theta}^*\|_2^2\leq \left(1+\frac{2\sqrt{k^*}}{\sqrt{k-k^*}}\right)\|\bbm{\theta}-\bbm{\theta}^*\|_2^2.
    \end{equation}
\end{lemma}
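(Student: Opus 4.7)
The plan is to reduce the claim to a coordinate-wise estimate that uses only the defining property of hard thresholding. Write $\hat{\bbm{\theta}} := \textup{HT}(\bbm{\theta}, k)$, and let $\hat{S} := \textup{supp}(\hat{\bbm{\theta}})$ and $S^* := \textup{supp}(\bbm{\theta}^*)$, with $|\hat{S}| = k$ (the case $\|\bbm{\theta}\|_0 < k$ is trivial, since then $\hat{\bbm{\theta}} = \bbm{\theta}$) and $|S^*| \leq k^*$. Because $\hat{\theta}_i = \theta_i$ on $\hat{S}$ and $\hat{\theta}_i = 0$ off it, the error splits as
\begin{equation*}
\|\hat{\bbm{\theta}} - \bbm{\theta}^*\|_2^2 \;=\; \underbrace{\sum_{i \in \hat{S}} (\theta_i - \theta_i^*)^2}_{=:\,A} \;+\; \underbrace{\sum_{i \in S^* \setminus \hat{S}} (\theta_i^*)^2}_{=:\,B},
\end{equation*}
and $A \leq \|\bbm{\theta} - \bbm{\theta}^*\|_2^2$ trivially, so everything reduces to controlling $B$.

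For $B$, I would invoke the key ordering property of hard thresholding: for every $i \in S^* \setminus \hat{S}$ and every $j \in \hat{S} \setminus S^*$ one has $|\theta_j| \geq |\theta_i|$, because $j$ lies among the top $k$ magnitudes of $\bbm{\theta}$ and $i$ does not. Using $|\hat{S} \setminus S^*| \geq k - k^*$ and $|S^* \setminus \hat{S}| \leq k^*$, averaging this inequality over $j$ and summing over $i$ yields
\begin{equation*}
X \;:=\; \sum_{i \in S^* \setminus \hat{S}} \theta_i^2 \;\leq\; \frac{k^*}{k-k^*} \sum_{j \in \hat{S} \setminus S^*} \theta_j^2 \;\leq\; \frac{k^*}{k-k^*}\, A,
\end{equation*}
where the last step uses $\theta_j^* = 0$ on $\hat{S}\setminus S^*$ to identify $\sum_{\hat{S}\setminus S^*}\theta_j^2$ as part of $A$. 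A coordinate-wise triangle inequality $|\theta_i^*| \leq |\theta_i| + |\theta_i - \theta_i^*|$ on $S^* \setminus \hat{S}$ then produces $\sqrt{B} \leq \sqrt{X} + \sqrt{C}$ with $C := \sum_{i \in S^* \setminus \hat{S}}(\theta_i - \theta_i^*)^2 \leq \|\bbm{\theta} - \bbm{\theta}^*\|_2^2$; squaring gives $B \leq \rho A + 2\sqrt{\rho A C} + C$ with $\rho := k^*/(k-k^*)$.

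The main obstacle is extracting the sharp coefficient $1 + 2\sqrt{\rho}$ claimed in the lemma. Adding $A$ to both sides and bounding the cross term by AM--GM as $2\sqrt{\rho A C} \leq \sqrt{\rho}(A+C)$ yields
\begin{equation*}
\|\hat{\bbm{\theta}} - \bbm{\theta}^*\|_2^2 \;\leq\; (1 + \rho + \sqrt{\rho})\, A \;+\; (1 + \sqrt{\rho})\, C.
\end{equation*}
In the regime $\rho \leq 1$---which is precisely the case relevant for the application in Theorem~\ref{thm:sparse}, since the assumption $\gamma \gtrsim \alpha_{\textup{RSC}}^{-2}\beta_{\textup{RSS}}^{2}\kappa^{4} \geq 1$ combined with $s \geq (1+\gamma)s^*$ forces $\rho \leq 1/\gamma \leq 1$---one has $\rho \leq \sqrt{\rho}$, so $1+\rho+\sqrt{\rho} \leq 1 + 2\sqrt{\rho}$ and both coefficients are bounded by $1 + 2\sqrt{\rho}$. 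Combined with $A + C \leq \|\bbm{\theta}-\bbm{\theta}^*\|_2^2$, this delivers the claim.
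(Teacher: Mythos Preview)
The paper does not actually supply a proof of this lemma; it only cites Lemma~3.3 of \citet{li2016nonconvex} and restates the result. Your argument therefore goes beyond what the paper provides, and the decomposition $\|\hat{\bbm\theta}-\bbm\theta^*\|_2^2 = A + B$, the thresholding comparison $X \leq (q/p')\sum_{\hat S\setminus S^*}\theta_j^2$ with $q = |S^*\setminus\hat S|$ and $p' = |\hat S\setminus S^*|$, and the Minkowski step $\sqrt B \leq \sqrt X + \sqrt C$ are all correct and standard.

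The one slip is in bounding the counting ratio $q/p'$. You use $q \leq k^*$ and $p' \geq k-k^*$ separately to conclude $q/p' \leq \rho := k^*/(k-k^*)$, which then forces the side assumption $\rho \leq 1$ in the final step. But these two cardinality bounds are not independent: writing $m = |\hat S \cap S^*|$ one has $q \leq k^* - m$ and $p' = k - m$, so in fact
\[
\frac{q}{p'} \;\leq\; \frac{k^*-m}{k-m} \;\leq\; \frac{k^*}{k} \;<\; 1
\]
for every $m \geq 0$ (the middle inequality reduces to $(k-k^*)m \geq 0$). Running your argument verbatim with $k^*/k$ in place of $\rho$ removes the restriction entirely and even yields the sharper constant $1 + 2\sqrt{k^*/k} \leq 1 + 2\sqrt{k^*/(k-k^*)}$, so the lemma follows for all $k > k^*$ without appealing to the parameter regime of Theorem~\ref{thm:sparse}.
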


Next, we prove the restricted strong convexity (RSC) and restricted strong smoothness (RSS) conditions. For the least squares loss function $\mathcal{L}(\bm{A})=(2T)^{-1}\|\bm{Y}-\bm{A}\bm{X}\|_\text{F}^2$, it is easy to check that for any $\bm{A},\bm{A}'\in\mathbb{R}^{p\times p}$,
\begin{equation}
    \mathcal{L}(\bm{A})-\mathcal{L}(\bm{A}')-\langle\nabla\mathcal{L}(\bm{A}'),\bm{A}-\bm{A}'\rangle=\frac{1}{2T}\sum_{t=0}^{T-1}\|(\bm{A}-\bm{A}')\bm{y}_t\|_2^2.
\end{equation}

\begin{lemma}\label{lemma:sparse_RSC_RSS}
    For $\bm{A}=\lb\bm{C}~\bm{R}\rb\bm{D}\lb\bm{C}~\bm{P}\rb^\top$ and $\bm{A}'=\lb\bm{C}'~\bm{R}'\rb\bm{D}'\lb\bm{C}'~\bm{P}'\rb^\top$, where $\bm{C},\bm{C}'\in\mathbb{S}(p,d,s_c)$, $\bm{R},\bm{R}'\in\mathbb{S}(p,r-d,s_r)$, $\bm{P},\bm{P}'\in\mathbb{S}(p,r-d,s_p)$, let  $\bm{\Delta}=\bm{A}-\bm{A}'$. Under conditions in Theorem \ref{thm:sparse}, if $T\gtrsim M_2^{-2}\max(\tau^4,\tau^2)\{sr+r^2+s\min[\log(p),\log(ep/s)]\}$, then with probability at least $1-2\exp[-CM_2^2\min(\tau^{-4},\tau^{-2})T)]$
    \begin{equation}
        \alpha_\textup{RSC}\|\bm{\Delta}\|_\textup{F}^2\leq\frac{1}{T}\sum_{t=0}^{T-1}\|\bm{\Delta}\bm{y}_t\|_2^2\leq \beta_\textup{RSS}\|\bm{\Delta}\|_\textup{F}^2,
    \end{equation}
    where $s=\max(s_c+s_r,s_c+s_p)$, $\alpha_\textup{RSC}=\lambda_{\min}(\bm{\Sigma}_{\bbm{\varepsilon}})/(2\mu_{\max}(\mathcal{A}))$, and $\beta_\textup{RSS}=(3\lambda_{\max}(\bm{\Sigma}_{\bbm{\varepsilon}}))/(2\mu_{\min}(\mathcal{A}))$.
\end{lemma}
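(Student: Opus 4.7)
\medskip

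\noindent\textit{Proof proposal.} The plan is to mirror the proof of Lemma \ref{lemma:RSC}, replacing the covering of the rank-$2r$ unit-Frobenius sphere by a covering that also respects the row/column sparsity inherited from $\bm{\Delta}=\bm{A}-\bm{A}'$. First I would isolate the structural properties of $\bm{\Delta}$: since $\bm{A}$ has at most $s_c+s_r$ nonzero rows and $s_c+s_p$ nonzero columns (and similarly for $\bm{A}'$), the difference $\bm{\Delta}$ has rank at most $2r$, at most $2(s_c+s_r)\leq 2s$ nonzero rows and at most $2(s_c+s_p)\leq 2s$ nonzero columns. Thus it suffices to control
\begin{equation*}
\sup_{\bm{\Delta}\in\mathcal{S}(r,s;p)}\left|\frac{1}{T}\sum_{t=0}^{T-1}\|\bm{\Delta}\bm{y}_t\|_2^2-\mathbb{E}\|\bm{\Delta}\bm{y}_t\|_2^2\right|,
\end{equation*}
where $\mathcal{S}(r,s;p)=\{\bm{\Delta}\in\mathbb{R}^{p\times p}:\textup{rank}(\bm{\Delta})\leq 2r,\ \|\bm{\Delta}\|_\textup{F}=1,\ \bm{\Delta}\text{ and }\bm{\Delta}^\top\text{ have at most }2s\text{ nonzero rows}\}$. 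The expected value is already sandwiched between $\alpha_\textup{RSC}$ and $\beta_\textup{RSS}$ by the identical spectral-measure argument used in Lemma \ref{lemma:RSC}.

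Next I would rewrite $R_T(\bm{\Delta})$ via the VMA$(\infty)$ representation exactly as in that lemma, obtaining $R_T(\bm{\Delta})=\bbm{\zeta}^\top\bm{\Sigma}_{\bm{\Delta}}\bbm{\zeta}$ with $\|\bm{\Sigma}_{\bm{\Delta}}\|_\textup{F}^2\leq T\lambda_{\max}^2(\bm{\Sigma}_{\bbm{\varepsilon}})\mu_{\min}^{-2}(\mathcal{A})$ and an analogous bound on $\|\bm{\Sigma}_{\bm{\Delta}}\|_\textup{op}$. Hanson-Wright then yields, for any fixed $\bm{\Delta}\in\mathcal{S}(r,s;p)$ and any $t>0$,
\begin{equation*}
\mathbb{P}\bigl[|R_T(\bm{\Delta})-\mathbb{E}R_T(\bm{\Delta})|\geq t\bigr]\leq 2\exp\!\left(-C\min\!\left(\frac{t^2}{\tau^4 T\lambda_{\max}^2(\bm{\Sigma}_{\bbm{\varepsilon}})\mu_{\min}^{-2}(\mathcal{A})},\,\frac{t}{\tau^2\lambda_{\max}(\bm{\Sigma}_{\bbm{\varepsilon}})\mu_{\min}^{-1}(\mathcal{A})}\right)\right).
\end{equation*}
Setting $t\asymp T\lambda_{\min}(\bm{\Sigma}_{\bbm{\varepsilon}})\mu_{\max}^{-1}(\mathcal{A})$ gives the sample-size rate $M_2^{-2}\max(\tau^4,\tau^2)$ per unit of log-covering number.

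The main obstacle, and the heart of the proof, is the covering-number bound for $\mathcal{S}(r,s;p)$. My plan is to factor the covering in two stages: (i) enumerate the row and column support patterns, and (ii) on each fixed support, cover the restricted rank-$2r$ unit-Frobenius sphere by the standard Candes-Plan construction from Lemma \ref{lemma:cs_covering}. Stage (i) contributes at most $\binom{p}{2s}^2\leq \exp\!\bigl(4s\log(ep/s)\bigr)$ (and a trivial alternative bound of $\exp(4s\log p)$), while stage (ii) contributes at most $(C/\epsilon)^{4sr+4r^2}$ since on a fixed $2s\times 2s$ support the low-rank manifold has dimension $O(sr+r^2)$. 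Multiplying, the log-covering number is
\begin{equation*}
\log|\overline{\mathcal{S}}(r,s;p)|\lesssim sr+r^2+s\min\!\bigl(\log p,\log(ep/s)\bigr),
\end{equation*}
which matches the sample-size requirement stated in the lemma. A union bound over this net combined with an $\epsilon=1/4$ discretization-error argument (handled exactly as in the proof of Lemma \ref{lemma:deviation}, using that differences of elements of $\mathcal{S}(r,s;p)$ split into two matrices in $2\mathcal{S}(r,s;p)$) completes the two-sided bound with probability at least $1-2\exp[-CM_2^2\min(\tau^{-4},\tau^{-2})T]$, provided $T\gtrsim M_2^{-2}\max(\tau^4,\tau^2)\{sr+r^2+s\min(\log p,\log(ep/s))\}$. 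The trickiest bookkeeping will be verifying that the discretization residual lies in a slightly inflated version of $\mathcal{S}(r,s;p)$ (rank $\leq 4r$, row/column sparsity $\leq 4s$), so that the standard self-bounding trick $\xi\leq(1-\sqrt{2}\epsilon)^{-1}\xi_{\text{net}}$ carries over.
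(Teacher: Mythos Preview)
Your proposal is correct and follows essentially the same route as the paper: both identify that $\bm{\Delta}$ has rank at most $2r$ with at most $2(s_c+s_r)$ nonzero rows and $2(s_c+s_p)$ nonzero columns, apply Hanson--Wright through the VMA$(\infty)$ representation (the paper packages this as Lemma~\ref{lemma:quadratic}), and then combine a union over at most $\binom{p}{2s}^2$ support patterns with the Cand\`es--Plan low-rank covering on each fixed support (the paper invokes Lemma~\ref{lemma:cs_covering}). Your discussion of the discretization residual is in fact more explicit than the paper's, which simply writes ``taking a union bound on the $\epsilon$-covering net'' without spelling out the passage from net to full set.
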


\begin{proof}[Proof of Lemma \ref{lemma:sparse_RSC_RSS}]

Note that $\bm{\Delta}=\bm{A}-\bm{A}^*$ admits the matrix decomposition
\begin{equation}\label{eq:Delta_decomposition}
    \bm{\Delta}=\lb\bm{\Delta}_c~\bm{\Delta}_r\rb\bm{\Delta}_d\lb\bm{\Delta}_c~\bm{\Delta}_p\rb^\top
\end{equation}
where $\bm{\Delta}_c\in\mathbb{S}(p,2d,2s_c)$, $\bm{\Delta}_r\in\mathbb{S}(p,2(r-d),2s_r)$, $\bm{\Delta}_p\in\mathbb{S}(p,2(r-d),2s_p)$, and $\bm{\Delta}_d\in\mathbb{R}^{2r\times 2r}$. 

For simplicity in presentation, we consider that $\|\bm{\Delta}\|_\textup{F}=1$.
By Lemma \ref{lemma:quadratic}, for any matrix $\bm{M}\in\mathbb{R}^{p\times p}$ such that $\|\bm{M}\|_\textup{F}=1$,
\begin{equation}
    \begin{split}
        & \mathbb{P}[|R_T(\bm{M})-\mathbb{E}R_T(\bm{M})|\geq t]\\
        \leq & 2\exp\left(-\min\left(\frac{t^2}{\tau^4T\lambda_{\max}^2(\bm{\Sigma}_{\bbm{\varepsilon}})\lambda_{\max}^2(\widetilde{\bm{A}}\widetilde{\bm{A}}^\top)},\frac{t}{\tau^2\lambda_{\max}^2(\bm{\Sigma}_{\bbm{\varepsilon}})\lambda^2_{\max}(\widetilde{\bm{A}}\widetilde{\bm{A}}^\top)}\right)\right),
    \end{split}
\end{equation}
where $\widetilde{\bm{A}}$ is defined as
\begin{equation}
    \widetilde{\bm{A}} = \begin{bmatrix}
        \bm{I}_p & \bm{A}^* & \bm{A}^{*2} & \bm{A}^{*3} & \dots & \bm{A}^{*(T-1)} & \dots\\
        \bm{O} & \bm{I}_p & \bm{A}^* & \bm{A}^{*2} & \dots & \bm{A}^{*(T-2)} & \dots\\
        \vdots & \vdots & \vdots & \vdots & \ddots & \vdots & \dots \\
        \bm{O} & \bm{O} & \bm{O} & \bm{O} & \dots & \bm{I}_p & \dots
    \end{bmatrix}.
\end{equation}

It remains to find the covering number for the set of $\bm{\Delta}$ admitting matrix decomposition as in \eqref{eq:Delta_decomposition}. Based on the row-wise sparsity structure on $\lb\bm{\Delta}_s~\bm{\Delta}_r\rb$ and $\lb\bm{\Delta}_s~\bm{\Delta}_p\rb$, the nonzero entries in $\bm{\Delta}$ can be summarized into a submatrix $\widetilde{\bm{\Delta}}\in\mathbb{R}^{2(s_c+s_r)\times2(s_c+s_p)}$ that is of rank at most $2r$.

For any given row-wise and column-wise sparsity index sets $\mathcal{S}_r$ and $\mathcal{S}_c$ for $\bm{\Delta}$, following Lemma \ref{lemma:cs_covering}, the $\epsilon$-covering number of the low-rank set is $(24/\epsilon)^{4(2s_c+s_r+s_p)r+4r^2}$. Denote $s=\max(s_c+s_r,s_c+s_p)$. As $\binom{p}{s}\leq \min[p^s,(ep/s)^s]$, for any constant $\epsilon<1$, taking a union bound on the $\epsilon$-covering net for the set of $\widetilde{\bm{\Delta}}$ and another union bound on $\mathcal{S}_r$ and $\mathcal{S}_c$, we have
\begin{equation}
    \begin{split}
        \mathbb{P}&\left[\sup_{\bm{\Delta}}|R_T(\bm{\Delta})-\mathbb{E}R_T(\bm{\Delta})|\geq t\right]
        \leq C\exp\Bigg(2s\min[\log(p),\log(ep/s)]+2sr+r^2\\
        &-\min\left(\frac{t^2}{\tau^4T\lambda_{\max}^2(\bm{\Sigma}_{\bbm{\varepsilon}})\lambda_{\max}^2(\widetilde{\bm{A}}\widetilde{\bm{A}}^\top)},\frac{t}{\tau^2\lambda_{\max}^2(\bm{\Sigma}_{\bbm{\varepsilon}})\lambda^2_{\max}(\widetilde{\bm{A}}\widetilde{\bm{A}}^\top)}\right)\Bigg).
    \end{split}
\end{equation}
Letting $t=T\lambda_{\min}(\bm{\Sigma}_{\bbm{\varepsilon}})\lambda_{\min}(\widetilde{\bm{A}}\widetilde{\bm{A}}^\top)/2$, for $T\gtrsim M_2^{-2}\max(\tau^4,\tau^2)\{sr+r^2+s\min[\log(p),\log(ep/s)]\}$, we have
\begin{equation}
    \begin{split}
        &\mathbb{P}\left[\sup_{\bm{\Delta}}|R_T(\bm{\Delta})-\mathbb{E}R_T(\bm{\Delta})|\geq T\lambda_{\min}(\bm{\Sigma_\varepsilon})\lambda_{\min}(\widetilde{\bm{A}}\widetilde{\bm{A}}^\top)/2\lambda\right]\\
        2&\exp[-CM_2^2\min(\tau^{-4},\tau^{-2})T)].
    \end{split}
\end{equation}
Therefore, with probability at least $1-2\exp[-CM_2^2\min(\tau^{-4},\tau^{-2})T)]$, $\alpha_\textup{RSC}\leq T^{-1}R_T(\bm{\Delta})\leq \beta_\textup{RSS}$. Finally, we may replace $\lambda_{\max}(\widetilde{\bm{A}}\widetilde{\bm{A}}^\top)$ and $\lambda_{\min}(\widetilde{\bm{A}}\widetilde{\bm{A}}^\top)$ with $1/\mu_{\min}(\mathcal{A})$ and $1/\mu_{\max}(\mathcal{A})$, as in the proof of Lemma \ref{lemma:RSC}.

\end{proof}

The following lemma is the deviation bound for the sparsity-constrained set. The proof of this lemma essentially follows that of Lemma \ref{lemma:deviation}, and the covering number of the sparsity-constrained low-rank set has been established in Lemma \ref{lemma:sparse_RSC_RSS}. The proof is hence omitted for simplicity.

\begin{lemma}\label{lemma:sparse_deviation}

    If $T\gtrsim M_2^{-2}\max(\tau^4,\tau^2)\{sr+r^2+s\min[\log(p),\log(ep/s)]\}$, under the  conditions in Theorem \ref{thm:sparse}, with probability at least $1-2\exp[-C\log(p)]$,
    \begin{equation}
        \xi(r,d,\bm{s})\leq \tau^2M_1\sqrt{\frac{sr+r^2+s\min[\log(p),\log(ep/s)]}{T}},
    \end{equation}
    where $s=\max(s_c+s_r,s_c+s_p)$.

\end{lemma}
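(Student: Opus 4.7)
The plan is to mirror the proof of Lemma~\ref{lemma:deviation} almost line-by-line, replacing the generic low-rank common-subspace class $\mathcal{W}(r,d;p)$ by its sparsity-constrained analogue $\mathcal{W}_s(r,d;p)=\{W=\lb\bm{C}~\bm{R}\rb\bm{D}\lb\bm{C}~\bm{P}\rb^{\top}:\;\bm{C}\in\mathbb{S}(p,d,s_c),\bm{R}\in\mathbb{S}(p,r-d,s_r),\bm{P}\in\mathbb{S}(p,r-d,s_p),\|W\|_{\text{F}}=1\}$, and then invoking the sparsity-aware covering number that is implicit in the proof of Lemma~\ref{lemma:sparse_RSC_RSS}. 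Since $\nabla\mathcal{L}(\bm{A}^{*})=T^{-1}\sum_{t=1}^{T}\bbm{\varepsilon}_{t}\bm{y}_{t-1}^{\top}$, the target quantity is $\xi(r,d,\bm{s})=\sup_{W\in\mathcal{W}_s(r,d;p)}\langle T^{-1}\sum_{t=1}^{T}\bbm{\varepsilon}_{t}\bm{y}_{t-1}^{\top},W\rangle$, and the only ingredients needed are (i) a covering of $\mathcal{W}_s(r,d;p)$, (ii) a pointwise deviation bound on $\langle T^{-1}\sum_{t}\bbm{\varepsilon}_{t}\bm{y}_{t-1}^{\top},W\rangle$ for fixed $W$, and (iii) a self-reduction argument that compares $\xi(r,d,\bm{s})$ to the maximum over the net.

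For (i), I would enumerate the sparsity supports: there are at most $\binom{p}{s_c}\binom{p}{s_r}\binom{p}{s_p}\leq \exp(Cs\min[\log p,\log(ep/s)])$ possibilities, with $s=\max(s_c+s_r,s_c+s_p)$. For any fixed choice of supports, $W$ is determined by its nonzero submatrix, which lies in a low-rank common-subspace class on $\mathbb{R}^{(s_c+s_r)\times(s_c+s_p)}$, and by Lemma~\ref{lemma:cs_covering} admits an $\epsilon$-net of cardinality $(24/\epsilon)^{(s_c+s_r+s_c+s_p)r+r^{2}-dr}\leq (24/\epsilon)^{C(sr+r^{2})}$. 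Multiplying and taking logs, the $\epsilon$-net $\overline{\mathcal{W}}_{s}$ satisfies $\log|\overline{\mathcal{W}}_{s}|\lesssim sr+r^{2}+s\min[\log p,\log(ep/s)]$.

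For (ii), I would reuse the martingale-type Chernoff device from the proof of Lemma~\ref{lemma:deviation}: for fixed $W$ with $\|W\|_{\text{F}}=1$, writing $S_T(W)=\sum_{s=1}^{T}\langle\bbm{\varepsilon}_{s},W\bm{y}_{s-1}\rangle$ and $R_T(W)=\sum_{s=0}^{T-1}\|W\bm{y}_{s}\|_{2}^{2}$, the sub-Gaussian structure of $\bbm{\varepsilon}_{t}$ conditional on the filtration yields $\mathbb{P}[\{S_T(W)\geq z_{1}\}\cap\{R_T(W)\leq z_{2}\}]\leq \exp(-z_{1}^{2}/(2\tau^{2}\lambda_{\max}(\bm{\Sigma}_{\bbm{\varepsilon}})z_{2}))$. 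Then Lemma~\ref{lemma:sparse_RSC_RSS} controls $R_T(W)$ uniformly over $\mathcal{W}_{s}(r,d;p)$ by $\beta_{\text{RSS}}T$ on the same high-probability event needed for RSC/RSS. Taking $z_{2}\asymp \tau^{2}\lambda_{\max}(\bm{\Sigma}_{\bbm{\varepsilon}})\mu_{\min}^{-1}(\mathcal{A})T$ and $z_{1}=Tx$ gives the per-point bound $\mathbb{P}[\langle T^{-1}\sum\bbm{\varepsilon}_{t}\bm{y}_{t-1}^{\top},W\rangle\geq x]\lesssim \exp(-CTx^{2}/(\tau^{4}M_{1}^{2}))$ up to the $\mathbb{P}[R_T(W)\geq z_{2}]$ term, which is already absorbed into the RSS event.

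For (iii), the self-reduction step is the only subtle part. Given $W\in\mathcal{W}_{s}(r,d;p)$ and its $\epsilon$-approximant $\overline{W}\in\overline{\mathcal{W}}_{s}$, the residual $\Delta=W-\overline{W}$ is at most rank $2r$ with common dimension $2d$ and with row-supports of size at most $2s_c,2s_r,2s_p$; splitting $\Delta$ via its SVD into two orthogonal rank-$r$ common-dimension-$d$ pieces $\Delta_{1},\Delta_{2}$ (each inheriting the row supports) gives $\Delta_{i}/\|\Delta_{i}\|_{\text{F}}\in\mathcal{W}_{2\bm{s}}(r,d;p)$ and $\|\Delta_{1}\|_{\text{F}}+\|\Delta_{2}\|_{\text{F}}\leq\sqrt{2}\epsilon$. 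Because the rate $\sqrt{sr+r^{2}+s\min[\log p,\log(ep/s)]/T}$ is homogeneous of order $0$ in $s$ up to a multiplicative constant, iterating at most $O(\log(p/s))$ levels of the recursion $\xi(r,d,\bm{s})\leq \max_{\overline{W}\in\overline{\mathcal{W}}_{\bm{s}}}\langle\cdot,\overline{W}\rangle+\sqrt{2}\epsilon\,\xi(r,d,2\bm{s})$ and applying the plain bound of Lemma~\ref{lemma:deviation} once the sparsity saturates at $p$, then choosing $\epsilon$ a sufficiently small constant, yields $\xi(r,d,\bm{s})\leq 2\max_{\overline{W}\in\overline{\mathcal{W}}_{2\bm{s}}}\langle\cdot,\overline{W}\rangle$. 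Combining (i)--(iii) with the union bound and the required sample size $T\gtrsim M_{2}^{-2}\max(\tau^{4},\tau^{2})\{sr+r^{2}+s\min[\log p,\log(ep/s)]\}$ delivers the stated deviation bound with probability at least $1-2\exp[-C\log(p)]$. The main obstacle is precisely the self-reduction in (iii): controlling the sparsity doubling of the residual $\Delta$ without losing the covering-number dependence, which is why the homogeneity of the final rate in $s$ is crucial.
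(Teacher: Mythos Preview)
Your approach matches the paper's intent: the proof of Lemma~\ref{lemma:sparse_deviation} is in fact omitted there, the paper simply stating that it follows Lemma~\ref{lemma:deviation} using the covering number established in the proof of Lemma~\ref{lemma:sparse_RSC_RSS}. Your steps~(i) and~(ii) are precisely these two ingredients.

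Where you diverge is in step~(iii). The recursion with sparsity doubling is an unnecessary detour. The covering argument in Lemma~\ref{lemma:sparse_RSC_RSS} proceeds by first fixing the row support $\mathcal{S}_1$ (of size at most $s_c+s_r$) and column support $\mathcal{S}_2$ (of size at most $s_c+s_p$), and only then covering the low-rank class restricted to that $\mathcal{S}_1\times\mathcal{S}_2$ block. With this construction, the approximant $\overline{W}$ can be chosen to share the \emph{same} row and column supports as $W$, so the residual $\Delta=W-\overline{W}$ has unchanged row/column supports and rank at most $2r$. Since the claimed bound does not depend on $d$, one can drop the common-subspace refinement in the self-reduction and simply work in the class of rank-$r$ matrices with row support contained in $\mathcal{S}_1$ and column support contained in $\mathcal{S}_2$; the SVD split of $\Delta$ into two rank-$r$ pieces stays in this class, and the one-shot inequality
\[
\xi(r,d,\bm{s})\leq (1-\sqrt{2}\epsilon)^{-1}\max_{\overline{W}\in\overline{\mathcal{W}}_s}\bigl\langle\nabla\mathcal{L}(\bm{A}^*),\overline{W}\bigr\rangle
\]
follows with no iteration. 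Your recursive scheme would also close, but it introduces a geometric sum over doubling levels and a trailing non-sparse term that the support-first net construction renders superfluous.
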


The last lemma is the estimation error of $L_1$ regularized estimator \citep{basu2015regularized} in which the Gaussian distribution condition is relaxed to the sub-Gaussian condition in Assumption \ref{asmp:2}.

\begin{lemma}\label{lemma:sparse_lasso}
    
    If $T\gtrsim M_2^{-2}\max(\tau^4,\tau^2)S^*\log(p)$ and $\lambda\asymp\tau^2M_1\sqrt{S^*\log(p)/T}$, under the conditions in Theorem \ref{thm:sparse}, with probability at least $1-2\exp[-C\log(p)]$,
    \begin{equation}
        \|\widetilde{\bm{A}}_{L_1}-\bm{A}^*\|_\textup{F}\lesssim \alpha_\textup{RSC}^{-1}\tau^2M_1\sqrt{\frac{S^*\log(p)}{T}},
    \end{equation}
    where $S^*=(s_c^*+s_r^*)(s_c^*+s_p^*)$.
    
\end{lemma}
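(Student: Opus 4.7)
The plan is to follow the standard LASSO analysis adapted to the dependent VAR design, with the key observation that under the common-subspace structure the true matrix $\bm{A}^*=\lb\bm{C}^*~\bm{R}^*\rb\bm{D}^*\lb\bm{C}^*~\bm{P}^*\rb^\top$ has at most $S^*=(s_c^*+s_r^*)(s_c^*+s_p^*)$ nonzero entries, since its nonzero rows are indexed by the support of $\lb\bm{C}^*~\bm{R}^*\rb$ and its nonzero columns by the support of $\lb\bm{C}^*~\bm{P}^*\rb$. Let $\bm{\Delta}=\widetilde{\bm{A}}_{L_1}-\bm{A}^*$ and let $S^*\subset[p]\times[p]$ denote the support set of $\bm{A}^*$, so $|S^*|\leq S^*$. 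First I would use the optimality of $\widetilde{\bm{A}}_{L_1}$ in \eqref{eq:L1_init} to get the basic inequality
\begin{equation*}
\frac{1}{2T}\|\bm{\Delta}\bm{X}\|_\textup{F}^2 \leq \langle-\nabla\mathcal{L}(\bm{A}^*),\bm{\Delta}\rangle + \lambda(\|\bm{A}^*\|_1-\|\widetilde{\bm{A}}_{L_1}\|_1).
\end{equation*}

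Next I would establish the deviation bound $\|\nabla\mathcal{L}(\bm{A}^*)\|_\infty\leq \lambda/2$ with high probability. Since $\nabla\mathcal{L}(\bm{A}^*)=T^{-1}\sum_{t=1}^T\bbm{\varepsilon}_t\bm{y}_{t-1}^\top$, each entry has the form $T^{-1}\sum_t \varepsilon_{j,t}y_{k,t-1}$. Specializing the one-directional martingale-type argument used in the proof of Lemma \ref{lemma:deviation} (Chernoff bound on $S_T(\bm{e}_j\bm{e}_k^\top)$ combined with the quadratic-form concentration of $R_T$ from Lemma \ref{lemma:quadratic}) to a single rank-one test matrix gives, for any fixed $(j,k)$ and $x>0$,
\begin{equation*}
\mathbb{P}\!\left[|[\nabla\mathcal{L}(\bm{A}^*)]_{j,k}|\geq x\right]\leq \exp\!\left(-\frac{CTx^2}{\tau^4 M_1^2}\right)+2\exp[-CM_2^2\min(\tau^{-4},\tau^{-2})T].
\end{equation*}
A union bound over the $p^2$ entries with $x\asymp\tau^2M_1\sqrt{\log(p)/T}$ then yields $\|\nabla\mathcal{L}(\bm{A}^*)\|_\infty\lesssim\tau^2M_1\sqrt{\log(p)/T}$ with probability at least $1-2\exp[-C\log(p)]$, provided $T\gtrsim M_2^{-2}\max(\tau^4,\tau^2)\log(p)$. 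Setting $\lambda\asymp\tau^2M_1\sqrt{S^*\log(p)/T}$ absorbs the noise. Standard LASSO manipulations then produce the cone condition $\|\bm{\Delta}_{S^{*c}}\|_1\leq 3\|\bm{\Delta}_{S^*}\|_1$ and the bound $\frac{1}{2T}\|\bm{\Delta}\bm{X}\|_\textup{F}^2\leq \frac{3\lambda}{2}\|\bm{\Delta}_{S^*}\|_1\leq \frac{3\lambda}{2}\sqrt{S^*}\|\bm{\Delta}\|_\textup{F}$.

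The main obstacle is to close the argument with a restricted eigenvalue / RSC statement that controls $\frac{1}{T}\|\bm{\Delta}\bm{X}\|_\textup{F}^2$ from below by $\alpha_\textup{RSC}\|\bm{\Delta}\|_\textup{F}^2$ on the $\ell_1$-cone. Neither Lemma \ref{lemma:RSC} (rank-constrained) nor Lemma \ref{lemma:sparse_RSC_RSS} (jointly low-rank and row-sparse) is directly applicable, since here $\bm{\Delta}$ is only $\ell_1$-cone-constrained with effective sparsity controlled by $S^*$. I would obtain this via the now-standard $\ell_1$-cone RSC result for sub-Gaussian VAR designs: reproduce the Hanson--Wright-based concentration of $R_T(\bm{M})$ used in Lemma \ref{lemma:RSC} for a single rank-one $\bm{M}$, then use a discretization argument over the $\ell_1$-cone (equivalently, the argument of Loh--Wainwright / Basu--Michailidis) to pass from pointwise to uniform control at rate $\tau^2M_1^2\log(p)/T$. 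This requires the sample size condition $T\gtrsim M_2^{-2}\max(\tau^4,\tau^2)S^*\log(p)$ to make the deviation term dominated by $\alpha_\textup{RSC}/2$ on all $\bm{\Delta}$ in the cone with $\|\bm{\Delta}_{S^*}\|_0\leq S^*$, which is exactly the hypothesis in the lemma.

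Combining the RSC lower bound $\frac{1}{2T}\|\bm{\Delta}\bm{X}\|_\textup{F}^2\geq \frac{\alpha_\textup{RSC}}{2}\|\bm{\Delta}\|_\textup{F}^2$ with the upper bound from the basic inequality gives $\frac{\alpha_\textup{RSC}}{2}\|\bm{\Delta}\|_\textup{F}^2\leq \frac{3\lambda}{2}\sqrt{S^*}\|\bm{\Delta}\|_\textup{F}$, so $\|\bm{\Delta}\|_\textup{F}\leq 3\lambda\sqrt{S^*}/\alpha_\textup{RSC}\lesssim \alpha_\textup{RSC}^{-1}\tau^2M_1\sqrt{S^*\log(p)/T}$, with the advertised probability obtained by intersecting the deviation event and the RSC event and applying the sub-Gaussian tail bounds derived earlier. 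This matches the stated conclusion.
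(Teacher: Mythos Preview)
Your approach is exactly what the paper has in mind: the paper does not give an independent proof of this lemma but simply cites \citet{basu2015regularized} and remarks that the Gaussian assumption there can be relaxed to the sub-Gaussian Assumption~\ref{asmp:2}. Your sketch---basic inequality, entrywise deviation bound on $\nabla\mathcal{L}(\bm{A}^*)$ via the martingale/Hanson--Wright machinery already used in Lemmas~\ref{lemma:quadratic} and~\ref{lemma:deviation}, the $\ell_1$-cone condition, and RSC on the cone---is precisely the Basu--Michailidis argument, so the approaches coincide.

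One arithmetic point deserves attention. With the paper's stated choice $\lambda\asymp\tau^2M_1\sqrt{S^*\log(p)/T}$, your final chain gives $\|\bm{\Delta}\|_\textup{F}\leq 3\lambda\sqrt{S^*}/\alpha_\textup{RSC}\asymp \alpha_\textup{RSC}^{-1}\tau^2M_1\, S^*\sqrt{\log(p)/T}$, which is a factor $\sqrt{S^*}$ larger than the asserted bound. The standard LASSO argument actually requires only $\lambda\gtrsim\|\nabla\mathcal{L}(\bm{A}^*)\|_\infty\asymp\tau^2M_1\sqrt{\log(p)/T}$; with that (smaller) $\lambda$ your computation yields exactly the stated rate. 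This discrepancy appears to be a typographical slip in the lemma's hypothesis rather than a flaw in your method, but you should not silently write $\lambda\sqrt{S^*}\lesssim\sqrt{S^*\log(p)/T}$ under the paper's $\lambda$ scaling.
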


\section{Additional simulation results}
\label{append:simulation}

We also conduct a simulation experiment to compare the proposed model with the dynamic factor model due to their close relationship in Section \ref{sec:2.1}. Three data generating processes are used with the dimension $p=50$.
\begin{itemize}
    \item DGP1: Dynamic factor model with $r=3$, i.e. 
    $\bm{y}_t=\bm{\Lambda}\bm{f}_t+\bbm{\varepsilon}_t$ and $\bm{f}_t=\bm{B}\bm{f}_{t-1}+\bbm{\xi}_t$, where loading matrix $\bm{\Lambda}\in\mathbb{O}^{50\times 3}$ is orthonormal, $\bm{B}=\text{diag}\{0.8,0.8,0.8\}\in\mathbb{R}^{3\times 3}$, $\{\bbm{\varepsilon}_t\}$ are $i.i.d.$ with $N(\bm{0}_{50},0.5\bm{I}_{50})$, and $\{\bbm{\xi}_t\}$ are $i.i.d.$ with $N(\bm{0}_{3},\bm{I}_3)$.
    \item DGP2: The same as in DGP1 except that $\bbm{\varepsilon}_t=\bm{\Gamma}\bm{e}_t$, where $\bm{\Gamma}\in\mathbb{O}^{50\times 47}$, $\bm{\Gamma}^\top\bm{\Lambda}=\bm{0}_{47\times 3}$, and $\{\bm{e}_t\}$ are $i.i.d.$ with $N(\bm{0}_{47},0.5\bm{I}_{47})$.
    \item DGP3: The proposed model in \eqref{eq:equivalent_form} with rank $r=3$ and common dimension of $d=1$, i.e. $\bm{y}_t=\lb\bm{C}~\bm{R}\rb\bm{D}\lb\bm{C}~\bm{P}\rb^\top\bm{y}_{t-1}+\bbm{\varepsilon}_t$, where $\bm{C}\in\mathbb{O}^{50\times 1}$, $\bm{R},\bm{P}\in\mathbb{O}^{50\times 2}$, $\bm{C}^\top\bm{R}=\bm{C}^\top\bm{P}=\bm{0}_{1\times 2}$, $\bm{D}=\text{diag}\{1,1,1\}\in\mathbb{R}^{3\times 3}$, and $\{\bbm{\varepsilon}_t\}$ are $i.i.d.$ with $N(\bm{0}_{50},\bm{I}_{50})$.
\end{itemize}

Note that DGP1 is a standard dynamic factor model, and the series $\{\bm{y}_t\}$ admits a VARMA(1,1) form; see also \cite{wang2019high}.
As discussed in Section \ref{sec:2.1}, DGP2 is a special dynamic factor model \citep{gao2021modeling}, and it is equivalent to our model with the common dimension $d=3$, i.e., the response and predictor spaces are identical.
In DGP3, we have $d<r$ making it fundamentally different from dynamic factor models.

We apply two different modeling frameworks, denoted by VAR-CS and DFM-VAR for convenience, to all the above three data generating processes.
The VAR-CS framework is a VAR(1) model with a common subspace, i.e., the proposed model in \eqref{eq:equivalent_form}, and we adopt the proposed methods for estimation, where the upper bound of ranks $\bar{r}$ and hyper-parameter $s(p,T)$ are set as those in the first experiment.
For the DFM-VAR framework, we estimate $\widehat{\bm{\Lambda}}$ via eigen-decomposition of autocovariance matrices and obtain the estimated factors $\widehat{\bm{f}}_t$.
Then a VAR(1) model is assumed for the factors, and the ordinary least squares method is used in estimation.
It is noteworthy that the VAR-CS framework is misspecified for DGP1, while the DFM-VAR framework is misspecified for DGP3.

From Section \ref{sec:2.1}, the proposed model also admits a form of factor models, and the corresponding factor space is exactly the response factor space.
This motivates us first to compare the estimated factor space and response factor space from the DFM-VAR and VAR-CS frameworks, respectively, and their estimation errors can be defined as $\|\widehat{\bm{\Lambda}}\widehat{\bm{\Lambda}}^\top-\bm{\Lambda}\bm{\Lambda}^\top\|_\text{F}$ and $\|\lb\widehat{\bm{C}}~\widehat{\bm{R}}\rb\lb\widehat{\bm{C}}~\widehat{\bm{R}}\rb^\top-\bm{\Lambda}\bm{\Lambda}^\top\|_\text{F}$, respectively.
The median and quartiles of errors are plotted in the upper panel of Figure \ref{fig:sim3}, and the VAR-CS framework has a worse performance for DGP1, but a better performance for DGP3. This is due to the fact that DGP1 is a VARMA(1,1) model in nature, while a VAR(1) model is used in the VAR-CS framework. Moreover, although DGP3 admits a form of static factor models, its low-dimensional structure is totally different from the VAR model used in the DFM-VAR framework.
Finally, for DGP2, the two modeling frameworks are both correctly specified, while our framework has a slightly, but uniformly, better performance.

We next evaluate the prediction performance of the two modeling frameworks. 
The prediction errors are measured by $\|\widehat{\bm{y}}_{T+1}-\mathbb{E}[\bm{y}_{T+1}|\mathcal{F}_T]\|_2$, and the one-step-ahead prediction of our framework is $\widehat{\bm{y}}_{T+1}=\widehat{\bm{A}}\bm{y}_{T}$, where $\widehat{\bm{A}}$ is the fitted parameter matrix.
For the DFM-VAR framework, the prediction is defined as $\widehat{\bm{y}}_{T+1}=\widehat{\bm{\Lambda}}\widehat{\bm{f}}_{T+1}$, while $\widehat{\bm{f}}_{T+1}$ is predicted by the fitted low-dimensional VAR(1) model.
The lower panel of Figure \ref{fig:sim3} plots the median and quartiles of prediction errors from 500 replications.
Interestingly, for both DGP1 and DGP2, our modeling framework has almost the same prediction errors as those of DFM-VAR, even though model misspecification occurs in DGP1 for our modeling framework.
More importantly, for DGP3, the low-dimensional response and predictor spaces are distinct, and hence the dynamic factor modeling has much worse performance in prediction.
From the results, it is advantageous to use the proposed model in high-dimensional time series forecasting, which is the main task in the literature.

\begin{figure}[!htp]
    \begin{center}
        \includegraphics[width=0.7\textwidth]{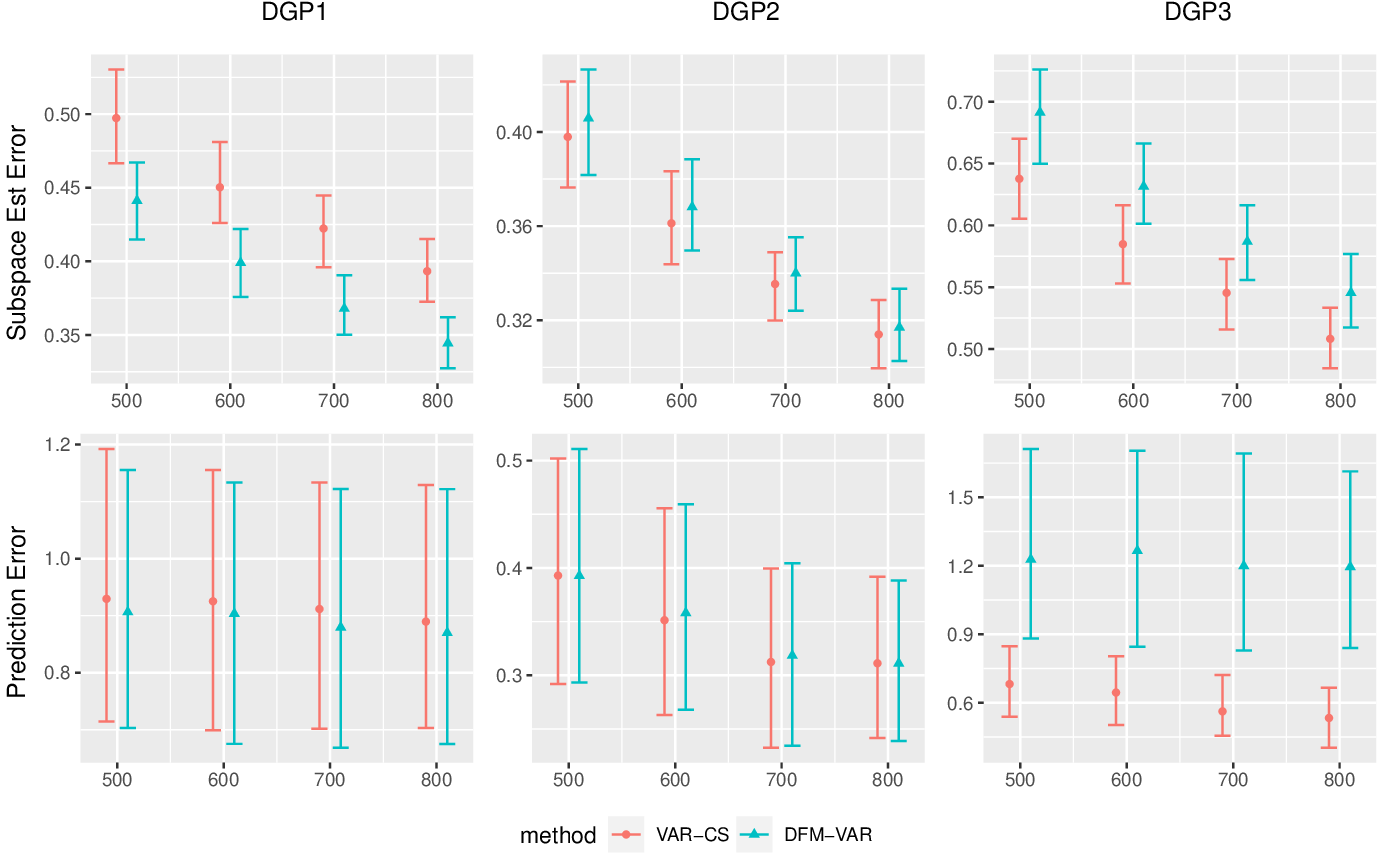}
        \vspace{-0.5cm}
    \end{center}
    \caption{\small{Estimation errors of factor spaces (upper panel) and prediction errors (lower panel) from the proposed methodology (VAR-CS) and dynamic factor modeling (DFM-VAR).}}
    \label{fig:sim3}
\end{figure}

\section{Information of macroeconomic dataset}\label{append:real_data}

\setcounter{table}{0}
\renewcommand{\thetable}{E.\arabic{table}}

The information of forty macroeconomic variables is given in the Table \ref{tbl:macro}. All variables are transformed to be stationary with codes given in column T, and except for financial variables, all variables are subject to seasonal adjustments. The US macroeconomic data set is originally from \citet{stock2009forecasting}, and these forty economic variables are selected from \citet{koop2013forecasting}.

\begin{landscape}
    \begin{table}[]
        \small
        \centering
        \caption{Forty quarterly macroeconomic variables belonging to 8 categories. Category code (C) represents: 1 = GDP and its decomposition, 2 = national association of purchasing managers (NAPM) indices, 3 = industrial production, 4 = housing, 5 = money, credit, interest rates, 6 = employment, 7 = prices and wages, 8 = others. Variables are seasonally adjusted except for those in category 5. All variables are transformed to stationarity with the following transformation codes (T): 1 = no transformation, 2 = first difference, 3 = second difference, 4 = log, 5 = first difference of logged variables, 6 = second difference of logged variables.}
        \label{tbl:macro}
        \renewcommand{\arraystretch}{1.2}
        \begin{tabular}{@{}llllllll@{}}
            \toprule
            Short name&C&T&Description&Short name&C&T&Description\\
            \midrule
            GDP251&1&5& Real GDP, quantity index (2000=100)            &FM2&5&6& Money stock: M2 (bil\$)                        \\[-2ex]
            GDP252&1&5& Real personal cons exp, quantity index       &FMRNBA&5&3& Depository inst reserves: nonborrowed (mil\$)  \\[-2ex]
            GDP253&1&5& Real personal cons exp: durable goods&FMRRA&5&6& Depository inst reserves: total (mil\$)        \\[-2ex]
            GDP256&1&5& Real gross private domestic investment&FSPIN&5&5& S\&P's common stock price index: industrials   \\[-2ex]
            GDP263&1&5& Real exports&FYFF&5&2& Interest rate: federal funds (\% per annum)    \\[-2ex]
            GDP264&1&5& Real imports&FYGT10&5&2& Interest rate: US treasury const. mat., 10-yr  \\[-2ex]
            GDP265&1&5& Real govt cons expenditures \&  gross investment&SEYGT10&5&1& Spread btwn 10-yr and 3-mth T-bill rates\\[-2ex]
            GDP270&1&5& Real final sales to domestic purchasers&CES002&6&5& Employees, nonfarm: total private              \\[-2ex]
            PMCP&2&1& NAPM commodity price index (\%)&LBMNU&6&5& Hrs of all persons: nonfarm business sector\\[-2ex]
            PMDEL&2&1& NAPM vendor deliveries index (\%)&LBOUT&6&5& Output per hr: all persons, business sec\\[-2ex]
            PMI&2&1& Purchasing managers' index&LHEL&6&2& Index of help-wanted ads in newspapers\\[-2ex]
            PMNO&2&1& NAPM new orders index (\%)&LHUR&6&2& Unemp. rate: All workers, 16 and over (\%)     \\[-2ex]
            PMNV&2&1& NAPM inventories index (\%)&CES275R&7&5& Real avg hrly earnings, non-farm prod. workers \\[-2ex]
            PMP&2&1& NAPM production index (\%)&CPIAUCSL&7&6& CPI all items                                  \\[-2ex]
            IPS10&3&5& Industrial production index: total             &GDP273&7&6& Personal consumption exp.: price index         \\[-2ex]
            UTL11&3&1& Capacity utilization: manufacturing (SIC)      &GDP276&7&6& Housing price index\\[-2ex]
            HSFR&4&4& Housing starts: Total (thousands)              &PSCCOMR&7&5& Real spot market price index: all commodities  \\[-2ex]
            BUSLOANS&5&6& Comm. and industrial loans at all comm. Banks&PWFSA&7&6& Producer price index: finished goods           \\[-2ex]
            CCINRV&5&6& Consumer credit outstanding: nonrevolving&EXRUS&8&5& US effective exchange rate: index number       \\[-2ex]
            FM1&5&6& Money stock: M1 (bil\$)                        &HHSNTN&8&2& Univ of Mich index of consumer expectations\\[-1ex]
            \bottomrule
        \end{tabular}
    \end{table}
\end{landscape}

\end{appendix}

\end{document}